\documentclass[a4paper,11pt,twoside]{article}
\usepackage{fontenc}
\usepackage{amsthm,amsmath,amsfonts,amssymb}
\RequirePackage[colorlinks]{hyperref}
\RequirePackage{epsfig, graphicx, color}
\RequirePackage{latexsym}

\usepackage{a4wide}
\usepackage{epsf,epsfig,subfigure}
\usepackage{natbib}
\usepackage{graphics,graphicx}
\usepackage{enumerate}
\usepackage{algorithmic,algorithm,float}

\usepackage{subfigure}
\usepackage{bbm}

\usepackage{multirow}

\newtheorem{theorem}{Theorem}

\newtheorem{lemma}[theorem]{Lemma}

\newtheorem{corollary}[theorem]{Corollary}

\theoremstyle{definition}

\newtheorem{assumption}{Assumption}

\newtheorem{prop}[theorem]{Proposition}

\theoremstyle{remark}

\algsetup{linenodelimiter=.}

\newcommand{\D}{\mathcal{D}}
\newcommand{\A}{\mathcal{A}}
\newcommand{\R}{\mathbb{R}}
\newcommand{\PP}{\mathbb{P}}
\newcommand{\E}{\mathbb{E}}
\newcommand{\var}{\textrm{Var}}
\newcommand{\ipw}{\textrm{IPW}}

\newcommand{\ave}{\textrm{AVE}}
\newcommand{\dipw}{\textrm{DIPW}}
\newcommand{\mdipw}{\textrm{mDIPW}}

\newcommand{\ora}{\mathrm{ORA}}
\newcommand{\T}{\mathcal{T}}

\newcommand{\abs}[1]{\left| #1\right|}
\newcommand{\norm}[1]{\left\lVert #1 \right\rVert}
\newcommand{\ind}{\mathbbm{1}}
\newcommand{\pr}{\mathbb{P}}

\newcommand{\floor}[1]{\left\lfloor #1 \right\rfloor}

\newcommand{\Var}{\mathrm{Var}}

\newcommand{\ci}{\mathrm{CI}}

\newcommand{\argmin}{\textrm{argmin}}

\newcommand{\kron}{\mathbbm{1}}

\def\diff{\mathrm{d}}
\def\bbP{\mathbb{P}}

\newcommand{\mb}{\mathbf}
\newcommand{\mbb}{\boldsymbol}

\newcommand{\vertiii}[1]{{\left\vert\kern-0.25ex\left\vert\kern-0.25ex\left\vert #1 
    \right\vert\kern-0.25ex\right\vert\kern-0.25ex\right\vert}}
    
\newcommand\independent{\protect\mathpalette{\protect\independenT}{\perp}}
    \def\independenT#1#2{\mathrel{\rlap{$#1#2$}\mkern2mu{#1#2}}}

\newcommand{\rev}[1]{{#1}}
\newcommand{\newrev}[1]{{  #1}}
    
\title{Debiased Inverse Propensity Score Weighting for Estimation of Average Treatment Effects with High-Dimensional Confounders}

\author{Yuhao Wang\thanks{Part of this work was done while YW was at the University of Cambridge, UK. 
	}\\
	Tsinghua University, China\\
	\url{yuhaow@tsinghua.edu.cn}
	\and
	Rajen D. Shah\\
	University of Cambridge, UK\\
	\url{r.shah@statslab.cam.ac.uk}}


\begin{document}

\maketitle
%

\begin{abstract}
 We consider estimation of average treatment effects given observational data with high-dimensional pretreatment variables. Existing methods for this problem typically assume some form of sparsity for the regression functions. In this work, we introduce a debiased inverse propensity score weighting (DIPW) scheme for average treatment effect estimation that delivers $\sqrt{n}$-consistent estimates
 when the propensity score follows a sparse logistic regression model; the outcome regression functions are permitted to be arbitrarily complex.
 \rev{We further demonstrate how confidence intervals centred on our estimates may be constructed.}
 Our theoretical results quantify the price to pay for permitting the regression functions to be unestimable, which shows up as an inflation of the variance of the estimator compared to the semiparametric efficient variance by a constant factor, under mild conditions. \rev{We also show that when outcome regressions can be estimated faster than a slow $1/\sqrt{ \log n}$ rate, our estimator achieves semiparametric efficiency.} 
 As our results accommodate arbitrary outcome regression functions, averages of transformed responses under each treatment may also be estimated at the $\sqrt{n}$ rate. Thus, for example, the variances of the potential outcomes may be estimated.
 We discuss extensions to estimating linear projections of the heterogeneous treatment effect function \rev{and explain how propensity score models with  more general link functions may be handled within our framework. An R package \texttt{dipw} implementing our methodology is available on CRAN}.
\end{abstract}

\section{Introduction} \label{sec:intro}
Estimating average treatment effects from observational data is a central topic in causal inference, and has received a great deal of attention in recent years.
The setup we consider involves data consisting of $n$ i.i.d.\ copies of the triple $(X, Y, T) \in \R^p \times \R \times \{0, 1\}$ where $X$ is a vector of covariates, $Y$ is the observed outcome, and $T$ is a binary treatment indicator. We work in the potential outcomes framework~\citep{Neyman23,Rubin74} and define the average treatment effect as $\tau := \E\{Y(1) - Y(0)\}$ where $Y(0)$ and $Y(1)$ are the potential outcomes such that $Y = Y(T)$.
Identification of $\tau$ relies on the treatment assignment being as good as random conditional on observed covariates, that is $\{Y(0), Y(1)\} \independent T \mid X$. This is an unverifiable assumption, but it may at least be more plausible when the number of covariates $p$ is large.
As a result, there has been growing interest in average treatment effect estimation with high-dimensional covariates where $p$ may potentially exceed the number of observations $n$.

There is a rich literature on doubly robust methods for average treatment effect estimation. Augmented inverse propensity weighting (AIPW)
\citep{RRZ94, RR95, SRR99}
is perhaps  the most prominent approach,
along with targeted maximum likelihood estimation \citep{VR06}. Important modifications of the basic AIPW scheme and associated theory applicable to high-dimensional regimes were proposed in \citet{belloni2014inference, Farrell15,BCFH17,CCD18}.
%
These involve first estimating the nuisance components, that is the propensity score $\pi(x) := \pr(T=1|X=x)$ and the regression functions $r_t(x):=\E\{Y(t) |X=x\}$, $t=0, 1$. In order that the resulting estimator achieves a parametric $1/\sqrt{n}$ rate, the nuisance components are typically required to be estimated at a relatively slower $o(1 / n^{1/4})$ rate. For example, if the propensity score and regression functions follow logistic and linear models respectively with the (maximum) sparsities of the corresponding regression coefficients being $s_\pi$ and $s_r$, this convergence rate requirement entails that $s_\pi, s_r = o(\sqrt{n} / \log(p))$. 

In recent years, there have been several advances that relax these conditions in various ways: for example \citet{NPI18} allow for the linear and logistic models to be approximately sparse, whilst \citet{Tan20b, smucler2019unifying, dukes2020inference} permit one of these models to be misspecified but ask for the corresponding parameter estimates to converge to parameters satisfying the sparsity requirements as above.
The work of \citet{BWZ19} works with a sparse logistic regression model for the propensity score and develops a method that allows for either (i) $s_r = o(\sqrt{n} / \log p)$ and $s_\pi = o(n / \log p)$, or (ii) $s_r = o(n^{3 / 4} / \log p)$ and $s_\pi = o(\sqrt{n} / \log p)$, whilst still retaining a $1/\sqrt{n}$ estimation error rate.
Remarkably \citet{AIW18} relaxes the sparsity requirement on the propensity score altogether by exploiting the structure of the bias in estimation of the sparse linear response functions (where the sparsity is assumed to satisfy $s_r = o(\sqrt{n} / \log p)$) and employing a tailored covariate balancing scheme to eliminate this bias asymptotically. 
\rev{Building on this, \citet{bradic2019minimax} introduces an estimator that additionally accommodates settings where we only have approximate sparsity 
	in both the outcome regression model and propensity score model, the requirement on the latter being that the projection of $(T, X) \mapsto T/\pi(X) -(1-T)/\{1-\pi(X)\}$ onto the set of linear combinations of $T$, $X$ and their interaction, is approximately sparse.
}

Whilst these advances are important developments, they all rely on either outcome regression functions or an approximation thereof being sparse, an assumption which may not always be plausible.
In this work, we introduce a debiased inverse propensity score weighting (DIPW) scheme that allows the regression functions to be completely misspecified: they can for instance be nonlinear and depend on all $p$ variables.
Our method does however rely on sparsity of the propensity score, that is $s_\pi = o(\sqrt{n} / \log p)$. The requirements thus precisely complement those of \citet{AIW18}. One advantage of only making assumptions on the propensity score is that, trivially, we can estimate any functional of the potential outcome $\E \{h(Y(t))\}$ for a given function $h \in \R \to \R$ and any $t \in \{0, 1\}$. To do this we simply transform the outcomes via $h$ and apply our procedure to the transformed data. For example, our method allows for estimation of $\Var\{Y(1)\}$ (see Section~\ref{sec:simvar}).


\subsection{Preliminaries and an overview of our main contributions} \label{sec:contrib}
Recall the basic setup outlined earlier: we have available i.i.d.\ copies $(X_1, Y_1, T_1), \ldots, (X_n, Y_n, T_n)$ of the triple $(X, Y, T) \in \R^p \times \R \times \{0, 1\}$ representing pretreatment covariates, the outcome and a binary treatment indicator. The first component of $X$ will typically be $1$ representing an intercept term. We collect these data into $\mb X \in \R^{n \times p}$, $\mb Y \in \R^n$ and $\mb T \in \{0,1\}^n$.
Throughout the paper we assume unconfoundedness as required for identification of the average treatment effect $\tau$.
\begin{assumption} \label{as:unconfounded}
	Conditioning on the observed covariate vector $X$, $Y(1)$ and  $Y(0)$ are independent from the treatment assignment $T$, that is $\{Y(1), Y(0)\} \independent T \mid X$.
\end{assumption}

\rev{
	We write
	\[
	r_t(x) = \E\{Y(t)| X=x\}, \qquad t=0,1
	\]
	for the outcome regression functions. Then $\tau = \E \{r_1(X) - r_0(X)\}$.}

\rev{Throughout the manuscript (with the exception of Section~\ref{sec:link}),} we will additionally assume a logistic regression model for the propensity score and that we have overlap, that is the propensity scores are bounded away from $0$ and $1$.
\begin{assumption} \label{as:logistic}
	We have
	\begin{align} \label{eq:logistic}
		\pr(T=1|X=x) = \pi(x) = \psi(x^\top  \gamma)
	\end{align}
	and for some $c_\pi > 0$, $c_\pi <\pi(X) < 1-c_\pi$  almost surely.
\end{assumption}
Here $\psi(u) := \{1 + \exp(-u)\}^{-1}$ denotes the standard logistic function. We have in mind the high-dimensional setting where $p$ is large and potentially $p \gg n$, but writing $s:= |\{ j : \gamma_j \neq 0\}|$, we have $s \ll p$. \rev{However, we aim to avoid such sparsity conditions or making any additional structural or smoothness assumptions on  $r_0$ and $r_1$.}

To motivate our approach, first consider the standard inverse propensity weighting (IPW) estimator $\hat{\tau}_{\ipw}$ given by
\begin{align} \label{eq:IPW}
	\hat{\tau}_{\ipw} := \frac{1}{n} \sum_{i=1}^n \frac{T_i Y_i}{\hat{\pi}(X_i)} - \frac{1}{n} \sum_{i=1}^n\frac{(1 - T_i) Y_i}{1 - \hat{\pi}(X_i)},
\end{align}
where $\hat{\pi}$ is an estimate of the propensity score function. The rationale of the IPW estimator is that were $\hat{\pi}$ to be replaced by the true propensity score $\pi$, the estimator would be an average of i.i.d.\ quantities each with expectation exactly equal to the target $\tau$. In high-dimensions, the difficulty in estimating $\pi$ results in the IPW having an unacceptably large bias.

Consider now the following modified IPW estimator
\begin{align} \label{eq:tau_hat}
	\hat{\tau} = \frac{1}{n}\sum_{i=1}^n \frac{T_i (Y_i-\mu_i)}{\hat{\pi}(X_i)} - \frac{1}{n} \sum_{i=1}^n\frac{(1 - T_i) (Y_i-\mu_i)}{1 - \hat{\pi}(X_i)},
\end{align}
where we have subtracted from the $i$th observed outcome $Y_i$ a quantity $\mu_i \in \R$.
Provided $\mu_i \independent T_i \mid \mb X$, $\hat{\tau}$ will retain the oracular unbiasedness of the IPW estimator, that is
\begin{align*}
	\E \left(\frac{T_i (Y_i-\mu_i)}{\pi(X_i)} - \frac{(1 - T_i) (Y_i-\mu_i)}{1 - \pi(X_i)}\right) - \tau = \E \left(\frac{(1-T_i)\mu_i}{1-\pi(X_i)} - \frac{T_i\mu_i}{\pi(X_i)}\right)=0. 
\end{align*}
Whilst this alone confers no advantage, we may attempt to choose $\mbb\mu := (\mu_i)_{i=1}^n$ so as to reduce the bias of the regular IPW estimator. \rev{Indeed, the popular augmented inverse propensity weighting estimator (AIPW) may be recovered from \eqref{eq:tau_hat} through taking $\mu_i = \{1-\hat{\pi}(X_i)\} \hat{r}_1(X_i) +\hat{\pi}(X_i)\hat{r}_0(X_i)$, where $\hat{r}_1$ and $\hat{r}_0$ are estimates of $r_1$ and $r_0$ respectively (see Section~\ref{sec:AIPW} in the appendix for a derivation).
	Thus AIPW chooses $\mbb\mu$ through estimating the function $\mu_{\ora} : \R^p \to \R$ given by
	\begin{equation} \label{eq:mu_ora_def}
		\mu_{\ora}(x) := \{1- \pi(x)\} r_1(x) + \pi(x) r_0(x).
	\end{equation}
	
	The estimator has the well-known double robustness property that if $\hat{r}_1=r_1$ and $\hat{r}_0=r_0$, so there is no estimation error for the outcome regressions, then an estimator of the form \eqref{eq:tau_hat} will remain unbiased regardless of the severity of the bias in $\hat{\pi}$. Moreover, the estimator achieves the semiparametric efficient variance bound for this problem when $\hat{\pi}$ is a sufficiently good estimator of the true propensity score $\pi$.
	
	As explained above though, here we are considering the setting where we may not have good estimates of $r_1$ and $r_0$, and so we cannot hope to emulate the all-encompassing debiasing effect of AIPW. The sparsity of the propensity score model does however provide us with an opportunity: roughly speaking, the bias of our propensity score estimates can be expected to lie `only in certain directions'---this is a vague statement which we make more precise in Section~\ref{sec:basic} (and specifically \eqref{eq:holder}) but does capture the underlying intuition.
}

\rev{Similarly to AIPW, our approach starts with an estimate $\tilde{\mu}$ of $\mu_{\ora}$; this may be constructed through applying machine learning methods to estimate each of $r_1$ and $r_0$, or a more traditional Lasso regression. Rather than using this directly in \eqref{eq:tau_hat} however, we then adjust this estimate by introducing an additional `orthogonalisation' step to compensate for the more specific biases in the propensity score estimate.} This involves solving a convex quadratic program depending on $\mb X$ and
an auxiliary dataset consisting of an i.i.d.\ copy of $(\mb X, \mb Y, \mb T)$; such auxiliary data may be obtained with a single dataset through sample-splitting. The optimisation problem is amenable to off-the-shelf convex optimisers and the approach is therefore computationally scalable to high-dimensional and large-scale settings.

\rev{In the case that $\tilde{\mu}$ is a sufficiently good estimate of $\mu_{\ora}$, as for example typically required for the theoretical guarantees of AIPW, our adjustment will likely result in little change. On the other hand, if $\tilde{\mu}$ is a poor estimate, the correction will ensure approximate unbiasedness of the resulting debiased IPW (DIPW) estimator. Importantly, this unbiasedness property makes essentially no requirements on the quality of the initial estimate $\tilde{\mu}$.  Instead, we see through our results that improved estimation here results in a smaller variance in the final estimator. In this way, our approach effectively separates out the debiasing and variance reduction effects of AIPW.}


In Section~\ref{sec:theory}, we show that the resulting debiased IPW (DIPW) estimator concentrates around the target $\tau$ at the parametric $1/\sqrt{n}$ rate and give a Berry--Esseen bound quantifying in finite samples, the deviation from Gaussianity  (Theorem~\ref{thm:split}). \rev{We show that for a cross-fitted version of our estimator, associated confidence intervals have finite sample coverage guarantees (Theorem~\ref{thm:ci_new}).}
\rev{Moreover, we show that $o(1 /\sqrt{\log n})$ rates of estimation for the outcome regressions result in semiparametric efficiency (see Theorem~\ref{thm:efficiency}). If sparse linear outcome regression models with sparsity $s_r$ are assumed, this translates to a requirement of $s_r = o(n / \{\log (n)\log (p)\})$, which is weaker than the corresponding $s_r=o(\sqrt{n} / \log p)$ of AIPW and $s_r = o(n^{3/4} / \log p)$ of \citet{BWZ19}; however no such sparse regression models are necessarily required in our theory.}

Whilst our basic DIPW has useful theoretical properties, a drawback of the estimator is that typically it has unwanted variance due to the randomness of the sample splitting required in its construction.
In Section~\ref{sec:multi} we present a version of the estimator that
averages over multiple splits to reduce this excess variance; it is this multi-sample splitting version of the estimator that we recommend using in practice. We outline extensions of our basic methodology for estimating linear projections of the heterogeneous treatment effect function, \rev{and handling propensity score models with other link functions other than the logistic link in Section~\ref{sec:extensions}}.
Numerical experiments are contained in Section~\ref{sec:numerical} and we conclude with a discussion in Section~\ref{sec:discuss}. The appendix of this paper contains the proofs of the theoretical results and further numerical results. An R package \texttt{dipw} implementing the methodology is available on CRAN.

\section{Debiased IPW: basic methodology} \label{sec:basic}
In this section, we describe a simple version of our debiased IPW estimator which relies on independent auxiliary datasets $\mathcal{D}_A := (\mb X_A, \mb Y_A, \mb T_A) \in \R^{n_A\times p} \times \R^{n_A} \times \{0, 1\}^{n_A}$ and $\mathcal{D}_B := (\mb X_B, \mb Y_B, \mb T_B) \in \R^{n_B\times p} \times \R^{n_B} \times \{0, 1\}^{n_B}$ that are independent  of the main dataset $(\mb X, \mb Y, \mb T)$ and consist of i.i.d.\ copies of the triple $(X, Y, T)$.
These datasets are used to ensure certain independencies, some of which are largely technical, which simplify our theoretical analysis in Section~\ref{sec:theory}.
Given only a single dataset, the separate datasets could be formed through splitting the original data into three parts. However in practice we recommend the multiple sample splitting approach described in Section~\ref{sec:multi}.

Using dataset $\mathcal{D}_B$ we first construct an estimate of the propensity score 
of the form $\hat{\pi}(x) = \psi(x^\top \hat{\gamma})$, where $\hat{\gamma}$ is an estimate of $\gamma$. Our default option, which we use in all our numerical experiments, is $\ell_1$-penalised logistic regression (see \eqref{eq:l_1_logistic}).

Recall that in order for the modified IPW estimator $\hat{\tau}$ \eqref{eq:tau_hat} to be useful, we must choose the correction terms $\mu_i$ such that the bias is negligible.
In order to derive the appropriate form of correction, it is helpful to introduce the oracle estimate
\begin{equation} \label{eq:tau_ora}
	\tau_{\ora} := \frac{1}{n}\sum_{i=1}^n \left(\frac{T_i (Y_i-\mu_i)}{\pi(X_i)} - \frac{(1 - T_i) (Y_i-\mu_i)}{1 - \pi(X_i)}\right),
\end{equation}
which, as discussed in the previous section, is unbiased given the conditional independence $\mbb\mu \independent \mb T \mid \mb X$.
We then have the following decomposition involving $\tau_{\ora}$ (see Section~\ref{sec:thmdipw} for a derivation):
\begin{equation} \label{eq:hat_tau_decomp}
	\begin{split}
		\hat{\tau} & = -\frac{1}{n}\sum_{i=1}^n \left(\frac{T_iY_i}{\hat{\pi}_i\pi_i} - \frac{(1-T_i)Y_i}{(1-\hat{\pi}_i)(1-\pi_i)} - \frac{\mu_i}{\hat{\pi}_i(1-\hat{\pi}_i)}\right)(\hat{\pi}_i - \pi_i)  \\
		&\qquad +  \frac{1}{n} \sum_{i=1}^n (T_i - \pi_i)\left(\frac{\mu_i}{\hat{\pi}_i \pi_i} +
		\frac{\mu_i}{(1-\hat{\pi}_i) (1-\pi_i)}\right)(\hat{\pi}_i - \pi_i)  + \tau_{\ora}. 
	\end{split}
\end{equation}
Here we have written $\hat{\pi}_i := \hat{\pi}(X_i) = \psi(X_i^\top \hat{\gamma})$ and $\pi_i = \pi(X_i)$ for simplicity. Now provided $\hat{\gamma}$ is independent of the main dataset, and given $\mbb\mu \independent \mb T \mid \mb X$, the penultimate term has mean zero due to the $T_i - \pi_i$ factor in each summand. The first term therefore constitutes the bias; 
%
its absolute value is approximately
\begin{align}
	& \abs{\frac{1}{n} \sum_{i=1}^n \left(\frac{T_i Y_i}{\hat{\pi}_i^2} + \frac{(1 - T_i) Y_i}{(1 - \hat{\pi}_i)^2} - \frac{\mu_i}{(1 - \hat{\pi}_i)\hat{\pi}_i}\right)(\hat{\pi}_i - \pi_i)} \notag \\
	\approx & \abs{\frac{1}{n} \sum_{i=1}^n \left(\frac{T_i Y_i}{\hat{\pi}_i^2} + \frac{(1 - T_i) Y_i}{(1 - \hat{\pi}_i)^2} - \frac{\mu_i}{(1-\hat{\pi}_i)\hat{\pi}_i}\right) \psi'(X_i^\top  \hat{\gamma}) X_i^\top  (\hat{\gamma} - \gamma)}, \label{eq:approx_bias}
\end{align}
the second approximation following from a Taylor series expansion. Now noting that the logistic function satisfies $\psi'(u) = \psi(u)\{1-\psi(u)\}$, and writing
\begin{align} \label{eq:tildeY}
	\tilde{Y}_i := \frac{T_iY_i(1-\hat{\pi}_i)}{\hat{\pi}_i} + \frac{(1-T_i)Y_i \hat{\pi}_i}{1 - \hat{\pi}_i},
\end{align}
and $\tilde{\mb Y} := (\tilde{Y}_i)_{i=1}^n$,
we have that \eqref{eq:approx_bias} is
\begin{equation}\label{eq:holder}
	\abs{\frac{1}{n} \sum_{i=1}^n (\tilde{Y}_i - \mu_i) X_i^\top (\hat{\gamma} - \gamma)} \leq \frac{1}{n} \| \mb X^\top  \tilde{\mb Y} - \mb X^\top  \mbb\mu \|_\infty \|\hat{\gamma} - \gamma\|_1,
\end{equation}
using H\"older's inequality. 
Under relatively weak conditions,  we have that with high probability $\|\hat{\gamma} - \gamma\|_1 \lesssim s \sqrt{\log(p)/n}$  \citep{van2008high}.
\rev{This inequality encapsulates the sense in which the bias in the propensity score estimate is `only in certain directions': we need only arrange that $\| \mb X^\top  (\tilde{\mb Y} - \mbb\mu) \|_\infty/n$ is small, that is $\tilde{\mb Y} - \mbb\mu$ is approximately orthogonal to the columns of $\mb X$, in order for the bias to be negligible.}

\rev{
	This, however, is not entirely straightforward to achieve since,} as explained in Section~\ref{sec:contrib}, in order to maintain unbiasedness of $\tau_{\ora}$, $\mbb\mu$ should be conditionally independent of $\mb T$ given $\mb X$. To ensure this, we construct $\mbb\mu$ as a function of $\mb X$  and the auxiliary data, as follows.
First observe that for any function $f:\R^p \to \R$,\footnote{We use the convention that a function $g: \R^p \to \R$ applied to a matrix $\mb M \in \R^{m \times p}$ is the vector formed from applying $g$ to each row of $\mb M$, that is, $g(\mb M):= (g(M_i))_{i=1}^m$ where $M_i \in \R^p$ is the $i$th row of $\mb M$.}
\begin{equation} \label{eq:l_inf_simple}
	\| \mb X^\top  \{\tilde{\mb Y} - f(\mb X)\} - \mb X^\top  \{\mbb\mu -f(\mb X)\}\|_\infty = \| \mb X^\top  \tilde{\mb Y} - \mb X^\top  \mbb\mu\|_\infty.
\end{equation}
Given $f$ (we discuss the choice of this below), we propose to approximate $\mb X^\top  \{\tilde{\mb Y} - f(\mb X)\} / n$ via $\mb X_A^\top  \{\tilde{\mb Y}_A - f(\mb X_A)\}/n_A$. Here $\tilde{\mb Y}_A$ is the equivalent of $\tilde{\mb Y}$ but with $Y_i$ and $T_i$ in definition \eqref{eq:tildeY} replaced by $Y_{A,i}$ and $T_{A,i}$ for $i=1,\ldots,n_A$; for later use, we similarly define $\mb{\tilde{Y}}_B$.
We may then choose $\mbb\mu$ to satisfy the constraint
\begin{equation} \label{eq:constraint}
	\Big\|\frac{1}{n_A}\mb X_A^\top  \{\tilde{\mb Y}_A - f(\mb X_A)\} - \frac{1}{n} \mb X^\top  \{\mbb\mu - f(\mb X)\}\Big\|_\infty \leq \eta.
\end{equation}
Under reasonable sub-Gaussianity conditions (see Assumptions \ref{as:subgY} and \ref{as:subgX}),
\begin{equation} \label{eq:l_inf_bd}
	\Big\|\frac{1}{n_A}\mb X_A^\top  \{\tilde{\mb Y}_A - f(\mb X_A)\} - \frac{1}{n} \mb X^\top  \{\tilde{\mb Y} - f(\mb X)\}\Big\|_\infty \leq c \sqrt{\frac{\log(p)}{\min(n, n_A)}}
\end{equation}
with high probability, for a constant $c > 0$. Thus assuming $n_A \gtrsim  n$, choosing $\eta \asymp \sqrt{\log(p)/n}$ should ensure that the bias is of order
\[
\sqrt{\frac{\log p}{n}} \times s \sqrt{\frac{\log p}{n}} = s \frac{\log p }{n} \ll \frac{1}{\sqrt{n}}
\]
under a sparsity condition of the form $s \ll \sqrt{n} / \log p$. Note that with high probability $\tilde{ \mb Y}$ will be feasible for \eqref{eq:constraint} so the constraint set will be non-empty.

\rev{This type of use of auxiliary data is related to the idea of cross-fitting (as popularised e.g.\ by \citet{chernozhukov2018double}) which involves estimating nuisance functions on auxiliary data to ensure the estimates are independent of the main dataset. The crucial difference here is that $\mbb\mu$ does depend on the main dataset, as it must in order to ensure the approximate orthogonality we require; however, this dependence is only through $\mb X$. 
	A related `partial' cross-fitting scheme also appears in the work of \citet{bradic2019minimax}, which is contemporaneous with ours here, but in the context of estimating regression coefficients.}

Constraint \eqref{eq:constraint}, even with $f$ fixed, does not uniquely identify $\mbb\mu$; we would thus like to pick an element of the constraint set so the variance of the estimator is small. Recall the relationship between $\hat{\tau}$ and $\tau_{\ora}$ in \eqref{eq:hat_tau_decomp}. The above analysis shows that we can expect $\hat{\tau} \approx \tau_{\ora}$, so $\Var(\hat{\tau}) \approx \Var(\tau_{\ora})$.
As indicated in Section~\ref{sec:contrib}, the variance minimising $\mu_i$ should be
\begin{equation} \label{eq:mu_ora}
	\mu_{\ora}(X_i) =: \mu_{\ora,i} = \E (\tilde{Y}_{\ora,i}|X_i),
\end{equation}
where $\tilde{Y}_{\ora,i}$ is like $\tilde{Y}_i$ \eqref{eq:tildeY} but with $\hat{\pi}_i$ replaced with the true propensity score $\pi_i$; see Section~\ref{sec:mu_ora} in the appendix for a derivation of the final equality.
Indeed, observe that the
$i$th summand of $\tau_{\ora}$ may be written as
\[
\tau_{\ora, i} := \frac{(T_i - \pi_i)(Y_i-\mu_i)}{\pi_i(1-\pi_i)}.
\]
We then have the following result.
\begin{lemma} \label{lem:var}
	The function $\mu_i$ of $X_1, \ldots, X_n$ that minimises $\Var(\tau_{\ora, i})$
	is $\mu_i(X_1,\ldots,X_n) = \mu_{\ora, i}$.
\end{lemma}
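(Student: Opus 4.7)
The plan is to apply the law of total variance conditional on $\mb X := (X_1,\ldots,X_n)$ to strip away the parts of the objective that do not depend on $\mu_i$, and then perform pointwise minimisation of the remaining quadratic. Writing
\[
\Var(\hat{\tau}_{\ora, i}) = \E[\Var(\hat{\tau}_{\ora,i}\mid \mb X)] + \Var[\E(\hat{\tau}_{\ora,i}\mid \mb X)],
\]
and noting that $\mu_i$ is $\mb X$-measurable while $\E(T_i - \pi_i \mid \mb X) = 0$, one sees that $\E(\hat{\tau}_{\ora, i}\mid \mb X) = \E[(T_i-\pi_i)Y_i\mid \mb X]/\{\pi_i(1-\pi_i)\}$ is free of $\mu_i$. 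Hence only $\E[\Var(\hat{\tau}_{\ora, i}\mid \mb X)]$ has to be minimised.

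Next, using $\Var[(T_i-\pi_i)(Y_i-\mu_i)\mid \mb X] = \E[(T_i-\pi_i)^2(Y_i-\mu_i)^2\mid \mb X] - \{\E[(T_i-\pi_i)Y_i\mid \mb X]\}^2$, the subtracted term is independent of $\mu_i$ and can be dropped. Exploiting $T_i\in\{0,1\}$ and $Y_i = T_iY_i(1)+(1-T_i)Y_i(0)$ gives the identity
\[
(T_i-\pi_i)^2(Y_i-\mu_i)^2 = T_i(1-\pi_i)^2(Y_i(1)-\mu_i)^2 + (1-T_i)\pi_i^2(Y_i(0)-\mu_i)^2.
\]
Dividing by $\pi_i^2(1-\pi_i)^2$, taking expectation, and applying Assumption~\ref{as:unconfounded} (so that $T_i\independent (Y_i(0), Y_i(1)) \mid X_i$) together with the i.i.d.\ property of the observations (so that $\E(Y_i(t)\mid \mb X)=r_t(X_i)$) reduces the objective, up to an additive constant in $\mu_i$, to
\[
J(\mu_i) := \E\left[\frac{(r_1(X_i)-\mu_i)^2}{\pi_i} + \frac{(r_0(X_i)-\mu_i)^2}{1-\pi_i}\right].
\]

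By Assumption~\ref{as:logistic} the weights $1/\pi_i$ and $1/(1-\pi_i)$ are bounded, so the integrand of $J$ is strictly convex in the value $\mu_i$ for each realisation of $\mb X$, and the minimisation therefore separates: I optimise pointwise. The first-order condition
\[
(1-\pi_i)\{\mu_i - r_1(X_i)\} + \pi_i\{\mu_i - r_0(X_i)\} = 0
\]
has the unique solution $\mu_i^{\star} = (1-\pi_i) r_1(X_i) + \pi_i r_0(X_i)$, which depends only on $X_i$. It remains to verify $\mu_i^{\star} = \mu_{i, \ora}$: by unconfoundedness,
\[
\E\left[\frac{T_i Y_i(1-\pi_i)}{\pi_i} + \frac{(1-T_i) Y_i \pi_i}{1-\pi_i}\,\middle|\,X_i\right] = (1-\pi_i) r_1(X_i) + \pi_i r_0(X_i) = \mu_i^{\star},
\]
as required. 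No step is technically deep; the main care point is the book-keeping of $\mu_i$-free contributions so that the problem reduces cleanly to a separable pointwise quadratic minimisation.
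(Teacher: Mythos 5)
Your proof is correct, but it reaches the conclusion by a route that differs from the paper's in two respects. First, where the paper begins by arguing that one may replace $\mu_i$ by $\E(\mu_i\mid X_i)$ (so that the search can be restricted to functions of $X_i$ alone) and then observes that minimising $\Var(\hat{\tau}_{\ora,i})$ is equivalent to minimising $\E\hat{\tau}_{\ora,i}^2$ because the mean is fixed at $\tau$, you instead condition on all of $\mb X$ via the law of total variance and minimise pointwise over $\mb X$-measurable functions; the fact that the pointwise minimiser depends only on $X_i$ then falls out of the computation rather than being imposed at the start. Second, the paper solves the resulting weighted least-squares problem by invoking a general auxiliary lemma (its Lemma~\ref{lem:cond_exp}), which identifies the minimiser as $\E\{(T_i-\pi_i)^2Y_i\mid X_i\}/\E\{(T_i-\pi_i)^2\mid X_i\}$ and then simplifies this ratio; you avoid the auxiliary lemma entirely by splitting the quadratic by treatment arm using $T_i\in\{0,1\}$ and unconfoundedness, arriving at the objective $(r_1(X_i)-\mu_i)^2/\pi_i+(r_0(X_i)-\mu_i)^2/(1-\pi_i)$ whose first-order condition gives $(1-\pi_i)r_1(X_i)+\pi_i r_0(X_i)$ directly. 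Your version is the more elementary and self-contained of the two (no separate projection lemma is needed, and the handling of the dependence on $X_1,\ldots,X_n$ is arguably cleaner), while the paper's version isolates a reusable general fact about weighted conditional least squares. Both identify the same minimiser and both correctly verify that it coincides with $\mu_{i,\ora}=\E(\tilde{Y}_{\ora,i}\mid X_i)$.
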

This suggests we should encourage $\mu_i$ to be close to $\mu_{\ora, i}$; however directly regressing $\tilde{\mb Y}$ on $\mb X$ would violate our requirement that $\mbb\mu \independent \mb T \mid \mb X$.
\rev{Instead, we propose to construct an estimate $\tilde{\mu}$ of the function $\mu_{\ora}$ using dataset $\mathcal{D}_B$; treating $\tilde{\mu}(X_i)$ as a proxy of $\mu_{\ora, i}$, we may pick the $\mbb\mu$ minimising $\|\mbb\mu - \tilde{\mu}(\mb X)\|_2^2$ subject to our constraint \eqref{eq:constraint}.} \rev{This $\tilde{\mu}$ may be formed either through
	\begin{enumerate}[(a)]
		\item regressing $\tilde{\mb Y}_B$ onto $\mb X_B$, or;
		\item estimating each of $r_1$ and $r_0$ by regressing treatment and control groups onto the confounders separately, and forming $\tilde{\mu}(x) = \{1-\hat{\pi}(X_i)\} \hat{r}_1(X_i) +\hat{\pi}(X_i)\hat{r}_0(X_i)$.
	\end{enumerate}
	See Section~\ref{sec:numerical} for illustrations of each of these approaches. }

\rev{
	It remains to fix function $f$, for which we can use $\tilde{\mu}$ once more: the constant $c$ in \eqref{eq:l_inf_bd} will be related to the variances of $X_{ij}\{\tilde{Y}_i - f(X_i)\}$ for $j=1,\ldots,p$, which should be small when $f$ is close to $\E(\tilde{Y}_i | X_i)$.

	We thus arrive at the following convex quadratic program for determining $\mbb\mu = \hat{\mbb\mu}$}
\begin{gather}
	\hat{\mbb\mu} = \argmin_{\mbb\mu \in \R^n} \frac{1}{n}\|\tilde{\mu}(\mb X) - \mbb\mu\|_2^2 \notag\\
	\text{subject to } \;\;  \Big\|\frac{1}{n_A}\mb X_A^\top  \{\tilde{\mb Y}_A - \tilde{\mu}(\mb X_A)\} - \frac{1}{n}\mb X^\top  \{\mbb\mu - \tilde{\mu}(\mb X)\} \Big\|_\infty \leq \eta. \label{eq:mu_hat}
\end{gather}
When no feasible $\hat{\mbb\mu}$ exists, we simply set $\hat{\mbb\mu} = \mb 0$; as explained earlier however, $\tilde{\mb Y}$ is a feasible solution with high probability.
With this we can define the basic version of our debiased inverse probability weighting estimator $\hat{\tau}_{\dipw}$:
\begin{equation} \label{eq:dipw}
	\hat{\tau}_{\dipw} := \frac{1}{n}\sum_{i=1}^n \left(\frac{T_i (Y_i-\hat{\mu}_i)}{\hat{\pi}_i} - \frac{(1 - T_i) (Y_i-\hat{\mu}_i)}{1 - \hat{\pi}_i}\right).
\end{equation}
The choice of regression method to produce $\tilde{\mu}$ will affect the variance of the final estimator and the constant multiplying the bias term, but not rate at which the bias decays. Importantly there is no need for $\tilde{\mu}(X_i)$ to genuinely approximate $\mu_{\ora, i}$
and even the simple choice $\tilde{\mu}(x) \equiv 0$ will result in $1/\sqrt{n}$ rates for estimating $\tau$. 

In the following section, we present some theoretical properties of $\hat{\tau}_{\dipw}$ before describing a multi-sample splitting version in Section~\ref{sec:multi} that does not require auxiliary datasets and is the default version we recommend using in practice.

\section{Theoretical properties} \label{sec:theory}
In this section,  we investigate some theoretical properties of our debiased IPW scheme. We first study the basic estimator $\hat{\tau}_{\dipw}$ \eqref{eq:dipw} which uses auxiliary datasets as described in Section~\ref{sec:basic}. We show  via a Berry--Esseen bound that $\hat{\tau}_{\dipw}$ is approximately Gaussian about the average treatment effect conditioned on the covariates (Theorem~\ref{thm:dipw}), with a variance that decays at the optimal $1/n$ rate. We further show that $\hat{\tau}_{\dipw}$ concentrates around the (unconditional) average treatment effect parameter $\tau$ at the optimal $1/n$ rate (Theorem~\ref{thm:split}).
A noteworthy feature of the results is that they require essentially no assumptions on the complexity of the regression functions that might allow them to be estimable, and quantify the price to pay for this which manifests itself as a  constant factor inflation of the variance of the estimator. \rev{Thus our theory here supports the idea that DIPW is particularly well-suited to settings where the outcome regression models may be very hard to estimate consistently, but the propensity score is expected to be estimable (as may be ascertained in practice via a goodness-of-fit test such as the generalised residual prediction test of \citet{jankova2020goodness}); this thinking is further evidenced numerically in Section~\ref{sec:numerical}.}

\rev{
	In Section~\ref{sec:crossfit} we consider a setting with only a single main sample, and use cross-fitting to construct a confidence interval for the conditional average treatment effect, for which we provide finite sample coverage guarantees. We also show that under conditions on how well the outcome regression functions $r_1$ and $r_0$ are estimated, our approach delivers efficient estimation of the average treatment effect $\tau$.}

\rev{For all of the results to follow,} in addition to the unconfoundedness, overlap and propensity score model assumptions (Assumptions \ref{as:unconfounded} and \ref{as:logistic}), it will be convenient to make the following assumptions about the distribution of the high-dimensional confounders $X$ and the potential outcomes $Y(0), Y(1)$.

\begin{assumption} \label{as:subgY}
	The potential outcomes $Y(0), Y(1)$ are sub-Gaussian random variables, so there exists $\sigma_Y > 0$ such that $\E\big(\exp[\alpha \{Y(t) - \E Y(t)\}]\big) \leq \exp(\alpha^2 \sigma_Y^2 / 2)$ for all $\alpha \in \R$ and $t=0, 1$. Furthermore we assume there exists a constant $m_Y > 0$ such that $\max_{t=0,1}|\E Y(t)| \leq m_Y$.
\end{assumption}

\begin{assumption} \label{as:subgX}
	The first component of $X$ is $1$, representing an intercept term. Denoting by $Z \in \R^{p-1}$ the remaining components of $X$, we assume $Z$ is 
	sub-Gaussian with $\E Z=0$, so there exists $\sigma_Z>0$ such that for each $u \in \R^{p-1}$ with $\|u\|_2=1$, and for all $\alpha \in \R$, $\E\{\exp(\alpha u^\top  Z)\} \leq \exp(\alpha^2 \sigma_Z^2/2)$.
\end{assumption}

\rev{The assumption that the potential outcomes are sub-Gaussian and of bounded mean will be satisfied for example if they are bounded random variables, or if $r_t(X)$ for $t=0,1$ are sub-Gaussian with bounded mean and the errors $Y(t) - r_t(X)$ are sub-Gaussian.}
Assumption~\ref{as:subgX} that $\E Z = 0$ above is here only made for simplicity and is not essential for our main arguments. The sub-Gaussianity of $Z$ then implies in particular that the largest eigenvalue of $\Var(Z)$ is bounded.
We also assume that $p$ is not too large compared to $n$, as is typical in the high-dimensional statistics literature, and also place a lower bound on $p$ and $s$ to simplify the statements of the results to follow.

\begin{assumption} \label{as:p_large}
	There exists a sequence $(a_n)_{n=1}^\infty$ with $\lim_{n \to \infty} a_n = 0$ such that $\log(p) / n = a_n$. Furthermore $p \geq 2$ and $s \geq 1$.
\end{assumption} 

\subsection{Results with auxiliary data} \label{sec:aux}
In this section, we investigate some theoretical properties of the basic debiased IPW estimator $\hat{\tau}_{\dipw}$ \eqref{eq:dipw}, assuming the existence of auxiliary datasets $\mathcal{D}_A$ and $\mathcal{D}_B$ as in Section~\ref{sec:basic}.
For simplicity of exposition, throughout this section we assume that $n = n_A$.



There are two potential targets we may be interested in estimating using $\hat{\tau}_{\dipw}$: the first is the average treatment effect $\tau = \E\{Y(1) - Y(0)\}$, and the second is a version conditional on the observed covariates
\begin{equation} \label{eq:tau_bar}
	\bar{\tau} := \frac{1}{n} \sum_{i=1}^n \E\{Y_i(1) - Y_i(0) | X_i\}.
\end{equation}
In order to present our results on the estimation of these quantities, we first introduce the following notation. Let us write
\[
\varepsilon(t) := Y(t) - r_t(X) \qquad \text{and} \qquad \varepsilon_i(t) = Y_i(t) - r_t(X_i), \qquad t=0,1.
\]
Let $\mathcal{D} := (\mb X, \mathcal{D}_A, \mathcal{D}_B)$.
Define the event $\Omega(c_\gamma, c_{\tilde{\mu}}, c_{\hat{\pi}})$ depending on constants $c_\gamma, c_{\tilde{\mu}} > 0$ and $c_{\hat{\pi}} \in (0, 1/2]$ to be such that
\begin{enumerate}[(i)]
	\item $\|\hat{\gamma} - \gamma\|_1\leq c_\gamma s \sqrt{\log (p) /n }$ and $\|\hat{\gamma} - \gamma\|_2 \leq c_\gamma \sqrt{s \log (p) / n}$;
	\item $c_{\hat{\pi}} \leq \hat{\pi}_i \leq 1 - c_{\hat{\pi}}$ and $c_{\hat{\pi}} \leq \hat{\pi}_{Ai} \leq 1 - c_{\hat{\pi}}$ for all $i=1,\ldots,n$;
	\item $|\E \{\tilde{\mu}(X) \,|\, \mathcal{D}_B\} | < c_{\tilde{\mu}}$ and for all $\alpha \in \R$, $\E \big(\exp [\alpha \{\tilde{\mu}(X) - \E(\tilde{\mu}(X) | \mathcal{D}_B)\} ] \,|\, \mathcal{D}_B\big) \leq \exp(\alpha^2 c_{\tilde{\mu}}^2/2)$.
\end{enumerate}
Here $\hat{\pi}_{Ai}$ is the equivalent of $\hat{\pi}_i$ but related to the dataset $\mathcal{D}_A$.
Our results will require that this event occurs with high probability.
\newrev{Note that if $\hat{\gamma}$ is the output of a Lasso logistic regression, we are guaranteed that both (i) and (ii) occur with high probability under the additional mild assumptions that $\Var(Z)$ (see Assumption~\ref{as:subgX}) has a minimum eigenvalue bounded away from zero, and $s \ll n / (\log (p) \log (n))  $; see Section~\ref{sec:omega} in the appendix.}

Condition (iii) requires $\tilde{\mu}(X)$ to be conditionally sub-Gaussian with bounded conditional mean. \rev{While this is a non-standard condition, we do not regard it as very strong, and there are several settings under which we can expect this to be satisfied for some constant $c_{\tilde{\mu}}$. Firstly, the results in this section, and their conclusions regarding the fast $1/\sqrt{n}$ rate of estimation of treatment effects, all hold true when $\tilde{\mu}$ is simply chosen to be the 0 function, or any constant function. More generally, this will hold whenever $\tilde{\mu}$ is bounded. For example, when the potential outcomes are bounded (see also Assumption~\ref{as:subgY} and the following discussion), using each of the approaches (a) or (b) in Section~\ref{sec:basic} to form $\tilde{\mu}$ and employing either regression trees \citep{breiman2017classification} or random forests \citep{breiman2001random} for the regressions will result in $\tilde{\mu}$ being bounded, and hence (iii) being satisfied. If $\tilde{\mu}$ were constructed through Lasso regressions, then the $\ell_2$-norm of the regression coefficient being bounded will ensure (iii) provided Assumption~\ref{as:subgX} holds.}

In order to state our first result concerning estimation of $\bar{\tau}$, it will be convenient to define the following quantities:
\begin{align*}
	\sigma_\mu^2 &:= \frac{1}{n} \sum_{i=1}^n \frac{(\hat{\mu}_i - \mu_{\ora, i})^2}{\pi_i (1 - \pi_i)} \\
	\bar{\sigma}^2 &:= \frac{1}{n} \sum_{i=1}^n \bigg(\frac{\E\{\varepsilon_i(1)^2 | X_i\}}{\pi_i} + \frac{\E\{\varepsilon_i(0)^2 | X_i\}}{1 - \pi_i}\bigg) \\
	\bar{\rho}^3 &:= \frac{1}{n} \sum_{i=1}^n [\E\{|\varepsilon_i(1)|^3 | X_i\} + \E\{|\varepsilon_i(0)|^3 | X_i\}].
\end{align*}
We will see that $\sigma_\mu^2$ is a term by which the asymptotic mean-squared error (MSE) of $\hat{\tau}_{\dipw}$ is increased beyond the
MSE $\bar{\sigma}^2$ attainable by AIPW when $\pi$, $r_0$ and $r_1$ are estimated sufficiently well.
The quantity $\bar{\rho}^3$ plays a role in bounding the rate of convergence of $\sqrt{n}(\hat{\tau}_{\dipw} - \bar{\tau})$ to a Gaussian distribution in a Berry--Esseen bound, as we see below. 
\begin{theorem} \label{thm:dipw}
	Let $\hat{\tau}_{\dipw}$ be the estimator \eqref{eq:dipw} where $\hat{\mu} \in \R^n$ is constructed via \eqref{eq:mu_hat} with tuning parameter $\eta = c_\eta\sqrt{\log(p) / n}$ and where the constant $c_\eta>0$ is sufficiently large. Suppose Assumptions \ref{as:unconfounded}--\ref{as:p_large} hold.
	Then we have the decomposition
	\[
	\sqrt{n}(\hat{\tau}_{\dipw} - \bar{\tau}) = \delta + \sqrt{ \sigma_{\mu}^2 + \bar{\sigma}^2} \zeta
	\]
	in which given constants $c_\gamma, c_{\tilde{\mu}} > 0$, $c_{\hat{\pi}} \in (0, \frac{1}{2}]$ and $m \in \mathbb{N}$, we have that $\delta$ and $\zeta$ satisfy the following properties:
	\begin{enumerate}[(i)]
		\item there exist constants\footnote{Here and below, the constants in the conclusions of our results may depend upon quantities introduced as
			constants in the relevant conditions for these results.} $c_\delta, c > 0$ 
		such that
		\begin{equation} \label{eq:delta}
			\pr\left(|\delta|  > c_\delta (s + \sqrt{s \log n}) \frac{\log p}{\sqrt{n}}\right) \leq \pr(\Omega^c(c_\gamma,c_{\tilde{\mu}},c_{\hat{\pi}})) + c(p^{-m} + n^{-m});
		\end{equation}
		\item there exists constant $c_\zeta >0$ such that
		\begin{align} \label{eq:berry-eseen}
			\sup_{t \in \R} |\pr(\zeta \leq t | \mathcal{D}) - \Phi(t)| \leq \frac{c_\zeta}{\sqrt{n}} \frac{\sigma_\mu^2\|\hat{\mbb\mu} - \mbb\mu_{\ora}\|_\infty + \bar{\rho}^3}{(\sigma_{\mu}^2 + \bar{\sigma}^2)^{3/2}}.
		\end{align}
	\end{enumerate}
\end{theorem}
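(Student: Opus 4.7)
\textbf{Proof plan for Theorem~\ref{thm:dipw}.} The starting point is the exact decomposition \eqref{eq:hat_tau_decomp} of $\hat{\tau}_{\dipw}$, evaluated at $\mbb\mu = \hat{\mbb\mu}$, which writes $\hat{\tau}_{\dipw} - \tau_{\ora}$ as the sum of two terms: a ``deterministic bias'' term $B_1$ (the first line) and a ``stochastic bias'' $B_2$ (the $T_i - \pi_i$ term). We set $\delta := \sqrt{n}(\hat{\tau}_{\dipw} - \tau_{\ora}) = \sqrt{n}(B_1 + B_2)$ and $\zeta := \sqrt{n}(\tau_{\ora} - \bar{\tau})/\sqrt{\sigma_\mu^2 + \bar{\sigma}^2}$, so $\sqrt{n}(\hat{\tau}_{\dipw} - \bar{\tau}) = \delta + \sqrt{\sigma_\mu^2 + \bar{\sigma}^2}\,\zeta$. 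Parts (i) and (ii) are then proved separately.

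For part (i), I bound $B_1$ by Taylor expanding $\hat{\pi}_i - \pi_i = \psi'(X_i^T\hat{\gamma})X_i^T(\hat{\gamma}-\gamma) + R_i$, where the identity $\psi'(u) = \psi(u)(1-\psi(u))$ lets me simplify exactly as in \eqref{eq:approx_bias}, so that the first-order part of $B_1$ equals $\frac{1}{n}(\tilde{\mb Y} - \hat{\mbb\mu})^T \mb X (\hat{\gamma}-\gamma)$. H\"older's inequality together with the constraint \eqref{eq:mu_hat} and the concentration bound \eqref{eq:l_inf_bd} (whose high-probability validity under Assumptions~\ref{as:subgX}, \ref{as:subgY} must also be checked, so as to guarantee both that $\tilde{\mb Y}$ is feasible and that $\|\frac{1}{n}\mb X^T(\tilde{\mb Y} - \hat{\mbb\mu})\|_\infty = O(\sqrt{\log p/n})$), combined with $\|\hat{\gamma}-\gamma\|_1 \le c_\gamma s\sqrt{\log p/n}$ on $\Omega(c_\gamma,c_{\tilde f},c_{\hat\pi})$, deliver a contribution of order $s\log p/n$. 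The Taylor remainder is handled via $|R_i| \lesssim (X_i^T(\hat{\gamma}-\gamma))^2$ summed against a uniformly bounded prefactor (using overlap on $\hat{\pi}_i$ and a high-probability bound on $\max_i |\hat{\mu}_i|$ and $\max_i |Y_i|$ provided by Assumptions~\ref{as:subgX}--\ref{as:subgY} and condition (iii)), which contributes $O(\|\mb X(\hat{\gamma}-\gamma)\|_2^2/n) = O(s\log p/n)$ by the usual restricted-eigenvalue arguments for sub-Gaussian designs.

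For $B_2$, which conditional on $\mathcal{D}$ is a linear combination $\frac{1}{n}\sum_i(T_i - \pi_i) w_i$ of independent bounded centred terms with $w_i = \hat{\mu}_i\{(\hat{\pi}_i\pi_i)^{-1} + ((1-\hat{\pi}_i)(1-\pi_i))^{-1}\}(\hat{\pi}_i-\pi_i)$, I apply Bernstein's inequality conditionally. Using overlap, the bound on $\max_i |\hat{\mu}_i|$ and $\sum_i(\hat{\pi}_i-\pi_i)^2 \lesssim \|\mb X(\hat{\gamma}-\gamma)\|_2^2 \lesssim s\log p$, the conditional variance is $O(s\log p/n^2)$, yielding $|B_2| \lesssim \sqrt{s\log p\cdot\log n}/n$ with probability $1 - O(n^{-m})$. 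Multiplying by $\sqrt{n}$ and combining with the $B_1$ bound gives the claimed $(s + \sqrt{s\log n})\log p/\sqrt{n}$ rate for $\delta$.

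For part (ii), I first note that conditional on $\mathcal{D}$ the summands $W_i := \tau_{\ora,i} - (r_1(X_i) - r_0(X_i))$ are independent and centred (by Assumption~\ref{as:unconfounded} and the fact that $\hat{\mbb\mu} \independent \mb T \mid \mb X$ through the construction \eqref{eq:mu_hat}). A direct calculation splitting on $T_i \in \{0,1\}$ shows that on $\{T_i=1\}$, $W_i = (\varepsilon_i(1) + \mu_{\ora,i} - \hat{\mu}_i)/\pi_i$, and on $\{T_i=0\}$, $W_i = -(\varepsilon_i(0) + \mu_{\ora,i} - \hat{\mu}_i)/(1-\pi_i)$; using the identity $(1-p)a^2/p + pb^2/(1-p) = ((1-p)a+pb-m)^2/(p(1-p)) + (a-b)^2$ with $a = r_1(X_i), b = r_0(X_i), m = \hat{\mu}_i, p = \pi_i$, this gives $\frac{1}{n}\sum_i \Var(W_i \mid \mathcal{D}) = \sigma_\mu^2 + \bar{\sigma}^2$, so $\zeta$ has unit conditional variance. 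The Berry--Esseen theorem for sums of independent non-identically distributed random variables then gives the bound in terms of $n^{-1}\sum_i \E(|W_i|^3 \mid \mathcal{D})/(\sigma_\mu^2 + \bar{\sigma}^2)^{3/2}$. Using $(a+b)^3 \le 4(a^3 + b^3)$ and $|\mu_{\ora,i}-\hat{\mu}_i|^3 \le \|\hat{\mbb\mu}-\mbb\mu_{\ora}\|_\infty(\mu_{\ora,i}-\hat{\mu}_i)^2$, the average third moment is bounded by a constant (depending on $c_\pi$) times $\bar{\rho}^3 + \sigma_\mu^2\|\hat{\mbb\mu}-\mbb\mu_{\ora}\|_\infty$, exactly matching \eqref{eq:berry-eseen}. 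The main obstacle throughout is pinning down the high-probability uniform bound on $\|\hat{\mbb\mu}\|_\infty$ (used both for the Bernstein step and for the Taylor remainder), which is where condition (iii) of $\Omega(c_\gamma,c_{\tilde f},c_{\hat\pi})$ enters; the rest is accounting.
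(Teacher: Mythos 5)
Your decomposition is exactly the paper's: $\delta=\sqrt{n}(Q_A+Q_B)$ with $Q_A,Q_B$ the two lines of \eqref{eq:hat_tau_decomp} and $\zeta=\sqrt{n}(\tau_{\ora}-\bar\tau)/\sqrt{\sigma_\mu^2+\bar\sigma^2}$, and your treatment of part (ii) — conditional Berry--Esseen after rewriting the summand as $\frac{\mu_{\ora,i}-\hat\mu_i}{\pi_i(1-\pi_i)}(T_i-\pi_i)+\frac{T_i\varepsilon_i(1)}{\pi_i}-\frac{(1-T_i)\varepsilon_i(0)}{1-\pi_i}$ via $\mu_{\ora,i}=(1-\pi_i)r_1(X_i)+\pi_i r_0(X_i)$, then bounding third moments with $|\mu_{\ora,i}-\hat\mu_i|^3\le\|\hat{\mbb\mu}-\mbb\mu_{\ora}\|_\infty(\mu_{\ora,i}-\hat\mu_i)^2$ — matches Lemma~\ref{lem:qc}. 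The gap is in part (i): at three places (the Taylor remainder of $Q_A$, the second-order term $Q_{A2}$, and the variance proxy for $Q_B$) you invoke ``a high-probability bound on $\max_i|\hat\mu_i|$'' and attribute it to condition (iii) of $\Omega$. No such bound exists. Condition (iii) controls $\tilde f(X)$, not $\hat{\mbb\mu}$; the quadratic program \eqref{eq:mu_hat} only yields $\|\hat{\mbb\mu}-\tilde f(\mb X)\|_2\le\|\tilde{\mb Y}-\tilde f(\mb X)\|_2$ by optimality when $\tilde{\mb Y}$ is feasible, i.e.\ an $\ell_2$ bound $\|\hat{\mbb\mu}\|_2^2/n=O(1)$, and the generic consequence $\|\hat{\mbb\mu}\|_\infty=O(\sqrt{n})$ is useless for your purpose. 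Indeed the paper explicitly remarks after the theorem that $\|\hat{\mbb\mu}\|_\infty$ is \emph{not} controlled (which is why it survives as a factor in \eqref{eq:berry-eseen}), and its proof is structured to avoid ever needing it: the quadratic terms in $Q_A$ are handled by Cauchy--Schwarz, pairing $n^{-1}(\|\mb Y\|_2^2+\|\hat{\mbb\mu}\|_2^2)$ against the empirical fourth moment $n^{-1}\sum_i\{X_i^T(\hat\gamma-\gamma)\}^4\lesssim(s\log p/n)^2$ (which needs its own concentration lemma for sub-Gaussian designs), and $Q_B$'s variance proxy is split as $\max_i(\hat\pi_i-\pi_i)^2\cdot\sum_i\hat\mu_i^2(\cdots)\lesssim\log n\cdot\frac{s\log p}{n}\cdot n\|\hat{\mbb\mu}\|_2^2/n$, i.e.\ the $\ell_\infty$ factor is put on $\hat\pi-\pi$ (where a union bound over $n$ sub-Gaussian variables gives it) and the $\ell_2$ factor on $\hat{\mbb\mu}$ (where the program gives it).

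A secondary issue: even granting bounded $\hat\mu_i$, bounding the Taylor remainder by $\max_i(|Y_i|+|\hat\mu_i|)\cdot n^{-1}\|\mb X(\hat\gamma-\gamma)\|_2^2$ costs a factor $\max_i|Y_i|=O(\sqrt{\log n})$, giving $s\sqrt{\log n}\,\log p/\sqrt{n}$ after multiplying by $\sqrt{n}$ — strictly worse than the claimed $(s+\sqrt{s\log n})\log p/\sqrt{n}$. The Cauchy--Schwarz route avoids this because $\|\mb Y\|_2^2/n$ concentrates on a constant while $\max_i|Y_i|$ does not. So the overall architecture is right, but the estimates for $Q_{A12}$, $Q_{A2}$ and $Q_B$ need to be rerouted through $\ell_2$ control of $\hat{\mbb\mu}$ and $\mb Y$ before the argument closes.
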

Before discussing the specifics of the result, we note that informally, Theorem~\ref{thm:dipw} says that in an asymptotic regime where $n,p \to \infty$ and where $(s + \sqrt{s \log n}) \log p / \sqrt{n} \to 0$, we have approximately that
\[
\sqrt{n}(\hat{\tau}_{\dipw} - \bar{\tau}) \,|\, \mathcal{D} \sim \mathcal{N}(0, \sigma_\mu^2 + \bar{\sigma}^2).
\]
This interpretation relies in particular on $s \log (p) / \sqrt{n} \to 0$, which is a common sparsity requirement seen for example in the theory relating to the debiased Lasso \citep{zhang2014confidence, van2014asymptotically, javanmard2014confidence} and approximate residual balancing \citep{AIW18}; this condition comes from needing to control the first term in \eqref{eq:hat_tau_decomp}. 
Additionally, we require $\log p \sqrt{s \log (n) / n} \to 0$. 
This is a relatively weak sparsity requirement, and will be implied by the former more standard sparsity condition whenever $\log n \leq \text{const.} \times \sqrt{s}$. The need for this condition stems from requiring control of the second term in \eqref{eq:hat_tau_decomp}; an alternative would be to include the effect of this term in the approximately Gaussian quantitiy $\zeta$ as it is mean-zero.

Importantly, the variance term $\sigma_\mu^2$ that we incur in addition to the $\bar{\sigma}^2$ attainable by AIPW when all nuisance functions are known, is well-controlled.
We can show (see \eqref{eq:mu_bd} in the appendix) that there exist constants $c, c_\mu>0$ such that $\sigma^2_\mu \leq c_\mu$ with probability at least $1-\pr(\Omega^c(c_\gamma,c_{\tilde{\mu}},c_{\hat{\pi}})) + c(p^{-m} + n^{-m})$. Thus even if $\tilde{\mu}$ does a poor job at estimating $\mu_{\ora}$, the estimation error of $\hat{\tau}_{\dipw}$ decays at the parametric $1/\sqrt{n}$ rate.


\rev{We now turn to the quality of the Gaussian approximation as given by the right-hand side of \eqref{eq:berry-eseen}.}
First note that $\|\hat{\mbb\mu} - \mbb\mu_{\ora}\|_\infty \leq \|\hat{\mbb\mu}\|_\infty + \|\mbb\mu_{\ora}\|_\infty$. It is straightforward to show that $\|\mbb\mu_{\ora}\|_\infty = O_{\pr}(\sqrt{\log n})$ so the right-hand side of \eqref{eq:berry-eseen} is small whenever $\|\hat{\mbb\mu}\|_\infty / \sqrt{n}$ is small. We can easily include a constraint on $\|\hat{\mbb\mu}\|_\infty$ in our convex quadratic program \eqref{eq:mu_hat}; however our experience in practice is that typically $\|\hat{\mbb\mu}\|_\infty / \sqrt{n} \ll 1$ so the additional constraint appears unnecessary. 
Under reasonable conditions, we can expect that the ratio $\bar{\rho}^3 / \bar{\sigma}^3$ is bounded with high probability: for example if the errors $(\varepsilon(0), \varepsilon(1))$ are independent of $X$, the ratio is up to a constant factor bounded above by
\[
\frac{\E\{|\varepsilon(0)|^3\} + \E\{|\varepsilon(1)|^3\}}{[\E\{\varepsilon(0)^2\}]^{3/2} +[\E\{\varepsilon(1)^2\}]^{3/2}}.
\]

\newrev{Whilst the statement of Theorem~\ref{thm:dipw} above treats $c_\pi$ and $c_{\hat{\pi}}$ as constants bounded away from zero, in practice, these may be small. In Theorem~\ref{thm:dipwpf} of the appendix, we present a stronger version of Theorem~\ref{thm:dipw} that reveals in particular that the constants in the result have a relatively favourable at worst (low-degree) polynomial dependence on  $c_\pi^{-1}$ and $c_{\hat{\pi}}^{-1}$.}

If $\mb X$ is not considered to be deterministic, a more appropriate target parameter is $\tau$. Theorem~\ref{thm:split} below gives guarantees on the error in estimating this using $\hat{\tau}_{\dipw}$. In order to state our result, we introduce the following quantities:
\begin{align}
	\sigma^2 & := \Var\bigg(r_1(X) - r_0(X) - \tau + \frac{T \varepsilon(1)}{\pi(X)} - \frac{(1-T) \varepsilon(0)}{1 - \pi(X)}\bigg), \label{eq:sigma_def}\\
	\rho^3 & := \E \bigg|r_1(X) - r_0(X) - \tau + \frac{T \varepsilon(1)}{\pi(X)} - \frac{(1-T) \varepsilon(0)}{1 - \pi(X)}\bigg|^3. \label{eq:rho_def}
\end{align}
\begin{theorem}\label{thm:split}
	Consider the setup of Theorem~\ref{thm:dipw}. We have the decomposition
	\begin{equation} \label{eq:split_decomp}
		\sqrt{n}(\hat{\tau}_{\dipw} - \tau) = \delta + \sigma_{\mu} \zeta_1 + \sigma \zeta_2
	\end{equation}
	and given constants $c_\gamma, c_{\tilde{\mu}} >0$, and $c_{\hat{\pi}} \in (0, 1/2]$, we have that $\delta$, $\zeta_1$ and $\zeta_2$ satisfy the following properties:
	\begin{enumerate}[(i)]
		\item $\delta$ satisfies (i) of Theorem~\ref{thm:dipw};
		\item there exists constants $c_{\zeta, 1}, c_{\zeta, 2}$ such that
		\begin{align*}
			\sup_{t \in \R} |\pr(\zeta_1 \leq t \,|\, \mathcal{D}) - \Phi(t)| &\leq \frac{c_{\zeta, 1}}{\sqrt{n}} \frac{\|\hat{\mbb\mu} - \mbb\mu_{\ora}\|_\infty}{\sigma_{\mu}}, \\
			\sup_{t \in \R} |\pr(\zeta_2 \leq t ) - \Phi(t)| &\leq \frac{c_{\zeta, 2}}{\sqrt{n}} \frac{\rho^3}{\sigma^3};
		\end{align*}
		\item $\E (\zeta_1 \zeta_2 | \mathcal{D}) = 0$.
	\end{enumerate}
\end{theorem}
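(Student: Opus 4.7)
The plan is to insert two oracle intermediates and decompose
\[
\sqrt{n}(\hat{\tau}_{\dipw} - \tau) = \underbrace{\sqrt{n}\bigl\{\hat{\tau}_{\dipw} - \tau_{\ora}(\hat{\mbb\mu})\bigr\}}_{=:\,\delta} + \underbrace{\sqrt{n}\bigl\{\tau_{\ora}(\hat{\mbb\mu}) - \tau_{\ora}(\mbb\mu_{\ora})\bigr\}}_{=:\,\sigma_\mu\zeta_1} + \underbrace{\sqrt{n}\bigl\{\tau_{\ora}(\mbb\mu_{\ora}) - \tau\bigr\}}_{=:\,\sigma\zeta_2},
\]
where $\tau_{\ora}(\mbb\mu)$ denotes the oracle in \eqref{eq:tau_ora} evaluated at correction vector $\mbb\mu$. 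The first bracket is exactly the bias term analysed in the proof of Theorem~\ref{thm:dipw}, so property (i) for $\delta$ is inherited verbatim. A direct subtraction collapses the $Y_i$ dependence in the second bracket to give
\[
\sigma_\mu\zeta_1 = \frac{1}{\sqrt{n}}\sum_{i=1}^n (\mu_{\ora,i} - \hat{\mu}_i)\,\frac{T_i - \pi_i}{\pi_i(1 - \pi_i)},
\]
which is, conditional on $\mathcal{D}$, a sum of independent mean-zero bounded summands with conditional variance $\sigma_\mu^2$. A conditional Berry--Esseen bound, combined with $\sum_i|\mu_{\ora,i} - \hat{\mu}_i|^3 \leq \|\hat{\mbb\mu} - \mbb\mu_{\ora}\|_\infty\sum_i(\mu_{\ora,i} - \hat{\mu}_i)^2$ and overlap, delivers the first inequality in (ii).

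The third bracket is where the new work lies. The key algebraic observation is the closed-form identity
\[
\mu_{\ora,i} \;=\; \E(\tilde{Y}_{\ora,i}\mid X_i) \;=\; (1-\pi_i)\,r_1(X_i) + \pi_i\,r_0(X_i),
\]
obtained from \eqref{eq:mu_ora} and \eqref{eq:tildeY} via unconfoundedness. Expanding $Y_i = r_{T_i}(X_i) + \varepsilon_i(T_i)$ in $\tau_{\ora,i}(\mbb\mu_{\ora})$, the deterministic pieces involving $r_0, r_1$ and $\mu_{\ora,i}$ recombine so that the coefficient of $r_1(X_i)(1-\pi_i) + r_0(X_i)\pi_i - \mu_{\ora,i}$ vanishes identically, leaving
\[
\tau_{\ora,i}(\mbb\mu_{\ora}) - \tau \;=\; \bigl\{r_1(X_i) - r_0(X_i) - \tau\bigr\} + \frac{T_i\varepsilon_i(1)}{\pi_i} - \frac{(1 - T_i)\varepsilon_i(0)}{1 - \pi_i}.
\]
These i.i.d.\ summands are mean-zero with variance equal to $\sigma^2$ (the cross-term between the $T$-branch and the $(1-T)$-branch vanishes because $T(1-T) = 0$, so the sign of the $\varepsilon(0)$ piece is immaterial for the variance) and absolute third moment comparable to $\rho^3$. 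Classical i.i.d.\ Berry--Esseen then gives the second inequality in (ii).

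For part (iii) I would expand $\E(\zeta_1\zeta_2\mid\mathcal{D})$ as a double sum over $(i,j)$. Since $\mathcal{D}$ contains only $\mb X$ and the auxiliary sample (not the main-sample $\mb T$ or $\mb Y$), the pairs $(T_i, Y_i)$ remain conditionally independent across $i$ given $\mathcal{D}$; the $i\neq j$ summands therefore factorise, and the $\zeta_1$ factor has conditional mean zero because $\E(T_i\mid X_i) = \pi_i$. For $i=j$, conditioning further on $X_i$ pulls out $(\mu_{\ora,i}-\hat{\mu}_i)/(\pi_i(1-\pi_i))$ and the $(r_1-r_0-\tau)(X_i)$ part of $\phi_i$; the first cross-term vanishes via $\E(T_i-\pi_i\mid X_i) = 0$, while the $T_i\varepsilon_i(1)$ and $(1-T_i)\varepsilon_i(0)$ contributions factorise by unconfoundedness ($\varepsilon_i(t)\independent T_i\mid X_i$) and vanish because $\E(\varepsilon_i(t)\mid X_i) = 0$. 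The main obstacle in the proof is discovering and correctly exploiting the identity for $\mu_{\ora,i}$, which is what collapses the third bracket to the symmetric form defining $\sigma^2$; once this is in place, the remaining work is bookkeeping plus conditional and unconditional Berry--Esseen.
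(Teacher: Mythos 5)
Your proposal is correct and arrives at exactly the paper's decomposition: your three brackets coincide term-by-term with the paper's $\sqrt{n}(Q_A+Q_B)$, $\sqrt{n}Q_{C1}$ and $\sqrt{n}Q_{C2}$ (the paper obtains them by splitting the $Q_C$ term of Theorem~\ref{thm:dipw} and absorbing $\bar\tau-\tau$, but the key identity $\mu_{\ora,i}=(1-\pi_i)r_1(X_i)+\pi_i r_0(X_i)$, the two Berry--Esseen applications, and the conditional-independence argument for (iii) are the same). No gaps.
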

Theorem~\ref{thm:split} says that $\hat{\tau}_{\dipw}$ concentrates around $\tau$ at the optimal $\sqrt{n}$ rate. Moreover the uncorrelatedness property (iii) implies that its MSE satisfies
\begin{align*}
	n \E \{(\hat{\tau}_{\dipw} - \tau)^2\} &= n \E[\E \{(\hat{\tau}_{\dipw} - \tau)^2 | \mathcal{D}\}] \\
	&\approx \E \sigma_{\mu}^2 + \sigma^2.
\end{align*}
Note that $\sigma^2$ is the semiparametric efficient variance bound based on a dataset of size $n$ achieved for example by AIPW when $r_0, r_1$ and $\pi$ are all estimable at sufficiently fast rates. The term $\E \sigma_\mu^2$ may therefore be viewed as
a price to pay for not being able to estimate $r_0$ and $r_1$; \rev{in Section~\ref{sec:efficiency} we investigate conditions under which efficiency is achieved.}

We also remark that the decomposition \eqref{eq:split_decomp} is not quite an asymptotic normality result as whilst $\sigma \zeta_2$ is marginally Gaussian, $\sigma_\mu \zeta_1$ is only conditionally Gaussian given $\mathcal{D}$: marginally it is a mixture of mean-zero Gaussians with variances determined by the distribution of $\sigma_\mu^2$. This latter distribution is intractable without assumptions on $r_0$ and $r_1$, which we are avoiding here. \newrev{Nevertheless, it does suggest a confidence interval for $\tau$ employing a simple union bound. We give a construction for such an interval along with finite sample coverage guarantees in appendix~\ref{sec:CI_tau}. Although this the coverage is expected to be larger than the nominal coverage, the length of the interval contracts at the $1/\sqrt{n}$ rate.}

One unsatisfactory aspect of the setup here from a practical perspective, is that the results and indeed the method require the use of an auxiliary dataset. In the next section we introduce a cross-fitting scheme that avoids this issue, and later in Section~\ref{sec:multi} we develop a multiple sample spitting scheme that derandomises by aggregating over multiple sample splits that we recommend to use in practice.

\subsection{Cross-fitting, confidence intervals and efficiency} \label{sec:crossfit}
In this section we consider a single dataset of size $n$, which we assume for simplicity to be a multiple of $3$. We split the observation indices into three parts, $I_1:=\{1,\ldots,n/3\}$, $I_2 := \{n/3+1,\ldots,2n/3\}$ and $I_3 :=\{2n/3+1,\ldots,n\}$ to give corresponding datasets $\mathcal{D}_j := (X_i, Y_i, T_i)_{i \in I_j}$, $j=1,2,3$. 
Consider forming three corresponding estimates $\hat{\tau}_{\dipw,1}, \hat{\tau}_{\dipw,2}, \hat{\tau}_{\dipw,3}$ where $\hat{\tau}_{\dipw,j}$ is constructed as described in Section~\ref{sec:basic} but using $\mathcal{D}_j$ as the main dataset, and taking auxiliary datasets $\mathcal{D}_A$ and $\mathcal{D}_B$ as $\mathcal{D}_{j+1}$ and $\mathcal{D}_{j+2}$ respectively, with the additions $j+1$ and $j+2$ understood to be modulo $3$. Here we study properties of the aggregate estimator
\begin{equation} \label{eq:cross-fit}
	\hat{\tau}_{\ave}:= \frac{1}{3}( \hat{\tau}_{\dipw,1} + \hat{\tau}_{\dipw,2} + \hat{\tau}_{\dipw,3});
\end{equation}
because of the way auxiliary data and the main data are interchanged in $\hat{\tau}_{\dipw,j}$ as $j$ varies, the above is sometimes known as a cross-fit estimator.

In order to state our theoretical results, we introduce the following quantities, analogues of which appear in Sections~\ref{sec:basic} and \ref{sec:aux}.
For $j=1,2,3$, let $\hat{\mbb\mu}_j \in \R^{n/3}$ be the corresponding debiasing quantities constructed as in \eqref{eq:mu_hat} using estimates $\tilde{\mu}_1, \tilde{\mu}_2, \tilde{\mu}_3$ of $\mu_{\ora}$, and let $\Omega_j(c_\gamma, c_{\tilde{\mu}}, c_{\hat{\pi}})$ be defined as in the previous section, but in each case taking the main dataset as $\mathcal{D}_j$ and auxiliary datasets as above. Let $\hat{\mbb\mu}$ be the concatenation $\hat{\mbb\mu} := (\hat{\mbb\mu}_1, \hat{\mbb\mu}_2, \hat{\mbb\mu}_3) \in \R^n$. We also retain the definition of $\mbb\mu_{\ora} \in \R^n$ from Section~\ref{sec:basic}.


\subsubsection{Confidence intervals}\label{sec:ci}
In this section we consider constructing confidence intervals around the conditional average treatment effect $\bar{\tau}$ \eqref{eq:tau_bar} evaluated across the entire dataset.
Recall that by Theorem~\ref{thm:dipw}, under conditions, $\hat{\tau}_{\dipw,j}$, is an approximately unbiased and Gaussian estimator for $\bar{\tau}_j:=\sum_{i \in I_j} \E\{Y_i(1) - Y_i(0) \,|\, X_i\} / (n/3)$. Thus we can expect that their average $\hat{\tau}_{\ave}$
estimates $\bar{\tau}$.

One complication in constructing a confidence interval is that without making any further assumptions about how close $\hat{\mbb\mu}$ is to its oracular counterpart $\mbb\mu_{\ora}$, we cannot argue that the $\hat{\tau}_{\dipw,j}$ are independent, and thus $\hat{\tau}_{\ave}$ can have a complicated non-Gaussian distribution. However, we can construct a conservative confidence interval which still contracts at the parametric rate under reasonable conditions.

\newrev{
	To see how such a construction can work, note that by Theorem~\ref{thm:dipw}, we may prove that each $\sqrt{n / 3} (\hat{\tau}_{\dipw, j} - \bar{\tau}_j)$ is approximately normal conditional on $\{\{X_i\}_{i \in I_j}, \mathcal{D}_{j + 1}, \mathcal{D}_{j + 2}\}$, under some extra regularity conditions that will be specified below. This allows us to construct approximately valid confidence intervals for each $\bar{\tau}_j$ using  estimates of upper bounds on the conditional variances of $\sqrt{n / 3} (\hat{\tau}_{\dipw, j} - \bar{\tau}_j)$, which are for $j=1,2,3$ given by}
\begin{equation}\label{eq:sigma_est}
	\hat{\sigma}^2_j := \frac{3}{n} \sum_{i \in I_j} \left(\frac{T_i(Y_i - \hat{\mu}_i)}{\hat{\pi}_i} - \frac{(1 - T_i)(Y_i - \hat{\mu}_i)}{1 - \hat{\pi}_i} - \hat{\tau}_{\dipw,j}\right)^2.
\end{equation}
\newrev{
	With this, a valid confidence interval for $\bar{\tau}$ can be derived via applying a union bound. Specifically, we
	define $\tilde{\sigma} := (\hat{\sigma}_1 + \hat{\sigma}_2 + \hat{\sigma}_3)/\sqrt{3}$, and construct a confidence interval via}
\begin{equation} \label{eq:ci_new}
	\tilde{C}_\alpha := \Big[\hat{\tau}_{\ave} - \frac{\tilde{\sigma}}{\sqrt{n}} z_\alpha, \hat{\tau}_{\ave} + \frac{\tilde{\sigma}}{\sqrt{n}} z_\alpha \Big].
\end{equation}

\newrev{
	In order to guarantee that each $\sqrt{n / 3} (\hat{\tau}_{\dipw, j} - \bar{\tau}_j)$ is approximately conditionally normal, we require that the bias in $\hat{\tau}_{\dipw, j}$ is dominated by the variance.}
To this end, we require an explicit sparsity assumption on the propensity score model; note that we avoided making such an assumption in Theorem~\ref{thm:dipw} for example.

\begin{assumption} \label{as:sparsity}
	There exists a sequence $(b_n)_{n=1}^\infty$ with $\lim_{n \to \infty} b_n = 0$ such that $s = b_n \sqrt{n} / \log p$.
\end{assumption}
We also make the mild assumption that the variances of the errors $\varepsilon(0), \varepsilon(1)$ are bounded away from zero.
\begin{assumption} \label{as:subeps}
	There exists constant  $\sigma_{\varepsilon} > 0$ such that $\min_{t=0,1}\Var(\varepsilon(t)) \geq \sigma_{\varepsilon}^2$.
\end{assumption}
The result below shows that $\tilde{C}_{\alpha/3}$ has at least approximate $1-\alpha$ coverage. The confidence interval $\tilde{C}_{\alpha/3}$ is expected to be conservative, as the derivation of the guarantee below involves the use of a union bound involving each of the three estimators $\hat{\tau}_{\dipw,j}$, $j=1,2,3$; \newrev{see Section~\ref{sec:CI_proofs} of the appendix}. Nevertheless, the confidence interval does contract at the $1/\sqrt{n}$ rate.
\begin{theorem}\label{thm:ci_new}
	Suppose tuning parameter $\eta = c_\eta\sqrt{\log(p) / n}$ used to form each $\hat{\mbb\mu}_j$ involved in the construction of $\hat{\tau}_{\ave}$ is such that the constant $c_\eta>0$ is sufficiently large.
	Suppose Assumptions \ref{as:unconfounded}--\ref{as:subeps} hold and let the confidence interval $\tilde{C}_{\alpha}$ be as in \eqref{eq:ci_new}.
	Given constants $c_\gamma, c_{\tilde{\mu}}, c_\varepsilon > 0, c_{\hat{\pi}} \in (0, \frac{1}{2}]$ and $m \in \mathbb{N}$, there exist constants $c, c_\zeta >0$ such that with probability at least
	\begin{equation} \label{eq:full_prob}
		1 - \sum_{j=1}^3 \pr(\Omega_j^c(c_\gamma, c_{\tilde{\mu}}, c_{\hat{\pi}})) - c(n^{-m} + p^{-m}),
	\end{equation}
	we have for all $\alpha \in (0, 1]$ the finite sample coverage guarantee
	\begin{equation} \label{eq:coverage_new}
		\begin{split} 
			& \pr\Big( \bar{\tau} \in \tilde{C}_{\alpha/3} \,|\, \mb X \Big) \geq 1 -\alpha  \\
			& \qquad - c_\zeta \bigg(\E(\|\hat{\mbb\mu} - \mbb\mu_{\ora}\|_\infty \,|\, \mb X) \sqrt{\frac{\log n}{n}}  + \frac{\sqrt{ b_n \log p \log n}}{n^{1/4}} + b_n +  p^{-m}\bigg).
		\end{split}
	\end{equation}
\end{theorem}
Under reasonable conditions (see the discussion following Theorem~\ref{thm:dipw}) we can expect that $\E(\|\hat{\mbb\mu} - \mbb\mu_{\ora}\|_\infty \,|\, \mb X) = o_{\pr}(\sqrt{n/\log(n)})$. Note that this accommodates $\|\hat{\mbb\mu} - \mbb\mu_{\ora}\|_\infty$ growing rather fast with $n$; for comparison if each component of $\hat{\mbb\mu} - \mbb\mu_{\ora}$ was sub-Gaussian with mean of order at most $\sqrt{\log n}$ conditional on $\mb X$, we would have $\E(\|\hat{\mbb\mu} - \mbb\mu_{\ora}\|_\infty \,|\, \mb X) = O_\pr(\sqrt{\log n})$, which would certainly satisfy the requirement. 
Thus we may interpret the result above as guaranteeing that with high probability, the covariates $\mb X$ are such that the confidence interval has coverage approximately at least $1-\alpha$. We can also take expectations on both sides of \eqref{eq:coverage_new} to obtain a more conventional unconditional coverage guarantee, though without conditioning on $\mb X$, the target `parameter' $\bar{\tau}$ is not fixed.

\subsubsection{Efficiency} \label{sec:efficiency}
The previous results have placed no requirement on the quality of the estimate of $\mu_{\ora}$. Here we show that if we make such assumptions, $\hat{\tau}_\ave$ can attain efficiency.
Let us write $\tilde{\mu}(\mb X) \in \R^n$ for the vector with $i$th component $\tilde{\mu}_j(X_i)$ if $i \in I_j$. Recall that the efficient variance we hope to achieve is given by $\sigma^2$ \eqref{eq:sigma_def}. The following result shows that $\sqrt{n}(\hat{\tau}_{\ave} - \tau)$ is approximately Gaussian with variance $\sigma^2$, provided $\tilde{\mu}(\mb X)$ is a sufficiently good estimate of $\mbb\mu_{\ora}$, and the propensity score model is sufficiently sparse.
\begin{theorem}\label{thm:efficiency}
	Suppose we construct $\hat{\tau}_\ave$ as in Theorem~\ref{thm:ci_new} and suppose Assumptions \ref{as:unconfounded}--\ref{as:p_large} hold. We have the decomposition
	\[
	\sqrt{n} (\hat{\tau}_\ave - \tau) = \delta + \sigma \zeta,
	\]
	in which given constants $c_\gamma, c_{\tilde{\mu}} > 0, c_{\hat{\pi}} \in (0, 1/2]$ and $m \in \mathbb{N}$, we have that $\delta$ and $\zeta$ satisfy the following properties:
	\begin{enumerate}[(i)]
		\item \newrev{there exist constants $c_\delta, c > 0$ such that given any sequence $(e_n)_{n = 1}^\infty$, with probability at least
			\[
			1 - \sum_{j=1}^3 \pr(\Omega_j^c(c_\gamma, c_{\tilde{\mu}}, c_{\hat{\pi}})) - c(n^{-m} + p^{-m}) - 2 e^{-e_n^2},
			\]
			we have that
			\[
			|\delta| \le c_\delta (s + \sqrt{s \log n}) \frac{\log p}{\sqrt{n}} + c_\delta \frac{e_n}{\sqrt{n}} \|\tilde{\mu}({\mbb X}) - {\mbb \mu_\ora}\|_2;
			\]}
		\item there exists constant $c_\zeta > 0$ such that
		\[
		\sup_{t \in \R} |\pr(\zeta \leq t ) - \Phi(t)| \leq \frac{c_\zeta}{\sqrt{n}} \frac{\rho^3}{\sigma^3}.
		\]
	\end{enumerate}
\end{theorem}

Suppose that each $\tilde{\mu}_j$ is constructed by first estimating  $r_1$ and $r_0$ by $\hat{r}_{1,j}$ and $\hat{r}_{0,j}$ respectively, and then setting  $\tilde{\mu}_j(x) := \{1 - \hat{\pi}(x)\} \hat{r}_{1,j}(x) + \hat{\pi}(x) \hat{r}_{0,j}(x)$. For $t=0,1$, let $\hat{r}_t(\mb X) \in \R^n$ be the vector with $i$th component $\hat{r}_{tj}(X_i)$ if $i \in I_j$. The following corollary shows that provided the $\hat{r}_{1,j}$ and $\hat{r}_{0,j}$ are sufficiently good estimates for each $j=1,2,3$, under a sparsity assumption, $\delta$ above will be negligible.
\begin{corollary} \label{cor:efficiency}
	Consider the setup of Theorem~\ref{thm:efficiency} but specifically with $\tilde{\mu}$ constructed as indicated above. Assume Assumption~\ref{as:sparsity} and also that $\sqrt{n} \geq c \log p \log n$ for some constant $c>0$. Then we have that for some constant $c_{\delta}>0$,
	\[
	|\delta| \le c_\delta \sqrt{b_n} + c_\delta \frac{e_n}{\sqrt{n}} \max_{t=0,1} \|\hat{r}_t(\mb X) - r_t(\mb X)\|_2.
	\] 
\end{corollary}
Informally then Corollary~\ref{cor:efficiency} says that 
\newrev{provided there exists some $e_n \to \infty$ such that
	$\|\hat{r}_t({\mbb X}) - r_t({\mbb X})\|_2 / \sqrt{n} = o_\pr(e_n^{-1})$ for $t=0,1$,}
$\sqrt{n}(\hat{\tau}_\ave - \tau)$ is an asymptotically normally distributed random variable and  $\hat{\tau}_{\ave}$ attains the semiparametric efficiency bound. \newrev{Notice in order to arrive at this conclusion, $e_n$ is allowed to diverge to infinity at an arbitrarily slow rate.} This requirement on the outcome regression estimates may be contrasted with the stronger $n^{-1/4}$ rate required in theoretical guarantees for AIPW or the $n^{-1/8}$ rate required by the estimator of \citet{BWZ19}.
Moreover, no model assumptions on the functions $r_t$ are needed beyond the relatively weak requirement on the convergence rate. \newrev{In Section~\ref{sec:efficiency_appendix} in the appendix, we further discuss the efficiency of $\hat{\tau}_{\dipw}$ in estimating  $\bar{\tau}$, which follows a similar argument as that of the result presented here.}

\section{Debiased IPW: multiple sample splitting} \label{sec:multi}
In this section we describe a variant of the basic debiased inverse propensity weighting scheme introduced in Section~\ref{sec:basic} that in particular avoids the use of an auxiliary dataset through multiple sample splitting, and is the version we recommend using in practice.
The approach is summarised in Algorithm~\ref{alg:dipw} and we denote the resulting estimate by $\hat{\tau}_{\mdipw}$ for \textbf{m}ultiple sample-splitting DIPW.
Below we discuss the various steps taken in the algorithm. Next in Section~\ref{sec:confidence} we discuss the construction of confidence intervals centred on $\hat{\tau}_{\mdipw}$.
\begin{algorithm}[!h]
	\caption{DIPW with multiple sample splitting}
	\label{alg:dipw}
	\begin{algorithmic}
		\STATE {\bfseries Input:} Dataset $(\mb X, \mb Y, \mb T)$ with $n$ observations, where $\mb X$ has first column a vector of ones representing an intercept; number of splits $B$; tuning parameter $\kappa>0$ (default choice $\kappa = 0.5$).
		\STATE {\bfseries Output:} Estimate $\hat{\tau}_{\mdipw}$ for the average treatment effect.
		\vspace{0.1cm}
		\STATE \textbf{1.} Estimate the propensity score parameter $\gamma$ via a penalised logistic regression of $\mb T$ on $\mb X$, e.g.\
		\begin{equation} \label{eq:l_1_logistic}
			\hat{\gamma} := \argmin_{b \in \R^p} \left( \frac{1}{n} \sum_{i=1}^n [\log\{1 + \exp(X_i^\top  b)\}  - T_i X_i^\top  b]+ \lambda_\gamma \sum_{j=2}^p |b_j|\right)
		\end{equation}
		where $\lambda_\gamma$ is chosen by cross-validation. Write $\hat{\pi}(x) := 1/(1+\exp(-x^\top \hat{\gamma}))$ and $\hat{\pi}_i := \hat{\pi}(X_i)$.
		\STATE \rev{\textbf{2.} Form estimate $\tilde{\mu}$ of $\mu_{\ora}$ \eqref{eq:mu_ora} either
			\STATE \quad \textbf{2a.} through regressing $\tilde{\mb Y}$ given by  \eqref{eq:tildeY} onto $\mb X$,
			\STATE \quad \textbf{2b.} or first estimating $r_1$ and $r_0$ by regressing treatment and control groups onto the predictors separately to give $\hat{r}_1$ and $\hat{r}_0$, and then setting $\tilde{\mu}(x) := \{1-\hat{\pi}(x)\}\hat{r}_1(x) + \hat{\pi}(x)\hat{r}_0(x)$.}
		\STATE \textbf{3.}  Choose pairs of subsets $\{(I_{2b-1}, I_{2b}):b=1,\ldots,B\}$ of $\{1,\ldots,n\}$ where $|I_{2b}| = \floor{n/2}$ and $I_{2b-1} = I_{2b}^c$. Let $n_k:=|I_k|$ and $n_k^c := |I_k^c|$. Construct, for $k=1,\ldots,2B$, bias correction vectors $\check{\mbb\mu}_k \in \R^{n_k}$, which we shall index by elements of $I_k$, via the following steps:
		\STATE \quad \textbf{3a.} Writing $m_{k,j} := \sum_{i=I_k} X_{ij} / n_k$ and $m_{k,j}^c := \sum_{i \in I_k^c} X_{ij}/n_k^c$ for $j=1,\ldots,p$, form a mean-centred version $\mb X^m$ of $\mb X$ via $X_{ij}^m := X_{ij} - m_{k,j}$ for $i \in I_k$ and $X_{ij}^m := X_{ij} - m_{k,j}$ for $i \in I_k^c$.
		\STATE \quad \textbf{3b.} Let $\hat{\mbb\mu}_k$ be the minimiser over $\mbb\mu \in \R^{n_k}$ of
		\begin{align} \label{eq:mu_hat_opt}
			\frac{1-\kappa}{\kappa n_k^2}\sum_{i \in I_k} \{\tilde{\mu}(X_i) - \mu_i\}^2 +  \max_{j=1,\ldots,p} \Big | \frac{1}{n_k}\sum_{i \in I_k} X^m_{ij} (\mu_i - \tilde{\mu}(X_i))   - \frac{1}{n_k^c}\sum_{i \in I_k^c} X^m_{ij}(\tilde{Y}_i - \tilde{\mu}(X_i)) \Big |^2;
		\end{align}
		\STATE \quad \textbf{3c.} For all $i \in I_k$, set $\check{\mu}_{ki} := \hat{\mu}_{ki} + \frac{1}{n_k} \sum_{l \in I_k} (\tilde{Y}_l - \tilde{\mu}(X_l))$.
		\STATE \textbf{4.} For $b=1, \ldots, B$: define $\bar{\mbb\mu}_b \in \R^n$ by $\bar{\mu}_{bi} = \check{\mu}_{(2b-1)i}$ when $i \in I_{2b-1}$ and $\bar{\mu}_{bi} = \check{\mu}_{(2b)i}$ when $i \in I_{2b}$; set
		\begin{align}
			\hat{\tau}_b &:= \sum_{i=1}^n \frac{T_i(Y_i - \bar{\mu}_{bi})}{\hat{\pi}_i} \Big/ \sum_{l=1}^n \frac{T_l}{\hat{\pi}_l}
			\;-\; \sum_{i=1}^n \frac{(1-T_i)(Y_i - \bar{\mu}_{bi})}{1 - \hat{\pi}_i} \Big/ \sum_{l=1}^n \frac{1-T_l}{1-\hat{\pi}_l}. \label{eq:tau_b}\\
			\hat{\tau}_{\mdipw} &:= \frac{1}{B} \sum_{b=1}^{B} \hat{\tau}_b. \notag
		\end{align}
	\end{algorithmic}
\end{algorithm}
\subsection{Estimation} \label{sec:multi_est}
Steps 1 and 2 of Algorithm~\ref{alg:dipw} produce estimates of the propensity score and $\mbb\mu_{\ora}$ \eqref{eq:mu_ora}, though as indicated in
Theorems~\ref{thm:dipw} and \ref{thm:split}, the latter estimate can be poor without affecting convergence properties of the final estimator for the average treatment effect. \rev{Any regression methods of choice can be used at this stage; in our numerical experiments we looked at using Lasso \citep{tibshirani1996regression} regression for 2a and random Forest \citep{breiman2001random} in conjunction with approach 2b.}

Unlike the basic DIPW method presented in Section~\ref{sec:basic}, these are not constructed on auxiliary datasets but instead are computed using all available data. Whilst this introduces some dependence that makes theoretical analysis more problematic, our experience is that the empirical performance is improved.

Step 3 however does attempt to mimic the use of auxiliary datasets through sample splitting. This ensures that the conditional independence $\hat{\mbb\mu}_k \independent (T_i)_{i \in I_k} \mid (X_i)_{i \in I_k}$ guaranteeing unbiasedness given oracle propensity scores (see Section~\ref{sec:contrib}), holds at least approximately. The subsets $I_1, I_3, \ldots, I_{2B-1}, I_{2B}$ may be chosen uniformly at random, or in a deterministic fashion, for example according to the rows of Hadamard matrices, to keep them well-separated.

Analogously to \citet{AIW18}, we use a Lagrangian formulation \eqref{eq:mu_hat_opt} of our original convex program \eqref{eq:mu_hat} as this is more convenient to optimise; we use the interior point solver \texttt{mosek} \citep{mosek} in our implementation. The centring and re-centring steps performed in steps 3a and 3c respectively are almost equivalent to introducing the constraint that
\[
\frac{1}{n_k}\sum_{i \in I_k} \check{\mu}_{ki} = \frac{1}{n_k} \sum_{i \in I_k} \tilde{Y}_i,
\]
and so the first component of \eqref{eq:l_inf_simple} corresponding to the intercept term is exactly zero. This is beneficial as then the estimation error of the intercept term, which may be large, does not contribute to the LHS of \eqref{eq:holder}. The downside of this is that some additional dependence is introduced between $\check{\mbb\mu}_k$ and $(\tilde{ Y}_i)_{i \in I_k}$ which can introduce some further bias; however our experience is that in practice the dependence is sufficiently weak that it is not problematic.

Finally in step 4, we employ the standard practice of renormalising the propensity score weights \citep{imbens2004nonparametric,lunceford2004stratification}, and take as our final estimator a simple average over the estimates corresponding to the different sample-splits.

\subsection{Confidence intervals} \label{sec:confidence}
For $b=1,\ldots,B$, let $\check{\tau}_{2b-1}$ and $\check{\tau}_{2b}$ be the sums over $i \in I_{2b-1}$ and $i \in I_{2b}$ respectively in equation \eqref{eq:tau_b} defining $\hat{\tau}_b$. As discussed in Section~\ref{sec:ci},
Theorem~\ref{thm:dipw} indicates that under appropriate conditions, each $\check{\tau}_k$ should be approximately Gaussian about
\[
\bar{\tau}_k := \frac{1}{n} \sum_{i \in I_k} \E\{Y_i(1) - Y_i(0)|X_i\},
\]
but with potentially different variances. The variance of the aggregate $\hat{\tau}_{\mdipw}$ may be approximately bounded above as follows:
\begin{align*}
	\Var\bigg(\frac{1}{B}\sum_{b=1}^B\hat{\tau}_b  \,\Big|\, \mb X \bigg) &= \Var\bigg(\frac{1}{B}\sum_{k=1}^{2B}\check{\tau}_k \,\Big|\, \mb X \bigg) \\
	&\approx \E\bigg\{\bigg(\frac{2}{B}\sum_{k=1}^{2B}(\check{\tau}_k - \bar{\tau}_k)\bigg)^2 \, \Big|\, \mb X \bigg\} \\
	&\leq \E\bigg(\frac{2}{B}\sum_{k=1}^{2B} (\check{\tau}_k - \bar{\tau}_k)^2 \,\Big|\, \mb X \bigg) \\
	&\approx \frac{2}{B} \sum_{k=1}^{2B} \Var(\check{\tau}_k \,| \, \mb X),
\end{align*}
where the inequality in the penultimate line follows from Jensen's inequality. Now each $\Var(\check{\tau}_k| \mb X)$
may be estimated similarly to \eqref{eq:sigma_est}. Writing $(\check{\sigma}_k^2)_{b=1}^{2B}$ for these estimates, the argument above suggests approximating the variance of $\hat{\tau}_{\mdipw}$ by the average $\hat{\sigma}^2_{\text{m}} := 2\sum_{k=1}^{2B} \check{\sigma}_k^2/B$. Whilst each individual $\check{\tau}_k$ may be Gaussian, we cannot be sure that the average will be. Empirically however, we have found that confidence intervals constructed analogously to \eqref{eq:ci_new}
but centred on $\hat{\tau}_{\mdipw}$ and taking the variance as the approximate upper bound $\hat{\sigma}^2_{\text{m}}$ have reasonable coverage, though can sometimes be conservative. Section~\ref{sec:conf_res} gives the results of numerical experiments exploring the coverage properties of the confidence interval construction. \rev{We note that the approach of \citet{guo2023rank} gives an alternative scheme to address the challenges of constructing confidence intervals centred on estimators formed through multiple sample splitting, which may be less conservative; we leave exploring this further to future work.}

\section{Extensions} \label{sec:extensions}
Here we outline modifications of our framework that permit link functions other than the logistic link for the propensity sore model and allow for estimation of linear projections of the conditional average treatment effect function. 



\subsection{Other link functions} \label{sec:link}

In this section we sketch how to construct a version of the DIPW estimator when the propensity model follows satisfies $\pi(x) = \phi(x^\top \gamma)$ with some general link function $\phi$ where both $\phi'$ and $\phi''$ are uniformly bounded below by some constant; aside from this, we work in the setup of Section~\ref{sec:basic}. Following analogous derivations as in~\eqref{eq:approx_bias}, and recalling the definition of $\tilde{Y}_i$ in~\eqref{eq:tildeY}, the bias term in \eqref{eq:hat_tau_decomp} can instead be written as
\begin{align}
	& \abs{\frac{1}{n} \sum_{i=1}^n \left(\frac{T_i Y_i}{\hat{\pi}_i^2} + \frac{(1 - T_i) Y_i}{(1 - \hat{\pi}_i)^2} - \frac{\mu_i}{(1 - \hat{\pi}_i)\hat{\pi}_i}\right)(\hat{\pi}_i - \pi_i)} \notag \\
	\approx & \abs{\frac{1}{n} \sum_{i=1}^n \left(\frac{T_i Y_i}{\hat{\pi}_i^2} + \frac{(1 - T_i) Y_i}{(1 - \hat{\pi}_i)^2} - \frac{\mu_i}{(1-\hat{\pi}_i)\hat{\pi}_i}\right) \phi'(X_i^\top  \hat{\gamma}) X_i^\top  (\hat{\gamma} - \gamma)} \notag \\
	= & \left|\frac{1}{n} \sum_{i=1}^n (\tilde{Y}_i - \mu_i) \frac{\phi'(X_i^\top \hat{\gamma})}{\hat{\pi}_i (1 - \hat{\pi}_i)} X_i^\top(\hat{\gamma} - \gamma)\right| \le \frac{1}{n} \|{\mbb X}^\top {\mbb \Pi} (\tilde{\mbb Y} - {\mbb \mu})\|_\infty \|\hat{\gamma} - \gamma\|_1 \notag
\end{align}
where ${\mbb \Pi} \in \R^{n \times n}$ is a diagonal matrix with
\[
\Pi_{ii} = \frac{\phi'(X_i^\top \hat{\gamma})}{\hat{\pi}_i (1 - \hat{\pi}_i)}.
\]
In this case, instead of finding $\hat{\mbb \mu}$ such that $\|{\mbb X}^\top (\tilde{\mbb Y} - {\mbb \mu})\|_\infty / n = O\left(\sqrt{\log p /n}\right)$ with high probability, we should aim for $\|{\mbb X}^\top {\mbb \Pi} (\tilde{\mbb Y} - {\mbb \mu})\|_\infty = O\left(\sqrt{\log p / n}\right)$. This suggests adapting our convex program in~\eqref{eq:mu_hat} to
\begin{gather}
	\hat{\mbb\mu} = \argmin_{\mbb\mu \in \R^n} \frac{1}{n}\|\tilde{\mu}(\mb X) - \mbb\mu\|_2^2 \notag\\
	\text{subject to } \;\;  \Big\|\frac{1}{n_A}\mb X_A^\top {\mbb \Pi}_A \{\tilde{\mb Y}_A - \tilde{\mu}(\mb X_A)\} - \frac{1}{n}\mb X^\top {\mbb \Pi} \{\mbb\mu - \tilde{\mu}(\mb X)\} \Big\|_\infty \leq \eta, \notag
\end{gather}
where $\mbb\Pi_A$ is a version of $\mbb\Pi_A$ based on dataset $\mathcal{D}_A$.
Then based on an analysis analogous to the proof of Theorem~\ref{thm:dipw}, we may see that constructing $\hat{\mbb \mu}$ using the above convex program and taking $\eta \asymp \sqrt{\log p / n}$, the theoretical properties described in Theorem~\ref{thm:dipw} continue to hold in this new setting.

\newrev{In Appendix~\ref{sec:lower}, we additionally show that when the link function can be expressed as $\phi(u) := \frac{1 + \exp(u)}{1 + 2 \exp(u)}$, the requirement that $s \lesssim \sqrt{n} / \log p$ is also a necessary condition for $\sqrt{n}$-consistent estimation. It would be of interest to prove the necessity of such a sparsity constraint under a more common link function such as logistic link: we leave this for future work.}

\subsection{Heterogeneous treatment effects} \label{sec:hetero}
Our debiased inverse propensity score estimator estimates the average treatment effect $\E(\Delta(X))=\tau$, where $\Delta := r_1 - r_0$ is the conditional average treatment effect, which captures any heterogeneity in the treatment effect. In many settings it is helpful to estimate other properties of  $\Delta$.
Let $W = g(X) \in \R^d$ be a function of $X\in\R^p$, where $d \ll p$. We consider here estimating the linear projection of $\Delta(X)$ onto $W$; $W$ could for example be the first few principal components of $X$, a collection of best estimates of the heterogeneous treatment effect function, or a subset of variables of special interest. Estimation of such linear projections of the potentially highly complex function $\Delta$ was advocated in \citet{chernozhukov2018generic} in the context where the propensity score is known exactly. In this section we outline an approach for estimating $\beta := \{\E(W W^\top )\}^{-1} \E\{ \Delta(X)W \}$.
Note that the problem of average treatment effect estimation we have studied thus far, may be seen as a special case of this by  taking $W$ equal to the first component of $X$ (which is identically $1$) and thus considering the projection onto an intercept term.

Let $W_i := g(X_i)$ and $W_{A,i} := g(X_{A,i})$. Observe that
\[
\Delta(X) = \E \bigg( \frac{T Y}{\pi(X)} - \frac{(1-T)Y}{1-\pi(X)} \,\Big|\, X\bigg).
\]
This suggests estimating $\beta$ using least squares regression via
\[
\hat{\beta} = \argmin_{b \in \R^d} \frac{1}{n}\sum_{i=1}^n \bigg( \frac{T_i (Y_i - \mu_i)}{\hat{\pi}_i} - \frac{(1-T_i)(Y_i - \mu_i)}{1-\hat{\pi}_i} - W_i^\top  b\bigg)^2,
\]
with $\mbb\mu \in \R^n$ playing a bias correcting role as before. The main bias terms that we would need to control are analogous versions of equation \eqref{eq:holder} where the $i$th summand is weighted by $W_{ij}$ for each $j=1,\ldots,d$:
\[
\max_{j =1,\ldots,d} \abs{\frac{1}{n}\sum_{i=1}^n W_{ij}(\tilde{Y}_i - \mu_i)X_i^\top (\hat{\gamma}-\gamma)} \leq \max_{j =1,\ldots,d}  \norm{\frac{1}{n}\sum_{i=1}^n W_{ij}(\tilde{Y}_i - \mu_i)X_i}_\infty \|\hat{\gamma} - \gamma\|_1 ;
\]
the inequality follows from H\"older's inequality as before. We should thus seek to choose $\mbb\mu \in \R^n$ such that the first term on the RHS is small. In analogy with \eqref{eq:constraint}, we may pick $\mbb\mu = \hat{\mbb\mu}$ given by
\begin{gather}
	\hat{\mbb\mu} = \argmin_{\mbb\mu \in \R^n} \frac{1}{n}\|\tilde{\mu}(\mb X) - \mbb\mu\|_2^2 \notag\\
	\text{subject to } \;\;  \max_{j =1,\ldots,d}  \norm{\frac{1}{n_A}\sum_{i=1}^{n_A} W_{A,ij}\{\tilde{Y}_{A,i} - \tilde{\mu}(X_{A,i})\}X_{A,i} - \frac{1}{n}\sum_{i=1}^n W_{ij} \{\mu_i - \tilde{\mu}(X_i)\}X_i }_\infty \leq \eta. \label{eq:constr_ext}
\end{gather}
If we are willing to assume that $\Delta(x) = x^\top \beta$ for a sparse coefficient vector $\beta \in \R^p$, we can take $W_i = X_i$ and $W_{A,i}=X_{A,i}$ in the constraint \eqref{eq:constr_ext}, and estimate $\beta$ using Lasso regression \citep{tibshirani1996regression}:
\[
\hat{\beta} = \argmin_{b \in \R^p} \frac{1}{n}\sum_{i=1}^n \bigg( \frac{T_i (Y_i - \hat{\mu}_i)}{\hat{\pi}_i} - \frac{(1-T_i)(Y_i - \hat{\mu}_i)}{1-\hat{\pi}_i} - X_{i}^\top  b\bigg)^2 + \lambda \sum_{j=2}^p |b_j|.
\]

\section{Numerical experiments} \label{sec:numerical}
In this section we report the results of numerical experiments exploring the empirical performance of our DIPW estimator. We first consider the problem of average treatment effect estimation in Section~\ref{sec:simate}, and then in Section~\ref{sec:simvar} turn to the problem of estimating $\Var(Y(1))$, which we approach by applying our methodology to appropriately transformed responses. Section~\ref{sec:simate_hetero} in the appendix contains additional numerical results for the average treatment effect estimation problem where the errors are heteroscedastic.

\subsection{Average treatment effect}\label{sec:simate}
Below we describe our experimental setups for studying estimation of the average treatment effect.
\subsubsection{Experimental setups}
We generate data $(X_i, T_i, Y_i) \in \R^p \times \{0,1\} \times \R$ for $i=1,\ldots,n=100$ and $p=400$ according to the following logistic propensity score model for the treatment $T_i$ and regression model for $Y_i$:
\begin{align}
	\PP(T_i = 1 \mid X_i) &= \psi(X_i^\top  \gamma), \notag\\
	Y_i &= b(X_i) + T_i \Delta(X_i) + \varepsilon_i ; \label{eq:sim_Y}
\end{align}
here $\psi$ denotes the standard logistic function $\psi(u) := \{1+\exp(-u)\}^{-1}$ as before, the $\varepsilon_i$ are independent and standard Gaussian. To compare the performance of our method under different scenarios, we consider $3$ models for generating the covariates $X_i$, as well as $2$ constructions for the pair of functions $(b(\cdot), \Delta(\cdot))$ and $3$ settings the vector of coefficients $\gamma$, as detailed below. Each of the $3 \times 2 \times 3$ settings were simulated $250$ times, with design matrices, regression functions and propensity score coefficients generated anew in each run. The results are presented in Section~\ref{sec:ATE_results}.

\paragraph{Covariate designs}
\begin{itemize}
	\item \emph{Toeplitz:} $X_i \sim \mathcal{N}(0, \Sigma)$, where $\Sigma\in \R^{p \times p}$ is given by $\Sigma_{i,j} = 0.9^{|i - j|}$.
	\item \emph{Exponential decay:} As above, but with $\Sigma$ such that  $(\Sigma^{-1})_{i,j} = 0.9^{|i - j|}$, and then normalised such that all diagonal entries of $\Sigma$ equal $1$.
	\item \emph{Real:} We use gene expression data collated by \citet{gtex2017genetic}, and preprocessed as described in \citet[Sec.~5.2]{shah2020right}, available at \url{https://github.com/benjaminfrot/RSVP}. We selected $p$ genes with the largest empirical variance in the data and the first $n=100$ out of a total of $491$ observations.
\end{itemize}

\paragraph{Response function $b(\cdot)$} We consider the following two forms of the response function.
\begin{itemize}
	\item \emph{Dense linear function:} We set $b(x) = x^\top  \beta$, where we generate $\beta \in \R^p$ by first randomly selecting $50$ components and then independently assigning each selected entry a random value uniformly sampled from $[0,1]$. All other entries are set to $0$ and finally $\beta$ is normalised such that $\|\beta\|_2=2$.
	\item \emph{Nonlinear function:} We generate $\beta\in \R^p$ as above and then set $b(x) = 3 \tilde{x}^\top  \beta$, where $\tilde{x}_{j} = 2 \{1 + \exp(-x)\}^{-1}  - 1$.
\end{itemize}

\paragraph{Heterogeneous treatment effect function $\Delta(\cdot)$} We consider the following two forms of heterogeneous treatment effects.
\begin{itemize}
	\item \emph{Dense linear function:} We generate a vector $\delta \in \R^p$ in the same fashion as $\beta$ above, but normalise $\|\delta\|_2=1$; we then set $\Delta(x) = x^\top \delta$.
	\item \emph{Nonlinear function:} We first generate $\delta \in \R^p$ as above and then set $\Delta(x) = \{1 + \exp(x^\top \delta)\}^{-1} - 0.5$.
\end{itemize}

\paragraph{Propensity model parameter $\gamma$} Given the randomly generated $\beta$, we randomly select $s = 5, 20, 50$ indices corresponding to non-zero entries in $\beta$. These indices of $\gamma$ are then independently assigned a random value uniformly sampled from $[0,1]$, with the remaining components of $\gamma$ set to zero. We then normalise $\gamma$ such that $\|\gamma\|_2=1$.

\paragraph{Methods under comparison}\label{sec:methodsetup}
We apply our DIPW estimator as described in Algorithm~\ref{alg:dipw} with $B=3$ splits and the given default choices of regression methods with tuning parameters selected by 10-fold cross-validation and the $1$ standard error rule, as implemented in \citet{friedman2010regularization}.
To benchmark our results, we compare our method to inverse propensity weighting (IPW), augmented inverse propensity weighting (AIPW), approximate residual balancing (ARB) \citep{AIW18}, targetted maximum likelihood estimation (TMLE) \citep{VR06}, \rev{regularized
	calibrated estimation based treatment effect estimator (RCAL)~\citep{Tan20b} and high-dimensional covariate balancing propensity score based estimator (hdCBPS)~\citep{NPI18},} implemented as follows. For IPW, AIPW and TMLE, we use the same estimated propensity scores as for DIPW, and outcome regression models for AIPW and TMLE were estimated using the Lasso with tuning parameter selected as for the regressions involved in the DIPW estimator. For ARB, we use the default parameter settings in the associated \texttt{balanceHD} package. \rev{For RCAL, we use the cross-validated treatment effect estimator in the \texttt{RCAL} package, with number of folds equal to $10$ and number of tuning parameters equal to $100$ for the calibrated estimation of the two nuisance functions, mirroring the default options of the popular Lasso implementation \texttt{glmnet} \citep{friedman2010regularization}. For CBPS, we use the function \texttt{hdCBPS} in the \texttt{CBPS} package. As the optimisation algorithm used in \texttt{hdCBPS} is computationally demanding, we limit the number of iterations in the optimisation to $100$.}

\subsubsection{Results} \label{sec:ATE_results}

\begin{figure}[t!]
	\subfigure[Toeplitz design, $s = 5$]{\includegraphics[width=0.32\textwidth]{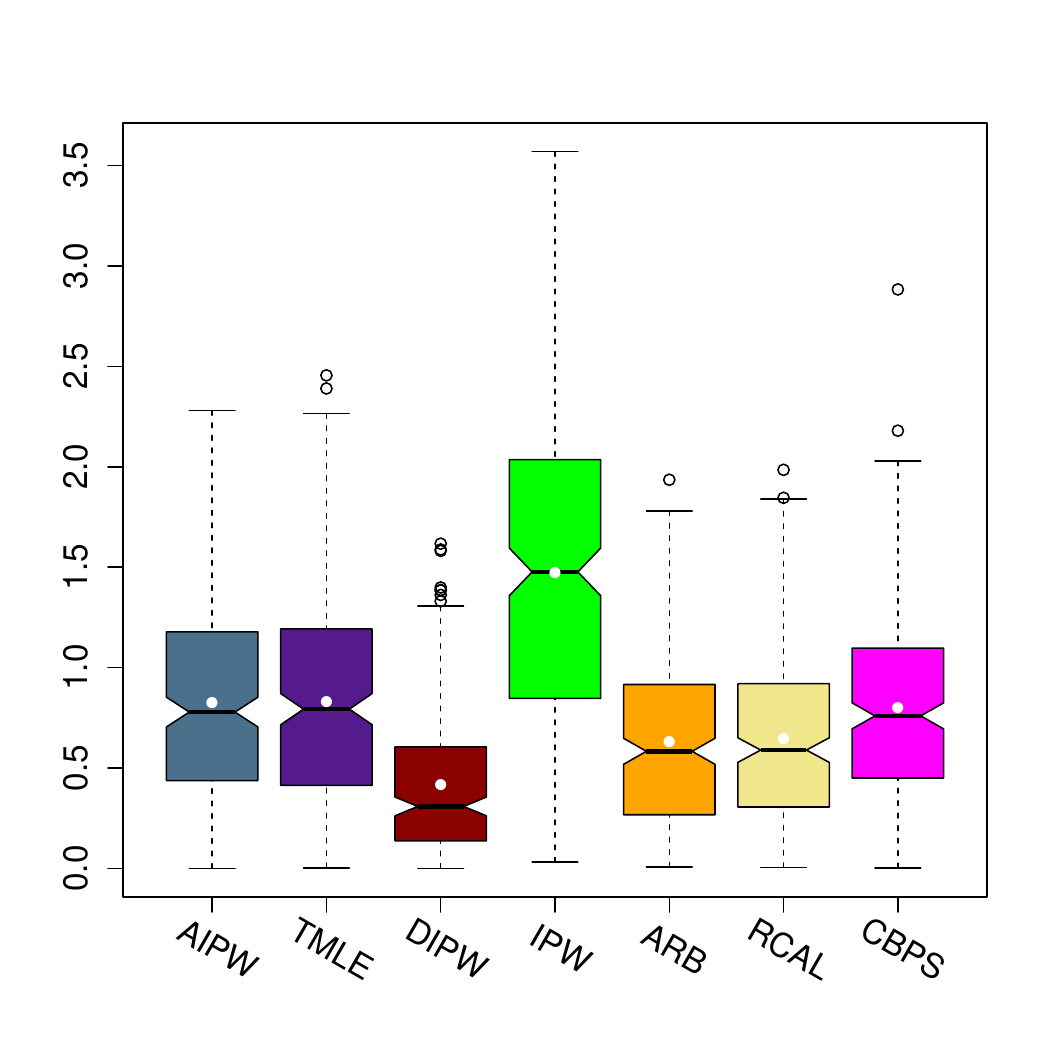}} \hfill
	\subfigure[Toeplitz design, $s = 20$]{\includegraphics[width=0.32\textwidth]{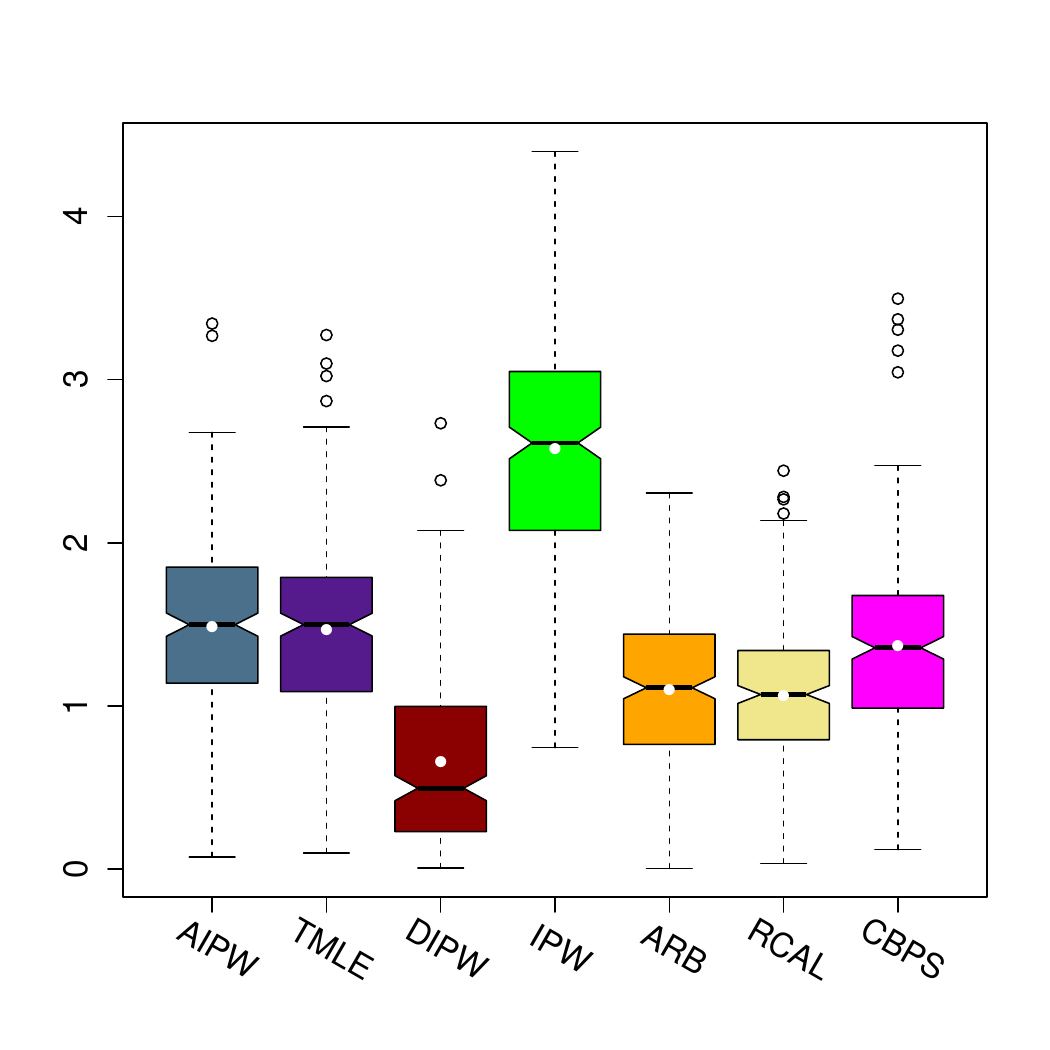}} \hfill
	\subfigure[Toeplitz design, $s = 50$]{\includegraphics[width=0.32\textwidth]{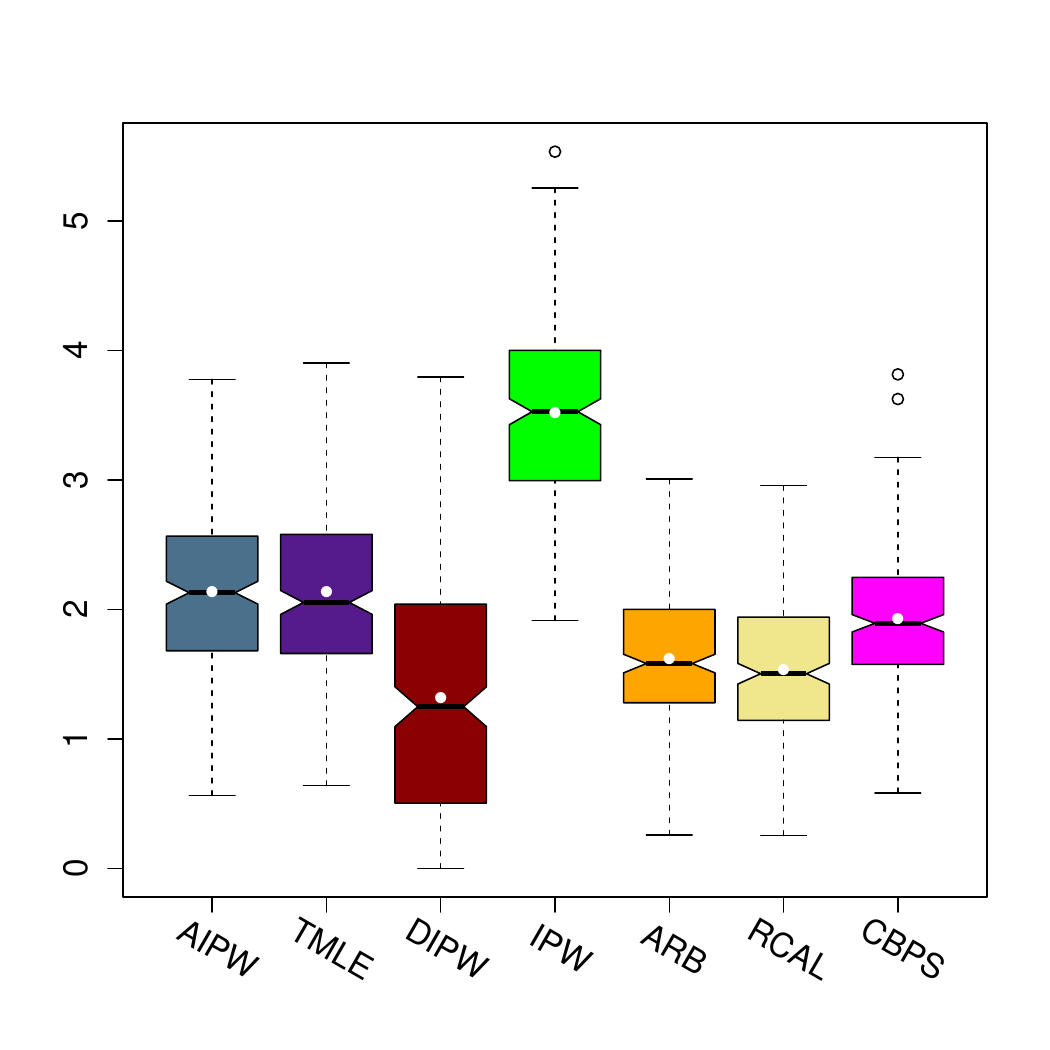}} \\
	\subfigure[Exponential design, $s = 5$]{\includegraphics[width=0.32\textwidth]{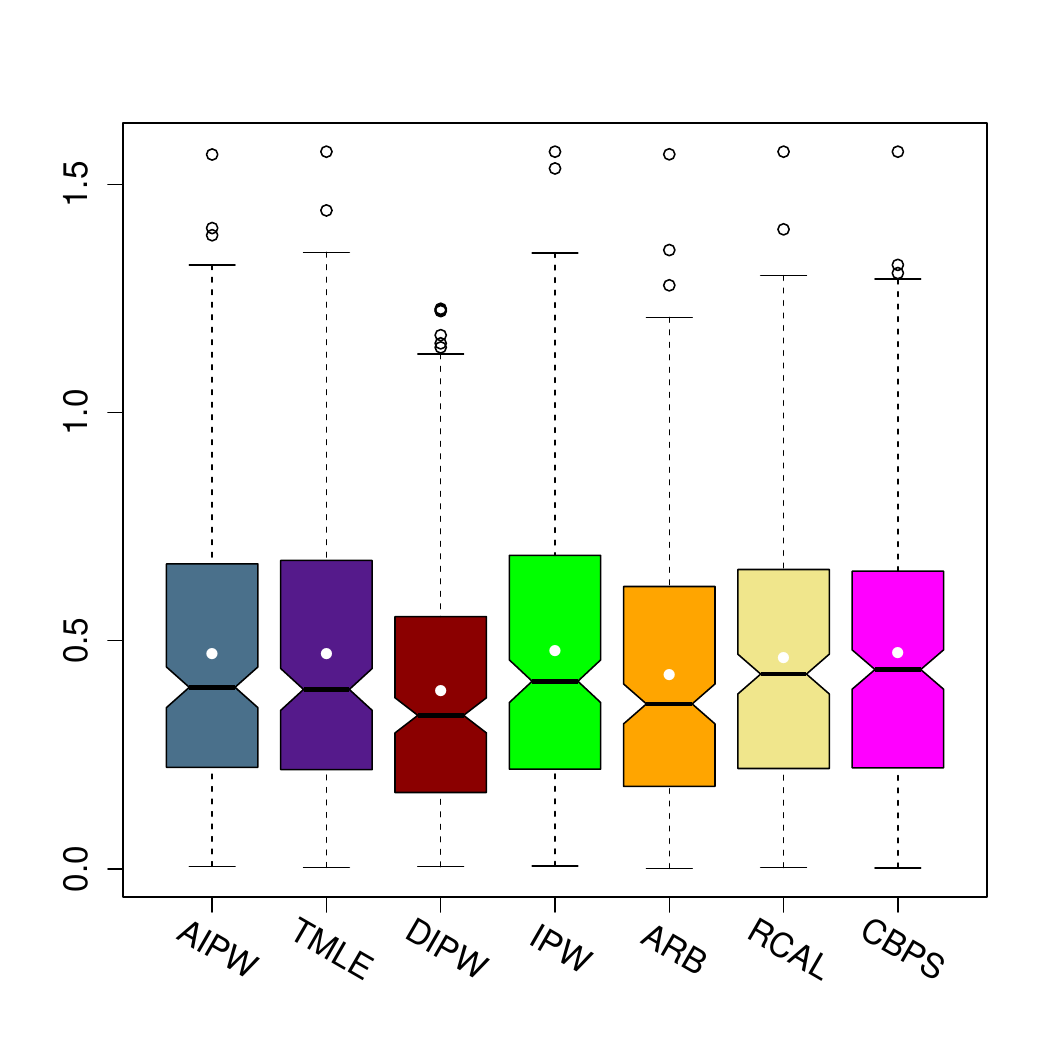}} \hfill
	\subfigure[Exponential design, $s = 20$]{\includegraphics[width=0.32\textwidth]{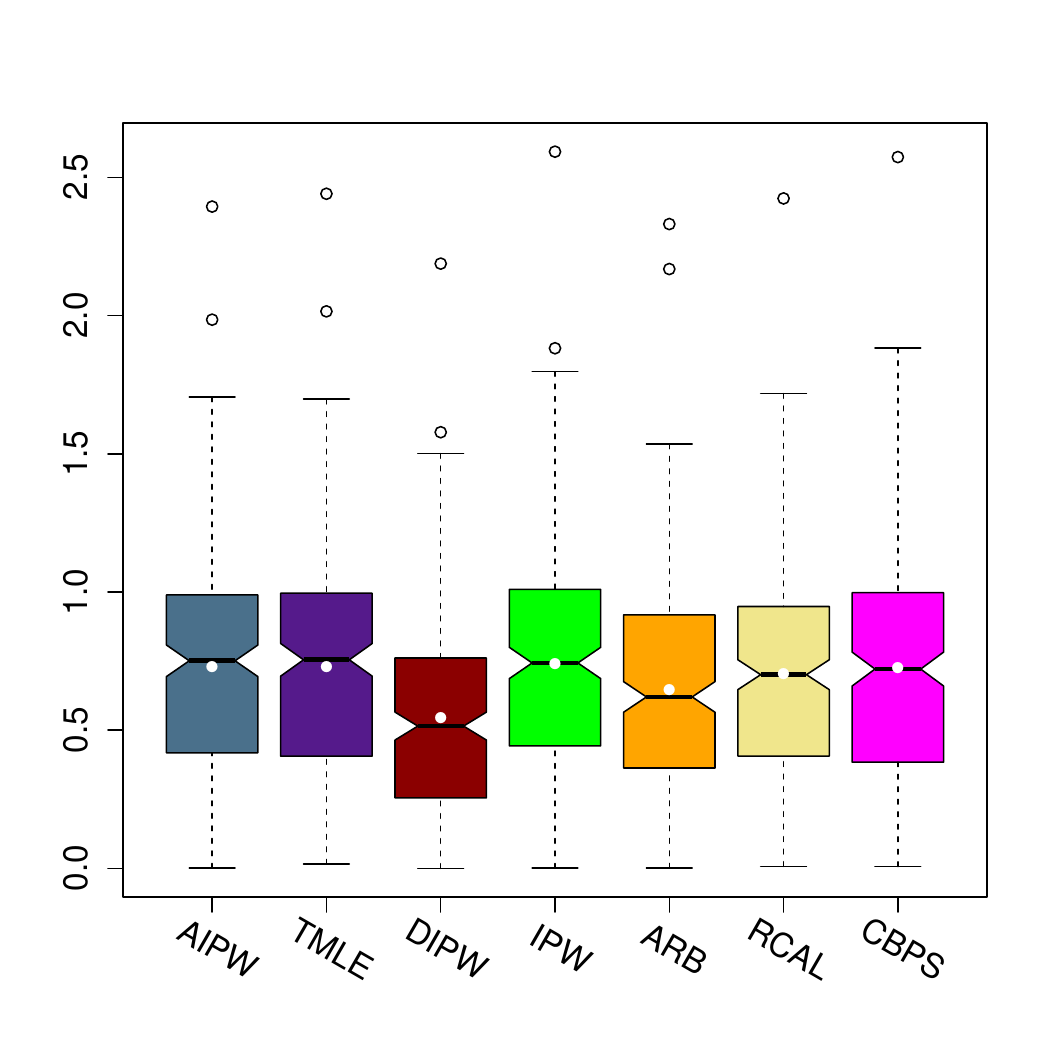}} \hfill
	\subfigure[Exponential design,  $s = 50$]{\includegraphics[width=0.32\textwidth]{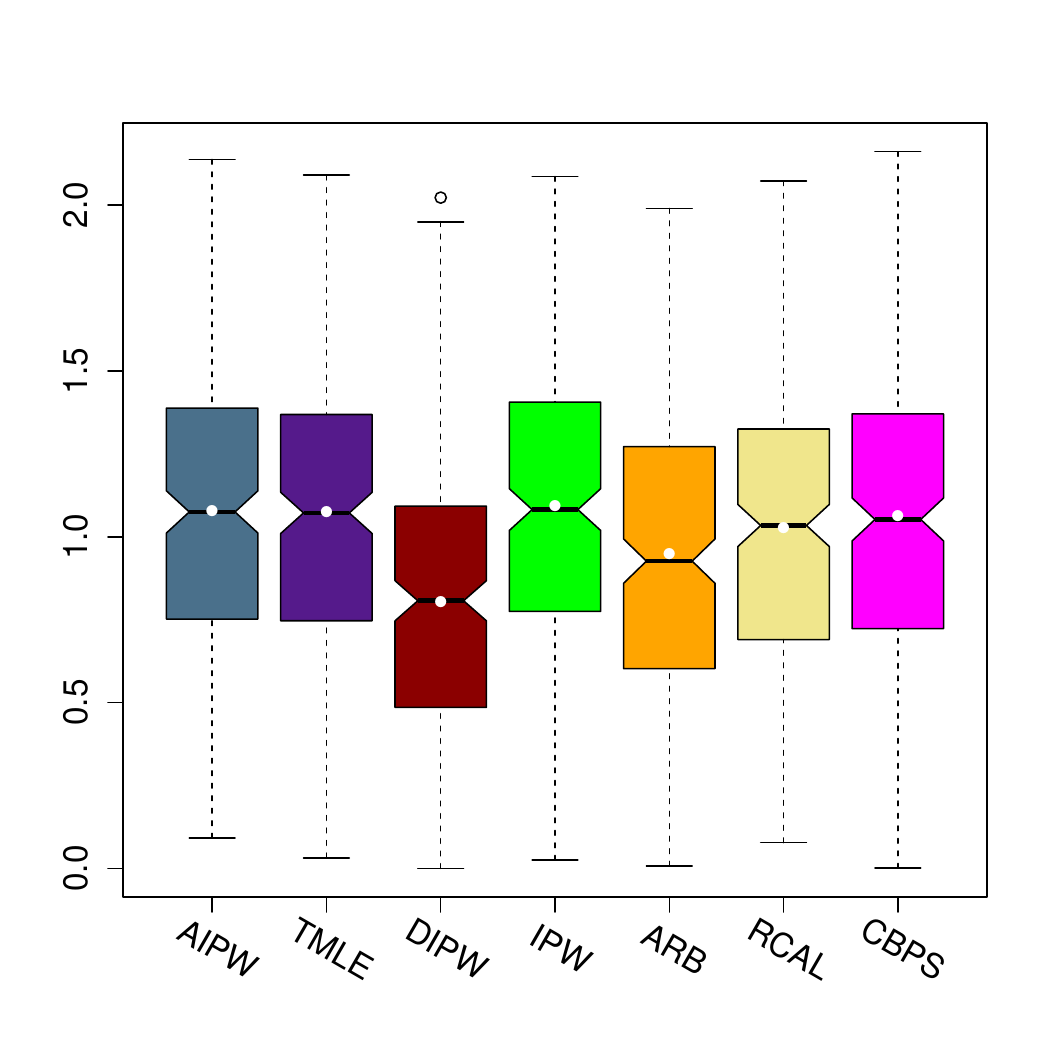}} \\
	\subfigure[Real data  design, $s = 5$]{\includegraphics[width=0.32\textwidth]{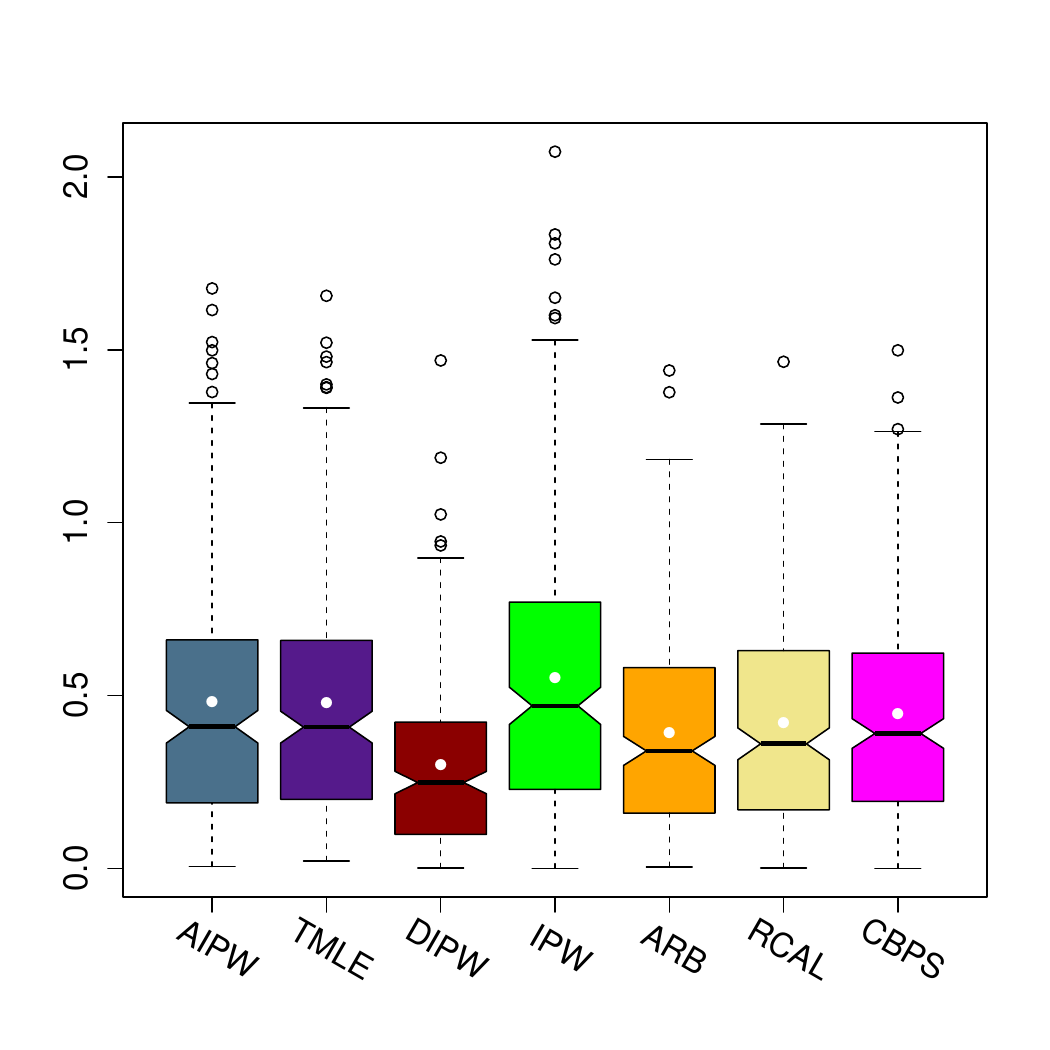}} \hfill
	\subfigure[Real data design, $s = 20$]{\includegraphics[width=0.32\textwidth]{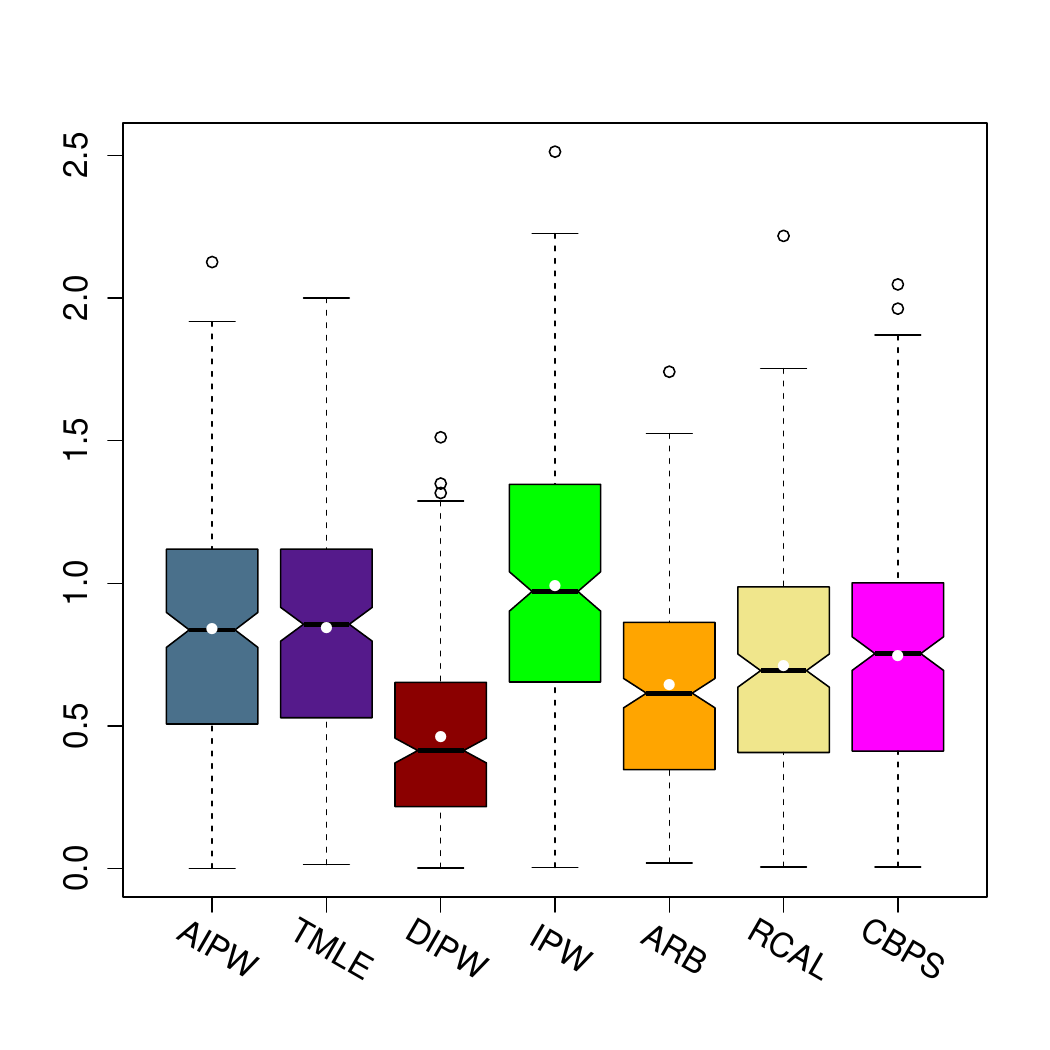}} \hfill
	\subfigure[Real data design,  $s = 50$]{\includegraphics[width=0.32\textwidth]{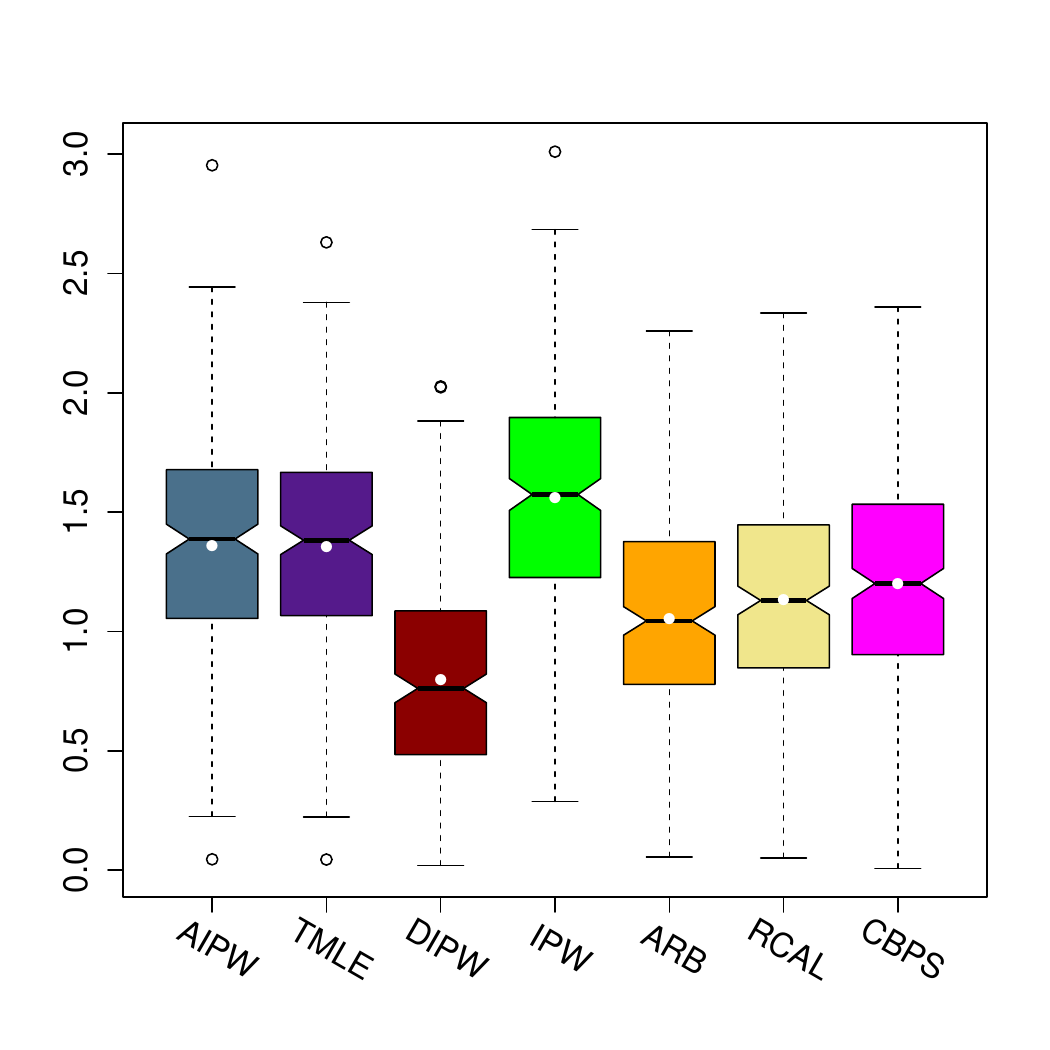}}
	\caption{Boxplots of the estimation error $|\hat{\tau} - \bar{\tau}|$ under different covariate designs and linear functions $b(\cdot)$ and $\Delta(\cdot)$ with different sparsity levels $s$ for the propensity model coefficients; the white dots correspond to means.}\label{fig:linear}
\end{figure}

\begin{figure}[t!]
	\subfigure[Toeplitz design, $s = 5$]{\includegraphics[width=0.32\textwidth]{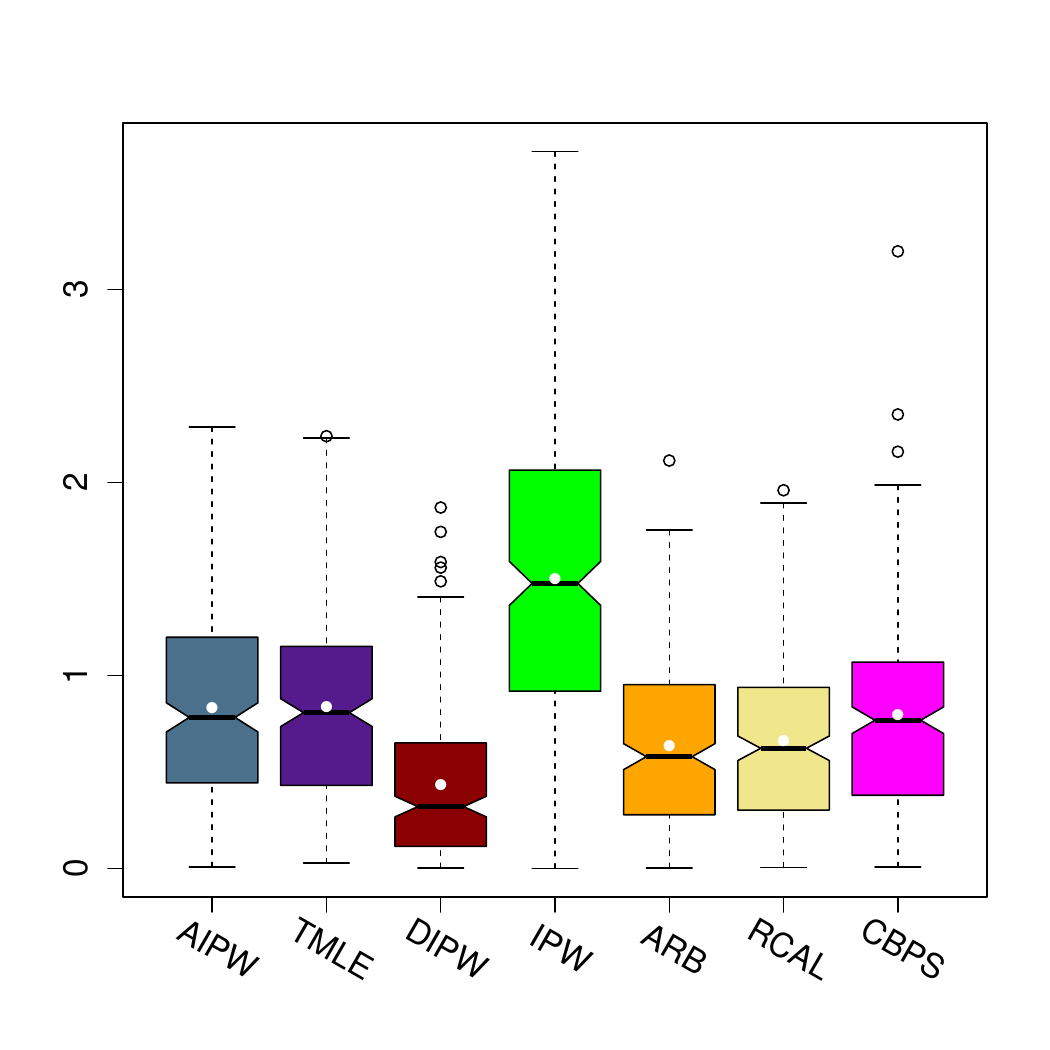}} \hfill
	\subfigure[Toeplitz design, $s = 20$]{\includegraphics[width=0.32\textwidth]{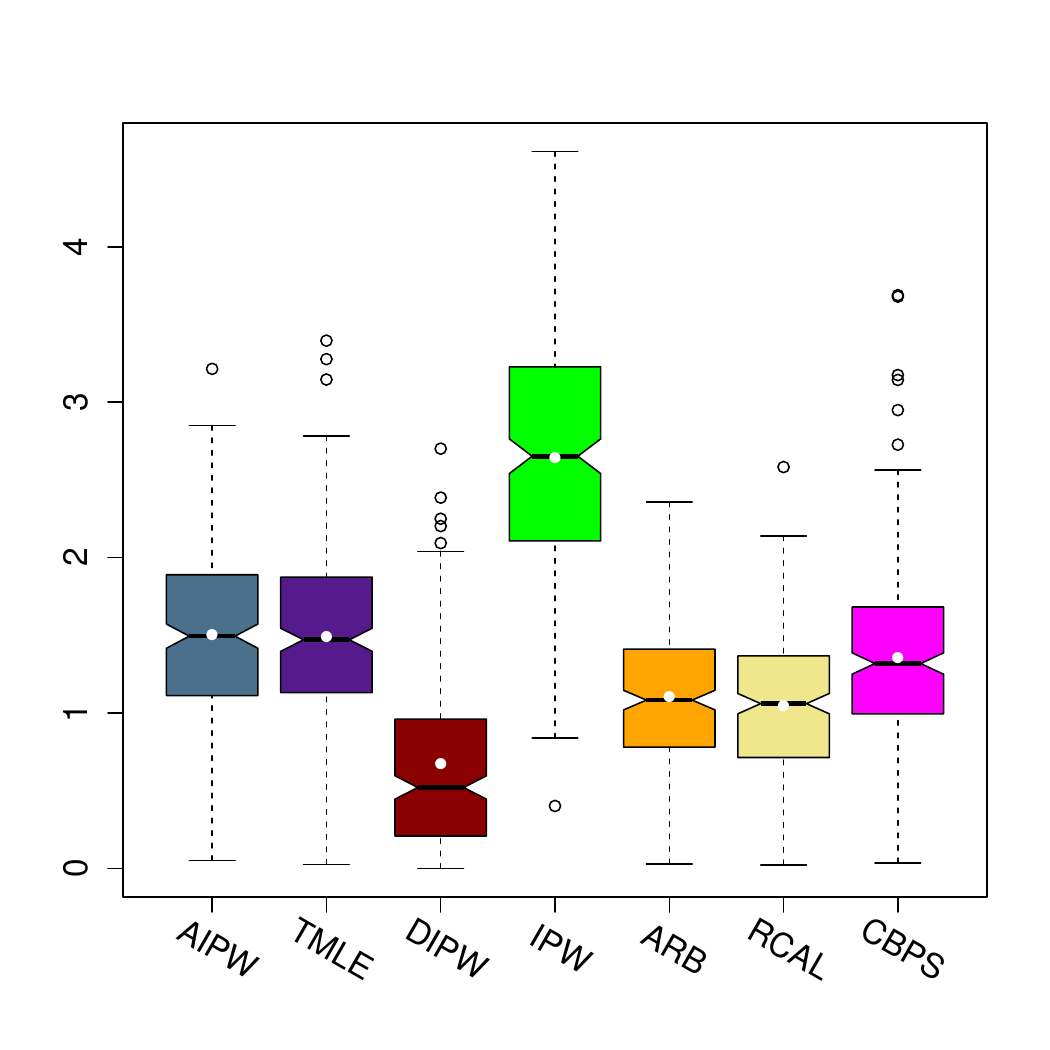}} \hfill
	\subfigure[Toeplitz design, $s = 50$]{\includegraphics[width=0.32\textwidth]{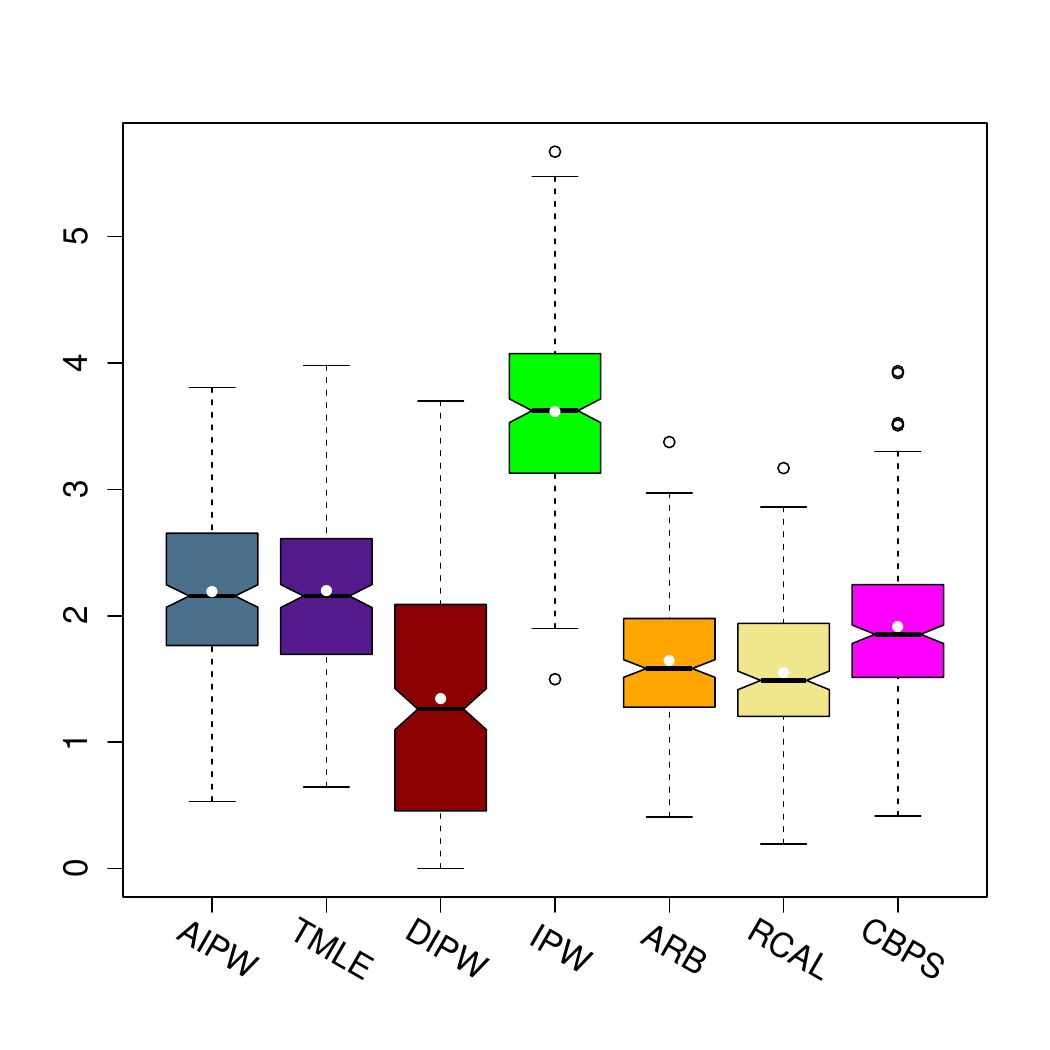}} \\
	\subfigure[Exponential design, $s = 5$]{\includegraphics[width=0.32\textwidth]{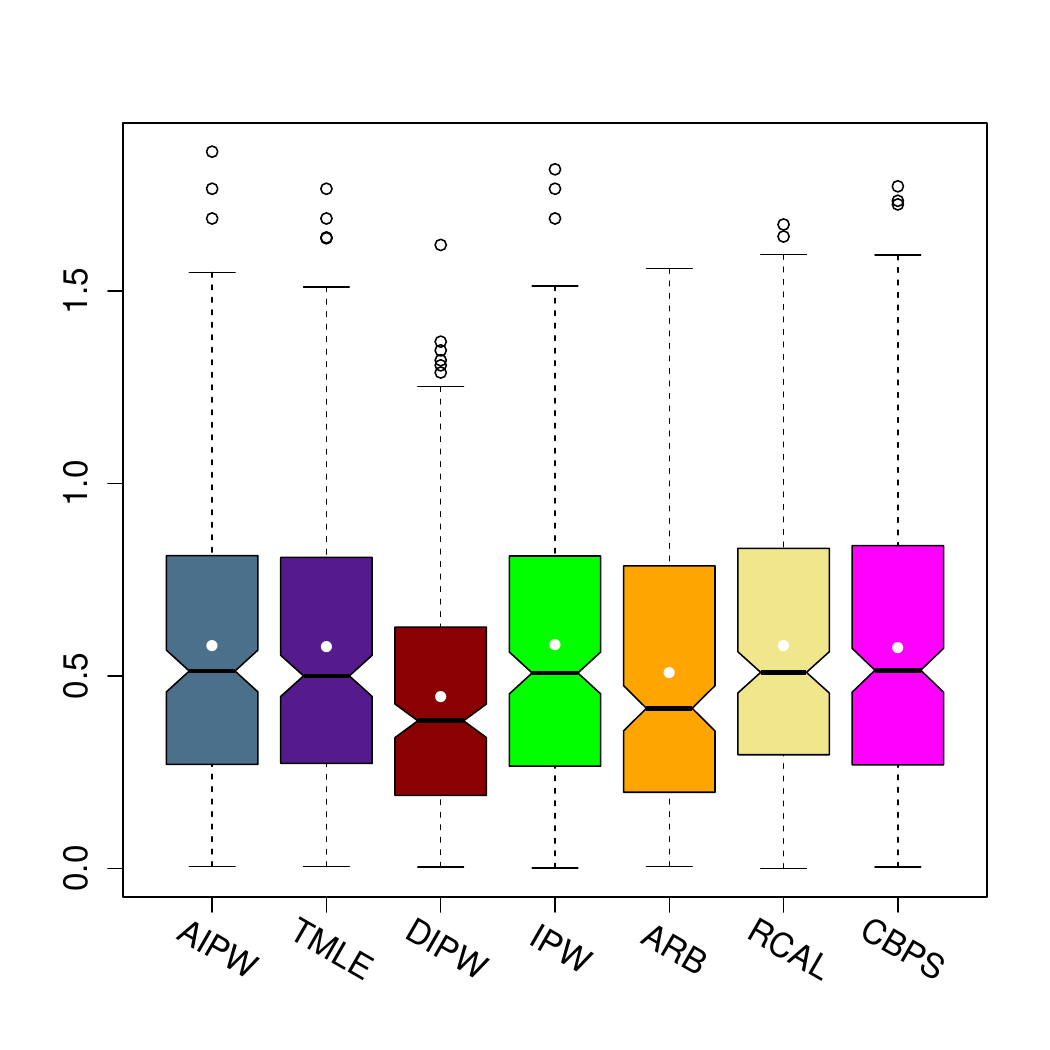}} \hfill
	\subfigure[Exponential design, $s = 20$]{\includegraphics[width=0.32\textwidth]{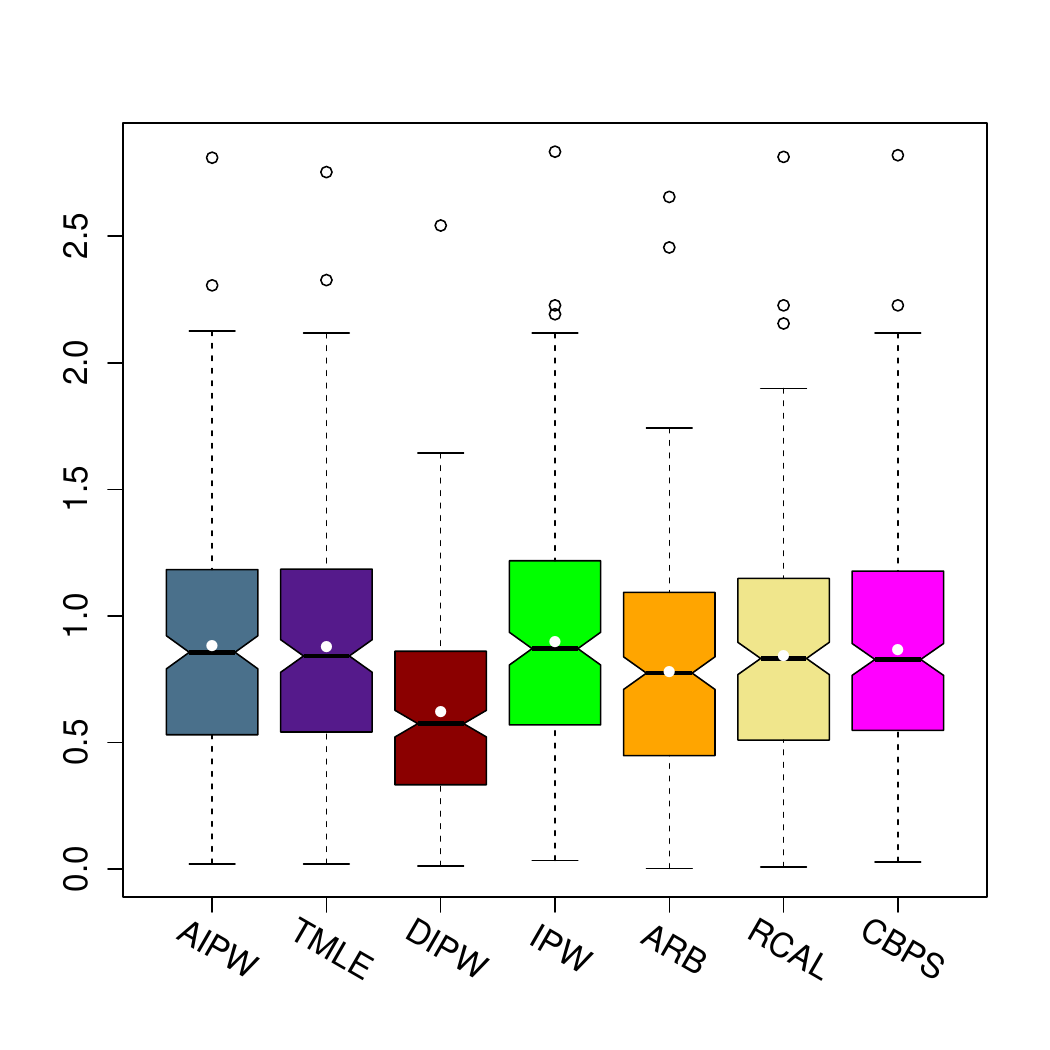}} \hfill
	\subfigure[Exponential design,  $s = 50$]{\includegraphics[width=0.32\textwidth]{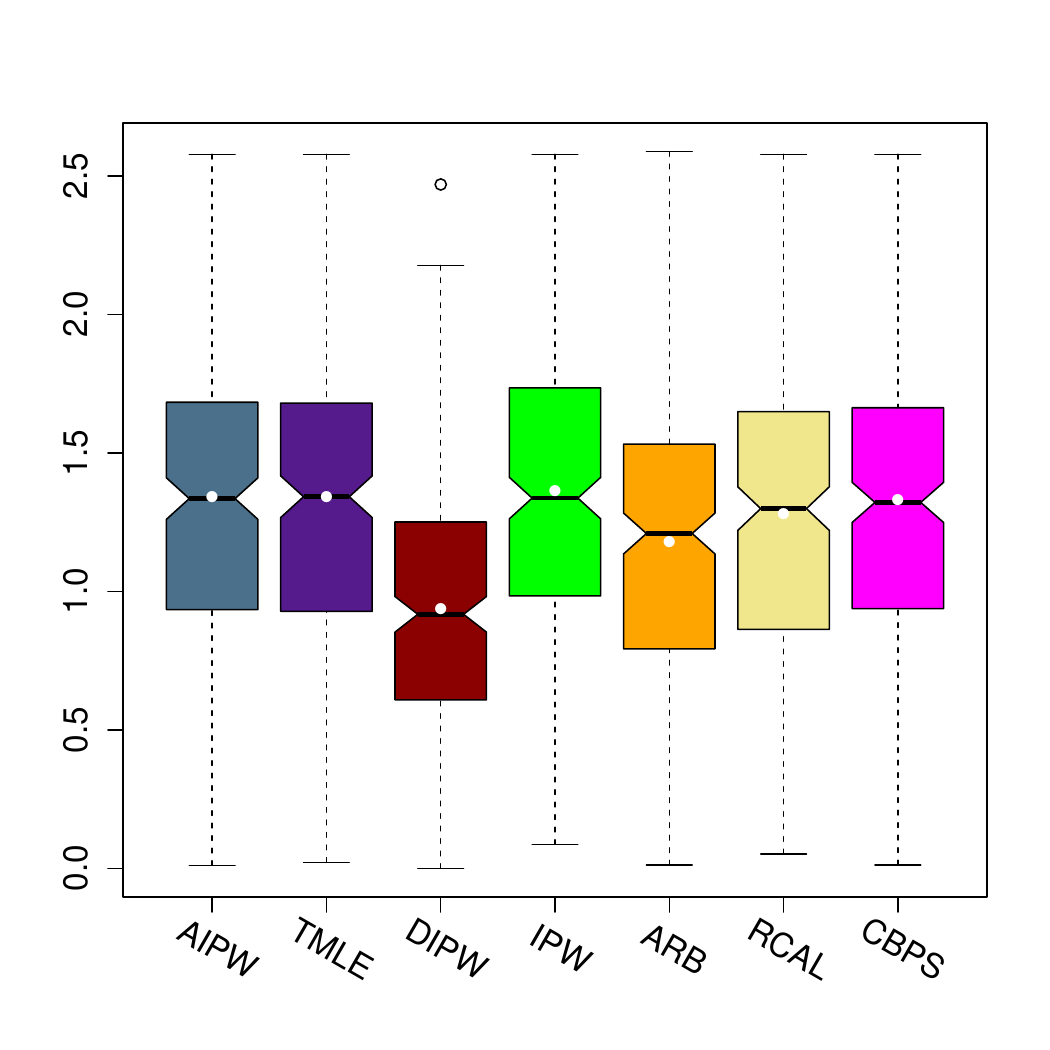}} \\
	\subfigure[Real data design, $s = 5$]{\includegraphics[width=0.32\textwidth]{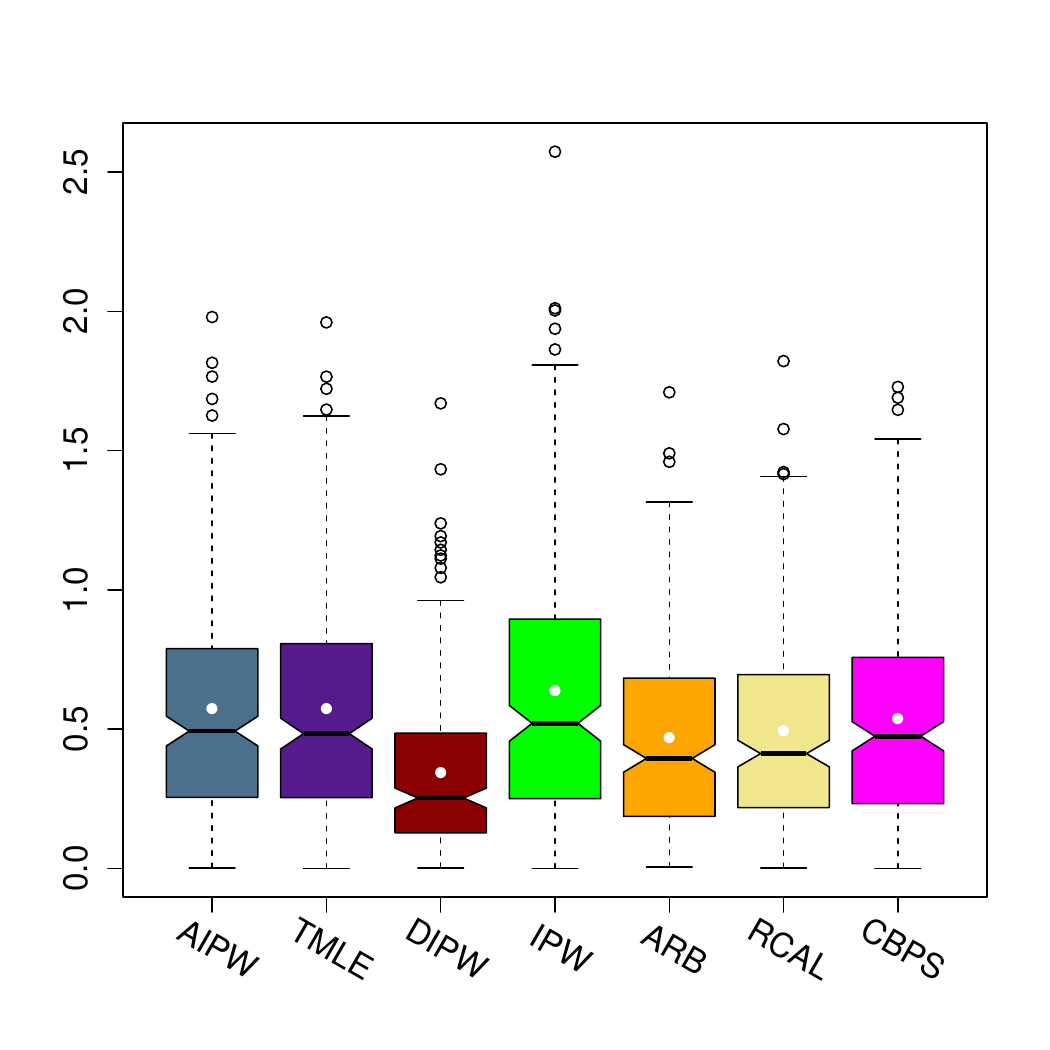}} \hfill
	\subfigure[Real data design, $s = 20$]{\includegraphics[width=0.32\textwidth]{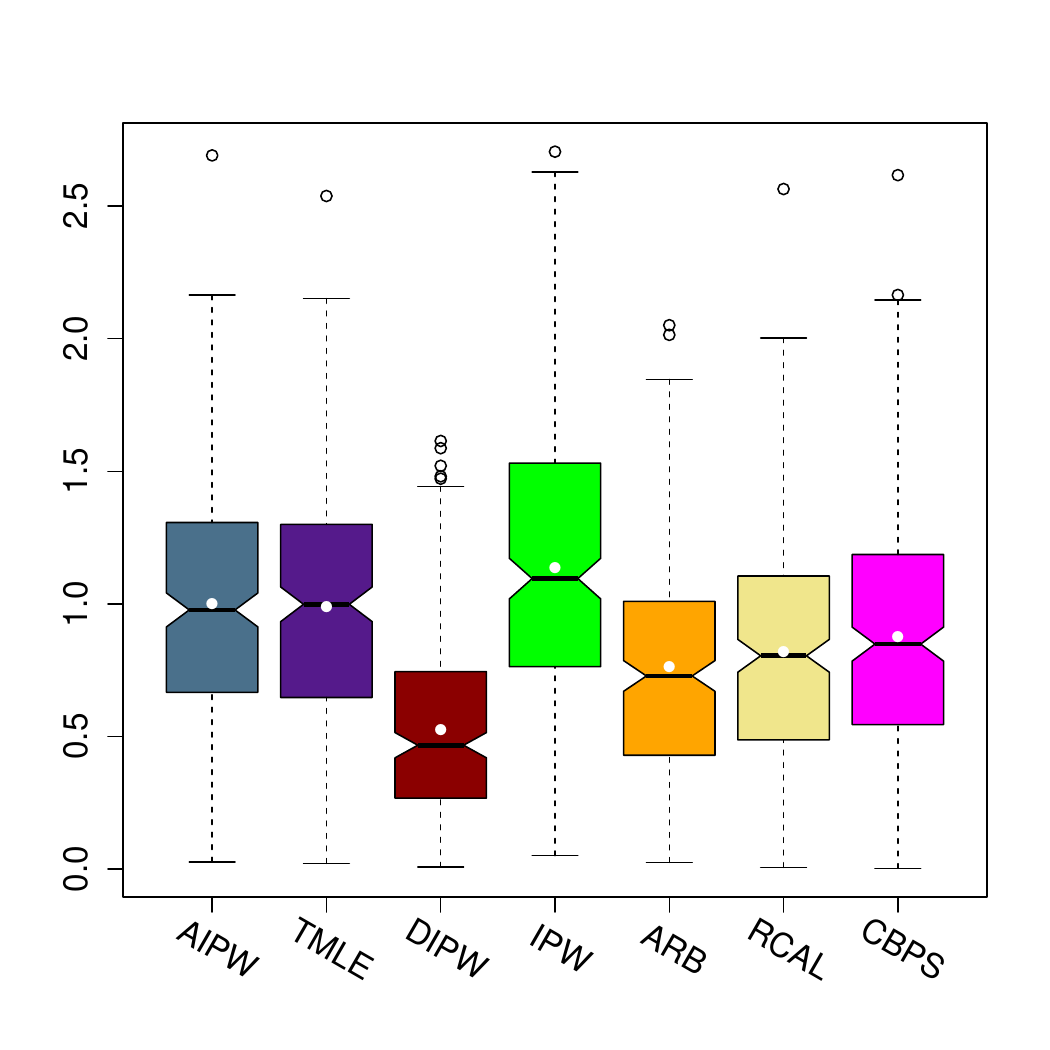}} \hfill
	\subfigure[Real data design,  $s = 50$]{\includegraphics[width=0.32\textwidth]{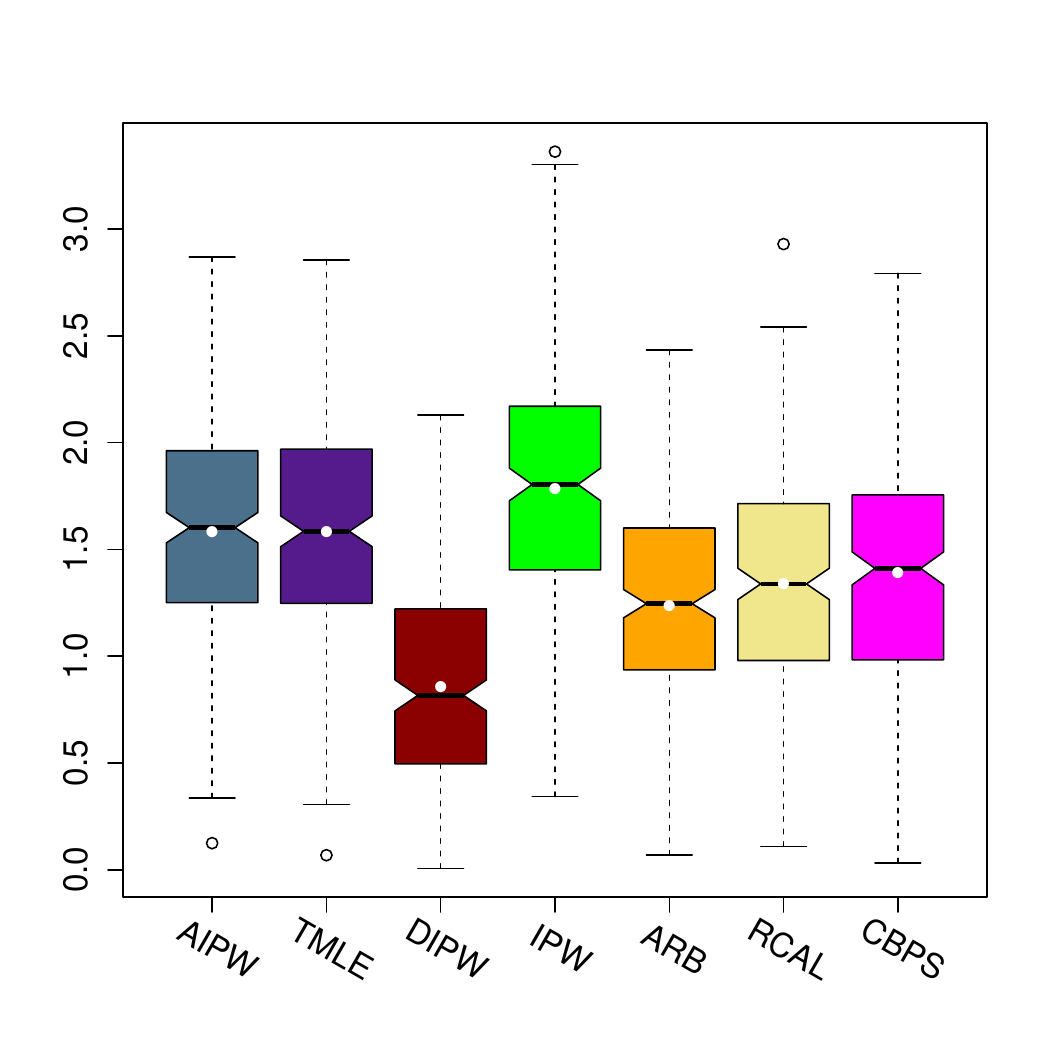}}
	\caption{As Figure~\ref{fig:linear} but with nonlinear functions $b(\cdot)$ and $\Delta(\cdot)$.}\label{fig:nonlinear}
\end{figure}

Figures~\ref{fig:linear} and~\ref{fig:nonlinear} present boxplots of the estimation error $|\hat{\tau} - \tau|$ for each of the estimators $\hat{\tau}$ and the various settings under consideration, where $\tau$ is the average treatment effect. We see that DIPW compares favourably with competing methods across all the scenarios studied. Surprisingly it performs particularly well in the near-dense propensity model settings where $s=50$, relative to other methods. This is perhaps surprising given that our theory would indicate a sparse propensity score model would be advantageous. However, the average estimation errors are higher in the $s=50$ settings, but other methods appear to struggle more with the lack of sparsity. Another interesting result is the competitive performance of ARB. Theoretical results in \citet{AIW18} suggest that the ARB estimator would only be $\sqrt{n}$-consistent in settings where the response functions are sparse linear model; however the method outperforms some of the other approaches with the exception of DIPW even in settings where the response functions are nonlinear.

\rev{\subsubsection{Results for confidence intervals} \label{sec:conf_res} We apply the confidence interval construction method as described in Section~\ref{sec:confidence}. For comparison we also consider intervals associated with each of AIPW, ARB, TMLE, RCAL and CBPS, using the options specified in Section~\ref{sec:methodsetup}. Table~\ref{table:cp} presents the empirical coverage probability and average confidence interval length for nominal $95\%$ confidence intervals among $250$ replicates. Here we consider the same simulation setups as in Figure~\ref{fig:linear}. We can see that for most of the sparse settings, $s = 5, 20$, the empirical coverage of DIPW is correct or is only slightly lower than the prespecified nominal level. For the dense setting $s = 50$, the empirical coverage is substantially lower; such low coverage is expected as the bias of $\hat{\tau}_{\mdipw}$ is large due to the difficulty in estimating propensity score. For the other approaches, their coverage probabilities are in general below the nominal level by a somewhat larger margin in both the dense and sparse cases. 
}
\begin{table}[!t]
	\centering
	\begin{tabular}{c|cc|cccccc} \hline\hline
		design & sparsity $s$ & & AIPW & DIPW & ARB & TMLE & RCAL & CBPS \\
		\hline
		\multirow{6}{*}{Toeplitz} & \multirow{2}{*}{5} & CP & 23.2\% & {\bf 92.4\%} & 66.0\% & 26.0\% & 48.0\% & 48.4\%\\
		&   & Length & 0.40 & 1.00 & 0.82 & 0.43 & 0.56 & 2.63\\
		& \multirow{2}{*}{20} & CP & 2.0\% & {\bf 75.6\%} & 27.6\% & 2.4\% & 14.8\% & 27.2\%\\
		&   & Length & 0.45 & 1.04 & 0.85 & 0.46 & 0.56 & 6.75\\
		& \multirow{2}{*}{50} & CP & 0.0\% & {\bf 46.0\%} & 10.8\% & 0.8\% & 2.8\% & 22.0\%\\
		&   & Length & 0.49 & 1.05 & 0.93 & 0.61 & 0.57 & 11.69\\ \hline
		\multirow{6}{*}{Exponential} & \multirow{2}{*}{5} & CP & 82.0\% & {\bf 99.2\%} & 92.0\% & 86.0\% & 76.4\% & 78.0\%\\
		&   & Length & 0.82 & 1.29 & 0.93 & 0.88 & 0.72 & 1.66\\
		& \multirow{2}{*}{20} & CP & 58.8\% & {\bf 99.2\%} & 73.6\% & 60.0\% & 51.2\% & 50.8\%\\
		&   & Length & 0.81 & 1.28 & 0.91 & 0.86 & 0.71 & 1.58\\
		& \multirow{2}{*}{50} & CP & 26.0\% & {\bf 86.8\%} & 44.8\% & 28.8\% & 24.8\% & 24.8\%\\
		&   & Length & 0.79 & 1.28 & 0.88 & 0.83 & 0.69 & 1.13\\ \hline
		\multirow{6}{*}{Real data} & \multirow{2}{*}{5} & CP & 77.6\% & {\bf 99.2\%} & 91.6\% & 78.8\% & 72.4\% & 74.4\%\\
		&   & Length & 0.74 & 1.15 & 0.88 & 0.78 & 0.60 & 1448.33\\
		& \multirow{2}{*}{20} & CP & 39.2\% & {\bf 96.0\%} & 72.8\% & 43.6\% & 39.2\% & 47.6\%\\
		&   & Length & 0.72 & 1.11 & 0.87 & 0.77 & 0.59 & 8.21\\
		& \multirow{2}{*}{50} & CP & 9.6\% & {\bf 80.4\%} & 33.6\% & 12.0\% & 10.8\% & 22.0\%\\
		&   & Length & 0.74 & 1.12 & 0.88 & 0.79 & 0.60 & 2.96\\
		\hline\hline
	\end{tabular}
	\caption{Coverage probability and average length of confidence intervals with $95\%$ nominal level. ``CP'' stands for the empirical coverage probability among the 250 replicates, ``Length'' stands for the average length of confidence intervals among the 250 replicates with 95\% nominal level. The largest coverage probabilities in each row are marked in bold. The settings are the same as in Figure~\ref{fig:linear}.}\label{table:cp}
\end{table}

\rev{Compared with most of the competing approaches, the lengths of confidence intervals given by our approach are slightly larger. This is consistent with our theory, as our approach uses ${\mbb \mu}$ to reduce the bias, which can at the same time potentially increase the variance of the resulting estimates by $\sigma_{\mu}^2$. Moreover, as also discussed in Section~\ref{sec:confidence}, $\hat{\sigma}_m^2$ is essentially an upper bound of the true variance, which can result in a slightly conservative confidence interval construction.}

\subsection{Variance estimation} \label{sec:simvar}
In this section, we present results concerning estimation of $\Var(Y(1))$ based on data generated using the same settings as considered in Section~\ref{sec:simate}, with the modifications described below. To do this, we first transform our data via $Y^{(1)} = Y T$. Then writing $(Y^{(1)}(0), Y^{(1)}(1)) := (0, Y(1))$, \rev{note that $Y^{(1)}(T) = Y^{(1)}$. Also, we have
	\[
	\E(Y(1)) = \E(Y^{(1)}(1) - Y^{(1)}(0)) =: \tau^{(1)};
	\]
	that is $\E(Y(1))$ is the `average treatment effect' when using transformed data $(Y, T) \mapsto (YT, T) = (Y^{(1)}, T)$}. The parameter $\tau^{(1)}$ may be estimated by first transforming the data as indicated, and then estimating this new average treatment effect. \rev{Let us call the corresponding DIPW estimator $\hat{\tau}^{(1)}$}.

Next consider $Y^{(2)} := (Y - \tau^{(1)})^2T$ and $(Y^{(2)}(0), Y^{(2)}(1)) := (0, (Y(1) - \tau^{(1)})^2)$. \rev{Then similarly, $Y^{(2)} = Y^{(2)}(T)$.} Also $\Var(Y(1)) = \E \{ Y^{(2)}(1) - Y^{(2)}(0)\}$, \rev{an `average treatment effect' involving transformed outcomes $Y \mapsto Y^{(2)}$; note however that $Y^{(2)}$ involves the unknown oracular quantity $\tau^{(1)}$}. Replacing $\tau^{(1)}$ with an estimate $\hat{\tau}^{(1)}$ and working with transformed outcomes $Y \mapsto (Y - \hat{\tau}^{(1)})^2T$, we can form an estimate $\hat{\tau}$ of $\Var(Y(1))$. We follow these steps using each of the methods outlined in Section~\ref{sec:simate} applied to the appropriately transformed outcomes to construct estimates of $\Var(Y(1))$.

\subsubsection{Experimental setup}
The settings we consider as the same as those studied in section~\ref{sec:simate} with the following modifications.
We replace \eqref{eq:sim_Y} by
\begin{align*}
	Y_i(1) = \beta^\top  X_i + T_i \delta^\top  X_i + \ind_{\{\pi(X_i) \geq 0.5\}} \varepsilon_i^{(+)} + \ind_{\{\pi(X_i) < 0.5\}} \varepsilon_i^{(-)},
\end{align*}
where $\epsilon_i^{(+)} \sim \mathcal{N}(0, 0.5)$ and $\epsilon_i^{(-)} \sim \mathcal{N}(0, 2)$; the data we observe is $Y_iT_i$. This heteroscedasticity increases the gap between $\Var(Y(1) \,|\, T=1)$ and the target $\Var(Y(1))$ so those methods that might tend towards the former do not artificially appear to perform well. \newrev{Further, we decrease the degree of overlap between the treatment and control groups by setting $\|\gamma\|_2=3$; results with $\|\gamma\|_2=1$ are given in the appendix.} 

Since overall the problem is more challenging that vanilla average treatment effect estimation, we increase the sample size $n$ to $2500$. Rather than directly using the real design which has $n=491$, we fit a Gaussian copula model to the data to give a multivariate distribution from which we can generate independent realisations. We still refer to this as the `real design' in the figures that follow, although it is simulated data that approximates the distribution of the data.

\newrev{As discussed in Section~\ref{sec:multi}, one can apply any method to produce estimate $\tilde{\mu}(\cdot)$ in Step 2 of Algorithm~\ref{alg:dipw}. Given that the outcome regression model here is highly nonlinear, it is natural to use a more flexible regression method to attempt to estimate this. We therefore additionally consider employing
	random forest \citep{breiman2001random} to construct $\tilde{\mu}$ using the approach in Step 2b of Algorithm~\ref{alg:dipw}. For comparison, in addition to all of the methods used in  Section~\ref{sec:ATE_results}, we consider versions of		
	AIPW and TMLE with random forest used in the estimation of outcome regression models. We use the \texttt{ranger} \citep{wright2015ranger} implementation of random forest with the default parameters throughout.}

%
%
\subsubsection{Results}
\begin{figure}[t!]
	\subfigure[Toeplitz design, $s = 5$]{\includegraphics[width=0.32\textwidth]{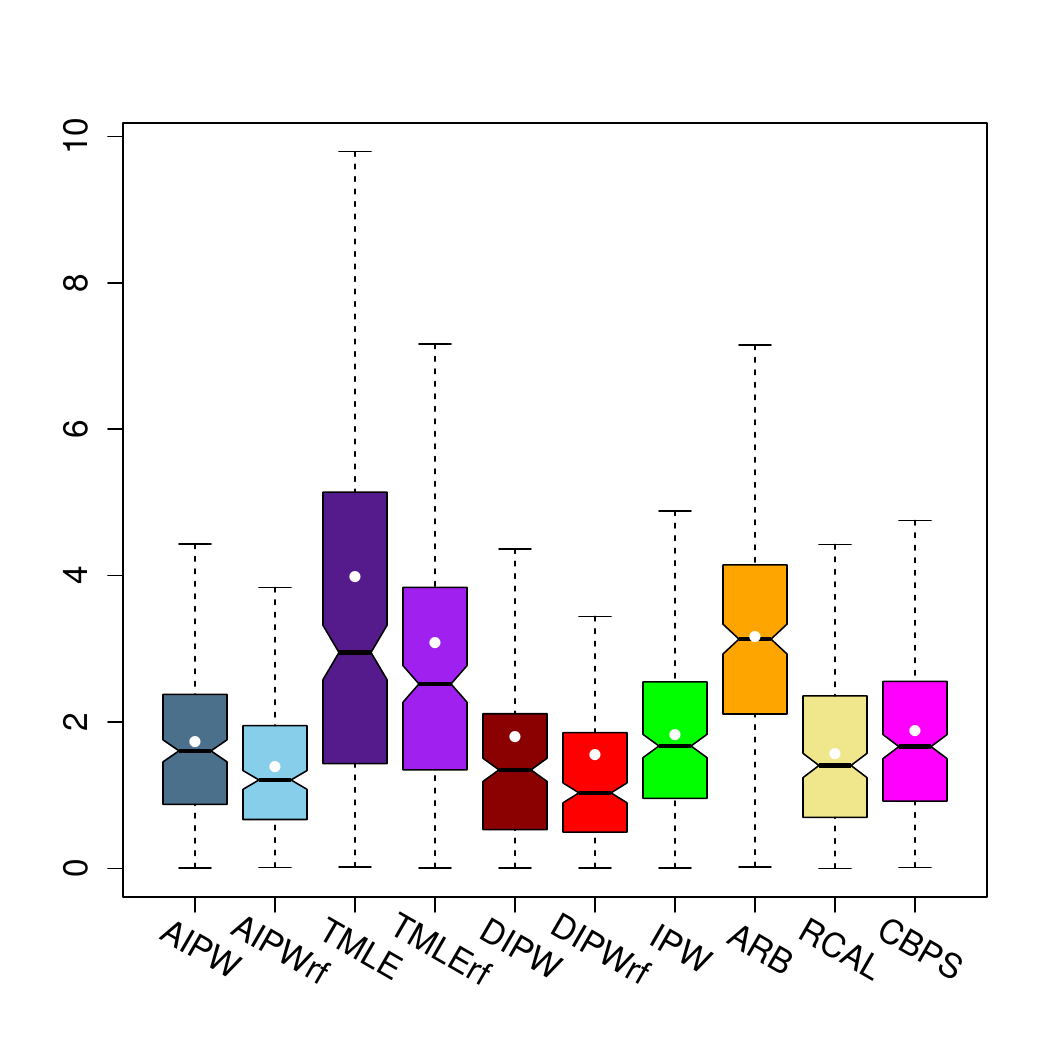}} \hfill
	\subfigure[Toeplitz design, $s = 20$]{\includegraphics[width=0.32\textwidth]{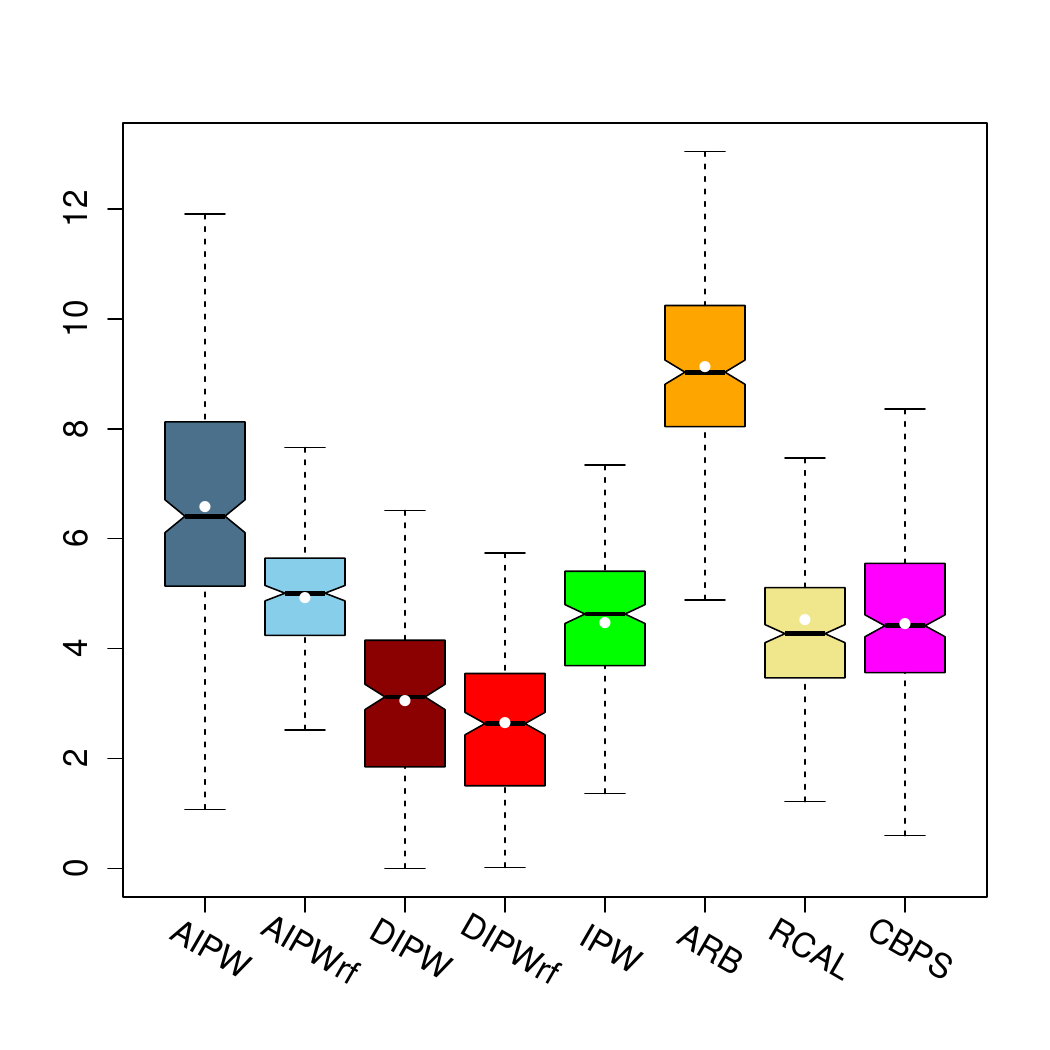}} \hfill
	\subfigure[Toeplitz design, $s = 50$]{\includegraphics[width=0.32\textwidth]{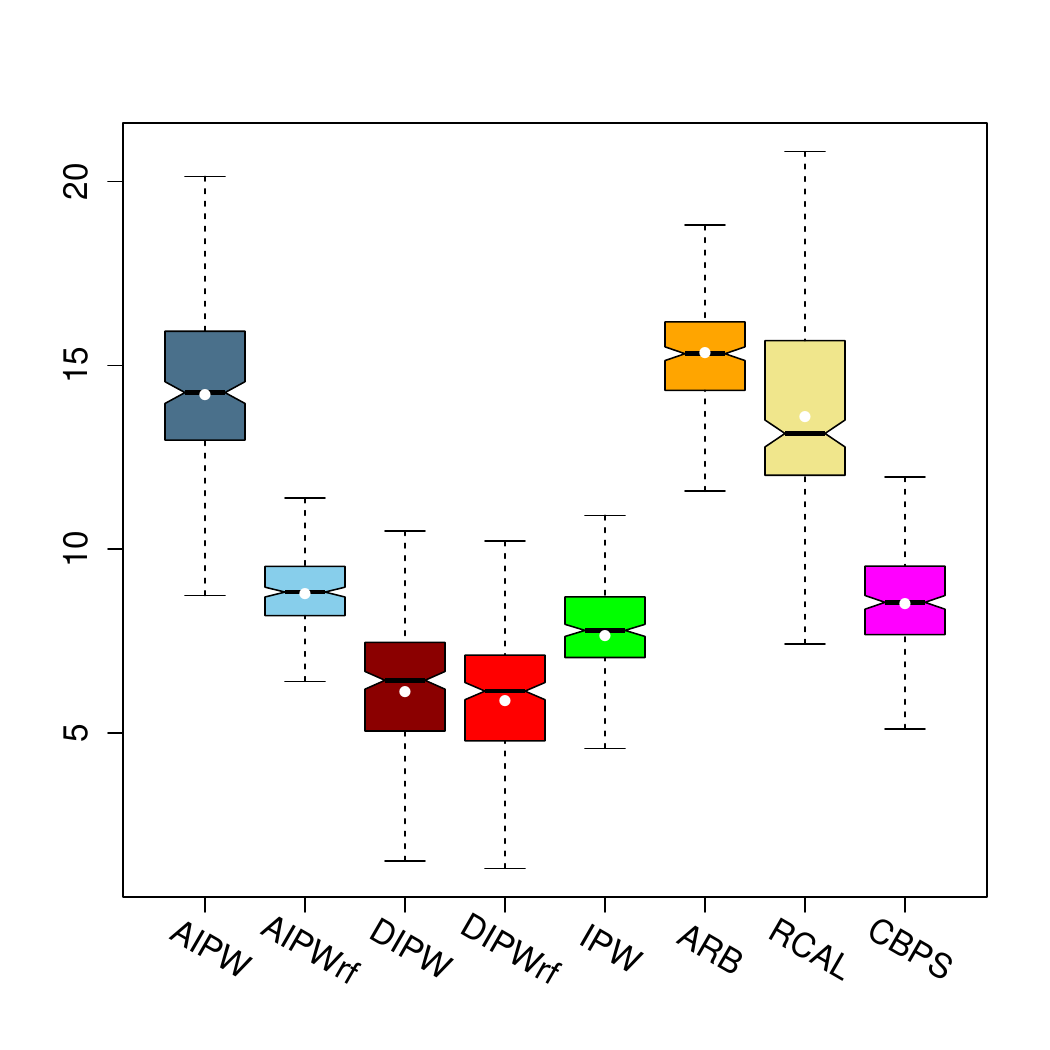}} \\
	\subfigure[Exponential design, $s = 5$]{\includegraphics[width=0.32\textwidth]{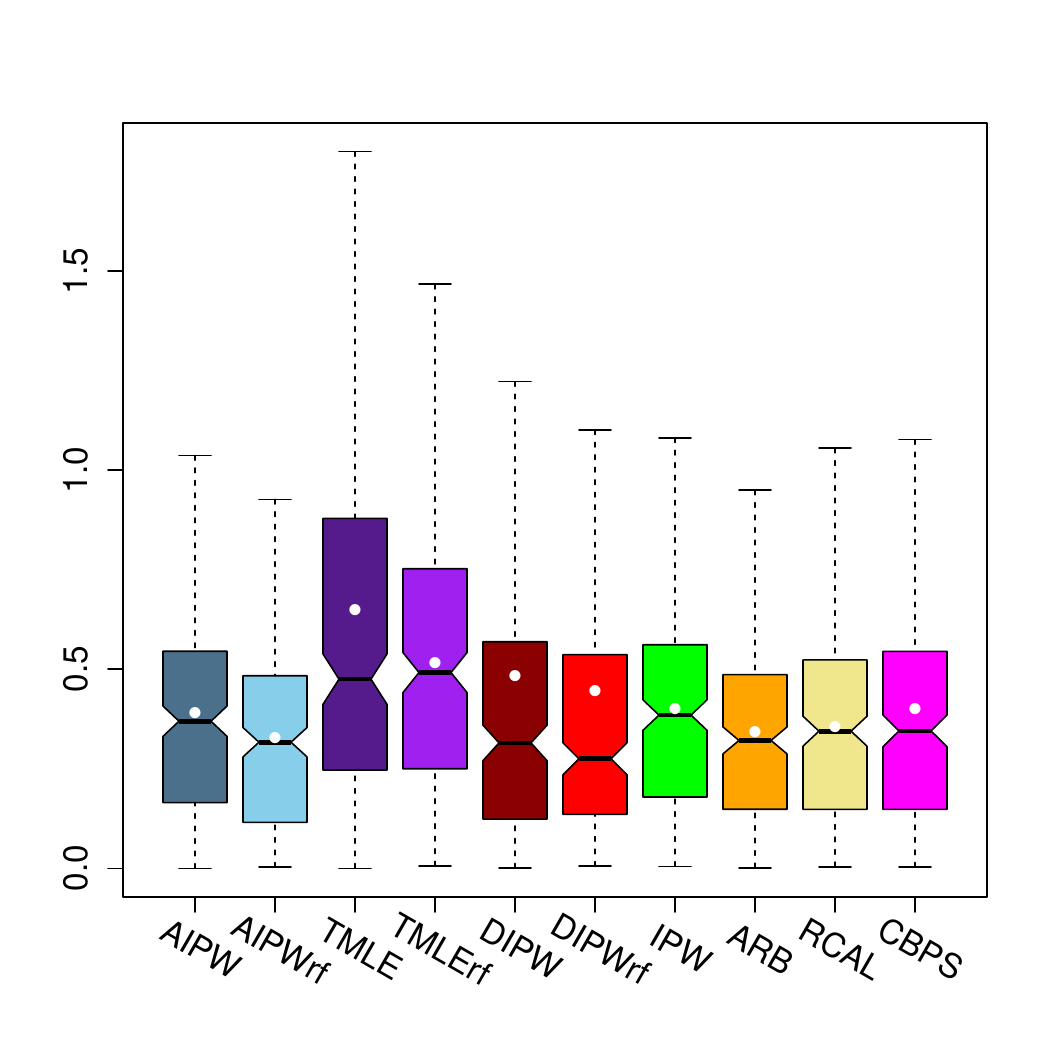}} \hfill
	\subfigure[Exponential design, $s = 20$]{\includegraphics[width=0.32\textwidth]{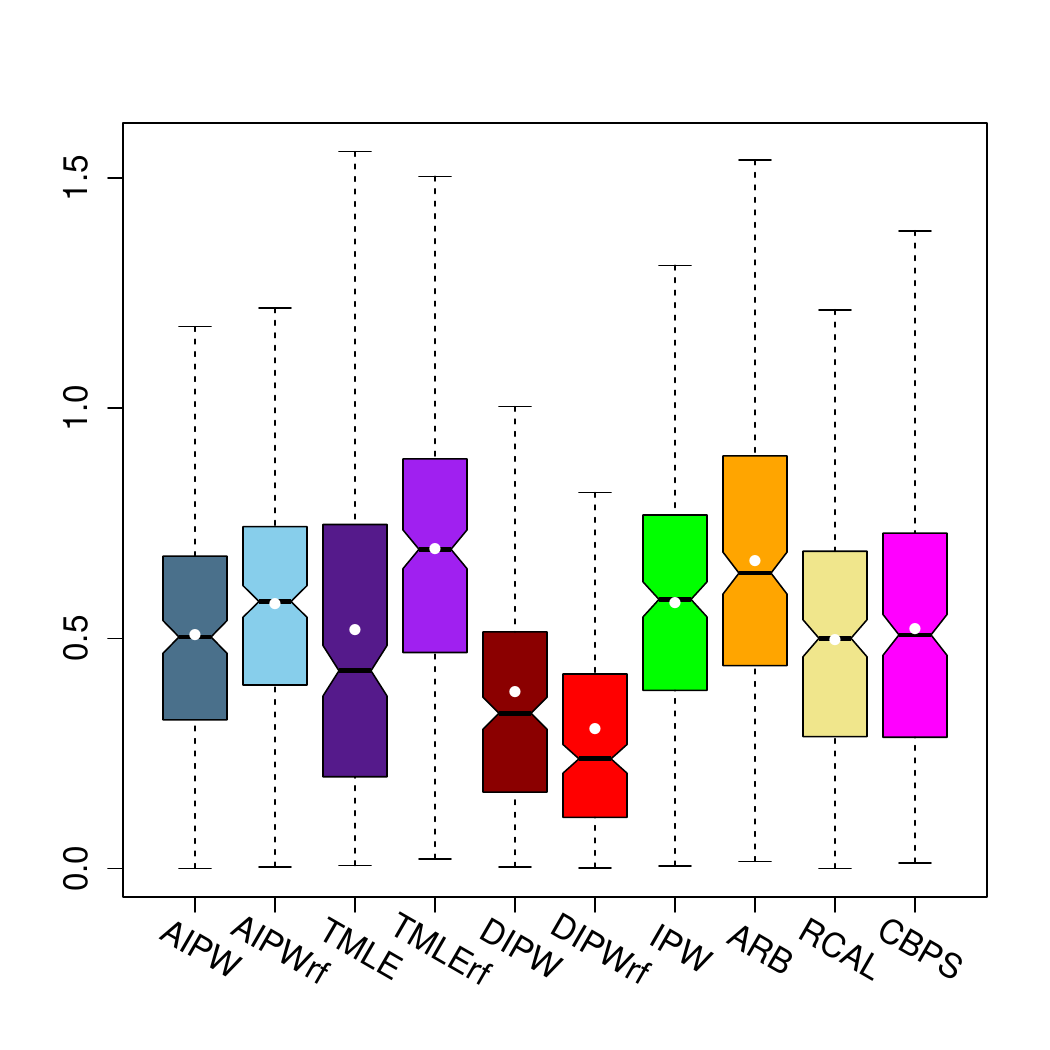}} \hfill
	\subfigure[Exponential design,  $s = 50$]{\includegraphics[width=0.32\textwidth]{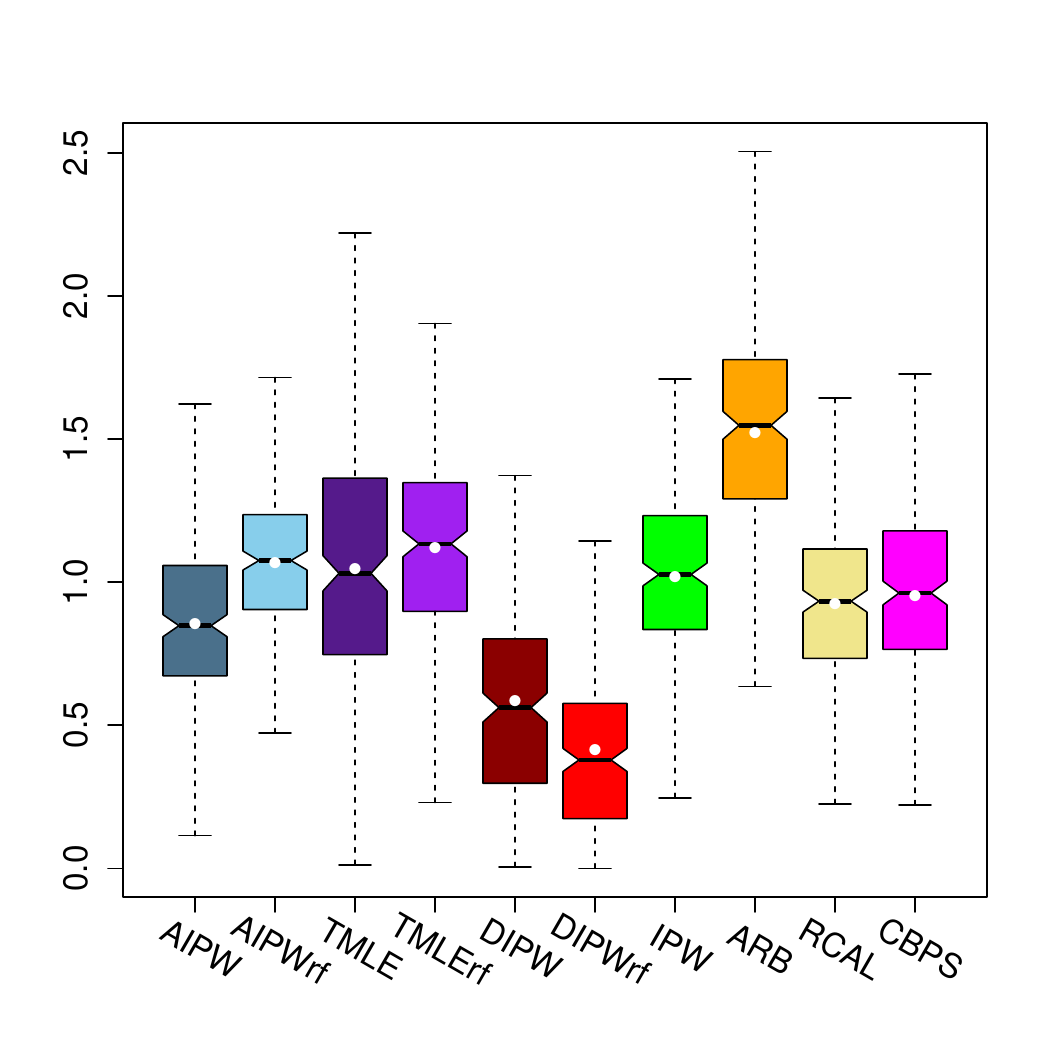}} \\
	\subfigure[Real data design, $s = 5$]{\includegraphics[width=0.32\textwidth]{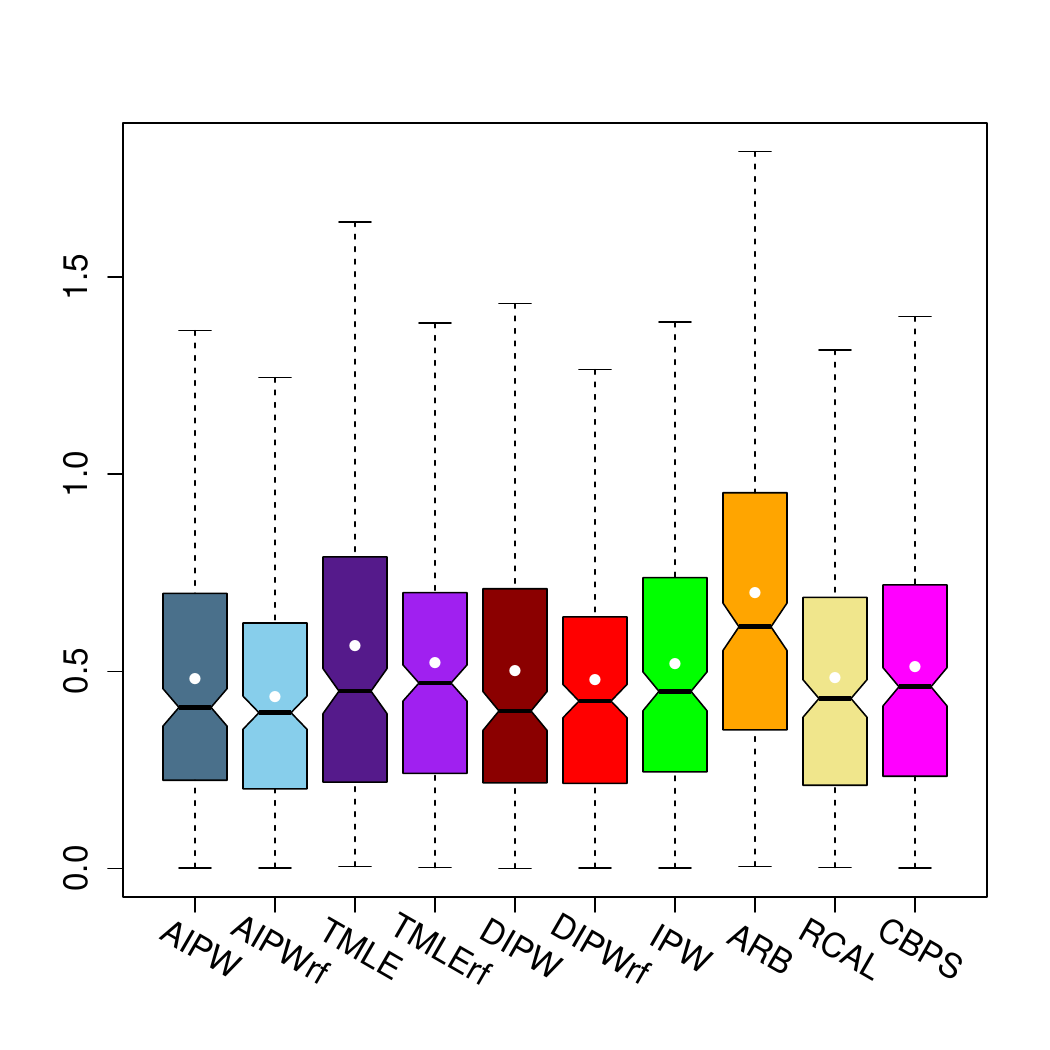}} \hfill
	\subfigure[Real data design, $s = 20$]{\includegraphics[width=0.32\textwidth]{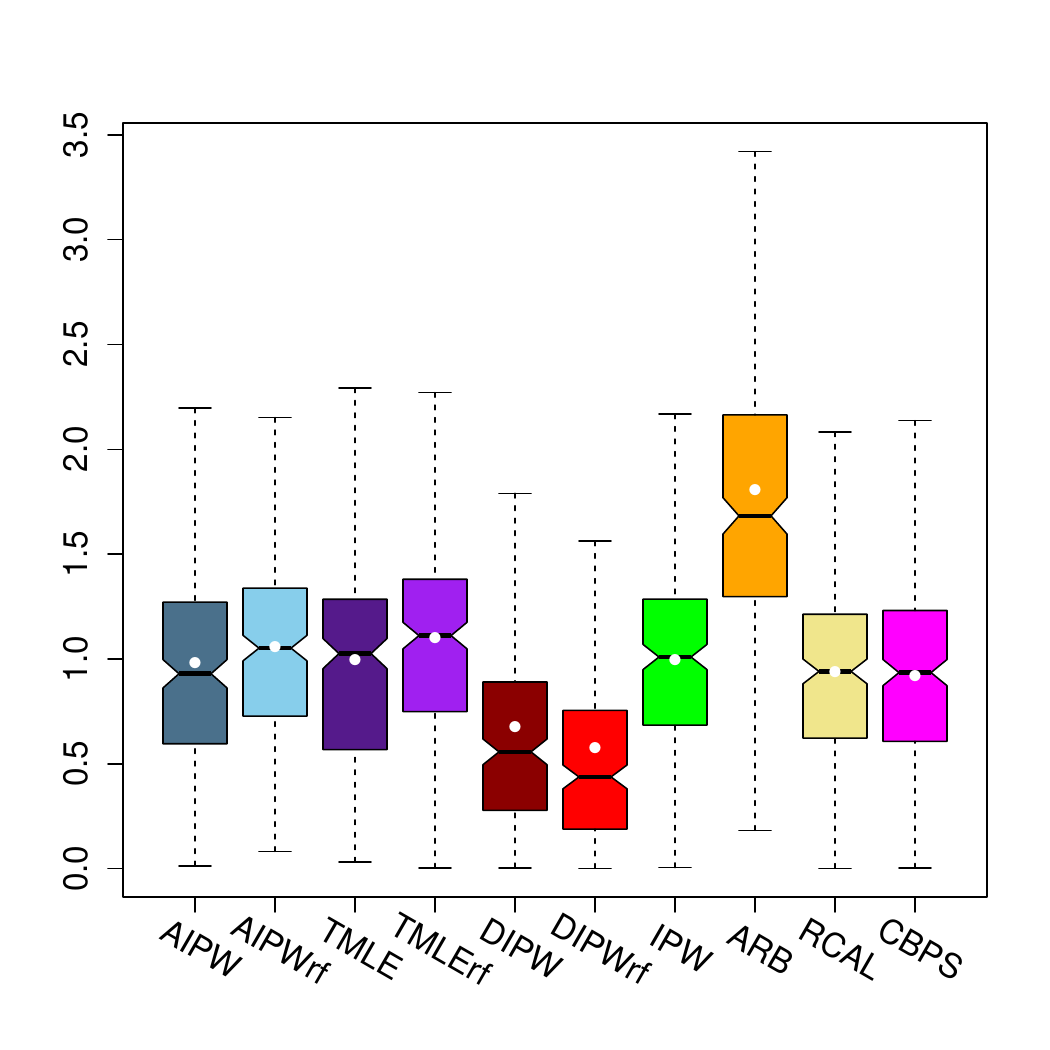}} \hfill
	\subfigure[Real data design,  $s = 50$]{\includegraphics[width=0.32\textwidth]{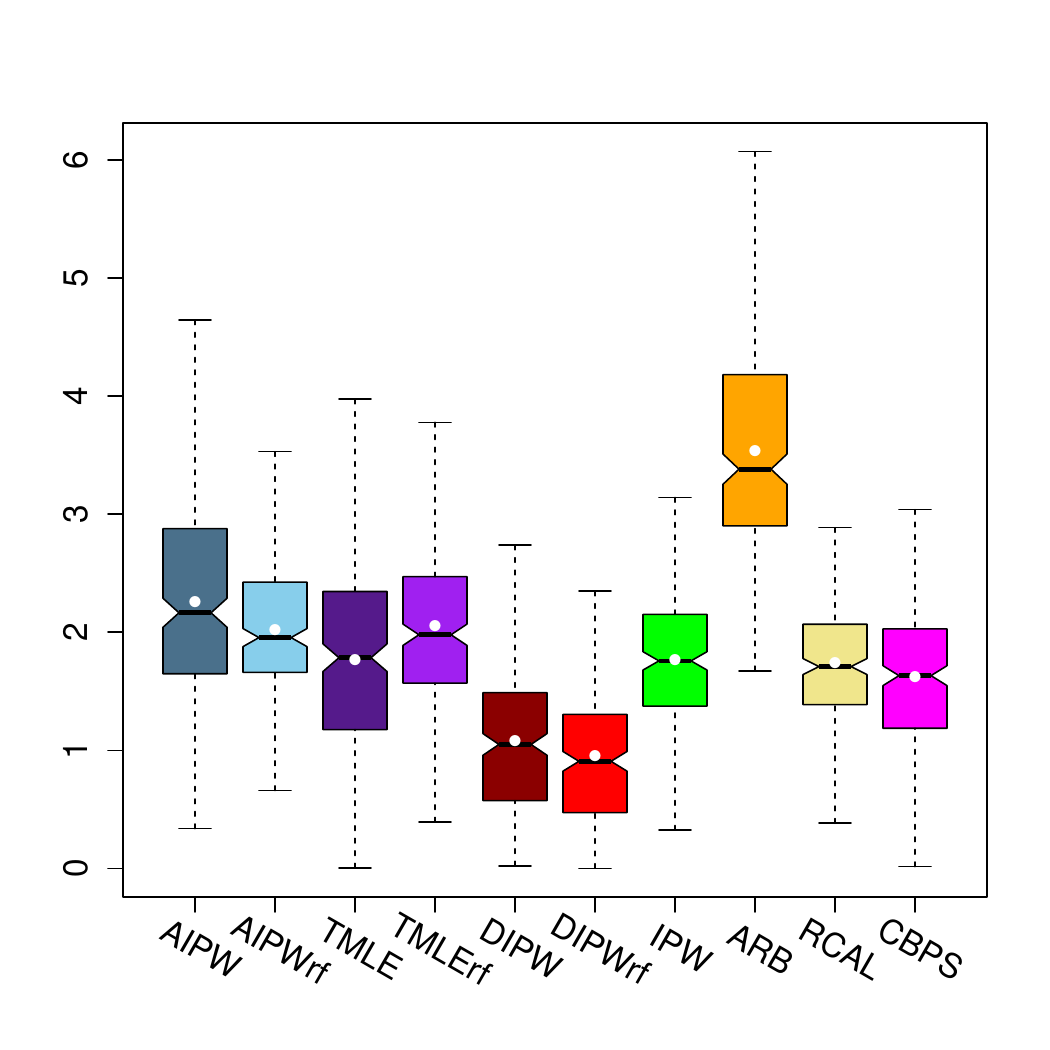}}
	\caption{Boxplots of the error in estimating $\Var(Y(1))$; the interpretation is analogous to Figure~\ref{fig:linear}. For the Toeplitz design settings with $s = 20, 50$, due to the long error bar for TMLE and TMLErf, we do not display their boxplots in the figure. The median absolute errors of TMLE and TMLErf are $6.4, 10.8$ for $s = 20$, and $81.8, 205.1$ for $s = 50$. }\label{fig:varrfall2}
\end{figure}
\newrev{Figure~\ref{fig:varrfall2} presents boxplots analogous to those in Figure~\ref{fig:linear}. We see that the DIPW methods perform well, random forect variant perhaps the best prformer overall. As with the  average treatment effect estimation experiments, the relative advantage of DIPW over other methods is however prominent in the more challenging settings where $s$ is large. Interestingly, in contrast to the results on ATE estimation presented earlier, ARB seems to struggle more here, perhaps because of its reliance on the regression model being approximately sparse, an assumption which is heavily violated in this set of examples.}

\section{Discussion} \label{sec:discuss}
In this paper we have proposed a debiased inverse propensity weighting scheme for estimating average treatment effects in high-dimensional settings. At a very basic level, our approach is similar to augmented inverse propensity score weighting, in that the latter uses regression adjustment to mitigate the bias in estimation of the propensity weights. Our method however exploits the fact that if the propensity weights are estimated based on  a well-specified sparse high-dimensional logistic regression model, we can correct for the biases without knowledge of the regression functions. Instead of relying on regression adjustment for bias correction, the regression involved in constructing $\tilde{\mu}$ has the purpose of reducing variance: if $\tilde{\mu}$ estimates $\mu_{\ora}$ sufficiently well, our estimator attains the semiparametric efficient variance bound.


We have seen how DIPW can be used to estimate expectations of other functions of the potential outcomes such as $\Var(Y(1))$ (see Section~\ref{sec:simvar}); it would be interesting to extend the approach to estimation of quantiles of the distributions of potential outcomes, for example.
In a related vein, it would be interesting to see to what extent the ideas here can be used to target other causal estimands, for example those in the broader classes considered by \citet{HW17,chernozhukov2018double}.

\section*{Acknowledgement}
The research of both authors was supported by an Engineering and Physical Sciences Research Council (EPSRC) `First' grant of the second author. Both authors thank two anonymous referees for helpful comments that improved the paper.

\bibliographystyle{plainnat}

\bibliography{reference}

\appendix
\section{Derivations and results relating to Sections~\ref{sec:intro} and \ref{sec:basic}}
\subsection{Relationship with AIPW} \label{sec:AIPW}
Consider the estimator
\[
\hat{\tau} = \frac{1}{n}\sum_{i=1}^n \frac{T_i (Y_i-\mu_i)}{\hat{\pi}(X_i)} - \frac{1}{n} \sum_{i=1}^n\frac{(1 - T_i) (Y_i-\mu_i)}{1 - \hat{\pi}(X_i)}.
\]
Taking $\mu_i = \{1-\hat{\pi}(X_i)\} \hat{r}_1(X_i) +\hat{\pi}(X_i)\hat{r}_0(X_i)$ gives
\begin{align*}
	\hat{\tau} &= \frac{1}{n}\sum_{i=1}^n \frac{T_i [Y_i-\hat{r}_1(X_i) + \hat{\pi}(X_i) \{\hat{r}_1(X_i) -  \hat{r}_0(X_i)\} ]}{\hat{\pi}(X_i)} \\
	& \qquad
	- \frac{1}{n} \sum_{i=1}^n\frac{(1 - T_i) [Y_i- \hat{r}_0(X_i) - \{1-\hat{\pi}(X_i)\} \{\hat{r}_1(X_i) -  \hat{r}_0(X_i)\} ]}{1 - \hat{\pi}(X_i)} \\
	&= \frac{1}{n}\sum_{i=1}^n  \frac{T_i \{Y_i-\hat{r}_1(X_i)\}}{\hat{\pi}(X_i)} -  \frac{1}{n} \sum_{i=1}^n  \frac{(1-T_i) \{Y_i-\hat{r}_0(X_i)\}}{1-\hat{\pi}(X_i)} \\
	& \qquad + \frac{1}{n}\sum_{i=1}^n \{\hat{r}_1(X_i) - \hat{r}_0(X_i)\},
\end{align*}
which is of the familiar form of an AIPW estimator.

\subsection{Derivation of \eqref{eq:mu_ora}} \label{sec:mu_ora}
Now $\E (T_i Y_i \,|\, X_i) = r_1(X_i)\E(T_i \,|\, X_i) = r_1(X_i) \pi_i$, and similarly $\E((1-T_i) Y_i \,|\, X_i) = r_0(X_i)(1-\pi_i)$. Thus
\begin{align*}
	\E (\tilde{Y}_{\ora,i}|X_i) &= \E\left(\frac{T_iY_i(1-\pi_i)}{\pi_i} + \frac{(1-T_i)Y_i \pi_i}{1 - \pi_i} \,|\, X_i\right) \\
	&= \{1-\pi_i\} r_1(X_i) +\pi_ir_0(X_i)
\end{align*}
as required.

\subsection{Proof Lemma~\ref{lem:var}} \label{sec:proof_lem_var}
First note that as $X_j \independent (X_i, Y_i, T_i)$ for $j \neq i$, we may take $\mu_i$ as a function of $X_i$ only. Indeed, we can always replace $\mu_i$ with $\E(\mu_i \,|\,X_i)$ without decreasing $\Var(\hat{\tau}_{\ora,i})$. Next observe that any minimiser of $ \Var(\hat{\tau}_{\ora, i})$ must minimise $\E \hat{\tau}_{\ora, i}^2$ as $\E \hat{\tau}_{\ora, i} = \tau$ for all functions $\mu_i(X_i)$.
Then from Lemma~\ref{lem:cond_exp} below we know that
\begin{equation} \label{eq:mu_i}
	\mu_i(X_i) = \frac{\E \left(\frac{(T_i - \pi_i)^2}{\pi_i^2(1-\pi_i)^2} Y_i \Big| X_i \right)}{\E \left(\frac{(T_i - \pi_i)^2}{\pi_i^2(1-\pi_i)^2} \Big| X_i \right)} = \frac{\E\{(T_i-\pi_i)^2 Y_i | X_i\}}{ \E\{(T_i-\pi_i)^2| X_i\} }
\end{equation}
minimises $\E \hat{\tau}_{\ora, i}^2$. Now
\[
\E\{(T_i-\pi_i)^2| X_i\} = \pi_i - 2\pi^2 + \pi_i^2 = \pi_i(1-\pi_i),
\]
and
\begin{align*}
	\E\{(T_i-\pi_i)^2 Y_i | X_i\} &= (1 - 2 \pi_i)\E( T_iY_i |X_i) + \pi_i^2 \E(Y_i | X_i) \\
	&= (1- \pi_i)^2\E( T_iY_i |X_i) + \pi_i^2  \E\{(1-T_i)Y_i | X_i\}.
\end{align*}
Thus \eqref{eq:mu_i} equals $\E (\tilde{Y}_{i, \ora} | X_i)$.
\begin{lemma} \label{lem:cond_exp}
	Let $V \in \R$ be a square integrable random variable and suppose random variable $W$ has support in $[a, b]$ where $a > 0$. Suppose $U \in \R^p$ is a random vector. Then $h^*(U) := \E( W^2 V|U) / \E(W^2 | U)$ minimises
	\[
	\E [W\{V - h(U)\}]^2
	\]
	over all measurable functions $h: \R^p \to \R$.
\end{lemma}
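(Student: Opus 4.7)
The plan is to reduce the problem to pointwise minimisation of a quadratic in $h(u)$ via conditioning on $U$, which is a standard technique for weighted least squares in function spaces.

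First I would rewrite the objective using $[W\{V-h(U)\}]^2 = W^2\{V-h(U)\}^2$ and apply the tower property of conditional expectation to obtain
\[
\E[W\{V - h(U)\}]^2 = \E\bigl(\E[W^2\{V-h(U)\}^2 \mid U]\bigr).
\]
Since $h(U)$ is $U$-measurable, it acts as a constant inside the inner expectation, so the integrand expands as
\[
\E(W^2 V^2 \mid U) - 2 h(U) \E(W^2 V \mid U) + h(U)^2 \E(W^2 \mid U).
\]
Because $W \in [a,b]$ with $a > 0$, we have $\E(W^2 \mid U) \geq a^2 > 0$ almost surely, so for each realisation of $U$ this is a strictly convex quadratic in the value $h(U)$, minimised at $\E(W^2 V \mid U)/\E(W^2 \mid U) = h^*(U)$.

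Next I would justify the reduction from minimising the outer expectation to pointwise minimisation of the integrand: since the integrand is bounded below by its pointwise minimum (with equality at $h = h^*$), and the candidate $h^*$ is a legitimate measurable function of $U$ (as a ratio of conditional expectations with strictly positive denominator), any other $h$ gives an integrand at least as large almost surely, hence an outer expectation at least as large. Square integrability of $V$ together with the boundedness of $W$ ensures $\E(W^2 V^2)<\infty$ and $\E(W^2|V|\mid U)<\infty$ so all the conditional expectations involved are well-defined and finite almost surely.

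There is no real obstacle here; the only minor point to be careful of is the strict positivity of the denominator $\E(W^2 \mid U)$, which is guaranteed by the hypothesis $a > 0$ and makes both the minimiser well-defined and the quadratic strictly convex. To close the application back to the proof of Lemma~\ref{lem:var}, one sets $V = Y_i$, $U = X_i$, and $W = (T_i - \pi_i)/\{\pi_i(1-\pi_i)\}$ (conditionally on $X_i$, $W$ takes only two values, both bounded away from $0$ in absolute value by overlap), and the squared weight $W^2$ appears exactly as in the displayed expression \eqref{eq:mu_i}.
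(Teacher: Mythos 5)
Your proof is correct, but it takes a slightly different route from the paper's. The paper proves the lemma by the projection (Pythagorean) argument: write $V - h(U) = \{V - h^*(U)\} + \{h^*(U) - h(U)\}$, expand the square, and check that the cross term $\E\big[W^2\{V - h^*(U)\}\{h^*(U)-h(U)\}\big]$ vanishes because $\E[W^2\{V-h^*(U)\}\mid U]=0$ almost surely by the definition of $h^*$. You instead condition on $U$ via the tower property and minimise the resulting quadratic $t \mapsto \E(W^2V^2\mid U) - 2t\,\E(W^2V\mid U) + t^2\,\E(W^2\mid U)$ pointwise in $t = h(u)$. The two arguments are close cousins, but yours has the advantage of \emph{deriving} the minimiser rather than verifying a guessed candidate, and the strict convexity you note (from $\E(W^2\mid U)\geq a^2>0$) gives almost-sure uniqueness of the minimiser for free; the paper's argument is a line shorter but presupposes the form of $h^*$. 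Both hinge on the same well-definedness point, $\E(W^2\mid U)>0$, which you handle explicitly. Your closing parenthetical about the application to Lemma~\ref{lem:var} is also worth keeping: there $W = (T_i-\pi_i)/\{\pi_i(1-\pi_i)\}$ takes one positive and one negative value, so the hypothesis ``support in $[a,b]$ with $a>0$'' is not literally met and one should apply the lemma to $|W|$ (only $W^2$ enters), a point the paper glosses over.
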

\begin{proof}
	We have
	\begin{align}
		\E [W\{V - h(U)\}]^2 &= \E [W\{V -h^*(U) + h^*(U) - h(U)\}]^2 \notag \\
		&\geq \E [W\{V - h^*(U)\}]^2 +2\E W^2\{V -h^*(U)\}\{h^*(U) - h(U)\}. \label{eq:cond_exp2}
	\end{align}
	Now
	\begin{align*}
		\E [W^2\{V -h^*(U)\}|U] &= \E (W^2 V | U) - h^*(U) \E(W^2 | U) = 0
	\end{align*}
	almost surely, so the second term is \eqref{eq:cond_exp2} is zero, proving the claim.
\end{proof}

%

\section{On the event~$\Omega$}\label{sec:omega}
\newrev{
	Here we present an explicit set of sufficient conditions under which properties (i) and (ii) involved in the event $\Omega$ occur with high probability. We consider the case where the regressions to form estimates $\hat{\pi}_i$ and $\hat{\pi}_{Ai}$ are $\ell_1$-penalised logistic regressions: specifically
	\begin{align} \label{eq:l1_logistic2}
		\hat{\gamma} &:= \argmin_{b \in \R^p} \left(\frac{1}{n_B}\sum_{i=1}^{n_B} [\log\{1+ \exp(X_{Bi}^{\top}b)\} - T_{Bi}X_{Bi}^{\top} b] + \lambda_\gamma \|b\|_1\right), \\
		\hat{\pi}_i &:= \frac{1}{1 + \exp(-X_i^{\top}\hat{\gamma})}, \notag
	\end{align}
	and similarly for $\hat{\pi}_{Ai}$. 
	\begin{lemma} \label{lem:Omega_large}
		Suppose Assumptions~\ref{as:logistic} and \ref{as:subgX} are satisfied, and additionally that there exists a constant $\alpha >0 $ such that 
		the minimimum eigenvalue of the covariance matrix $\Var(Z)$ is at least $\alpha$. Suppose further that for some sequence $d_n \to 0$, 
		$s = d_n n/(\log p \log n)$, and that there exists a constant $C > 0$ such that $\|\gamma\|_2 \leq C$.
		For simplicity, suppose $n=n_A = n_B$. Then for all $m>0$, there exist constants $c_1, c_\gamma, c_{\hat{\pi}} > 0$  such that with probability at least $1-c_1n^{-m}$, there exists tuning parameter $\lambda_{\gamma} >0$ for which the following hold:
		\begin{enumerate}[(i)]
			\item $\|\hat{\gamma} - \gamma\|_1\leq c_\gamma s \sqrt{\log (p) /n }$ and $\|\hat{\gamma} - \gamma\|_2 \leq c_\gamma \sqrt{s \log (p) / n}$;
			\item $c_{\hat{\pi}} \leq \hat{\pi}_i \leq 1 - c_{\hat{\pi}}$ and $c_{\hat{\pi}} \leq \hat{\pi}_{Ai} \leq 1 - c_{\hat{\pi}}$ for all $i=1,\ldots,n$.
		\end{enumerate}
	\end{lemma}
	\begin{proof}
		In the following, $c_{1j}$ and $c_{2j}$ for $j=1,2,\ldots$ are positive constants.
		Lemma~\ref{lem:RSC} below shows that with probability at least $1 - c_{11}e^{-c_{21}n}$, a restricted strong convexity condition holds. Then by \citet[Cor.~9.26]{wainwright_2019}, we have that with probability at least $1 - c_{12}e^{-c_{22}n}$, (i) holds for some $c_{\gamma} > 0$; let us work on this event. Finally, note that $\hat{\gamma}$ and $X_i$'s are independent, so that by conditioning on $\hat{\gamma}$, $X_i^\top (\hat{\gamma} - \gamma)$'s are independent sub-Gaussian random variables with mean $\hat{\gamma}_1 - \gamma_1$ and variance proxy bounded above by $\sigma_Z^2 \|\hat{\gamma}_{-1} - \gamma_{-1}\|_2^2$. Then we have from a union bound that (recalling (i) holds), 
		\[
		\pr\left(\max_i |X_i^\top (\gamma - \hat{\gamma})| \ge  \sqrt{ (m+1)c_\gamma s \log(p) \log (n) / n} \mid \hat{\gamma}\right) \le 2 n^{-m}.
		\]
		Thus, for $n$ sufficiently large and Som constant $c_1 > 0$, we have with probability at least $1 - c_1n^{-m}$, $c_\pi /2 <\hat{\pi}_i < 1- c_\pi / 2$ for all $i$. Arguing similarly on the auxiliary dataset, we obtain the result.
	\end{proof}
}
\newrev{
	The following result is is very similar to \citet[Thm.~9.36]{wainwright_2019}, but with the difference that the first component of the vector of predictors $X_i$ is always $1$ in our case.
	\begin{lemma} \label{lem:RSC}
		Suppose Assumptions~\ref{as:logistic} and \ref{as:subgX} are satisfied, there exists a constant $\alpha >0 $ such that 
		the minimimum eigenvalue of the covariance matrix $\Var(Z)$ is at least $\alpha$ and  $\|\gamma\|_2 \leq C$ for some constant $C>0$. Let
		\[
		\mathcal{L}(b) := \frac{1}{n}\sum_{i=1}^{n}  \left( \log\{1 + \exp(X_{i}^{\top} b)\} - Y_{i}X_{i}^{\top} b\right).
		\]
		We have that for some constants $\kappa,c, c_1, c_2 > 0$,
		the restricted strong convexity condition
		\begin{align*}
			\mathcal{E}_n(\Delta) &:= \mathcal{L}(\gamma + \Delta) - \mathcal{L}(\gamma) - \Delta^{\top}\nabla \mathcal{L}(\gamma)  \\
			&\geq \frac{\kappa}{2}\|\Delta\|_2^2 - c \frac{\log p}{n}\|\Delta\|_1^2 \quad \text{for all } \|\Delta\|_2 \leq 1
		\end{align*}
		holds with probability at least $1 - c_{1}e^{-c_{2}n}$.
	\end{lemma}
	\begin{proof}
		Let us write $\Delta := (\iota, \delta) \in \R \times \R^{p-1}$ and $\theta\in \R^{p-1}$ for a version of $\gamma$ with its first component removed.
		
		Now, following the proof of \citet[Thm.~9.36]{wainwright_2019}, we have, by a Taylor expansion, that
		\[
		\mathcal{E}_n(\Delta) = \frac{1}{n}\sum_{i=1}^n \psi''\left(\gamma^{\top}X_i + t\Delta^{\top}X_i\right) (\Delta^{\top}X_i)^2,
		\]
		for some random $t \in [0,1]$, where $\psi(\eta) = \log\{1 + e^{\eta}\}$. Now let $\tau = K \|\delta\|_2$ for a constant $K>0$ to be chosen, and let $T > 0$. Then 
		\begin{align*}
			\mathcal{E}_n(\Delta)  &= \frac{1}{n}\sum_{i=1}^n\psi''\left(\gamma_1 + \theta^{\top}Z_i + t(\iota + \delta^{\top}Z_i)  \right) (\Delta^{\top}X_i)^2 \\
			&\geq \inf_{|u| \leq 1 + T + 2K }\psi''(\gamma_1 + u) \frac{1}{n}\sum_{i=1}^n \varphi_{\tau}(\Delta^{\top}X_i) \ind_{\{|\theta^{\top}Z_i| \leq T \}},
		\end{align*}
		where $\varphi_{\tau} (u) := u^2\ind_{[-2\tau, 2\tau]}(u)$. Note that the infimum above is positive for all $K, T$.
		
		As noted in the proof of  \citet[Thm.~9.36]{wainwright_2019}, it suffices to consider the case where $\|\Delta\|_2 = 1$, and so $\tau = K$. Following the argument therein, writing
		\[
		\widetilde{\varphi}_{K}(u) = u^2 \ind_{[-K,K]}(u) + (u-2K)^2\ind_{[K,2K]}(u) + (u + 2K)^2\ind_{[-2K,-K]}(u),
		\]
		(which we note is Lipschitz with parameter $2K$) it suffices to show that
		\begin{equation} \label{eq:expec_bd}
			\E[ \widetilde{\varphi}_{K}(\Delta^{\top}X) \ind_{[-T,T]}(\theta^{\top}Z)] \geq \frac{3}{4}\alpha,
		\end{equation}
		and for i.i.d.\ Rademacher random variables $\varepsilon_1, \ldots, \varepsilon_n$, independent of 
		$(Z_1,\ldots,Z_n)$,
		\begin{equation} \label{eq:symm_bd}
			\E \left( \sup_{\|\Delta\|_1 \leq r} \frac{1}{n}\sum_{i=1}^n \varepsilon_i \widetilde{\varphi}_K(\Delta^{\top}X_i) \ind_{[-T,T]}(\theta^{\top}Z_i) \right) \leq 4Kr (\sigma_Z \vee 1) \sqrt{\frac{\log p}{n}}. 
		\end{equation}
		Consider \eqref{eq:expec_bd}. To show this, as noted in the proof of  \citet[Thm.~9.36]{wainwright_2019}, it suffices to show
		\begin{equation} \label{eq:expec_bd1}
			\E[\widetilde{\varphi}_K(\Delta^{\top}X)] \geq \frac{7}{8}\alpha 
		\end{equation}
		and
		\begin{equation} \label{eq:expec_bd2}
			\E[\widetilde{\varphi}_K(\Delta^{\top}X) \ind_{|\theta^\top Z| > T}) ] \leq \frac{1}{8}\alpha.
		\end{equation}
		Turning to \eqref{eq:expec_bd1},
		\begin{align*}
			\E[\widetilde{\varphi}_K(\Delta^{\top}X)] &\geq \E[(\Delta^{\top}X)^2] - \E [(\Delta^{\top}X)^2 \ind_{\{|\delta^{\top}Z + \iota| > K\}}]\\
			&\geq \E[(\Delta^{\top}X - \E \Delta^{\top}X)^2] - \E [(\iota + \delta^{\top}Z)^2 \ind_{\{|\delta^{\top}Z + \iota| > K\}}].
		\end{align*}
		The first term above is at least $\alpha$ by assumption. Now as $Z$ is sub-Gaussian, there exists a constant $\beta>0$ such that $\sup_{v : \|v\|_2=1} \E[ (v_1 + v_{-1}^{\top}Z)^4] < \beta$.
		But then by the Cauchy--Schwarz inequality and then Markov's inequality, we have
		\begin{align*}
			\E [(\iota + \delta^{\top}Z)^2 \ind_{\{|\delta^{\top}Z + \iota| > K\}}] \leq \sqrt{\E[(\iota + \delta^{\top}Z)^4] }\sqrt{\pr\{|\delta^{\top}Z + \iota| > K\}} \leq \frac{\beta}{K^2},
		\end{align*}
		so setting $K^2 = 8\beta/\alpha$ gives \eqref{eq:expec_bd1}. Next, by the Cauchy--Schwarz inequality and Markov's inequality,
		\[
		\{\E[\widetilde{\varphi}_K(\Delta^{\top}X) \ind_{[-T,T]}(\theta^\top Z) ]\}^2  \leq \E[(\Delta^{\top}X)^4] \pr (|\theta^\top Z| > T) \leq \beta \frac{ C^4 \beta}{T^4}
		\]
		Thus taking $T^2 = 8 \beta C^2 / \alpha$ gives \eqref{eq:expec_bd2}.
		
		Turning to \eqref{eq:symm_bd}, we have by the Ledoux--Talagrand contraction principle that
		\begin{align*}
			& \E \left( \sup_{\|\Delta\|_1 \leq r} \frac{1}{n}\sum_{i=1}^n \varepsilon_i \widetilde{\varphi}_K(\Delta^{\top}X_i) \ind_{[-T,T]}(\theta^{\top}Z_i) \right) \\
			\leq & 2K \E \left( \sup_{\|\Delta\|_1 \leq r} \Delta^{\top} \frac{1}{n}\sum_{i=1}^n \varepsilon_iX_i \ind_{[-T,T]}(\theta^{\top}Z_i) \right) \leq 2Kr \E \left( \norm{\frac{1}{n}\sum_{i=1}^n \varepsilon_iX_i \ind_{[-T,T]}(\theta^{\top}Z_i)}_{\infty}\right),
		\end{align*}
		using H\"older's inequality 
		in the final line. Since each coordinate of the random vector
		\[
		\frac{1}{n}\sum_{i=1}^n \varepsilon_iX_i \ind_{[-T,T]}(\theta^{\top}Z_i)
		\]
		is a sub-Gaussian random variable with variancy proxy bounded above by $\frac{\sigma_Z^2 \vee 1}{n}$, we have from~\citet[Exercise~2.12(b)]{wainwright_2019} (see also~\citet[Theorem~1.14]{rigollet2023high}) that
		\[
		\E \left( \norm{\frac{1}{n}\sum_{i=1}^n \varepsilon_iX_i \ind_{[-T,T]}(\theta^{\top}Z_i)}_{\infty}\right) \le 2 (\sigma_Z \vee 1) \sqrt{\frac{\log p}{n}}.
		\]
	\end{proof}
}

\section{Proof of Theorem~\ref{thm:dipw}}\label{sec:thmdipw}

\newrev{Throughout this section, we prove a stronger version of Theorem~\ref{thm:dipw}, where we make the dependencies of the constants in Theorem~\ref{thm:dipw} on $(c_{\hat{\pi}}, c_\pi)$ explicit:
	\begin{theorem} \label{thm:dipwpf}
		Let $\hat{\tau}_{\dipw}$ be the estimator \eqref{eq:dipw} where $\hat{\mu} \in \R^n$ is constructed via \eqref{eq:mu_hat} with tuning parameter $\eta = c_\eta c_{\hat{\pi}}^{-1} \sqrt{\log(p) / n}$ and constant $c_\eta>0$ is sufficiently large. Suppose Assumptions \ref{as:unconfounded}--\ref{as:p_large} hold.
		Then we have the decomposition
		\[
		\sqrt{n}(\hat{\tau}_{\dipw} - \bar{\tau}) = \delta + \sqrt{ \sigma_{\mu}^2 + \bar{\sigma}^2} \zeta
		\]
		in which given constants $c_\gamma, c_{\tilde{\mu}} > 0$, $c_{\hat{\pi}} \in (0, \frac{1}{2}]$ and $m \in \mathbb{N}$, we have that $\delta$ and $\zeta$ satisfy the following properties:
		\begin{enumerate}[(i)]
			\item there exist constants $c_\delta, c > 0$ that do not depend on $(c_{\hat{\pi}}, c_\pi)$,
			such that
			\begin{equation*}
				\pr\left(|\delta|  > c_{\hat{\pi}}^{-2} c_\pi^{-1} c_\delta (s + \sqrt{s \log n}) \frac{\log p}{\sqrt{n}}\right) \leq \pr(\Omega^c(c_\gamma,c_{\tilde{\mu}},c_{\hat{\pi}})) + c(p^{-m} + n^{-m});
			\end{equation*}
			\item there exists a universal constant $c_\zeta >0$ such that
			\begin{align*}
				\sup_{t \in \R} |\pr(\zeta \leq t | \mathcal{D}) - \Phi(t)| \leq \frac{c_\pi^{-2} c_\zeta}{\sqrt{n}} \frac{\sigma_\mu^2\|\hat{\mbb\mu} - \mbb\mu_{\ora}\|_\infty + \bar{\rho}^3}{(\sigma_{\mu}^2 + \bar{\sigma}^2)^{3/2}}.
			\end{align*}
		\end{enumerate}
	\end{theorem}
}
We begin by deriving the decomposition \eqref{eq:hat_tau_decomp}. We consider here a version of $\tau_{\ora}$ with $\mu_i = \hat{\mu}_i$.

Observe that for all $i=1,\ldots,n$ we have
\begin{align*}
	\frac{T_i(Y_i - \hat{\mu}_i)}{\hat{\pi}_i} - \frac{T_i(Y_i - \hat{\mu}_i)}{\pi_i} &= - \frac{T_i(Y_i - \hat{\mu}_i)}{\pi_i\hat{\pi}_i} (\hat{\pi}_i - \pi_i)  \\
	\frac{(1 - T_i)(Y_i - \hat{\mu}_i)}{1 - \hat{\pi}_i} - \frac{(1 - T_i)(Y_i - \hat{\mu}_i)}{1 - \pi_i} &= \frac{(1 - T_i)(Y_i - \hat{\mu}_i)}{(1 - \pi_i)(1 - \hat{\pi}_i)} (\hat{\pi}_i - \pi_i).
\end{align*}
Thus
\begin{align*}
	\hat{\tau}_{\dipw} - \tau_{\ora} &= \frac{1}{n} \sum_{i=1}^n \bigg\{\bigg(\frac{T_i(Y_i - \hat{\mu}_i)}{\hat{\pi}_i} - \frac{(1-T_i)(Y_i - \hat{\mu}_i)}{1-\hat{\pi}_i}\bigg) - \bigg(\frac{T_i(Y_i - \hat{\mu}_i)}{\pi_i} - \frac{(1-T_i)(Y_i- \hat{\mu}_i)}{1-\pi_i}\bigg)\bigg\} \\
	&=- \frac{1}{n} \sum_{i=1}^n \bigg(\frac{T_i(Y_i - \hat{\mu}_i)}{\hat{\pi}_i \pi_i} + \frac{(1-T_i)(Y_i - \hat{\mu}_i)}{(1-\hat{\pi}_i) (1-\pi_i)}\bigg) (\hat{\pi}_i - \pi_i) \\
	&= -\frac{1}{n}\sum_{i=1}^n \left(\frac{T_iY_i}{\hat{\pi}_i\pi_i} + \frac{(1-T_i)Y_i}{(1-\hat{\pi}_i)(1-\pi_i)} - \frac{\hat{\mu}_i}{\hat{\pi}_i(1-\hat{\pi}_i)}\right)(\hat{\pi}_i - \pi_i) \notag \\
	&\qquad +  \frac{1}{n} \sum_{i=1}^n (T_i - \pi_i)\left(\frac{\hat{\mu}_i}{\hat{\pi}_i \pi_i} -
	\frac{\hat{\mu}_i}{(1-\hat{\pi}_i) (1-\pi_i)}\right)(\hat{\pi}_i - \pi_i)\\
	&=: Q_A + Q_B.
\end{align*}
Now recall that $Y_i = T_i(r_1(X_i) + \varepsilon_i(1)) + (1 - T_i)(r_0(X_i) + \varepsilon_i(0))$. Writing $r_t(X_i) = r_{t,i}$ and $\varepsilon_i(t) = \varepsilon_{t,i}$, $t=0,1$, for notational simplicity, we have
\begin{align*}
	Q_C &:= \tau_{\ora} -\bar{\tau} \\
	&= \frac{1}{n} \sum_{i=1}^n  \bigg( \frac{T_i(r_{1,i} + \varepsilon_{1,i} - \hat{\mu}_i)}{\pi_i} - \frac{(1-T_i)(r_{0, i} + \varepsilon_{0, i} - \hat{\mu}_i)}{1 - \pi_i} -(r_{1,i} -r_{0,i}) \bigg)\\
	&= \frac{1}{n} \sum_{i=1}^n \bigg(\frac{r_{1,i}}{\pi_i} + \frac{r_{0,i}}{1 - \pi_i} - \frac{\hat{\mu}_i}{\pi_i (1 - \pi_i)}\bigg) (T_i - \pi_i) + \frac{1}{n} \sum_{i=1}^n \bigg(\frac{T_i \varepsilon_{1,i}}{\pi_i} - \frac{(1 - T_i) \varepsilon_{0,i}}{1 - \pi_i}\bigg).
\end{align*}
Putting things together, we have the decomposition
\begin{align} \label{eq:decomp}
	\hat{\tau}_{\dipw} & - \bar{\tau} \nonumber\\
	= & \underset{Q_A}{\underbrace{- \frac{1}{n} \sum_{i=1}^n \bigg(\frac{T_i Y_i}{\hat{\pi}_i \pi_i} + \frac{(1-T_i)Y_i}{(1-\hat{\pi}_i) (1-\pi_i)} - \frac{\hat{\mu}_i}{\hat{\pi}_i(1 - \hat{\pi}_i)}\bigg) (\hat{\pi}_i - \pi_i)}} \nonumber\\
	& +\underset{Q_B}{\underbrace{\frac{1}{n} \sum_{i=1}^n (T_i - \pi_i)\left(\frac{\hat{\mu}_i}{\hat{\pi}_i \pi_i} +
			\frac{\hat{\mu}_i}{(1-\hat{\pi}_i) (1-\pi_i)}\right)(\hat{\pi}_i - \pi_i)}} \nonumber\\
	& +\underset{Q_C}{\underbrace{\frac{1}{n} \sum_{i=1}^n \bigg(\frac{r_{1,i}}{\pi_i} + \frac{r_{0,i}}{1 - \pi_i} - \frac{\hat{\mu}_i}{\pi_i (1 - \pi_i)}\bigg) (T_i - \pi_i) + \frac{1}{n} \sum_{i=1}^n \bigg(\frac{T_i \varepsilon_{1,i}}{\pi_i} - \frac{(1 - T_i) \varepsilon_{0,i}}{1 - \pi_i}\bigg)}}.
\end{align}

Armed with this decomposition, we approach the proof by assigning $\delta = \sqrt{n}(Q_A + Q_B)$
and $\zeta = \sqrt{n}Q_C / \sqrt{\sigma_\mu^2 + \bar{\sigma}^2}$.
The rest of the proof is organised as follows. In Lemma~\ref{lem:qa} and Lemma~\ref{lem:qb}, we show that $\delta$ is of order $(s + \sqrt{s \log n}) \log p / \sqrt{n}$ on a high probability event. This event is characterised by the intersection of four events $\T_1, \ldots,\T_4$ defined below, of which the study is deferred to Section~\ref{sec:events}. In Lemma~\ref{lem:qc}, we prove the Berry-Esseen type bound for $\zeta$.

Define
\begin{align*}
	\T_1 &:= \mathcal{T}_1(c_\gamma, c_{\hat{\pi}}) \\
	&:= \Big\{\| \hat{\gamma} - \gamma \|_1 \leq c_\gamma s \sqrt{\log p / n} \;,\; \| \hat{\gamma} - \gamma \|_2 \leq c_\gamma \sqrt{s \log p / n} \;\text{ and }\; c_{\hat{\pi}} < \min_i \hat{\pi}_i \leq \max_i \hat{\pi}_i < 1 - c_{\hat{\pi}} \Big\}.
\end{align*}
Also define for $c_1, c_2, c_{\hat{\mu}}, c_Y > 0$ the following events:
\begin{align*}
	\T_2 &:= \T_2(c_1) :=\bigg\{\bigg(\frac{1}{n} \sum_{i=1}^n \{X_i^\top (\hat{\gamma} - \gamma)\}^4\bigg)^{1/2} \leq c_1 s \frac{\log p}{n}\bigg\} \\
	\T_3 &:= \T_3(c_{\hat{\mu}}, c_2, \newrev{c_{\hat{\pi}}}) :=\bigg\{ \; \frac{1}{n} \| \hat{\mbb\mu} \|_2^2 \leq \newrev{c_{\hat{\pi}}^{-2}} c_{\hat{\mu}}, \;\text{  and }\; \| \mb X^\top \tilde{\mb Y} - \mb X^\top  \hat{\mbb\mu}\|_\infty / n \leq \newrev{c_{\hat{\pi}}^{-1}} c_2 \sqrt{\frac{\log p}{n}}\bigg\}, \\
	\T_4 &:= \T_4(c_Y) := \Big\{\frac{1}{n} \|{\mb Y}\|_2^2 \leq c_Y\Big\}.
\end{align*}
We will at times suppress the dependence of the events above on the constants $c_1,c_2, c_\gamma, ,c_{\hat{\pi}}, c_{\hat{\mu}}, c_Y$ in order to ease notation.
In what follows, we will denote by $c$ and $c'$ positive constants that can vary from line to line. We also assume throughout the conditions of Theorem~\ref{thm:dipw} (namely Assumptions~\ref{as:unconfounded}--\ref{as:p_large}).

It will also be helpful to introduce the notation $Z_i \in \R^{p-1}$ for $X_i$ with its first component removed. Furthermore, let $\theta$ and $\hat{\theta}$ denote $\gamma$ and $\hat{\gamma}$ respectively with their first components removed, and let $\theta_0$ and $\hat{\theta}_0$ denote $\gamma_1$ and $\hat{\gamma}_1$ respectively. Note that as the first component of $X_i$ is $1$, $\gamma_1 = \theta_0$ is the intercept coefficient.

\begin{lemma}\label{lem:qa}
	There exists a constant $c_{\delta1} > 0$ \newrev{depending only on $c_1, c_2, c_{\hat{\mu}}, c_\gamma, c_Y$}, such that
	on the event $\mathcal{T} := \T_1 \cap \T_2 \cap \T_3 \cap \T_4$, we have
	\begin{align*}
		|Q_A| \leq \newrev{c_{\hat{\pi}}^{-2}c_\pi^{-1}} c_{\delta1} s \frac{\log p}{n}.
	\end{align*}
\end{lemma}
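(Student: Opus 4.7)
The plan is to isolate from $Q_A$ the Hölder-type main term suggested by the heuristic \eqref{eq:approx_bias}--\eqref{eq:holder}, and to show that all residual pieces are of order $s\log p/n$ using the bounds packaged into $\T_1$--$\T_4$. Writing
\[
a_i := \frac{T_iY_i}{\hat\pi_i\pi_i} + \frac{(1-T_i)Y_i}{(1-\hat\pi_i)(1-\pi_i)} - \frac{\hat\mu_i}{\hat\pi_i(1-\hat\pi_i)},
\]
so that $Q_A = -n^{-1}\sum_i a_i(\hat\pi_i-\pi_i)$, I would first apply the second-order Taylor expansion of $\psi$ around $X_i^T\hat\gamma$. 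Using the identity $\psi'(u)=\psi(u)(1-\psi(u))$ together with the uniform boundedness of $\psi''$, this yields $\hat\pi_i-\pi_i = \hat\pi_i(1-\hat\pi_i)X_i^T(\hat\gamma-\gamma) + R_i$ with $|R_i|\leq c(X_i^T(\hat\gamma-\gamma))^2$. A short algebraic manipulation in which the factors $\hat\pi_i(1-\hat\pi_i)$ cancel against the denominators of $a_i$ then produces the decomposition $Q_A = M + C + E$, where
\[
M := -\frac{1}{n}\sum_i(\tilde Y_i-\hat\mu_i)X_i^T(\hat\gamma-\gamma),
\]
$C$ arises from replacing $\pi_i$ by $\hat\pi_i$ in the first two summands of $a_i$ and has summands bounded by $c|Y_i||\hat\pi_i-\pi_i||X_i^T(\hat\gamma-\gamma)|$ on $\T_1$, and $E = -n^{-1}\sum_i a_i R_i$ is the Taylor remainder.

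The main term $M$ is handled exactly as in \eqref{eq:holder}: by Hölder's inequality, on $\T_1\cap\T_3$ we have $|M|\leq n^{-1}\|\mathbf X^T\tilde{\mathbf Y}-\mathbf X^T\hat{\mbb\mu}\|_\infty\|\hat\gamma-\gamma\|_1\lesssim s\log p/n$. For the Taylor remainder $E$, the overlap bounds on $\T_1$ imply $|a_i|\leq c(|Y_i|+|\hat\mu_i|)$, and Cauchy--Schwarz gives
\[
|E|\lesssim \Bigl(\tfrac{1}{n}\sum_i a_i^2\Bigr)^{1/2}\Bigl(\tfrac{1}{n}\sum_i(X_i^T(\hat\gamma-\gamma))^4\Bigr)^{1/2}.
\]
The first factor is $O(1)$ on $\T_3\cap\T_4$ via $n^{-1}\|\mathbf Y\|_2^2\leq c_Y$ and $n^{-1}\|\hat{\mbb\mu}\|_2^2\leq c_{\hat\mu}$, while the second is at most $c_1 s\log p/n$ on $\T_2$. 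The correction $C$ is handled identically: using the Lipschitz bound $|\hat\pi_i-\pi_i|\leq \tfrac{1}{4}|X_i^T(\hat\gamma-\gamma)|$ reduces each summand to $c|Y_i|(X_i^T(\hat\gamma-\gamma))^2$, which Cauchy--Schwarz together with $\T_2\cap\T_4$ again bounds by $O(s\log p/n)$.

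The only mildly delicate point is the algebraic bookkeeping that extracts $(\tilde Y_i-\hat\mu_i)$ cleanly from $a_i\hat\pi_i(1-\hat\pi_i)$, so that both residual pieces $C$ and $E$ become quadratic in $X_i^T(\hat\gamma-\gamma)$ and can be absorbed by the fourth-moment bound on $\T_2$. Once this is carried out, summing the three $O(s\log p/n)$ bounds produces the advertised $c_{\delta 1}\,s\log p/n$, with $c_{\delta 1}$ depending only on $c_1, c_2, c_{\hat\pi}, c_{\hat\mu}, c_Y$ (and $c_\gamma$ and the universal bound on $\psi''$).
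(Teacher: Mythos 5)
Your proof is correct and takes essentially the same approach as the paper: the paper first replaces $\pi_i$ by $\hat\pi_i$ in the denominators (producing a $(\hat\pi_i-\pi_i)^2$ correction $Q_{A2}$) and then Taylor-expands, whereas you Taylor-expand first and then correct the denominators, but the resulting three pieces $M$, $C$, $E$ correspond to the paper's $Q_{A11}$, $Q_{A2}$, $Q_{A12}$ and are bounded with identical tools (H\"older on $\T_1\cap\T_3$ for the main term; Cauchy--Schwarz with the fourth-moment event $\T_2$ and the norm bounds from $\T_3\cap\T_4$ for the quadratic remainders). The algebraic cancellation of $\hat\pi_i(1-\hat\pi_i)$ against $\psi'(X_i^T\hat\gamma)$ that you flag does go through as claimed.
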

\begin{proof}
	\newrev{Unless explicitly specified, all the constants defined in this proof depend only on the constants $c_1, c_2, c_{\hat{\mu}}, c_\gamma, c_Y$, and not on other constants specified in the events or the assumptions. Further, in the below, $c$ is a constant which may change from line to line.}
	We may decompose $Q_A$ as
	\begin{align} \label{eq:qa1}
		\begin{split}
			Q_A = & - \frac{1}{n} \sum_{i=1}^n \Big(\frac{T_i Y_i}{\hat{\pi}_i^2} + \frac{(1-T_i)Y_i}{(1-\hat{\pi}_i)^2} - \frac{\hat{\mu}_i}{\hat{\pi}_i(1 - \hat{\pi}_i)}\Big)(\hat{\pi}_i - \pi_i) \\
			& - \frac{1}{n} \sum_{i=1}^n \Big(\frac{T_i Y_i}{\hat{\pi}_i^2 \pi_i} - \frac{(1-T_i)Y_i}{(1-\hat{\pi}_i)^2(1-\pi_i)} \Big)(\hat{\pi}_i - \pi_i)^2 \\
			=: & Q_{A1} + Q_{A2}.
		\end{split}
	\end{align}
	Now we prove the claim by controlling the two terms $Q_{A1}$ and $Q_{A2}$ separately. 
	
	To control $Q_{A1}$, observe that by performing Taylor expansion of the logistic function $\psi$ about  $X_i^\top \hat{\gamma}$, we have that for each $i$, there exists some $\xi_i \in [0,1]$ such that
	\begin{align*}
		\pi_i = \hat{\pi}_i + \psi'(X_i^\top  \hat{\gamma}) X_i^\top  (\hat{\gamma} - \gamma) + \frac{1}{2} \psi''\big(X_i^\top  \hat{\gamma} \xi_i + X_i^\top  \gamma (1 - \xi_i)\big) \{X_i^\top  (\hat{\gamma} - \gamma)\}^2.
	\end{align*}
	This allows us to further decompose $Q_{A1}$ as follows:
	\begin{align*}
		Q_{A1} =\; & - \frac{1}{n} \sum_{i=1}^n \bigg(\frac{T_i Y_i}{\hat{\pi}_i^2} + \frac{(1-T_i)Y_i}{(1-\hat{\pi}_i)^2} - \frac{\hat{\mu}_i}{\hat{\pi}_i (1 - \hat{\pi}_i)}\bigg) \psi'(X_i^\top  \hat{\gamma}) X_i^\top  (\hat{\gamma} - \gamma) \\
		\; & - \frac{1}{n} \sum_{i=1}^n \frac{1}{2} \bigg(\frac{T_i Y_i}{\hat{\pi}_i^2} + \frac{(1-T_i)Y_i}{(1-\hat{\pi}_i)^2} - \frac{\hat{\mu}_i}{\hat{\pi}_i (1 - \hat{\pi}_i)}\bigg) \psi''\big(X_i^\top  \hat{\gamma} \xi_i + X_i^\top  \gamma (1 - \xi_i)\big) \{X_i^\top  (\hat{\gamma} - \gamma)\}^2 \\
		=: & Q_{A11} + Q_{A12}.
	\end{align*}
	For $Q_{A11}$, using the same analysis as in~\eqref{eq:approx_bias}--\eqref{eq:holder} yields
	\begin{align*}
		|Q_{A11}| \leq \Big\|\frac{1}{n} \sum_{i = 1}^n (\tilde{Y}_i - \hat{\mu}_i) X_i\Big\|_\infty \|\hat{\gamma} - \gamma\|_1.
	\end{align*}
	It then follows from the fact we are on events $\mathcal{T}_1$ and $\mathcal{T}_3$ that $|Q_{A11}| \leq \newrev{c_{\hat{\pi}}^{-1}} c_2 c_\gamma s \frac{\log p}{n}$.
	
	For $Q_{A12}$, observe that by using Cauchy--Schwarz inequality and the fact that $|\psi''(u)|\leq 1$ for all $u \in \R$,
	\begin{align*}
		Q_{A12}^2 \leq & \frac{1}{4n} \sum_{i=1}^n \bigg(\frac{T_i Y_i}{\hat{\pi}_i^2} + \frac{(1-T_i)Y_i}{(1-\hat{\pi}_i)^2} - \frac{\hat{\mu}_i}{\hat{\pi}_i (1 - \hat{\pi}_i)}\bigg)^2   \times \frac{1}{n}\sum_{i=1}^n (X_i^\top  (\hat{\gamma} - \gamma))^4.
	\end{align*}
	Then by using the fact that we are on $\T$, we have that
	\begin{align*}
		Q_{A12}^2 & \overset{(a)}{\leq} c^2 \bigg(\newrev{c_{\hat{\pi}}^{-4} \cdot} \frac{1}{n} \|\mb Y\|_2^2 + \newrev{c_{\hat{\pi}}^{-2} \cdot} \frac{1}{n}\|\hat{\mbb\mu}\|_2^2\bigg)\frac{1}{n}\sum_{i=1}^n \{X_i^\top  (\hat{\gamma} - \gamma)\}^4 \\
		& \overset{(b)}{\leq} c^2 \bigg(\newrev{c_{\hat{\pi}}^{-4} \cdot} \frac{1}{n} \|\mb Y\|_2^2 + \newrev{c_{\hat{\pi}}^{-2} \cdot} \frac{1}{n}\|\hat{\mbb\mu}\|_2^2\bigg) \bigg(s \frac{\log p}{n}\bigg)^2 \\
		& \overset{(c)}{\leq} \newrev{c_{\hat{\pi}}^{-4}} c^2 s^2 \frac{(\log p)^2}{n^2}.
	\end{align*}
	Here for inequality $(a)$, we use the fact that we are on $\T_1$; for $(b)$, we use that we are on $\T_2$; for $(c)$, we use that we are on $\T_3$ and $\T_4$.
	Hence, 
	\begin{equation}\label{eq:qa2}
		|Q_{A1}|  \leq |Q_{A11}| + |Q_{A12}| \leq \newrev{c_{\hat{\pi}}^{-2}} (c_2 c_\gamma + c) s \frac{\log p}{n}.
	\end{equation}
	Following a similar argument as that used to bound $Q_{A12}$ and using the fact that $|\psi'(u)| \leq 1$, we obtain
	\begin{equation}\label{eq:qa3}
		|Q_{A2}| \leq \newrev{c_{\hat{\pi}}^{-2} c_\pi^{-1}} c' s \frac{\log p}{n}.
	\end{equation}
	In light of the decomposition~\eqref{eq:qa1}, the desired result follows from~\eqref{eq:qa2} and~\eqref{eq:qa3}.
\end{proof}

\begin{lemma}\label{lem:qb}
	Let $\T := \T_1 \cap \T_3$. Given $m \in \mathbb{N}$, there exists constants $c, c_{\delta2} > 0$ \newrev{depending only on $c_\gamma, c_{\hat{\mu}}, c_2, \sigma_Z$} such that
	\begin{align*}
		\pr\Big(\Big\{|Q_B| \geq \newrev{c_{\hat{\pi}}^{-2} c_\pi^{-1}}c_{\delta2} \sqrt{s \log n} \frac{\log p}{n} \Big\} \cap \T \Big) \leq c(p^{-m} + n^{-m}).
	\end{align*}
\end{lemma}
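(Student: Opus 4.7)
The strategy is to condition on $\mathcal{D}$ and exploit the fact that, given $\mathcal{D}$, the summands of $Q_B$ decouple into a sum of bounded, mean-zero, conditionally independent terms. Writing
\[
Q_B = \frac{1}{n}\sum_{i=1}^n (T_i - \pi_i)\, a_i, \qquad a_i := \hat{\mu}_i\Big(\frac{1}{\hat{\pi}_i\pi_i} + \frac{1}{(1-\hat{\pi}_i)(1-\pi_i)}\Big)(\hat{\pi}_i - \pi_i),
\]
one notes that $\hat{\mbb\mu}, \hat{\pi}_i, \pi_i$ and $X_i$ are all $\sigma(\mathcal{D})$-measurable---recall that $\hat{\mbb\mu}$ is a function of $\mb X$ together with the auxiliary data $(\mb X_A, \mb Y_A, \mb T_A)$ and of $(\hat{\gamma}, \tilde{f})$, not of $(\mb Y, \mb T)$---while $T_1, \ldots, T_n$ are conditionally independent given $\mb X$ and independent of $\hat{\mbb\mu}$ by construction and by Assumption~\ref{as:unconfounded}. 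The Hoeffding inequality for bounded, independent, mean-zero summands therefore gives
\[
\pr(|Q_B| > t \mid \mathcal{D}) \leq 2\exp\Big(-\frac{2n^2 t^2}{\sum_i a_i^2}\Big).
\]

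The bulk of the argument is then to control $\sum_i a_i^2$ on a $\sigma(\mathcal{D})$-measurable superset of $\mathcal{T} = \mathcal{T}_1 \cap \mathcal{T}_3$. On $\mathcal{T}_1$, both $\hat{\pi}_i$ and $\pi_i$ are bounded away from $0$ and $1$, so $a_i^2 \leq C\hat{\mu}_i^2(\hat{\pi}_i-\pi_i)^2$. A first-order Taylor expansion of $\psi$ combined with $|\psi'|\leq 1/4$ yields $(\hat{\pi}_i - \pi_i)^2 \leq \tfrac{1}{16}(X_i^T(\hat{\gamma}-\gamma))^2$, whence
\[
\sum_i a_i^2 \;\leq\; C'\max_{1 \leq i \leq n}(X_i^T(\hat{\gamma}-\gamma))^2 \cdot \sum_{i=1}^n \hat{\mu}_i^2 \;\leq\; C'nc_{\hat{\mu}}\cdot \max_i(X_i^T(\hat{\gamma}-\gamma))^2,
\]
the last step using the first part of the definition of $\mathcal{T}_3$. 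To control $\max_i|X_i^T(\hat{\gamma}-\gamma)|$, we crucially use that $\hat{\gamma}$ is estimated from the auxiliary sample $(\mb X_B, \mb Y_B, \mb T_B)$ and is therefore independent of $\mb X$; conditional on $\hat{\gamma}$, Assumption~\ref{as:subgX} makes each $X_i^T(\hat{\gamma}-\gamma)$ sub-Gaussian with parameter of order $\|\hat{\gamma}-\gamma\|_2$. A union bound together with the $\ell_2$-control $\|\hat{\gamma}-\gamma\|_2 \leq c_\gamma \sqrt{s\log p/n}$ from $\mathcal{T}_1$ then yields $\max_i|X_i^T(\hat{\gamma}-\gamma)| \leq c\sqrt{s(\log p)(\log n)/n}$ on an event $\mathcal{E}$ with $\pr(\mathcal{E}^c \cap \mathcal{T}_1) \leq c'n^{-m}$.

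Combining, $\sum_i a_i^2 \leq C'' s(\log p)(\log n)$ on the $\sigma(\mathcal{D})$-measurable event $\mathcal{T}_1 \cap \{\|\hat{\mbb\mu}\|_2^2/n \leq c_{\hat{\mu}}\} \cap \mathcal{E}$. Feeding this into the Hoeffding bound with $t = c_{\delta 2}\sqrt{s\log n}\,(\log p)/n$ makes the exponent proportional to $-c_{\delta 2}^2 \log p$, so taking $c_{\delta 2}$ sufficiently large gives a conditional probability bounded by $p^{-m}$. Integrating over $\mathcal{D}$ and adding the failure probability of $\mathcal{E}$ produces the desired bound $c(p^{-m} + n^{-m})$. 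The main bookkeeping subtlety is that only the $\ell_2$-portion $\|\hat{\mbb\mu}\|_2^2/n \leq c_{\hat{\mu}}$ of $\mathcal{T}_3$ is $\sigma(\mathcal{D})$-measurable---its $\ell_\infty$ part involves $\tilde{\mb Y}$ (and hence $(\mb Y, \mb T)$)---but it suffices to retain only this part for the present bound on $\sum_i a_i^2$.
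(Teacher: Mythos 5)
Your proposal is correct and follows essentially the same route as the paper's proof: a conditional Hoeffding bound for $Q_B$ given $\mathcal{D}$, with the variance proxy controlled by factoring out $\max_i|X_i^T(\hat{\gamma}-\gamma)|$ (bounded on a high-probability sub-Gaussian union-bound event, the paper's $\mathcal{A}$) and using the bounds on $\|\hat{\mbb\mu}\|_2^2/n$, the propensities, and $\|\hat{\gamma}-\gamma\|_2$ from $\mathcal{T}_1$. Your explicit handling of the measurability point --- conditioning on $\mathcal{D}$ and retaining only the $\sigma(\mathcal{D})$-measurable $\ell_2$ portion of $\mathcal{T}_3$ --- is in fact a slightly more careful rendering of the step the paper states informally as $\E(Q_B \mid \mathcal{T},\mathcal{A})=0$.
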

\begin{proof}
	\newrev{Unless explicitly specified, all the constants defined in this proof depend only on the constants $c_\gamma, c_{\hat{\mu}}, c_2, \sigma_Z$, and not on other constants specified in the events or the assumptions.}
	Let $\mathcal{A}$ denote the event 
	\begin{align*}
		\mathcal{A} := \big\{ \|\mb X (\hat{\gamma} - \gamma)\|_\infty \leq c_A \sqrt{\log n} \|\hat{\gamma} - \gamma\|_2\big\},
	\end{align*}
	where $c_A>0$ is a constant to be chosen later. We first consider controlling $Q_B$ on the events $\T$ and $\mathcal{A}$. First observe that $\E(Q_B \mid \mathcal{D}_B, \mathcal{D}_A, \mb X) = 0$. Now $T_i - \pi_i$ is sub-Gaussian with variance proxy $\sigma^2=1/4$ conditional on $\mathcal{D}_B, \mathcal{D}_A, \mb X$ by Hoeffding's lemma. Thus applying Hoeffding's inequality, we have that for all $t > 0$,
	\begin{align*}
		\pr\big(|Q_B| \geq t \mid \mathcal{D}_B, \mathcal{D}_A, \mb X \big) & \leq 2 \exp(-2n^2t^2 / \sigma_Q^2)
	\end{align*}
	where
	\begin{align*}
		\sigma_Q^2 &= \sum_{i=1}^n \bigg(\frac{\hat{\mu}_i}{\hat{\pi}_i \pi_i} - \frac{\hat{\mu}_i}{(1-\hat{\pi}_i) (1-\pi_i)}\bigg)^2 (\hat{\pi}_i - \pi_i)^2 \\
		&\leq \max_i (\hat{\pi}_i - \pi_i)^2\sum_{i=1}^n \bigg(\frac{\hat{\mu}_i}{\hat{\pi}_i \pi_i} - \frac{\hat{\mu}_i}{(1-\hat{\pi}_i) (1-\pi_i)}\bigg)^2\\
		&\leq \|\mb X (\hat{\gamma} - \gamma)\|_\infty^2 \sum_{i=1}^n \bigg(\frac{\hat{\mu}_i}{\hat{\pi}_i \pi_i} - \frac{\hat{\mu}_i}{(1-\hat{\pi}_i) (1-\pi_i)}\bigg)^2.
	\end{align*}
	Note the final inequality uses the fact that 
	\begin{equation}\label{eq:qb0}
		|\hat{\pi}_i - \pi_i| \leq \sup_{u \in \R} |\psi'(u)| |X_i^\top (\hat{\gamma} - \gamma)| \leq |X_i^\top (\hat{\gamma} - \gamma)|.
	\end{equation}
	In the following we work on $\mathcal{T} \cap \mathcal{A}$, which we note is $(\mathcal{D}_B, \mathcal{D}_A, \mb X)$-measurable.
	Note that on $\T$ we have $\|\hat{\mbb\mu}\|_2^2 \leq \newrev{c_{\hat{\pi}}^{-2}}c_{\hat{\mu}} n$. Also, by Assumption~\ref{as:logistic} and the fact that we are on $\mathcal{T}_1$ we have that for all $i$,
	$c_{\hat{\pi}} \leq \hat{\pi}_i \leq 1-c_{\hat{\pi}}$ and $c_{\pi} \leq \pi_i \leq 1-c_{\pi}$. Thus
	\[
	\ind_{\T}\sum_{i=1}^n \bigg(\frac{\hat{\mu}_i}{\hat{\pi}_i \pi_i} - \frac{\hat{\mu}_i}{(1-\hat{\pi}_i) (1-\pi_i)}\bigg)^2 \leq \newrev{c_{\hat{\pi}}^{-4}c_\pi^{-2}} c n
	\]
	for constant $c\geq 0$.
	We may therefore conclude that
	\[
	\sigma_Q^2 \ind_{\T \cap \A} \leq \newrev{c_{\hat{\pi}}^{-4}c_\pi^{-2}} c c_\gamma c_A s \log (n) \log (p),
	\]
	and so for some $c > 0$ depending on $c_A$, 
	\begin{align*}
		2 \exp\bigg(- \frac{n^2 t^2}{\newrev{c_{\hat{\pi}}^{-4}c_\pi^{-2}} c s \log (n) \log (p)}\bigg) &\geq \E  \{\pr(|Q_B| \geq t \mid \mathcal{D}_B, \mathcal{D}_A, \mb X) \ind_{\mathcal{A}\cap \mathcal{T}}\} \\
		&= \pr\big(\{|Q_B| \geq t\} \cap \T \cap \mathcal{A}\big).
	\end{align*}
	By choosing $t = m \newrev{c_{\hat{\pi}}^{-2}c_\pi^{-1}} \sqrt{c s \log n}\log p / n$, we obtain
	\begin{align}\label{eq:qb1}
		\pr\bigg(\Big\{|Q_B| \geq m \newrev{c_{\hat{\pi}}^{-2}c_\pi^{-1}} \sqrt{cs \log n} \frac{\log p}{n}\Big\} \cap \T \cap \mathcal{A}\bigg) \leq 2 p^{- m}.
	\end{align}
	It remains to choose $c_A$ and bound
	$\pr(\mathcal{A}^c)$.
	Appealing to Assumption~\ref{as:subgX}, a standard sub-Gaussian tail bound yields
	\[
	\pr \bigg( \frac{|Z_i^\top (\hat{\theta} - \theta)|}{\|\hat{\gamma} - \gamma\|_2} \geq t \bigg) \leq 2e^{-t^2 /(2\sigma_Z^2)}.
	\]
	Observe that $|\theta_0 - \hat{\theta}_0| \leq \|\hat{\gamma} - \gamma\|_2$.
	By choosing $c_A \geq \sigma_Z\sqrt{2(m+1)}$ and applying a union bound, we have that for all $n$ sufficiently large,
	\begin{align}\label{eq:qb2}
		\pr (\mathcal{A}^c) & = \pr \bigg(\bigcup_{i=1}^n \bigg\{\frac{|X_i^\top (\hat{\gamma} - \gamma)|}{\|\hat{\gamma} - \gamma\|_2} \geq c_A\sqrt{\log n} \bigg\} \bigg) \nonumber\\
		&\leq n \pr \bigg(\frac{|Z_i^\top  (\hat{\theta} - \theta)|}{\|\hat{\gamma} - \gamma\|_2} \geq \sigma_Z \sqrt{2(m+1) \log n} \bigg)\\
		& \leq 2n^{-m} \nonumber.
	\end{align}
	The desired result then follows from noting that
	\begin{align*}
		& \pr\bigg(\Big\{|Q_B| \geq \newrev{c_{\hat{\pi}}^{-2}c_\pi^{-1}} c \sqrt{s \log n} \frac{\log p}{n}\Big\} \cap \T \bigg) \\
		&\quad \leq \pr\bigg(\Big\{|Q_B| \geq  \newrev{c_{\hat{\pi}}^{-2}c_\pi^{-1}} c\sqrt{s \log n} \frac{\log p}{n}\Big\} \cap \T \cap \mathcal{A} \bigg) + \pr( \A_1^c).
	\end{align*}
\end{proof}

\begin{lemma}\label{lem:qc}
	We have
	\[
	\sqrt{n}Q_C = \sqrt{\sigma_\mu^2 + \bar{\sigma}^2}\zeta
	\]
	where $\zeta$ satisfies (ii) of Theorem~\ref{thm:dipwpf}.
\end{lemma}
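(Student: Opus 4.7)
The plan is to observe that $\sqrt{n} Q_C = n^{-1/2} \sum_{i=1}^n W_i$, where
\[
W_i := \bigg(\frac{r_{1,i}}{\pi_i} + \frac{r_{0,i}}{1 - \pi_i} - \frac{\hat{\mu}_i}{\pi_i (1 - \pi_i)}\bigg) (T_i - \pi_i) + \frac{T_i \varepsilon_{1,i}}{\pi_i} - \frac{(1 - T_i) \varepsilon_{0,i}}{1 - \pi_i}.
\]
Conditional on $\mathcal{D}$, the vector $\hat{\mbb\mu}$ and all $\pi_i, r_{t,i}$ are deterministic, and the pairs $(T_i, \varepsilon_{0,i}, \varepsilon_{1,i})$ are mutually independent across $i$ with $T_i \independent (\varepsilon_{0,i},\varepsilon_{1,i}) \mid X_i$ by unconfoundedness. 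Hence $W_1, \ldots, W_n$ are conditionally independent given $\mathcal{D}$, and $\E(W_i \mid \mathcal{D}) = 0$ because $\E(T_i \mid \mathcal{D}) = \pi_i$ and $\E(\varepsilon_{t,i} \mid X_i) = 0$. The idea is then to apply the Berry--Esseen bound for independent non-identically distributed summands (see e.g.\ Shevtsova (2011)): the $L^1$-distance to the Gaussian is controlled by the ratio of the sum of conditional third absolute moments to the $3/2$ power of the sum of conditional variances.

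The next step is to identify these conditional variances. The noise contribution decomposes as
\[
\Var\!\left(\frac{T_i \varepsilon_{1,i}}{\pi_i} - \frac{(1-T_i)\varepsilon_{0,i}}{1-\pi_i} \,\Big|\, \mathcal{D}\right) = \frac{\E\{\varepsilon_i(1)^2 \mid X_i\}}{\pi_i} + \frac{\E\{\varepsilon_i(0)^2 \mid X_i\}}{1-\pi_i},
\]
since the product $T_i(1-T_i)$ is identically zero. For the remaining term in $W_i$, I would use that $\mu_{\ora,i} = \E(\tilde{Y}_{\ora,i} \mid X_i) = (1-\pi_i) r_{1,i} + \pi_i r_{0,i}$, which gives the identity
\[
\frac{r_{1,i}}{\pi_i} + \frac{r_{0,i}}{1 - \pi_i} - \frac{\hat{\mu}_i}{\pi_i (1 - \pi_i)} = \frac{\mu_{\ora, i} - \hat{\mu}_i}{\pi_i(1-\pi_i)},
\]
so the conditional variance of the $(T_i-\pi_i)$-factor is $(\hat{\mu}_i - \mu_{\ora,i})^2/\{\pi_i(1-\pi_i)\}$, and the cross-term with the noise has mean zero (since $\E[(T_i-\pi_i) T_i \varepsilon_{1,i} \mid X_i] = 0$ and similarly for the $\varepsilon_{0,i}$ term, using unconfoundedness). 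Averaging over $i$ yields $\sum_i \Var(W_i \mid \mathcal{D})/n = \sigma_\mu^2 + \bar{\sigma}^2$. Setting $\zeta := \sqrt{n}Q_C / \sqrt{\sigma_\mu^2 + \bar{\sigma}^2}$ produces the required normalisation.

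For the third-moment term, I would apply the $c_r$-inequality to $|W_i|^3$ to split it into three contributions. The noise parts can be bounded using $\E[|T_i \varepsilon_{1,i}/\pi_i|^3 \mid \mathcal{D}] = \E[|\varepsilon_i(1)|^3 \mid X_i]/\pi_i^2$ and the analogous bound for the control arm, and Assumption~\ref{as:logistic} then gives that these contributions average to at most a constant multiple of $\bar{\rho}^3$. For the bias--correction term, using $|T_i - \pi_i|^3 \leq (T_i - \pi_i)^2$ together with the overlap assumption yields
\[
\frac{1}{n}\sum_i \E\!\left[\left|\frac{\mu_{\ora,i} - \hat{\mu}_i}{\pi_i(1-\pi_i)}(T_i - \pi_i)\right|^3 \,\Big|\, \mathcal{D}\right] \leq C \|\hat{\mbb\mu} - \mbb\mu_{\ora}\|_\infty \cdot \frac{1}{n}\sum_i \frac{(\hat{\mu}_i - \mu_{\ora,i})^2}{\pi_i(1-\pi_i)} = C \|\hat{\mbb\mu} - \mbb\mu_{\ora}\|_\infty \sigma_\mu^2.
\]
Plugging these bounds into the Berry--Esseen inequality produces exactly the conclusion
\[
\sup_{t \in \R} |\pr(\zeta \leq t \mid \mathcal{D}) - \Phi(t)| \leq \frac{c_\zeta}{\sqrt{n}} \cdot \frac{\sigma_\mu^2 \|\hat{\mbb\mu} - \mbb\mu_{\ora}\|_\infty + \bar{\rho}^3}{(\sigma_\mu^2 + \bar{\sigma}^2)^{3/2}}.
\]
No step looks like a serious obstacle; the only mildly delicate point is the algebraic identity $\mu_{\ora,i} = (1-\pi_i) r_{1,i} + \pi_i r_{0,i}$, which is what makes the variance collapse cleanly to $\sigma_\mu^2$ rather than to a more complicated expression.
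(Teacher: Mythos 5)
Your proposal is correct and follows essentially the same route as the paper's proof: the same conditional independence structure given $\mathcal{D}$, the same key identity $\mu_{\ora,i} = (1-\pi_i)r_1(X_i) + \pi_i r_0(X_i)$ collapsing the first factor to $(\mu_{\ora,i}-\hat{\mu}_i)/\{\pi_i(1-\pi_i)\}$, the same variance and third-moment bounds (including $|\mu_{\ora,i}-\hat{\mu}_i|^3 \leq \|\hat{\mbb\mu}-\mbb\mu_{\ora}\|_\infty(\mu_{\ora,i}-\hat{\mu}_i)^2$), and the Berry--Esseen theorem for independent non-identically distributed summands. The only quibble is that the quantity controlled is the Kolmogorov (sup) distance rather than an $L^1$-distance, but your displayed conclusion is the correct one.
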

\begin{proof}
	Let us introduce, for $i=1,\ldots,n$
	\[
	u_i := \bigg(\frac{r_1(X_i)}{\pi_i} + \frac{r_0(X_i)}{1 - \pi_i} - \frac{\hat{\mu}_i}{\pi_i (1 - \pi_i)}\bigg) (T_i - \pi_i) + \frac{T_i \varepsilon_i(1)}{\pi_i} - \frac{(1 - T_i) \varepsilon_i(0)}{1 - \pi_i}.
	\]
	Note that $Q_C = \sum_{i=1}^n u_i / n$. Next observe that
	\begin{align*}
		\mu_{\ora,i} &= \E \left(\frac{T_iY_i(1-\pi_i)}{\pi_i} + \frac{(1-T_i)Y_i\pi_i}{1-\pi_i} \Big| X_i \right) \\
		&= \E \left(\frac{T_i(r_1(X_i) + \varepsilon_i(1))(1-\pi_i)}{\pi_i} + \frac{(1-T_i)(r_0(X_i) + \varepsilon_i(0))\pi_i}{1-\pi_i} \Big| X_i \right) \\
		&= r_1(X_i)(1-\pi_i) + r_0(X_i)\pi_i,
	\end{align*}
	so we may write $u_i$ as
	\begin{equation} \label{eq:u_i}
		u_i = \frac{\mu_{\ora,i} - \hat{\mu}_i}{\pi_i(1-\pi_i)}(T_i-\pi_i) + \frac{T_i \varepsilon_i(1)}{\pi_i} - \frac{(1 - T_i) \varepsilon_i(0)}{1 - \pi_i}.
	\end{equation}
	We see that $\E(u_i | \mathcal{D}) = 0$, as for example $\E (\varepsilon_i(1) | \D, T_i) = 0$. 
	Furthermore, the $u_i$ are independent conditional on $\D$ and
	\begin{equation} \label{eq:sigma_i}
		\sigma_i^2 := \Var(u_i \mid \mathcal{D}) = \frac{(\mu_{\ora, i} - \hat{\mu}_i)^2}{\pi_i (1 - \pi_i)} + \frac{\E(\varepsilon_i(1)^2 \mid X_i)}{\pi_i} + \frac{\E(\varepsilon_i(0)^2 \mid X_i)}{1 - \pi_i}
	\end{equation}
	and
	\[
	\frac{1}{n}\sum_{i=1}^n \sigma_i^2 = \frac{1}{n}\sum_{i=1}^n \frac{(\mu_{\ora,i}-\hat{\mu}_i)^2}{\pi_i(1-\pi_i)} + \bar{\sigma}^2=\sigma_\mu^2 + \bar{\sigma}^2.
	\]
	Next set
	\[
	\zeta := \frac{\frac{1}{\sqrt{n}}\sum_{i=1}^n u_i}{\sqrt{\sigma_\mu^2 + \bar{\sigma}^2}}.
	\]
	Observe that 
	\[
	\rho_i^3 := \E(|u_i|^3 | \mathcal{D}) \leq c \newrev{c_\pi^{-2}}\left(\frac{|\mu_{\ora, i} - \hat{\mu}_i|^3}{\pi_i (1 - \pi_i)} + \E(|\varepsilon_i(1)|^3 | X_i) + \E(|\varepsilon_i(0)|^3 | X_i)\right),
	\]
	for some \newrev{universal constant $c > 0$}.
	The Berry-Esseen theorem \citep{Esseen42} gives us that for a universal constant $C > 0$,
	\begin{equation}\label{eq:berry1}
		\sup_{t \in \R} |\pr(\zeta \geq t \mid \mathcal{D}) - \Phi(t)| \leq \frac{C}{\sqrt{n}} \bigg(\frac{1}{n}\sum_{i=1}^n \sigma_i^2\bigg)^{-3/2} \frac{1}{n}\sum_{i=1}^n \rho_i^3.
	\end{equation}
	Now using
	\[
	\frac{|\mu_{\ora, i} - \hat{\mu}_i|^3}{\pi_i (1 - \pi_i)} \leq  \|\mbb\mu_{\ora} - \hat{\mbb\mu}\|_\infty \frac{(\mu_{\ora, i} - \hat{\mu}_i)^2}{\pi_i(1 - \pi_i)},
	\]
	we see that
	\[
	\frac{1}{n}\sum_{i=1}^n \rho_i^3 \leq \newrev{c_\pi^{-2}} c(\|\mbb\mu_{\ora} - \hat{\mbb\mu}\|_\infty \sigma_{\mu}^2 + \bar{\rho}^3)
	\]
	from which the result easily follows.
\end{proof}
\subsection{Analysis of the events}\label{sec:events}

In this section, we prove that given $c_\gamma, c_{\hat{\pi}}, c_{\tilde{\mu}}$ and $m \in \mathbb{N}$, there exist positive constants $c, c_1, c_2, c_{\hat{\mu}}$ and $c_Y$ \newrev{that do not depend on $c_\pi$ and $c_{\hat{\pi}}$} such that the event
\[
\T:=\T(c_\gamma, c_{\hat{\pi}}, c_1, c_{\hat{\mu}}, c_2, c_Y) := \T_1 (c_\gamma, c_{\hat{\pi}}) \cap \T_2(c_1) \cap \T_3(c_{\hat{\mu}}, c_2\newrev{, c_{\hat{\pi}}}) \cap \T_4(c_Y)
\]
satisfies 
\begin{align*}
	\pr(\mathcal{T}^c) \leq \pr(\Omega^c(c_\gamma, c_{\tilde{\mu}}, c_{\hat{\pi}})) + c'(p^{-m} + n^{-m}).
\end{align*}
To do this, we first observe that
\begin{align}
	\pr(\T^c) &\leq \pr(\T^c \cap \Omega) + \pr(\Omega^c) \notag \\
	&= \pr((\T_2^c \cup \T_3^c \cup \T_4^c) \cap \Omega) + \pr(\Omega^c) \notag\\
	&= \pr(\{\T_2^c \cup (\T_3^c \cap \T_4) \cup \T_4^c\} \cap \Omega) + \pr(\Omega^c)  \notag\\
	&\leq \pr(\T_2^c \cap \Omega) + \pr(\T_3^c \cap \T_4 \cap \Omega) + \pr(\T_4^c) + \pr(\Omega^c), \label{eq:t_bds}
\end{align}
and then individually bound the first three terms in \eqref{eq:t_bds}. \newrev{Unless explicitly specified, all the constants defined in this section do not depend on $c_{\hat{\pi}}$ and $c_\pi$.}


%
\subsubsection{Bounds relating to $\T_2$}
\begin{lemma}\label{lem:prop3}
	For any $m \in \mathbb{N}$, there exists constants $c,c'>0$ such that for all $v \in \R^p$ and $n \in \mathbb{N}$,
	\begin{align*}
		\pr\bigg(\frac{1}{n} \sum_{i =1}^n (X_i^\top v)^4 > c \|v\|_2^4\bigg) \leq \frac{c'}{n^m}.
	\end{align*}
\end{lemma}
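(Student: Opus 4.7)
The plan is to reduce the claim to a uniform moment bound on $(X_i^T v)^4$ and then apply Markov's inequality at a sufficiently high moment, using the sub-Gaussianity of $Z_i$ from Assumption~\ref{as:subgX} to ensure all the relevant moments exist and scale correctly in $\|v\|_2$.

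First I would write $X_i = (1, Z_i^T)^T$ and decompose $v = (v_1, w)$ with $w \in \R^{p-1}$, so that $X_i^T v = v_1 + Z_i^T w$. By Assumption~\ref{as:subgX}, $Z_i^T w$ is sub-Gaussian with parameter at most $\sigma_Z \|w\|_2 \leq \sigma_Z \|v\|_2$, so standard sub-Gaussian moment bounds yield $\E |Z_i^T w|^k \leq C_k \|v\|_2^k$ for every positive integer $k$, with $C_k$ depending only on $k$ and $\sigma_Z$. Combining this with $|v_1| \leq \|v\|_2$ and expanding $(v_1 + Z_i^T w)^{4q}$ via the binomial theorem produces
\[
\E (X_i^T v)^{4q} \leq K_q \|v\|_2^{4q}
\]
for every positive integer $q$, with $K_q$ depending only on $q$ and $\sigma_Z$.

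Now set $U_i := (X_i^T v)^4$ and $\bar U := n^{-1} \sum_{i=1}^n U_i$. The $U_i$ are i.i.d.\ with $\E U_1 \leq K_1 \|v\|_2^4$, so taking $c := 2 K_1$ and applying Markov's inequality at the $2m$-th moment,
\[
\pr(\bar U > c \|v\|_2^4) \leq \pr(|\bar U - \E U_1| \geq K_1 \|v\|_2^4) \leq \frac{\E(\bar U - \E U_1)^{2m}}{K_1^{2m} \|v\|_2^{8m}}.
\]
Rosenthal's inequality for i.i.d.\ mean-zero summands then gives
\[
\E(\bar U - \E U_1)^{2m} \leq B_m\bigg(\frac{(\E U_1^2)^m}{n^m} + \frac{\E U_1^{2m}}{n^{2m-1}}\bigg) \leq \frac{B_m' \|v\|_2^{8m}}{n^m},
\]
where the second inequality uses $\E U_1^2 \leq K_2 \|v\|_2^8$ and $\E U_1^{2m} \leq K_m \|v\|_2^{8m}$ from the previous step. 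Substituting back yields $\pr(\bar U > c \|v\|_2^4) \leq c'/n^m$ for a constant $c'$ depending only on $m$ and $\sigma_Z$.

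I do not anticipate a serious obstacle: the only thing that needs care is that the constants $c, c'$ depend only on $m$ and $\sigma_Z$, and not on $v$, $p$ or $n$. This is guaranteed by the homogeneity in $\|v\|_2$ of the sub-Gaussian moment bound together with Rosenthal's inequality, so every $\|v\|_2^{8m}$ factor cancels cleanly against the denominator arising from Markov's inequality.
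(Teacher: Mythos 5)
Your proposal is correct and follows essentially the same route as the paper: both decompose $X_i^T v$ into the intercept part and the sub-Gaussian part $Z_i^T v_{-1}$, bound the relevant moments of $(X_i^T v)^4$ using sub-Gaussianity (the paper via a crude power inequality after normalising $\|v\|_2 = 1$, you via the binomial theorem with explicit homogeneity in $\|v\|_2$), and then obtain the $n^{-m}$ tail via Markov's inequality at the $2m$-th moment combined with a moment bound for sums of i.i.d.\ centred variables (the paper cites the tail bound under moment constraints from Wainwright, which is the same Markov-plus-Rosenthal argument you spell out).
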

\begin{proof}
	First note that it suffices to show the result when $\|v\|_2=1$.
	Now denote by $v_{-1}\in \R^p$ the vector $v$ with the first component removed. We have
	\begin{align} \label{eq:X_ibd}
		(X_i^\top v)^4 \leq 2^4\{(Z_i^\top v_{-1})^4 + v_1^4\} \leq 2^4 \{(Z_i^\top v_{-1})^4 + 1\}.
	\end{align}
	Note also that as $Z_i^\top v_{-1}$ is mean-zero and sub-Gaussian with variance proxy $\sigma_Z^2 \|v_{-1}\|_2^2 \leq \sigma_Z^2$, 
	\begin{equation} \label{eq:mom_bd}
		\E \{(Z_i^\top v_{-1})^4\} \leq c,
	\end{equation}
	where $c>0$ is a constant depending on $\sigma_Z^2$.
	Now let $u_i := Z_i^\top  v_{-1}$. From \eqref{eq:X_ibd} and \eqref{eq:mom_bd}, we see that it suffices to show that there exist constants $c, c'>0$ such that 
	\[
	\pr\bigg(\frac{1}{n}\sum_{i=1}^n (u_i^4 - \E u_i^4)  > c \bigg) \leq \frac{c'}{n^m}.
	\]
	A tail bound for mean-zero random variables under moment constraints \citep[page~55]{wainwright_2019} yields  that
	\begin{equation} \label{eq:mom3_bd}
		\pr\bigg(\frac{1}{n}\sum_{i=1}^n (u_i^4 - \E u_i^4)  > c \bigg) \leq \frac{c'}{c^{2m}n^m}.
	\end{equation}
	where $c'$ depends only on $\{\E (u_i^4 - \E u_i^4)^{2m}\}^{\frac{1}{2m}}$ (and in particular is finite if $c'$ is finite). But $\E (u_i^4 - \E u_i^4)^{2m} \leq 2^{2m} \E u_i^{8m} < c''$ where $c'' < \infty$ and $c''$ depends only on $m$ and $\sigma_Z^2$.
\end{proof}

\begin{lemma}\label{lem:t2}
	Given $c_{\gamma}, c_{\hat{\pi}}, c_{\tilde{\mu}} >0$ and $m \in \mathbb{N}$, there exists $c_1, c' >0$ such that
	\[
	\pr(\T_2^c(c_1) \cap \Omega(c_\gamma, c_{\tilde{\mu}}, c_{\hat{\pi}})) \leq c'n^{-m}.
	\]
\end{lemma}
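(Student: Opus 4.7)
The plan is to combine the deviation bound of Lemma~\ref{lem:prop3} with the fact that $\hat{\gamma}$ is constructed purely from the auxiliary dataset $(\mb X_B, \mb Y_B, \mb T_B)$ and is therefore independent of $X_1, \ldots, X_n$. First I would note that
\[
\pr(\T_2^c(c_1) \cap \Omega(c_\gamma, c_{\tilde f}, c_{\hat\pi})) \leq \pr\bigl(\T_2^c(c_1) \cap \{\|\hat{\gamma}-\gamma\|_2 \leq c_\gamma \sqrt{s\log(p)/n}\}\bigr),
\]
since retaining only part (i) of the definition of $\Omega$ can only enlarge the event. Conditioning on $\hat{\gamma}$, the vector $v := \hat{\gamma}-\gamma$ becomes deterministic while $X_1,\ldots,X_n$ remain i.i.d.\ with the assumed sub-Gaussian law (Assumption~\ref{as:subgX}). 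Lemma~\ref{lem:prop3} applied conditionally on $\hat{\gamma}$ therefore gives constants $c, c_0 > 0$ (depending on $m$ and $\sigma_Z$) such that
\[
\pr\!\left(\tfrac{1}{n}\sum_{i=1}^n (X_i^T(\hat{\gamma}-\gamma))^4 > c \|\hat{\gamma}-\gamma\|_2^4 \,\Big|\, \hat{\gamma}\right) \leq c_0 n^{-m}.
\]

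On the event $\{\|\hat{\gamma}-\gamma\|_2 \leq c_\gamma\sqrt{s\log(p)/n}\}$ we have $\|\hat{\gamma}-\gamma\|_2^4 \leq c_\gamma^4 s^2 (\log p)^2 / n^2$, so the conditional bound above controls $\T_2^c(c_1)$ provided we set $c_1 := c_\gamma^2\sqrt{c}$. Integrating out $\hat{\gamma}$ and using the tower property yields
\[
\pr(\T_2^c(c_1) \cap \Omega) \leq \E\!\left[\ind_{\{\|\hat{\gamma}-\gamma\|_2 \leq c_\gamma\sqrt{s\log(p)/n}\}} \pr(\T_2^c(c_1) \mid \hat{\gamma})\right] \leq c_0 n^{-m},
\]
which is the claim with $c' := c_0$. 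The proof is short because the independence of $\hat{\gamma}$ from the main sample is what allows us to treat $\hat{\gamma}-\gamma$ as a fixed vector inside Lemma~\ref{lem:prop3}; the only mild subtlety is that one must restrict to the high-probability event controlling $\|\hat{\gamma}-\gamma\|_2$ \emph{before} conditioning, so that the deterministic bound on $\|v\|_2^4$ is available when integrating the conditional inequality.
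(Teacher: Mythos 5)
Your proof is correct and follows essentially the same route as the paper's: apply Lemma~\ref{lem:prop3} to $v=\hat{\gamma}-\gamma$ and intersect with the $\ell_2$-bound on $\|\hat{\gamma}-\gamma\|_2$ contained in $\Omega$. You are in fact slightly more careful than the paper on two minor points --- you justify applying Lemma~\ref{lem:prop3} to the random vector $\hat{\gamma}-\gamma$ by conditioning on $\hat{\gamma}$ (which is legitimate since $\hat{\gamma}$ is built from the auxiliary sample and is independent of $\mb X$), and your choice $c_1=c_\gamma^2\sqrt{c}$ is the one that actually makes the inclusion work.
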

\begin{proof}
	Applying Lemma~\ref{lem:prop3} with $v=\gamma - \hat{\gamma}$ we have that there exists $c, c'$ such that
	\[
	\pr \bigg( \frac{1}{n} \sum_{i=1}^n \{X_i^\top (\hat{\gamma} - \gamma)\}^4 > c \|\hat{\gamma} - \gamma\|_2^4 \bigg) \leq \frac{c'}{n^m}.
	\]
	Thus setting $c_1 = c c_\gamma$, we have
	\begin{align*}
		\T_2^c(c_1) \cap \Omega(c_\gamma, c_{\tilde{\mu}}, c_{\hat{\pi}}) &=\bigg\{ \frac{1}{n}\sum_{i=1}^n \{X_i^\top (\hat{\gamma} - \gamma)\}^4 > c_\gamma^2 c^2 s^2\frac{(\log p)^2}{n^2}\bigg\} \cap \Omega(c_\gamma, c_{\tilde{\mu}}, c_{\hat{\pi}}) \\
		&\subseteq  \bigg\{ \frac{1}{n} \sum_{i=1}^n \{X_i^\top (\hat{\gamma} - \gamma)\}^4 > c \|\hat{\gamma} - \gamma\|_2^4  \bigg\},
	\end{align*}
	which implies the result.
\end{proof}

\subsubsection{Bounds relating to $\T_3$}
\begin{lemma} \label{lem:feasibility}
	Given $c_{\gamma}, c_{\tilde{\mu}}, c_{\hat{\pi}} > 0$ and $m \in \mathbb{N}$, there exist $c > 0$ and 
	$c_\eta > 0$ 
	such that, 
	\[
	\pr(\{ \tilde{\mb Y} \text{ is not a feasible solution to } \eqref{eq:mu_hat}\} \cap \Omega) \leq c p^{-m}.
	\]
\end{lemma}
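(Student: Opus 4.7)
The plan is to show that the feasibility constraint in \eqref{eq:mu_hat} evaluated at $\mbb\mu = \tilde{\mb Y}$, namely
\[
\Big\|\frac{1}{n_A}\mb X_A^T \{\tilde{\mb Y}_A - \tilde{f}(\mb X_A)\} - \frac{1}{n}\mb X^T \{\tilde{\mb Y} - \tilde{f}(\mb X)\} \Big\|_\infty \leq \eta,
\]
is exactly a statement that two empirical means of the same i.i.d.\ sequence agree to order $\sqrt{\log(p)/n}$. First, I would condition on the auxiliary dataset $(\mb X_B, \mb Y_B, \mb T_B)$ together with $\mb X_A$, which fixes $\hat{\gamma}$ (and hence $\hat{\pi}(\cdot)$ appearing in the definition of $\tilde Y_i$) and $\tilde{f}$. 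On the event $\Omega(c_\gamma, c_{\tilde f}, c_{\hat\pi})$ we have bounded estimated propensities $\hat{\pi}_i \in [c_{\hat\pi}, 1-c_{\hat\pi}]$, sub-Gaussian $\tilde f(X)$ with conditional mean bounded by $c_{\tilde f}$, sub-Gaussian covariates $X$ by Assumption~\ref{as:subgX}, and sub-Gaussian responses $Y$ by Assumption~\ref{as:subgY}. The two averages in question both have expectation $\E[X_j\{\tilde Y - \tilde f(X)\} \mid \tilde f, \hat\gamma]$, so the difference has mean zero.

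Next I would show that each summand $X_{ij}\{\tilde Y_i - \tilde f(X_i)\}$ is sub-exponential (conditional on $\tilde f$ and $\hat\gamma$) with a norm bounded by a constant depending only on $\sigma_Y, \sigma_Z, m_Y, c_{\hat\pi}, c_{\tilde f}$. Indeed, from the definition \eqref{eq:tildeY} of $\tilde Y_i$, the bound $|\tilde Y_i| \leq |Y_i|\max\{(1-c_{\hat\pi})/c_{\hat\pi}, c_{\hat\pi}/(1-c_{\hat\pi})\}$ is immediate, so $\tilde Y_i$ is sub-Gaussian; combined with sub-Gaussianity of $X_{ij}$ and of $\tilde f(X_i)$, the product is sub-exponential via the standard product-of-sub-Gaussians argument.

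Then, for each fixed $j \in \{1, \ldots, p\}$, I would apply Bernstein's inequality to the zero-mean sub-exponential i.i.d.\ sums over $\mb X$ and over $\mb X_A$ separately, obtaining for each tail probability $\leq c' \exp(-c \min(n, n_A) t^2)$ for all $t$ up to a constant. Taking a union bound over $j = 1,\ldots, p$ and choosing $t = c_\eta \sqrt{\log(p)/\min(n, n_A)}/2$ with $c_\eta$ large enough, so that both averages are within $c_\eta\sqrt{\log(p)/n}/2$ of their common mean with probability at least $1 - cp^{-m}$ conditional on $\Omega$ (using $n_A \gtrsim n$ implicit in the setup of Section~\ref{sec:basic}), gives the required bound by the triangle inequality. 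Taking an unconditional expectation yields the claim.

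The main technical care is in verifying the sub-exponential norm of $X_{ij}\{\tilde Y_i - \tilde f(X_i)\}$ uniformly on $\Omega$: because $\hat\pi_i$ is a random function of the auxiliary data, this norm is random in general, but the uniform overlap condition (ii) of $\Omega$ and the conditional sub-Gaussianity (iii) of $\tilde f(X)$ on $\Omega$ are precisely what is needed to bound it by a deterministic constant. Once this is established, the rest reduces to a standard two-sample concentration argument.
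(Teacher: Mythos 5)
Your proposal is correct and follows essentially the same route as the paper's proof: conditioning on $\Omega$ to obtain deterministic bounds on $\hat{\pi}_i$ and the conditional sub-Gaussianity of $\tilde{f}(X)$, establishing that each summand $X_{ij}\{\tilde{Y}_i - \tilde{f}(X_i)\}$ satisfies Bernstein's condition as a product of sub-Gaussian variables, and then applying Bernstein's inequality with a union bound over $j=1,\ldots,p$ and the choice $\eta = c_\eta\sqrt{\log(p)/n}$. The only cosmetic difference is that you bound the two sample averages separately and combine via the triangle inequality, whereas the paper applies the concentration bound directly to the difference of the two summands.
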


\begin{proof}
	We define the events $\tilde{\Omega}_1 (c_\gamma, c_{\tilde{\mu}}) := \{\text{(i) and (iii) of } \Omega((c_\gamma, c_{\tilde{\mu}}, c_{\hat{\pi}})\}$ and $\tilde{\Omega}_2 (c_{\hat{\pi}}) := \{\text{(ii) of } \Omega(c_\gamma, c_{\tilde{\mu}}, c_{\hat{\pi}})\}$. Then we can prove the desired statement via proving that there exists $c, c_{\eta} > 0$ such that on $\tilde{\Omega}_1$, writing 
	$\mathcal{A} := \{\tilde{\mb Y} \text{ is not a feasible solution to } \eqref{eq:mu_hat}\}$,
	\[
	\pr(\mathcal{A} \cap \tilde{\Omega}_2 \mid \mathcal{D}_B) \leq c p^{-m}.
	\]
	Indeed, taking expectations and noting that $\tilde{\Omega}_1$ is $\mathcal{D}_B$-measurable, we would have
	\begin{align*}
		cp^{-m} &\geq \E [\pr(\mathcal{A} \cap \tilde{\Omega}_2 \mid \mathcal{D}_B) \ind_{\tilde{\Omega}_1}] \\
		&= \E [\E\{\E(\ind_{\{\mathcal{A} \cap \tilde{\Omega}_2\}} \mid \mathcal{D}_B) \mid \ind_{\tilde{\Omega}_1}\} \ind_{\tilde{\Omega}_1}] \\
		&= \E [\pr(\mathcal{A} \cap \tilde{\Omega}_2 \mid \tilde{\Omega}_1) \ind_{\tilde{\Omega}_1}] \\
		&= \pr(\mathcal{A} \cap \Omega).
	\end{align*}
	Now let $\check{Y}_i := \tilde{Y}_i \mathbbm{1}_{\{c_{\hat{\pi}}\le \hat{\pi}_i \le 1 - c_{\hat{\pi}}\}}, \check{\mb Y} := (\check{Y}_i)_{i=1}^n$ and define $\check{\mb Y}_A$ analogously. To show the above, we prove the above claim via proving a stronger claim that on $\tilde{\Omega}_1$,
	\[
	\pr\left(\frac{1}{n} \|\mb X_A^\top \{\check{\mb Y}_A - \tilde{\mu}(\mb X_A)\} - \mb X^\top \{\check{\mb Y} - \tilde{\mu}(\mb X)\} \|_\infty \ge c_\eta \newrev{c_{\hat{\pi}}^{-1}} \sqrt{\frac{\log p}{n}} \mid \mathcal{D}_B \right) \le c p^{-m}.
	\]	
	Note first that for each fixed $j$, conditional on $\mathcal{D}_B$ the
	\begin{equation} \label{eq:bern_summand}
		X_{A,ij}(\check{Y}_{A,i} - \tilde{\mu}(X_{A,i})) - X_{ij}(\check{Y}_i - \tilde{\mu}(X_i))
	\end{equation}
	are i.i.d.\ for $i=1,\ldots,n$ and mean-zero (i.e.\ their expectation conditional on $\mathcal{D}_B$ is zero). On $\tilde{\Omega}_1$ and conditional on $\mathcal{D}_B$, each $X_{ij}$, $\check{Y}_i$, $\tilde{\mu}(X_i)$ and their counterparts on $\mathcal{D}_A$ are sub-Gaussian. Moreover, on $\tilde{\Omega}_1$ and  conditional on $\mathcal{D}_B$, each of their means are bounded. 
	\newrev{Hence by \citet[Lem.~2.7.7 and Prop. 2.7.1]{vershynin2018high}, on $\tilde{\Omega}_1$ and  conditional on $\mathcal{D}_B$, the quantity in \eqref{eq:bern_summand} is a sub-exponential random variable with sub-exponential norm $c_{\hat{\pi}}^{-1} K$, where $K$ is some constant. Thus applying a union bound, \citet[Proposition 2.7.1]{vershynin2018high} and Bernstein's inequality of the form given by \citet[Equation (2.18)]{wainwright_2019} yields that on $\tilde{\Omega}_1$,
		\begin{align*} 
			& \pr \bigg( \frac{1}{n} \|\mb X_A^\top \{\check{\mb Y}_A - \tilde{\mu}(\mb X_A)\} - \mb X^\top \{\check{\mb Y} - \tilde{\mu}(\mb X)\} \|_\infty \geq t  \,|\, \mathcal{D}_B \bigg) \\
			& \leq 2p \exp \left( - \min\left(\frac{n t^2}{2 c_{\hat{\pi}}^{-2}K^2}, \frac{n t}{2 c_{\hat{\pi}}^{-1}K}\right)\right).
		\end{align*}
		Then substituting $t = \eta = c_\eta c_{\hat{\pi}}^{-1} \sqrt{ \log p / n}$ shows that on $\tilde{\Omega}_1$ we have
		\begin{align*}
			&\; \pr \bigg( \frac{1}{n} \|\mb X_A^\top \{\check{\mb Y}_A - \tilde{\mu}(\mb X_A)\} - \mb X^\top \{\check{\mb Y} - \tilde{\mu}(\mb X)\} \|_\infty < c_\eta c_{\hat{\pi}}^{-1} \sqrt{ \log p / n}  \mid \mathcal{D}_B \bigg) \\
			& \; \geq 1 - 2\exp\left( - \min\left(\frac{c_\eta^2 \log p}{2 K^2}, \frac{c_\eta \sqrt{n \log p}}{2 K}\right) + \log p \right).
		\end{align*}
		Thus provided $c_\eta^2 / (2 K^2) \geq m+1$, we have from Assumption~\ref{as:p_large} that for $n$ sufficiently large, the last display above is at least $1-2p^{-m}$, as required.}
\end{proof}

\begin{lemma} \label{lem:program}
	Given $c_{\gamma}, c_{\tilde{\mu}}, c_{\hat{\pi}}, c_Y > 0$ and $m \in \mathbb{N}$, there exist constants $c, c_{\hat{\mu}}, c_2 > 0$ such that
	\begin{align}
		\pr\bigg(\Big\{\frac{1}{n} \|\hat{\mbb\mu}\|_2^2 > \newrev{c_{\hat{\pi}}^{-2}} c_{\hat{\mu}}\Big\} \cap \T_4 \cap \Omega \bigg) &\leq e^{-cn} \label{eq:mu_bd} \\
		\pr\bigg( \bigg\{ \frac{1}{n}\|\mb X^\top (\hat{\mbb\mu} - \tilde{\mb Y})\|_\infty >  \newrev{c_{\hat{\pi}}^{-1}} c_2 \sqrt{\frac{\log p}{n}} \bigg\} \cap \Omega \bigg) &\leq c p^{-m}. \label{eq:tildeY_bd}
	\end{align}
\end{lemma}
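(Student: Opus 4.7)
The plan is to leverage the fact that $\hat{\mbb\mu}$ is the $\ell_2$-closest point to $\tilde f(\mb X)$ in the constraint set of \eqref{eq:mu_hat}, together with Lemma~\ref{lem:feasibility} which guarantees that $\tilde{\mb Y}$ lies in that constraint set with conditional probability at least $1-cp^{-m}$ given $\Omega$. Whenever this feasibility event fails, $\hat{\mbb\mu}=\mb 0$ by the convention of \eqref{eq:mu_hat}, so \eqref{eq:mu_bd} is trivial on that event, while for \eqref{eq:tildeY_bd} the infeasible case is absorbed into the $cp^{-m}$ failure budget.

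For \eqref{eq:mu_bd}, on the feasibility event the minimality of $\hat{\mbb\mu}$ combined with the feasibility of $\tilde{\mb Y}$ yields $\|\tilde f(\mb X)-\hat{\mbb\mu}\|_2\leq\|\tilde f(\mb X)-\tilde{\mb Y}\|_2$, and hence by the triangle inequality
\[
\frac{1}{n}\|\hat{\mbb\mu}\|_2^2 \;\leq\; \frac{8}{n}\|\tilde f(\mb X)\|_2^2 + \frac{2}{n}\|\tilde{\mb Y}\|_2^2.
\]
On $\T_4\cap\Omega$, the bound $\hat\pi_i\in[c_{\hat\pi},1-c_{\hat\pi}]$ from $\Omega$ gives $|\tilde Y_i|\leq \bigl((1-c_{\hat\pi})/c_{\hat\pi}\bigr)|Y_i|$ pointwise, so $\|\tilde{\mb Y}\|_2^2/n$ is deterministically bounded by a multiple of $c_Y$. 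For $\|\tilde f(\mb X)\|_2^2/n$, condition (iii) in the definition of $\Omega$ shows that $\tilde f(X_i)^2$ is conditionally sub-exponential given $\tilde f$ with conditional mean at most $2c_{\tilde f}^2$, so Bernstein's inequality applied to the i.i.d.\ sequence $(\tilde f(X_i)^2)_{i=1}^n$ yields $\|\tilde f(\mb X)\|_2^2/n\leq 4c_{\tilde f}^2$ with conditional probability at least $1-e^{-cn}$; integrating over $\tilde f$ on $\Omega$ and combining with the bound on $\|\tilde{\mb Y}\|_2^2/n$ completes the proof.

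For \eqref{eq:tildeY_bd}, on the feasibility event we write
\begin{align*}
\frac{1}{n}\mb X^T(\hat{\mbb\mu}-\tilde{\mb Y}) = &\;\Bigl(\tfrac{1}{n}\mb X^T\{\hat{\mbb\mu}-\tilde f(\mb X)\}-\tfrac{1}{n_A}\mb X_A^T\{\tilde{\mb Y}_A-\tilde f(\mb X_A)\}\Bigr) \\
& \;-\;\Bigl(\tfrac{1}{n}\mb X^T\{\tilde{\mb Y}-\tilde f(\mb X)\}-\tfrac{1}{n_A}\mb X_A^T\{\tilde{\mb Y}_A-\tilde f(\mb X_A)\}\Bigr).
\end{align*}
The first bracket has $\ell_\infty$-norm at most $\eta$ since $\hat{\mbb\mu}$ satisfies the constraint of \eqref{eq:mu_hat}, and the second is at most $\eta$ precisely because $\tilde{\mb Y}$ is feasible. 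The triangle inequality then gives $\|\mb X^T(\hat{\mbb\mu}-\tilde{\mb Y})\|_\infty/n\leq 2\eta=2c_\eta\sqrt{\log p/n}$, so \eqref{eq:tildeY_bd} holds with $c_2=2c_\eta$ on the feasibility event, and the complement contributes only $cp^{-m}$ by Lemma~\ref{lem:feasibility}.

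The main technical step is the sub-exponential concentration of $\|\tilde f(\mb X)\|_2^2/n$ needed for \eqref{eq:mu_bd}; this is routine via Bernstein but requires translating condition (iii) of $\Omega$ (conditional sub-Gaussianity with bounded conditional mean of $\tilde f(X)$) into the Orlicz-norm parameters of the squared variables. Everything else is a direct consequence of the projection-type inequality and the two-bracket decomposition above.
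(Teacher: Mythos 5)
Your argument follows the paper's proof essentially step for step: for \eqref{eq:tildeY_bd} you use exactly the same two-constraint triangle inequality giving $2\eta = 2c_\eta\sqrt{\log p/n}$ on the event that $\tilde{\mb Y}$ is feasible, and for \eqref{eq:mu_bd} you use the same projection inequality $\|\tilde f(\mb X)-\hat{\mbb\mu}\|_2\leq\|\tilde f(\mb X)-\tilde{\mb Y}\|_2$, the same pointwise bound $|\tilde Y_i|\leq \{(1-c_{\hat\pi})/c_{\hat\pi}\}|Y_i|$ on $\Omega$ combined with $\T_4$, and the same Bernstein argument for $\|\tilde f(\mb X)\|_2^2/n$ via condition (iii) of $\Omega$.

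The one step that does not hold as written is the claim that ``whenever this feasibility event fails, $\hat{\mbb\mu}=\mb 0$ by the convention of \eqref{eq:mu_hat}.'' The convention sets $\hat{\mbb\mu}=\mb 0$ only when the \emph{entire} constraint set is empty; it is perfectly possible for $\tilde{\mb Y}$ to violate the constraint while other points satisfy it, in which case $\hat{\mbb\mu}$ is a genuine minimiser and neither the zero convention nor the projection bound $\|\tilde f(\mb X)-\hat{\mbb\mu}\|_2\leq\|\tilde f(\mb X)-\tilde{\mb Y}\|_2$ is available, so \eqref{eq:mu_bd} is not ``trivial on that event.'' The repair is simply to absorb this bad event into the failure probability: by Lemma~\ref{lem:feasibility} it has conditional probability at most $cp^{-m}$ given $\Omega$, so what you actually prove is a bound of $e^{-cn}+cp^{-m}$ rather than $e^{-cn}$ for \eqref{eq:mu_bd}. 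This is in fact all the paper's own proof establishes (it works throughout on $\Omega\cap\A_1\cap\T_4$ where $\A_1$ is the feasibility event, and the stated $e^{-c'n}$ conclusion is loose on the same point), and the weaker bound suffices for every downstream use of \eqref{eq:mu_bd}, all of which tolerate $p^{-m}+n^{-m}$ slack. With that correction your proof is sound and is the same argument as the paper's.
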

\begin{proof}
	We show \eqref{eq:tildeY_bd} first. Let $\A_1 := \{\tilde{\mb Y} \text{ is feasible for } \eqref{eq:mu_hat_opt}\}$. On the event $\A_1$,
	\begin{align*}
		\frac{1}{n} \|\mb X_A^\top \{\tilde{\mb Y}_A - \tilde{\mu}(\mb X_A)\} - \mb X^\top \{\tilde{\mb Y} - \tilde{\mu}(\mb X)\} \|_\infty  &\leq c_\eta \newrev{c_{\hat{\pi}}^{-1}} \sqrt{\frac{\log p}{n}} \\
		\frac{1}{n} \|\mb X_A^\top \{\tilde{\mb Y}_A - \tilde{\mu}(\mb X_A)\} - \mb X^\top \{\hat{\mbb \mu} - \tilde{\mu}(\mb X)\} \|_\infty  &\leq c_\eta \newrev{c_{\hat{\pi}}^{-1}} \sqrt{\frac{\log p}{n}} .
	\end{align*}
	Thus by the triangle inequality,
	\[
	\frac{1}{n} \|\mb X^\top (\hat{\mbb\mu} - \tilde{\mb Y})\|_\infty \leq 2 c_{\eta} \newrev{c_{\hat{\pi}}^{-1}} \sqrt{\frac{\log p}{n}} .
	\]
	From Lemma~\ref{lem:feasibility}, we have that there exists constant $c >0$ and a choice of $c_\eta$ such that $\pr(\A_1^c \cap \Omega) \le cp^{-m}$. Thus
	\[
	\pr\bigg( \Big\{ \frac{1}{n} \|\mb X^\top (\hat{\mbb\mu} - \tilde{\mb Y})\|_\infty > 2 c_{\eta} \newrev{c_{\hat{\pi}}^{-1}}\sqrt{\frac{\log p}{n}} \Big\} \cap \Omega\bigg) \leq c p^{-m},
	\]
	which shows \eqref{eq:tildeY_bd}. To show \eqref{eq:mu_bd} we first observe that if $\tilde{\mb Y}$ is feasible, then
	\begin{align*}
		\|\hat{\mbb\mu}\|_2^2 & \leq 
		2 \|\hat{\mbb\mu} - \tilde{\mu}(\mb X)\|_2^2 + 2 \|\tilde{\mu}(\mb X)\|_2^2 \leq 2 \|\tilde{\mb Y} - \tilde{\mu}(\mb X)\|_2^2 + 2 \|\tilde{\mu}(\mb X)\|_2^2 \\
		&\leq 4 \|\tilde{\mb Y}\|_2^2 + 12 \|\tilde{\mu}(\mb X) - \E (\tilde{\mu}(\mb X) |\tilde{\mu}) \|_2^2 + 12\|\E (\tilde{\mu}(\mb X) |\tilde{\mu})\|_2^2.
	\end{align*}
	\newrev{Now on $\Omega$ we have $|\tilde{Y}_i| \leq c_{\hat{\pi}}^{-1} |Y_i|$ for all $i$.}
	Thus on the event $\Omega \cap \A_1 \cap \T_4$
	\[
	\frac{4}{n} \|\tilde{\mb Y}\|_2^2 + \frac{12}{n}\|\E (\tilde{\mu}(\mb X) |\tilde{\mu})\|_2^2 \leq 4\newrev{c_{\hat{\pi}}^{-2}}c_Y + 12 c_{\tilde{\mu}}^2.
	\]
	Conditional on $\Omega$, $\tilde{\mu}(X_i) - \E(\tilde{\mu}(X_i)| \tilde{\mu})$ is mean-zero and sub-Gaussian with variance proxy $c_{\tilde{\mu}}^2$. Thus its square satisfies Bernstein's condition with parameters $(8c_{\tilde{\mu}}^2, 4 c_{\tilde{\mu}}^2)$ \citep[Lem.~2.7.7]{vershynin2018high}, and hence Bernstein's inequality yields
	\[
	\pr\bigg( \frac{1}{n} \|\tilde{\mu}(\mb X) - \E (\tilde{\mu}(\mb X) |\tilde{\mu}) \|_2^2  \geq m_{\tilde{\mu}} + t  \mid \Omega \bigg) \leq \exp \left( - \frac{nt^2}{2(8 c_{\tilde{\mu}}^2 + 4 c_{\tilde{\mu}}^2 t)} \right)
	\]
	for $t \geq 0$, 
	where $m_{\tilde{\mu}} := \E \{\tilde{\mu}(X_i) - \E(\tilde{\mu}(X_i)| \tilde{\mu})\}^2$.  Note that on $\Omega$, we have $m_{\tilde{\mu}} \leq c_{\tilde{\mu}}^2$. Thus putting things together, we see that for $c_{\hat{\mu}} > 4c_Y + 24c_{\tilde{\mu}}^2$, there exists $c'>0$ such that
	\[
	\pr\bigg(\Big\{\frac{1}{n} \|\hat{\mbb\mu}\|_2^2 > \newrev{c_{\hat{\pi}}^{-2}} c_{\hat{\mu}} \Big\} \cap \T_4 \cap \Omega \bigg) \leq e^{-c'n}
	\]
	as required.
\end{proof}

\subsubsection{Bounds relating to $\T_4$}
\begin{lemma}\label{lem:t4}
	There exist constants $c, c_Y > 0$ such that
	\[
	\pr(\|{\mb Y}\|_2^2 /n\geq c_Y) \leq e^{-cn}.
	\]
\end{lemma}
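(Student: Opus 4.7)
The plan is a short sub-exponential concentration argument. Since $T_i \in \{0,1\}$, we have the pointwise bound
\[
Y_i^2 \;=\; T_i Y_i(1)^2 + (1-T_i) Y_i(0)^2 \;\leq\; Y_i(0)^2 + Y_i(1)^2,
\]
so it suffices to control $\frac{1}{n}\sum_{i=1}^n \{Y_i(0)^2 + Y_i(1)^2\}$, whose summands are i.i.d.\ and do not depend on the treatment assignment.

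By Assumption~\ref{as:subgY}, each $Y_i(t) - \E Y_i(t)$ is sub-Gaussian with variance proxy $\sigma_Y^2$, and $|\E Y_i(t)| \leq m_Y$. Consequently $Y_i(t)$ is sub-Gaussian with finite mean and $Y_i(t)^2$ is sub-exponential; in particular there is a constant $M > 0$ depending only on $\sigma_Y$ and $m_Y$ with $\E\{Y_i(0)^2 + Y_i(1)^2\} \leq M$, and the centred variables $W_i := Y_i(0)^2 + Y_i(1)^2 - \E\{Y_i(0)^2 + Y_i(1)^2\}$ satisfy Bernstein's condition with parameters $(\nu, b)$ depending only on $\sigma_Y$ and $m_Y$ (see e.g.\ Lemma~2.7.7 of \citet{vershynin2018high}).

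Bernstein's inequality then yields, for any $t > 0$,
\[
\pr\bigg(\frac{1}{n}\sum_{i=1}^n W_i \geq t\bigg) \;\leq\; \exp\!\bigg(-\frac{n t^2}{2(\nu + b t)}\bigg).
\]
Choosing $c_Y := M + 1$ and applying the bound with $t = 1$ gives
\[
\pr\bigg(\frac{1}{n}\|\mb Y\|_2^2 \geq c_Y\bigg) \;\leq\; \pr\bigg(\frac{1}{n}\sum_{i=1}^n W_i \geq 1\bigg) \;\leq\; e^{-c n}
\]
for a constant $c > 0$ depending only on $\sigma_Y$ and $m_Y$, completing the proof.

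There is no real obstacle: the only thing to be slightly careful about is that we do not attempt to invoke sub-Gaussianity of the \emph{observed} $Y_i$ directly (its distribution depends on $T_i$ and hence on $X_i$ in a complicated way through $\pi$); the device of dominating $Y_i^2$ by $Y_i(0)^2 + Y_i(1)^2$ circumvents this entirely and reduces the problem to a textbook application of Bernstein.
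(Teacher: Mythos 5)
Your proof is correct and follows essentially the same route as the paper's: both reduce the claim to Bernstein's inequality applied to sub-exponential variables obtained by squaring the sub-Gaussian quantities supplied by Assumption~\ref{as:subgY}. The only (cosmetic) difference is that you dominate $Y_i^2$ pointwise by $Y_i(0)^2 + Y_i(1)^2$, whereas the paper writes $\|\mb Y\|_2^2/n \leq 2m_Y^2 + 2\|\mb Y - \E \mb Y\|_2^2/n$ and uses that the observed $Y_i - \E Y_i$ is itself sub-Gaussian --- which does hold, since $Y - \E Y$ coincides with one of $Y(t) - \E Y$ for $t \in \{0,1\}$ and so its sub-Gaussian norm is controlled by the triangle inequality --- so your caution about the observed $Y_i$ is unnecessary, though your workaround is equally valid.
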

\begin{proof}
	We have that
	\[
	Y_i^2 \leq \max_{t=0,1} Y_i^2(t) \leq 2\{Y_i(0) - \E Y_i(0)\}^2 + 2\{Y_i(1) - \E Y_i(1)\}^2 + 2m_Y^2.
	\]
	Thus
	\[
	\frac{1}{2 n}\|\mb Y\|_2^2 \leq m_Y^2 + \frac{1}{n}\sum_{i=1}^n \{Y_i(0) - \E Y_i(0)\}^2 + \frac{1}{n}\sum_{i=1}^n  \{Y_i(1) - \E Y_i(1)\}^2.
	\]
	As each $Y_i(t) - \E Y_i(t)$ is mean-zero and sub-Gaussian (Assumption~\ref{as:subgY}) their square is is sub-expoential \citep[Lem.~2.7.7]{vershynin2018high}. The result then follows from Bernstein's inequality.
\end{proof}

\section{Proof of Theorem~\ref{thm:split}}
First note that
\[
\bar{\tau} - \tau = \frac{1}{n} \sum_{i=1}^n \bigg(r_1(X_i) - r_0(X_i) - \tau\bigg).
\]
By combining this with the decomposition of $\hat{\tau}_\dipw - \bar{\tau}$ in~\eqref{eq:decomp}, we obtain
\begin{align}\label{eq:splitdecomp}
	\hat{\tau}_\dipw - \tau & = \underbrace{Q_A + Q_B}_{=n^{-1/2}\delta} + \underset{=: Q_{C1}}{\underbrace{\frac{1}{n} \sum_{i=1}^n \bigg(\frac{r_{1,i}}{\pi_i} + \frac{r_{0,i}}{1 - \pi_i} - \frac{\hat{\mu}_i}{\pi_i (1 - \pi_i)}\bigg) (T_i - \pi_i)}}  \\
	& \quad + \underset{=: Q_{C2}}{\underbrace{\frac{1}{n} \sum_{i=1}^n \bigg(\frac{T_i \varepsilon_{1,i}}{\pi_i} - \frac{(1 - T_i) \varepsilon_{0,i}}{1 - \pi_i} + r_1(X_i) - r_0(X_i) - \tau\bigg)}},\notag
\end{align}
where $Q_A$ and $Q_B$ are defined  as in~\eqref{eq:decomp}.
We may thus take $\delta$ in Theorem~\ref{thm:split} as the same as that from Theorem~\ref{thm:dipw}, so (i) of Theorem~\ref{thm:split} follows directly from (i) of Theorem~\ref{thm:dipw}.
For (ii), we assign $\zeta_1 = \sqrt{n}Q_{C1} / \sigma_\mu$ and $\zeta_2 = \sqrt{n}Q_{C2}/\sigma$.
To prove the required property of $\zeta_1$ thus defined, we follow the argument of Lemma~\ref{lem:qc} with $u_i$ redefined as
\[
u_i := \bigg(\frac{r_{1,i}}{\pi_i} + \frac{r_{0,i}}{1 - \pi_i} - \frac{\hat{\mu}_i}{\pi_i (1 - \pi_i)}\bigg) (T_i - \pi_i).
\]
To prove the required property of $\zeta_2$, we apply the Berry--Esseen theorem~\citep{Esseen42} to the mean zero i.i.d.\ random variables
\[
\bigg(\frac{T_i \varepsilon_{1,i}}{\pi_i} - \frac{(1 - T_i) \varepsilon_{0,i}}{1 - \pi_i} + r_1(X_i) - r_0(X_i) - \tau \bigg).
\]
Turning to property (iii), we have
\begin{align*}
	\E (r_t(X_i)(T_i-\pi_i)r_{t'}(X_i) \,|\,\mathcal{D}) = r_t(X_i)r_{t'}(X_i) \E ((T_i-\pi_i) \,|\,\mathcal{D}) = 0
\end{align*}
for all $t, t' \in \{0,1\}$. Also 
\begin{align*}
	\E (r_t(X_i)(T_i-\pi_i)\varepsilon_{i}(t') \,|\,\mathcal{D}) &= \E \{\E (r_t(X_i)(T_i-\pi_i)\varepsilon_{i}(t') \,|\,\mathcal{D}, Y_i) \, |\, \mathcal{D}\}  \\
	&=  \E \{\varepsilon_{i}(t') r_t(X_i) \E (T_i-\pi_i \,|\,\mathcal{D}) \, |\, \mathcal{D}\} =0,
\end{align*}
using Assumption~\ref{as:unconfounded} for the final equality. \qed

\section{Confidence intervals for $\tau$} \label{sec:CI_tau}
\newrev{
	Here we provide a confidence interval for $\tau$. Recall that without making assumptions on the outcome regrssion model which would allow $\hat{\mbb \mu} - \mbb\mu_{\ora}$ to be small, $\hat{\tau}_{\dipw}$ does not have a Gaussian distribution, but rather its distribution is expected to be a mixture of Gaussians (see Theorem~\ref{thm:split}). This makes the construction of a confidence interval challenging, but we can provide a conservative confidence interval, that contracts at the optimal $1/\sqrt{n}$ rate.
	
	To see how this may work, our goal is to first provide a valid confidence interval for $\bar{\tau}$; then using that $\sqrt{n}(\bar{\tau} - \tau)$ is also asymptotically normal, we can construct a valid confidence interval for $\tau$ using a union bound. To see how to construct an interval for $\bar{\tau}$, since by Theorem~\ref{thm:dipw}, $\sqrt{n}(\hat{\tau}_\dipw - \bar{\tau})$ is approximately normal conditional on $\{{\mbb X}, \mathcal{D}_A, \mathcal{D}_B\}$, provided in addition that Assumptions~\ref{as:sparsity} and~\ref{as:subeps} hold. Then we may construct an interval for $\bar{\tau}$ via estimating an upper bound of our estimator's conditional variance:
	\[
	\hat{\sigma} := \frac{1}{n}\sum_{i=1}^n \left(\frac{T_i(Y_i - \hat{\mu}_i)}{\hat{\pi}_i} - \frac{(1 - T_i)(Y_i - \hat{\mu}_i)}{1 - \hat{\pi}_i} - \hat{\tau}_{\dipw}\right)^2.
	\]
	At the same time, we also need to estimate an upper bound of the variance of $\sqrt{n}(\bar{\tau} - \tau)$, which is based on an estimate of the variance of the IPW estimator:
	\[
	\hat{\sigma}^2_\ipw := \frac{1}{n} \sum_{i=1}^n \left(\frac{T_iY_i}{\hat{\pi}_i} - \frac{(1 - T_i)Y_i}{1 - \hat{\pi}_i} - \hat{\tau}_{\ipw}\right)^2.
	\]
	Combining them together, we can construct a conservative confidence interval via the following approach:
	\[
	\hat{C}_{\alpha} := \Big[\hat{\tau}_{\dipw} - \frac{\hat{\sigma} + \hat{\sigma}_\ipw}{\sqrt{n}}z_{\alpha/2}, \, \hat{\tau}_{\dipw}+ \frac{\hat{\sigma} + \hat{\sigma}_\ipw}{\sqrt{n}}z_{\alpha/2} \Big].
	\] 
	where recall that $z_\alpha$ is the upper $\alpha/2$ point of a standard Gaussian distribution. 
	\begin{theorem}\label{thm:conf_tau}
		Consider the setup of Theorem~\ref{thm:dipw} but additionally suppose Assumptions~\ref{as:sparsity} and \ref{as:subeps} hold. Then there exists a constant $c_{\zeta}>0$ such that for all $\alpha \in (0, 1]$,
		\begin{align*}
			&\pr(\tau \in \hat{C}_{\alpha}) \geq 1 - \alpha  - \pr(\Omega^c(c_{\gamma}, c_{\tilde{\mu}}, c_{\hat{\pi}}))\\
			&\qquad - c_{\zeta} \bigg(\E\{\|\hat{\mbb\mu} - \mbb\mu_{\ora}\|_\infty\ind_{\Omega(c_{\gamma}, c_{\tilde{\mu}}, c_{\hat{\pi}}) }\} \sqrt{\frac{\log n}{n}}  + \frac{\sqrt{ b_n \log p \log n}  + 1}{n^{1/4}} + b_n +  p^{-m} \bigg).
		\end{align*}
	\end{theorem}
	As discussed  following Theorem~\ref{thm:ci_new} and also Theorem~\ref{thm:dipw}, we expect $\E\{\|\hat{\mbb\mu} - \mbb\mu_{\ora}\|_\infty\ind_{\Omega(c_{\gamma}, c_{\tilde{\mu}}, c_{\hat{\pi}}) }\}$ to only grow relatively slowly with $n$, and under rather weak conditions we should have that this is $o(\sqrt{n / \log (n)})$.
}

\section{Proofs of Theorems~\ref{thm:ci_new} and~\ref{thm:conf_tau}} \label{sec:CI_proofs}

\subsection{An intermediate result}

To prove the results on confidence intervals, we first need an intermediate result, which is given in Theorem~\ref{thm:ci}. To prove this theorem, we need some preliminary lemmas. It is helpful to define $\sigma_*^2 := \sigma_{\mu}^2 + \bar{\sigma}^2$,  $\sigma_2^2 := \Var(r_1(X) - r_0(X))$ and $\sigma_+^2 := \sigma_*^2 + \sigma_2^2$.

\begin{lemma}\label{lem:d}
	Consider the setup of Theorem~\ref{thm:conf_tau}. We have that there exists constants $c_1, \ldots, c_4 > 0$
	such that on an event $\Lambda_1 \subseteq \Omega(c_\gamma,c_{\tilde{\mu}},c_{\hat{\pi}})$ with probability at least $1 - \pr(\Omega^c(c_\gamma,c_{\tilde{\mu}},c_{\hat{\pi}})) - c_1 (p^{-m} + n^{-m})$, the random element $\mathcal{D}$ satisfies following inequalities:
	\begin{gather}
		\max_{i=1,\ldots,n}|\hat{\pi}_i - \pi_i| \leq c_2 \frac{\sqrt{b_n \log n}}{n^{\frac{1}{4}}}; \label{eq:d1}\\
		\max_{t=0,1} \frac{1}{n} \sum_{i=1}^n r_t^2 (X_i) \leq c_2; \label{eq:d2}\\ 
		\pr(\|{\mb Y}\|_2^2 / n > c_2 \mid \mathcal{D}) \leq c_3 n^{-m},   \qquad \frac{1}{n} \|\hat{\mbb\mu}\|_2^2 \leq c_2; \label{eq:d3}\\
		\pr\bigg(|\delta| > c_2 (s + \sqrt{s \log n}) \frac{\log p}{\sqrt{n}} \,\Big|\, \mathcal{D}\bigg) \leq c_2 (n^{-m} + p^{-m}); \label{eq:d4} \\
		c_4 \leq \sigma^2_+ \leq c_2, \qquad \bar{\rho}^3 \leq c_2. \label{eq:d5}
	\end{gather}
\end{lemma}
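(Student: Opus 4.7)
The plan is to define $\Lambda_1$ as the intersection of $\Omega(c_\gamma, c_{\tilde f}, c_{\hat\pi})$ with a handful of additional concentration events, each of probability at least $1 - c(p^{-m}+n^{-m})$, after which a union bound yields the claimed probability for $\Lambda_1$. Each of \eqref{eq:d1}--\eqref{eq:d5} is then read off on $\Lambda_1$.

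For \eqref{eq:d1} I would include the sub-Gaussian maximum event $\mathcal{A}_{\max} := \{\max_i |X_i^T(\hat\gamma - \gamma)| \leq c\sqrt{\log n}\,\|\hat\gamma - \gamma\|_2\}$ exactly as in the proof of Lemma~\ref{lem:qb}; $1$-Lipschitzness of $\psi$ then gives $|\hat\pi_i - \pi_i| \leq |X_i^T(\hat\gamma - \gamma)|$, and combining the $\ell_2$-control of $\hat\gamma - \gamma$ from $\Omega$ with the identity $s\log(p)/n = b_n/\sqrt n$ from Assumption~\ref{as:sparsity} yields the rate $\sqrt{b_n \log n}/n^{1/4}$. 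For \eqref{eq:d2}, $r_t(X) - \E r_t(X)$ is sub-Gaussian with parameter $\sigma_Y$ by Jensen applied to the MGF of $Y(t)$, so $r_t(X)^2$ is sub-exponential with bounded mean and a Bernstein bound defines an event on which $\frac{1}{n}\sum r_t^2(X_i) \leq c$. For the second part of \eqref{eq:d3}, I include the $\mathcal{D}$-measurable event $\{\tfrac{1}{n}\|\hat{\mbb\mu}\|_2^2 \leq c_{\hat\mu}\}$, whose complement intersected with $\Omega \cap \T_4$ has probability at most $e^{-cn}$ by Lemma~\ref{lem:program}\eqref{eq:mu_bd}, while $\pr(\T_4^c) \leq e^{-cn}$ by Lemma~\ref{lem:t4}. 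For the first part of \eqref{eq:d3}, a Markov lift turns the unconditional bound $\pr(\|\mb Y\|_2^2/n > c_Y) \leq e^{-cn}$ into $\pr(\pr(\|\mb Y\|_2^2/n > c_Y \,|\, \mathcal{D}) > n^{-m}) \leq n^m e^{-cn}$, defining the corresponding $\mathcal{D}$-measurable event.

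The main obstacle is \eqref{eq:d4}, which requires a conditional-on-$\mathcal{D}$ analogue of the $|Q_A|+|Q_B|$ bounds from Lemmas~\ref{lem:qa}--\ref{lem:qb}. The Hoeffding argument for $Q_B$ transfers verbatim once one conditions on $\mathcal{D}$, because the $T_i - \pi_i$ are independent and bounded given $\mathcal{D}$; the bounds on $Q_{A11}, Q_{A12}, Q_{A2}$ from Lemma~\ref{lem:qa} rely on (i) the $\ell_\infty$ constraint $\|\mb X^T(\tilde{\mb Y} - \hat{\mbb\mu})\|_\infty/n \leq c_2\sqrt{\log p/n}$ from Lemma~\ref{lem:program}\eqref{eq:tildeY_bd}, (ii) $\T_2$, and (iii) control of $\|\mb Y\|_2^2/n$. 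Event $\T_2$ is already $\mathcal{D}$-measurable; for (i) and (iii) I would again use the Markov lift to pass from the unconditional probabilities of order $p^{-m}+n^{-m}$ to $\mathcal{D}$-measurable events on which the corresponding conditional probabilities are at most $p^{-m/2}+n^{-m/2}$, absorbing the loss by rescaling $m$ at the start.

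Finally \eqref{eq:d5} is essentially deterministic on $\Lambda_1$. The upper bound $\bar\sigma^2 \leq c_3$ follows from $\pi_i, 1-\pi_i \geq c_{\hat\pi}$ on $\Omega$ together with a Bernstein concentration of $\frac{1}{n}\sum \E(\varepsilon_i(t)^2|X_i)$ (using sub-Gaussianity of $Y(t)$), and the lower bound $\bar\sigma^2 \geq c_4$ uses $\pi_i, 1-\pi_i \leq 1$ together with the same concentration and Assumption~\ref{as:subeps}. For $\sigma_\mu^2 \leq c_3$, combine $\|\hat{\mbb\mu}\|_2^2/n \leq c_{\hat\mu}$ with $\|\mbb\mu_{\ora}\|_2^2/n \lesssim \sum_{t}\tfrac{1}{n}\sum r_t^2(X_i)$, which follows from the explicit formula $\mu_{\ora,i} = r_1(X_i)(1-\pi_i) + r_0(X_i)\pi_i$ together with \eqref{eq:d2}. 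The bound $\bar\rho^3 \leq c_3$ is entirely analogous to the $\bar\sigma^2$ upper bound, using that sub-Gaussianity of $Y(t)$ gives $\E|\varepsilon(t)|^3 < \infty$. Including these Bernstein events in $\Lambda_1$ and taking a final union bound completes the argument.
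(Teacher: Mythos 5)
Your proposal is correct and follows essentially the same route as the paper: the sub-Gaussian maximum event plus $\Omega$ and Assumption~\ref{as:sparsity} for \eqref{eq:d1}, Jensen plus Bernstein for \eqref{eq:d2}, the Markov lift of unconditional tail bounds to conditional ones for \eqref{eq:d3} and \eqref{eq:d4}, and concentration of $\bar{\sigma}^2$, $\sigma_\mu^2$ and $\bar{\rho}^3$ via sub-Gaussianity of $\varepsilon(t)$ for \eqref{eq:d5}. The only (immaterial) difference is that for \eqref{eq:d4} the paper lifts the single event $\{|\delta| > c_\delta(s+\sqrt{s\log n})\log p/\sqrt n\}$ in one Markov step using the $\mathcal{D}$-measurability of $\Omega$ and a doubled exponent $2m$ in the unconditional bound, whereas you lift each ingredient of the $Q_A, Q_B$ analysis separately; your explicit bound $\|\mbb\mu_{\ora}\|_2^2/n \lesssim \max_t \frac{1}{n}\sum_i r_t^2(X_i)$ via $\mu_{\ora,i}=r_1(X_i)(1-\pi_i)+r_0(X_i)\pi_i$ is actually a welcome clarification of a step the paper leaves implicit.
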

\begin{proof}
	\eqref{eq:d1}: This follows from~\eqref{eq:qb2}, \eqref{eq:qb0}, and  Assumption~\ref{as:sparsity}.
	
	\noindent\eqref{eq:d2}: Fix $t \in \{0,1\}$. By Jensen's inequality $\E(|r_t(X)- \E Y(t) |^l) \leq \E |Y(t) - \E Y(t)|^l$  for $l \geq 1$. Thus, by the characterisation of sub-Gaussian random variables in terms of moments bounds \citep[Thm.~2.6]{wainwright_2019} and Assumption~\ref{as:subgY}, $r_t(X)- \E Y(t)$ is sub-Gaussian and so the desired inequality follows similarly to Lemma~\ref{lem:t4}.
	
	
	\noindent\eqref{eq:d3}: For the first result, note that by Lemma~\ref{lem:t4}, there exists $c$ such that $\pr(\|\mb Y\|_2^2/n > c) \leq n^{-2m}$ for all $n$ sufficiently large.
	Thus by Markov's inequality, for such $n$,
	\[
	\pr\{\pr(\|\mb Y\|_2^2/n > c \,|\, \mathcal{D}) > n^{-m}\} \leq n^m n^{-2m} = n^{-m}.
	\]
	The second result follows from \eqref{eq:mu_bd}.


	\noindent\eqref{eq:d4}: From Theorem~\ref{thm:dipw} and its proof (in particular Lemmas~\ref{lem:qa} and \ref{lem:qb}), we know that there exist constants $c_\delta, c'>0$ such that
	\[
	\pr( \A \cap \Omega ) \leq c' \{\max(n,p)\}^{-2m}
	\]
	where
	\[
	\A:=\bigg\{ |\delta| > c_\delta (s + \sqrt{s \log n}) \frac{\log p}{\sqrt{n}} \bigg\}.
	\]
	Now as $\Omega$ is $\mathcal{D}$-measurable, we have
	\[
	c' \{\max(n,p)\}^{-2m} \geq \E (\ind_{\A} \ind_\Omega  ) = \E \E (\ind_{\A} \ind_\Omega \, | \, \mathcal{D} ) = \E \{\ind_\Omega \E (\ind_{\A}  \, | \, \mathcal{D} )\} = \E \{\ind_\Omega \pr( \A \,|\,\mathcal{D})\},
	\]
	whence by Markov's inequality,
	\[
	\pr(\{\pr( \A \,|\,\mathcal{D}) > \{\max(n,p)\}^{-m}\} \cap \Omega)  = \pr(\ind_\Omega \pr( \A \,|\,\mathcal{D}) > \{\max(n,p)\}^{-m})  \leq c'\{\max(n,p)\}^{-m}.
	\]
	Thus $\pr\{ \pr(\A \,|\,\mathcal{D}) > \{\max(n,p)\}^{-m}\} \leq c'\{\max(n,p)\}^{-m} + \pr(\Omega^c)$.
	
	\noindent\eqref{eq:d5}: Fixing $t \in \{0,1\}$ and writing $a$ for the mean of $Y(t)$, we have that for $l \in \mathbb{N}$
	\begin{align}
		\E \varepsilon(t)^{2l} &= \E \{Y(t) - a + a - \E (Y(t) | X)\}^{2l} \notag\\
		&\leq 2^{2l} \E \{Y(t) - a\}^{2l} + 2^{2l}\E\{\E (Y(t) -a| X)\}^{2l} \notag\\
		&\leq 2 \cdot 2^{2l}\E \{Y(t) - a\}^{2l}, \notag
	\end{align}
	applying Jensen's inequality in the final line. As $Y(t) - a$ is mean-zero and sub-Gaussian we see that $\varepsilon(t)$ is also sub-Gaussian. Thus, we know there exists $c>0$ such that 
	\begin{align} \label{eq:mom2_bd}
		\frac{c^r (2r)!}{2^r r!} \geq \E \varepsilon(t)^{2r} \geq \E[\E\{\varepsilon(t)^2|X\}]^r,
	\end{align}
	using Jensen's inequality for the final inequality.
	A tail bound for averages of independent mean-zero random variables under moment constraints \citep[pg.~55]{wainwright_2019} along with Assumptions~\ref{as:subeps}~and~\ref{as:logistic}
	then show that $\bar{\sigma}^2$ concentrates around its (positive) mean (see also \eqref{eq:mom3_bd}), thereby giving the required high probability lower bound, and a corresponding upper bound. In view of this, for the upper bound, it suffices to argue that $\sigma_\mu^2$ is bounded above with high probability, but this follows from Lemma~\ref{lem:program}. The bound for $\bar{\rho}^3$ follows similarly to those for $\bar{\sigma}^2$. 
\end{proof}

\begin{lemma}\label{lem:sigma}
	Consider the setup of Theorem~\ref{thm:conf_tau} and Lemma~\ref{lem:d}. Given $m \in \mathbb{N}$, there exists constants 
	Given $m \in \mathbb{N}$, there exist constants $c_1, c_2, c_3 > 0$ such that on an event
	$\Lambda_2 \subseteq \Lambda_1 $ with probability at least $1 - \pr(\Lambda_1^c) - c_1 n^{-m}$, we have
	\[
	\pr\bigg(|\hat{\sigma}^2 - \sigma_+^2| \leq c_2 \big(n^{1/4}\sqrt{b_n} + \|\mbb\mu_{\ora} - \hat{\mbb\mu}\|_\infty \big)\sqrt{\frac{ \log n}{ n}}  \,\Big|\, \mathcal{D}\bigg) \geq 1- c_3 n^{-m}.
	\]
\end{lemma}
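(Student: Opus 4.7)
The plan is to reduce $\hat{\sigma}^2$ to an oracle-propensity analogue and then establish conditional concentration of that oracle quantity around $\sigma_*^2$; the two pieces will contribute the two components of the stated rate. Write $v_i$ for the $i$th summand of \eqref{eq:sigma_est} and $w_i$ for its analogue with $\hat{\pi}_i$ replaced by $\pi_i$, and define $\tilde{\sigma}^2_{\ora} := n^{-1}\sum_{i=1}^n (w_i - \bar{w})^2$, where $\bar{w} := n^{-1}\sum_i w_i$. The result then follows by a triangle inequality $|\hat{\sigma}^2 - \sigma_*^2| \leq |\hat{\sigma}^2 - \tilde{\sigma}^2_{\ora}| + |\tilde{\sigma}^2_{\ora} - \sigma_*^2|$ and a union bound.

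For the oracle reduction, decompose $\hat{\sigma}^2 - \tilde{\sigma}^2_{\ora}$ into cross-products of the form $(v_i - w_i)(v_i + w_i)$ and $(\bar{v} - \bar{w})(\bar{v} + \bar{w})$. On $\Lambda_1$, one has $|v_i - w_i| \leq c|\hat{\pi}_i - \pi_i|\,|Y_i - \hat{\mu}_i|$, and applying Cauchy--Schwarz reduces matters to bounding $\max_i |\hat{\pi}_i - \pi_i|^2 \lesssim b_n \log n / \sqrt{n}$ (from \eqref{eq:d1}) and the average $n^{-1}\|\mb Y - \hat{\mbb\mu}\|_2^2$ by a constant (using the conditional tail bound on $\|\mb Y\|_2^2/n$ in \eqref{eq:d3} together with the deterministic control of $\|\hat{\mbb\mu}\|_2^2/n$). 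Combined, these yield the first contribution of order $c_2\, n^{1/4}\sqrt{b_n}\sqrt{\log n/n}$.

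For the concentration step, use the decomposition $w_i = \Delta(X_i) + u_i$ from the proof of Theorem~\ref{thm:dipw}: conditional on $\mathcal{D}$ the $u_i$ are independent with $\E(u_i \mid \mathcal{D}) = 0$, $\Var(u_i \mid \mathcal{D}) = \sigma_i^2$, and $n^{-1}\sum \sigma_i^2 = \sigma_*^2$. Expanding
\[
\tilde{\sigma}^2_{\ora} = \frac{1}{n}\sum_i (u_i - \bar{u})^2 + \frac{2}{n}\sum_i (u_i - \bar{u})(\Delta(X_i) - \bar{\tau}) + \frac{1}{n}\sum_i (\Delta(X_i) - \bar{\tau})^2,
\]
the first two sums have conditional mean $(1-1/n)\sigma_*^2$ and $0$ respectively. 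Since $u_i$ is sub-exponential conditional on $\mathcal{D}$ with sub-exponential norm controlled by $\|\hat{\mbb\mu} - \mbb\mu_{\ora}\|_\infty$ (for the $T_i - \pi_i$ piece) and the sub-Gaussian constants of $\varepsilon_i(t)$ from Assumption~\ref{as:subgY}, both $u_i^2 - \sigma_i^2$ and $u_i(\Delta(X_i) - \bar{\tau})$ are sub-exponential, and Bernstein's inequality gives conditional deviations at rate $c_2\|\hat{\mbb\mu} - \mbb\mu_{\ora}\|_\infty \sqrt{\log n/n}$, matching the second component of the bound.

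The hard part will be handling the remnant $n^{-1}\sum(\Delta(X_i) - \bar{\tau})^2$ that appears in $\E(\tilde{\sigma}^2_{\ora}\mid\mathcal{D})$ but not in $\sigma_*^2$: this is a deterministic (given $\mathcal{D}$) quantity that need not vanish on its own, so reconciliation requires either a finer decomposition that pairs it with the fluctuation terms, or leveraging the centering by $\hat{\tau}_{\dipw}$ in \eqref{eq:sigma_est} and the $u_i(\Delta(X_i) - \bar{\tau})$ cross term above to effect a cancellation. Once this is overcome, combining the Bernstein bounds with the oracle reduction and taking a union bound with the complement of $\Lambda_1$ delivers the conditional probability $c_3 n^{-m}$.
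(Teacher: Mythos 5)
Your proposal is incomplete, and you have put your finger on exactly the point where the argument is most delicate. The parts you do carry out are essentially the paper's own route: the paper bounds $|\hat{\sigma}^2 - \sigma_*^2| \leq A + B + C$, where $A$ is the deviation of $n^{-1}\sum_i u_i^2$ from $\sigma_*^2$ (split into a $(T_i-\pi_i)$ piece, an $\varepsilon$ piece and a cross piece, each handled by sub-Gaussian/sub-exponential concentration at rate $(1+\|\hat{\mbb\mu}-\mbb\mu_{\ora}\|_\infty)\sqrt{\log n / n}$), $B$ is the quadratic contribution of the $q_i$ terms controlled via $\max_i|\hat{\pi}_i-\pi_i|$ from \eqref{eq:d1} together with the norm bounds in \eqref{eq:d3}, and $C$ is handled by Cauchy--Schwarz, producing the $\sqrt{b_n\log n}/n^{1/4}$ component. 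Your oracle-propensity reduction plus Bernstein-type concentration of the $u_i$-fluctuations reproduces this with matching rates.

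The gap is the remnant $n^{-1}\sum_i(\Delta(X_i)-\bar{\tau})^2$, which you flag but do not resolve; ``once this is overcome'' is not a proof. The paper disposes of it by asserting, ``similarly to \eqref{eq:decomp}'', the per-summand identity
\[
\frac{T_i(Y_i-\hat{\mu}_i)}{\hat{\pi}_i} - \frac{(1-T_i)(Y_i-\hat{\mu}_i)}{1-\hat{\pi}_i} = q_i + u_i + \bar{\tau},
\]
so that $\hat{\sigma}^2 = n^{-1}\sum_i(q_i + u_i + \bar{\tau} - \hat{\tau}_{\dipw})^2$ and the heterogeneity term never appears. You should scrutinise that identity: retracing the derivation of $Q_C$ in Section~\ref{sec:thmdipw} gives the per-summand constant as $r_1(X_i)-r_0(X_i)=\Delta(X_i)$, not its average $\bar{\tau}$, so your concern is substantive. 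The quantity $n^{-1}\sum_i(\Delta(X_i)-\bar{\tau})^2$ is the empirical variance of $\Delta(X_i)$ and concentrates around $\Var(\Delta(X))$, a constant that vanishes only for homogeneous treatment effects; being nonnegative and of constant order, it cannot be cancelled by the mean-zero cross term $2n^{-1}\sum_i(u_i-\bar{u})(\Delta(X_i)-\bar{\tau})$. So either you justify the identity with $\bar{\tau}$ (in which case your remnant is identically zero and your argument closes along the paper's lines), or you must accept that $\hat{\sigma}^2$ tracks $\sigma_\mu^2+\sigma^2$ rather than $\sigma_*^2$ and the two-sided bound of the lemma needs the additional additive term $n^{-1}\sum_i(\Delta(X_i)-\bar{\tau})^2$ (which would still yield conservative, hence valid, coverage in Theorem~\ref{thm:ci}). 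As written, your proposal does not establish the stated lemma.
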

\begin{proof}
	Let us introduce for $i=1,\ldots,n$,
	\begin{align*}
		q_{i} := - \bigg(\frac{T_i (Y_i - \hat{\mu}_i)}{\hat{\pi}_i \pi_i} + \frac{(1 - T_i) (Y_i - \hat{\mu}_i)}{(1 - \hat{\pi}_i) (1 - \pi_i)}\bigg) (\hat{\pi}_i - \pi_i),
	\end{align*}
	and recall that
	\begin{align*}
		u_i &:= \bigg(\frac{r_1(X_i)}{\pi_i} + \frac{r_0(X_i)}{1 - \pi_i} - \frac{\hat{\mu}_i}{\pi_i (1 - \pi_i)}\bigg) (T_i - \pi_i) + \bigg(\frac{T_i \varepsilon_i (1)}{\pi_i} - \frac{(1 - T_i) \varepsilon_i (0)}{1 - \pi_i}\bigg) \\
		&= \frac{\mu_{\ora,i} - \hat{\mu}_i}{\pi_i(1-\pi_i)}(T_i-\pi_i) + \frac{T_i \varepsilon_i(1)}{\pi_i} - \frac{(1 - T_i) \varepsilon_i(0)}{1 - \pi_i};
	\end{align*}
	see \eqref{eq:u_i}.
	Similarly to~\eqref{eq:decomp}, we can show that for all $i=1,\ldots,n$,
	\[
	\frac{T_i (Y_i - \hat{\mu}_i)}{\hat{\pi}_i} + \frac{(1 - T_i) (Y_i - \hat{\mu}_i)}{1 - \hat{\pi}_i} = q_{i} + u_{i} + r_1(X_i) - r_0(X_i),
	\]
	so
	\begin{align*}
		\hat{\sigma}^2 = \frac{1}{n}\sum_{i=1}^n \{q_{i} + (u_{i} + r_1(X_i) - r_0(X_i) - \tau ) +  \tau - \hat{\tau}_{\dipw}\}^2
	\end{align*}
	Let us write
	\begin{align*}
		A&:= \bigg|\frac{1}{n} \sum_{i=1}^n (u_{i} + r_1(X_i) - r_0(X_i) -\tau )^2 - \sigma_+^2\bigg|\\
		B &:= \frac{1}{n} \sum_{i=1}^n (q_{i} + \tau - \hat{\tau}_\dipw)^2.
	\end{align*}
	Then we have that
	\begin{align}
		|\hat{\sigma}^2 - \sigma_+^2| &\leq A + B + \frac{2}{n}\abs{\sum_{i=1}^n (u_i + r_1(X_i) - r_0(X_i) -\tau)(q_i + \tau - \hat{\tau}_{\dipw})} \\
		&\leq A + B + 2\underbrace{\sqrt{(A + \sigma_+^2) B}}_{=:C},\label{eq:sigma1} 
	\end{align}
	using the Cauchy--Schwarz inequality in the final line.
	
	To control $A$, first recall that
	\begin{align*}
		\sigma_+^2 &= \frac{1}{n}\sum_{i=1}^n \bigg(\frac{(\mu_{\ora,i}-\hat{\mu}_i)^2}{\pi_i(1-\pi_i)} + \underbrace{\frac{\E(\varepsilon_i(1)^2 \mid X_i)}{\pi_i} + \frac{\E(\varepsilon_i(0)^2 \mid X_i)}{1 - \pi_i}}_{=:\bar{\sigma}_i^2}\bigg) + \sigma_2^2;
	\end{align*}
	see \eqref{eq:sigma_i}. Thus
	\begin{align*}
		A &\leq	\bigg| \frac{1}{n}\sum_{i=1}^n \bigg\{ \frac{(\mu_{\ora,i} - \hat{\mu}_i)^2}{\pi_i(1-\pi_i)}\bigg(\frac{(T_i-\pi_i)^2}{\pi_i(1-\pi_i)}-1\bigg) \\
		&\qquad +\bigg(\frac{T_i\varepsilon_i(1)}{\pi_i} - \frac{(1-T_i)\varepsilon_i(0)}{1-\pi_i} + r_1(X_i) - r_0(X_i) - \tau \bigg)^2 - \sigma_i^2 - \sigma_2^2   \\
		&\qquad + 2(T_i-\pi_i)\frac{\mu_{\ora,i} - \hat{\mu}_i}{\pi_i(1-\pi_i)}\bigg(\frac{T_i\varepsilon_i(1)}{\pi_i} - \frac{(1-T_i)\varepsilon_i(0)}{1-\pi_i} + r_1(X_i) - r_0(X_i) - \tau \bigg) \bigg\} \bigg| \\
		&=: |A_1 + A_2 + 2A_3|.
	\end{align*}
	
	Now conditional on $\mathcal{D}$, $A_1$ is an average of independent sub-Gaussian variables, so is itself sub-Gaussian with variance proxy
	\[
	\frac{c}{n^2}\sum_{i=1}^n \frac{(\mu_{\ora,i} - \hat{\mu}_i)^4}{\pi_i(1-\pi_i)} \leq \frac{c}{n} \|\mbb\mu_{\ora} - \hat{\mbb\mu}\|_\infty^2 \sigma_{\mu}^2,
	\]
	for a constant $c>0$. Thus for a constant $c'>0$ sufficiently large,
	\begin{equation} \label{eq:A_1}
		\pr\big(|A_1| \geq c' \|\mbb\mu_{\ora} - \hat{\mbb\mu}\|_\infty \sigma_{\mu} \sqrt{\log(n) / n)}\,|\,\mathcal{D} \big) \leq 2n^{-m}.
	\end{equation}
	From \eqref{eq:mom2_bd}, we see that, $A_2$ is an average of mean-zero independent sub-exponential random variables, so Bernstein's inequality gives us that
	\begin{equation*}
		\pr(|A_2| \geq c \sqrt{\log(n)/n} ) \leq 2n^{-2m}
	\end{equation*}
	for a constant $c>0$ sufficiently large. But then by Markov's inequality,
	\begin{equation} \label{eq:A_2}
		\pr\{\pr(|A_2| \geq c \sqrt{\log(n)/n} \,|\, \mathcal{D}) > n^{-m}\} \leq 2n^{-m}.
	\end{equation}
	
	Finally note that $|A_3| \leq \|\mbb\mu_{\ora} - \hat{\mbb\mu}\|_\infty |A_{31}|$ where
	\[
	A_{31}:=\frac{1}{n}\sum_{i=1}^n \frac{T_i - \pi_i}{\pi_i(1-\pi_i)}\bigg(\frac{T_i \varepsilon_i(1)}{\pi_i} - \frac{(1-T_i)\varepsilon_i(0)}{1-\pi_i} + r_1(X_i) - r_0(X_i) - \tau\bigg).
	\]
	By \eqref{eq:mom2_bd} the $\varepsilon_i(0), \varepsilon_i(1), r_1(X_i), r_0(X_i)$ are all sub-Gaussian, so $A_{31}$ is an averages of mean-zero sub-Gaussian random variables. 
	
	Hoeffding's inequality then gives us that there exists constant $c>0$ such that
	\[
	\pr(|A_{31}| \geq c \sqrt{\log(n)/n}) \leq 2n^{-2m},
	\]
	and similarly for $|A'_{31}|$.
	Thus by Markov's inequality
	\begin{equation} \label{eq:A_3}
		\pr\big\{\pr\big(|A_3| \geq c \|\mbb\mu_{\ora} - \hat{\mbb\mu}\|_\infty \sqrt{\log(n) / n)}\,|\,\mathcal{D} \big)>n^{-m} \big\} \leq 2n^{-m},
	\end{equation}
	and similarly for $|A'_{3}|$.
	Note that on $\Lambda_1$, $\sigma_\mu$ is bounded from above due to \eqref{eq:d2} and \eqref{eq:d3}. We thus see from \eqref{eq:A_1}, \eqref{eq:A_2}, \eqref{eq:A_3}, that on an event $\Lambda_{2A} \subseteq \Lambda_1$ with probability at least $1 - \pr(\Lambda_1) - c_1 n^{-m}$ for some constant $c_1>0$,
	\begin{equation} \label{eq:A}
		\pr\big(|A| \geq c_2 (1+\|\mbb\mu_{\ora} - \hat{\mbb\mu}\|_\infty) \sqrt{\log(n)/n} \,|\,\mathcal{D}\big) \leq c_3 n^{-m}
	\end{equation}
	for constants $c_2, c_3 >0$, and similarly for $A'$.
	
	We now turn to $B$. Working on $\Lambda_1$, we see that applying~\eqref{eq:d1} and using the fact that $\hat{\pi}_i$ and $\pi_i$'s are bounded within $[c_{\hat{\pi}}, 1 - c_{\hat{\pi}}]$ and $[c_\pi, 1 - c_\pi]$ respectively, we have that there exists a constant $c > 0$ such that
	\[
	B \leq 2 \frac{1}{n} \sum_{i=1}^n q_{i}^2 + 2 (\hat{\tau}_\dipw - \tau)^2 
	\leq c(\|{\mb Y}\|_2^2 / n + \|\hat{\mbb\mu}\|_2^2 / n) \frac{b_n \log n}{\sqrt{n}} + 2 (\hat{\tau}_\dipw - \tau)^2.
	\]
	Then using \eqref{eq:d3}, we see that on $\Lambda_1$,
	\begin{equation}\label{eq:sigma3}
		\pr\bigg( B \leq c  \frac{b_n \log n}{\sqrt{n}} + 2 (\hat{\tau}_\dipw - \tau)^2 \, \Big| \, \mathcal{D}\bigg) \leq n^{-m}.
	\end{equation}
	We are left with the task of dealing with $(\hat{\tau}_\dipw - \tau)^2 \leq 2 \delta^2 / n + 2 Q_C^2$ where recall that
	\begin{align*}
		Q_C &= \frac{1}{n} \sum_{i=1}^n \frac{\mu_{\ora,i} - \hat{\mu}_i}{\pi_i(1-\pi_i)} (T_i - \pi_i) + \frac{1}{n} \sum_{i=1}^n \bigg(\frac{T_i \varepsilon_{1,i}}{\pi_i} - \frac{(1 - T_i) \varepsilon_{0,i}}{1 - \pi_i} + r_1(X_i) - r_0(X_i) - \tau\bigg) \\
		&=: Q_{C1} + Q_{C2}.
	\end{align*}
	Now conditional on $\mathcal{D}$, $Q_{C1}$ is an average of mean zero sub-Gaussian random variables, so there exists constant $c>0$ such that 
	\begin{equation} \label{eq:QC1}
		\pr(2Q_{C1}^2 >c \sigma_\mu^2\log(n) / n  \,|\, \mathcal{D}) \leq 2n^{-m}.
	\end{equation}
	Also, $Q_{C2}$ is an average of mean zero sub-Gaussian random variables, so there exists constant $c>0$ such that $\pr( 2Q_{C2}^2 > c \log(n)/n) \leq 2n^{-2m}$. Markov's inequality then gives that
	\begin{equation} \label{eq:QC2}
		\pr\{ \pr(2Q_{C2}^2 \geq c \log(n)/n \,|\,\mathcal{D}) >n^{-m}\} \leq 2 n^{-m}. 
	\end{equation}
	Note that 
	the quantity $\delta^2/n$ is straightforward to control using~\eqref{eq:d4}, with a high probability upper bound
	\[
	(s^2 + s \log n) \frac{(\log p)^2}{n^2} \leq 2\bigg(\frac{b_n^2}{n} + \frac{a_n b_n}{\sqrt{n}}\bigg)
	\]
	using Assumptions~\ref{as:p_large} and \ref{as:sparsity}.
	Thus putting together \eqref{eq:sigma3}, \eqref{eq:QC1}, \eqref{eq:QC2} and the above, we see that on an event $\Lambda_{2B} \subseteq \Lambda_{2A}$ with probability at least $1 - \pr(\Lambda_1) - c_1 n^{-m}$ for some $c_1>0$,
	\begin{equation} \label{eq:B}
		\pr\big(|B| \geq c_2 b_n \log n /\sqrt{n} \,|\,\mathcal{D}\big) \leq c_3 n^{-m}
	\end{equation}
	for constants $c_2, c_3 >0$, where we have used that $b_n \geq \log(2)/\sqrt{n}$ (see Assumption~\ref{as:p_large}).
	
	Turning now the final term in \eqref{eq:sigma1}, recall that $\sigma_+$ is bounded from above on $\Lambda_1$. Thus from 
	\eqref{eq:A}, \eqref{eq:d5} and \eqref{eq:B}, we see that on $\Lambda_{2B}$, there exists constants $c, c'>0$ such that
	\[
	\pr\big(|C| \geq c \{\sqrt{b_n \log n} / n^{1/4}  + (1+\|\mbb\mu_{\ora} - \hat{\mbb\mu}\|_\infty) \sqrt{\log(n)/n}\}\,|\,\mathcal{D}\big) \leq c' n^{-m}.
	\]
	Putting this together with \eqref{eq:A} and  \eqref{eq:B} we have the result.
	%
	%
\end{proof}

To prove both Theorems~\ref{thm:conf_tau} and \ref{thm:ci_new}, it is helpful to first prove an intermediate result, Theorem~\ref{thm:ci}, which gives guarantees for a confidence interval for $\bar{\tau}$ in the setting of Theorem~\ref{thm:dipw} and Section~\ref{sec:basic}, that is, when auxiliary data are used.

\begin{theorem}\label{thm:ci}
	Consider the setup of Theorem~\ref{thm:dipw} and  in addition suppose that Assumptions~\ref{as:sparsity} and \ref{as:subeps} hold.
	Let
	\begin{equation} \label{eq:C_check_def}
		\check{C}_\alpha := \Big[\hat{\tau}_{\dipw} - \frac{\hat{\sigma}}{\sqrt{n}} z_\alpha, \hat{\tau}_{\dipw} + \frac{\hat{\sigma}}{\sqrt{n}} z_\alpha \Big].
	\end{equation}
	Given constants $c_\gamma, c_{\tilde{\mu}}, c_\varepsilon > 0, c_{\hat{\pi}} \in (0, \frac{1}{2}]$ and $m \in \mathbb{N}$, there exist constants $c, c_\zeta >0$ such that with probability at least
	\[
	1 - \pr(\Omega^c(c_\gamma,c_{\tilde{\mu}},c_{\hat{\pi}})) - c(n^{-m} + p^{-m}),
	\]
	we have for all $\alpha \in (0, 1]$ the coverage guarantee
	\begin{align} \label{eq:coverage}
		\pr\Big( \bar{\tau} \in \check{C}_{\alpha} \mid \mathcal{D}\Big) \geq 1 -\alpha - c_\zeta \bigg(\|\hat{\mbb\mu} - \mbb\mu_{\ora}\|_\infty  \sqrt{\frac{\log n}{n}}  + \frac{\sqrt{ b_n \log p \log n}}{n^{1/4}} + b_n +  p^{-m}\bigg).
	\end{align}
\end{theorem}
%
\begin{proof}[Proof of Theorem~\ref{thm:ci}]
	Throughout the proof we work on the event $\Lambda_2$ defined in Lemma~\ref{lem:sigma}.
	Now
	\begin{align*}
		\pr\Big( & |\hat{\tau}_\dipw - \bar{\tau}| \leq \frac{\hat{\sigma}}{\sqrt{n}} \Phi^{-1}(1 - \alpha / 2) \mid \mathcal{D}\Big) \\
		& \geq \underset{=: A}{\underbrace{\pr\bigg(|\zeta| \leq \frac{1}{\sigma_*}\Big(\hat{\sigma} \Phi^{-1}(1 - \alpha / 2) - c_\delta (s + \sqrt{s \log n}) \frac{\log p}{\sqrt{n}}\Big) \mid \mathcal{D}\bigg)}} \\
		&\qquad - \underset{=: B}{\underbrace{\pr\bigg(|\delta| \geq c_\delta (s + \sqrt{s \log n}) \frac{\log p}{\sqrt{n}} \mid \mathcal{D}\bigg)}},
	\end{align*}
	where $c_\delta$ is inherited from~\eqref{eq:d4} and $\zeta$ is as in Theorem~\ref{thm:dipw}. From~\eqref{eq:d4}, we have that
	\begin{equation}\label{eq:ci1}
		B \leq c (p^{-m} + n^{-m}).
	\end{equation}
	
	To control $A$, recalling Assumption~\ref{as:sparsity}, we have that,
	\[
	A \geq \pr\Big(|\zeta| \leq \frac{\hat{\sigma}}{\sigma_*} \Phi^{-1}(1 - \alpha / 2) - \frac{b_n}{\sigma_*} - \frac{\sqrt{ b_n \log p \log n}}{\sigma_* n^{1/4}} \mid \mathcal{D}\Big).
	\]
	Using the elementary inequality $(a - b)^2 \leq (a^2 - b^2)^2 / a^2$ for $a>0$, $b\geq 0$, we find 
	\[
	\bigg| \frac{\hat{\sigma}}{\sigma_+} - 1\bigg| \leq \frac{|\hat{\sigma}^2 - \sigma_+^2|}{\sigma_+^2}.
	\]
	Hence
	\[
	\frac{\hat{\sigma}}{\sigma_*} \geq 1 - \frac{|\hat{\sigma}^2 - \sigma_+^2|}{\sigma_+^2}.
	\]
	Then it follows from Lemmas~\ref{lem:d} and \ref{lem:sigma} that
	\begin{align*}
		A \geq & \pr\bigg(|\zeta| \leq \Phi^{-1}(1 - \alpha / 2) - c (\Phi^{-1}(1 - \alpha / 2) + 1) \bigg(\|\hat{\mbb\mu} - \mbb\mu_{\ora}\|_\infty \sqrt{\frac{\log n}{n}}
		+  b_n  + \frac{\sqrt{ b_n \log p \log n}}{n^{1/4}}\bigg) \mid \mathcal{D}\bigg) \\
		& - c' (p^{-m} + n^{-m})
	\end{align*}
	for some constants $c, c' > 0$ that do not depend on $\alpha$. Then by combining the above inequality with Lemma~\ref{lem:qc}, we have that
	\begin{align*}
		A & \geq \pr\bigg(|Z| \leq \Phi^{-1}(1 - \alpha / 2) - c (\Phi^{-1}(1 - \alpha / 2) + 1) \Big(\|\hat{\mbb\mu} - \mbb\mu_{\ora}\|_\infty \sqrt{\frac{\log n}{n}}  + b_n + \frac{\sqrt{ b_n \log p \log n}}{n^{1/4}} \Big)  \bigg) \nonumber\\
		& \quad\quad - c' (p^{-m} + n^{-m}) - \frac{c_\zeta' (\sigma_\mu^2\|\hat{\mbb\mu} - \mbb\mu_{\ora}\|_\infty + \bar{\rho}^3)}{\sigma_*^3} \frac{1}{\sqrt{n}},
	\end{align*}
	where $Z$ is a standard Gaussian random variable and $c_\zeta'$ is the same as the ``$c_\zeta$'' in Theorem~\ref{thm:dipw}(ii). Using a Taylor expansion for $\Phi$, noting that it has bounded derivative,  and using~\eqref{eq:ci1} and Lemma~\ref{lem:d}, we obtain that with $c_n := \|\hat{\mbb\mu} - \mbb\mu_{\ora}\|_\infty \sqrt{\frac{\log n}{n}}  + b_n + \frac{\sqrt{ b_n \log p \log n}}{n^{1/4}}$,
	\begin{align*}
		A & \geq \pr\{|Z| \leq \Phi^{-1}(1 - \alpha / 2) - c \Phi^{-1}(1 - \alpha / 2) c_n  \} - c_{\zeta_1} (c_n + n^{-m} + p^{-m}).
	\end{align*}
	We now consider two cases, in the first case, $c \cdot c_n \ge \frac{1}{2}$, then~\eqref{eq:coverage} holds trivially with $c_\zeta := 2 c$. In the second case, we naturally have
	\[
	\frac{1}{2}  \Phi^{-1}(1 - \alpha / 2) \le \Phi^{-1}(1 - \alpha / 2) - c \Phi^{-1}(1 - \alpha / 2) c_n \le \Phi^{-1}(1 - \alpha / 2).
	\]
	In light of this and the mean value theorem, we have that for some $\iota \in [0, 1]$,
	\begin{align*}
		A & \geq 1 - \alpha -  \frac{2}{\sqrt{2 \pi}} e^{- ((1 - \iota)z_\alpha + z_\alpha \iota / 2  )^2 / 2} z_\alpha \cdot c c_n - c_{\zeta_1} (c_n + n^{-m} +  p^{-m}) \\
		&  \ge 1 - \alpha -  \frac{2}{\sqrt{2 \pi}} e^{- z_\alpha^2 / 8} z_\alpha \cdot c c_n - c_{\zeta_1} (c_n + n^{-m} + p^{-m}),
	\end{align*}
	where recall that $z_\alpha \equiv \Phi^{-1}(1 - \alpha / 2)$. Putting things together, we prove~\eqref{eq:coverage} with
	\[
	c_\zeta := \max\left\{c_{\zeta_1} + \frac{2c}{\sqrt{2 \pi}} \cdot \max_t t e^{-t^2 / 8}, 2c\right\}.
	\]
\end{proof}

\subsection{Proof of Theorem~\ref{thm:conf_tau}}
We first introduce Lemmas~\ref{lem:popsigma} and~\ref{lem:ci_ate}, which are crucial for the proof of Theorem~\ref{thm:conf_tau}.

\begin{lemma}\label{lem:popsigma}
	Consider the set up of Theorem~\ref{thm:conf_tau} and let $m \in \mathbb{N}$ be given, there exist constants $c, c' > 0$ such that
	\[
	\pr\left(\left\{\left|\hat{\sigma}_\ipw^2 - \sigma_{\ipw}^2\right| > c \frac{\sqrt{b_n \log n}}{n^{1/4}}\right\} \cap  \Omega(c_\gamma, c_{\tilde{\mu}}, c_{\hat{\pi}}) \right) \le c' n^{-m},
	\]
	where $\sigma_{\ipw}^2 := \var\left(\frac{TY}{\pi(X)} - \frac{(1 - T)Y}{1 - \pi(X)}\right)$.
\end{lemma}

\begin{proof}
	Following analogous analysis before~\eqref{eq:sigma1}, we have the decomposition
	\begin{align*}
		|\hat{\sigma}_\ipw^2 - \sigma_\ipw^2| &\leq \underset{=: A}{\underbrace{\bigg|\frac{1}{n} \sum_{i=1}^n u_{i}^2 - \sigma_\ipw^2\bigg|}} + \underset{=: B}{\underbrace{\frac{1}{n} \sum_{i=1}^n (q_{i} + \tau - \hat{\tau}_\ipw)^2}} + 2\bigg|\frac{1}{n} \sum_{i=1}^n u_{i} (q_{i} + \tau - \hat{\tau}_\ipw)\bigg| \\
		& \le A + B + 2 \underset{=: C}{\underbrace{\sqrt{(A + \sigma_\ipw^2) B}}}
	\end{align*}
	where with a slight abuse of notation, we redefine
	\begin{align*}
		q_i := - \bigg(\frac{T_i Y_i}{\hat{\pi}_i \pi_i} + \frac{(1 - T_i) Y_i}{(1 - \hat{\pi}_i) (1 - \pi_i)}\bigg) (\hat{\pi}_i - \pi_i), \quad \text{and} \quad u_i := \frac{T_i Y_i}{\pi_i} - \frac{(1 - T_i) Y_i}{1 - \pi_i} - \tau.
	\end{align*}
	Applying Bernstein's inequality to the term $A$ yields that given any constant $m \in \mathbb{N}$, there exists a constant $c > 0$ depending on $m$ such that
	\[
	\pr(|A| \ge c \sqrt{\log n / n}) \le 2 n^{-m}.
	\]
	For $B$, we see that
	\[
	B \le \frac{2}{n} \sum_{i=1}^n (q_i - \bar{q})^2 + 2\left(\bar{u} - \tau\right)^2 \le \frac{2}{n} \sum_{i=1}^n q_i^2 + 2\left(\bar{u} - \tau\right)^2,
	\]
	where $\bar{q}, \bar{u}$ are empirical averages of $q_i$ and $u_i$, respectively. Following exactly the same analysis as the term ``$\sum_{i=1}^n q_i^2 / n$'' in the proof of Lemma~\ref{lem:sigma} and applying a Chernoff bound to control the second term, we see that 
	\[
	\pr\left(\left\{|B| \ge c \frac{b_n \log n}{\sqrt{n}}\right\} \cap \Omega(c_\gamma, c_{\tilde{\mu}}, c_{\hat{\pi}}) \right) \le 2 n^{-m}.
	\]
	In light of our control of both $A, B$, it follows from exactly the same analysis as the term ``$C$'' in the proof of Lemma~\ref{lem:sigma} that
	\[
	\pr\left(\left\{|C| \ge c \frac{\sqrt{b_n \log n}}{n^{1/4}}\right\} \cap \Omega(c_\gamma, c_{\tilde{\mu}}, c_{\hat{\pi}}) \right) \le 2 n^{-m}.
	\]
	Putting together, we obtain the desired result.
\end{proof}

\begin{lemma}\label{lem:ci_ate}
	Consider the set up of Theorem~\ref{thm:conf_tau}, we have there exists a constant $c_\zeta > 0$ such that for all $\alpha \in (0, 1]$,
	\[
	\pr\left(\left|\bar\tau - \tau \right| > \frac{\hat{\sigma}_\ipw}{\sqrt{n}} z_{\alpha / 2}, \Omega(c_\gamma, c_{\tilde{\mu}}, c_{\hat{\pi}}) \right) \le \alpha / 2 + c_\zeta \frac{\sqrt{ b_n \log n} + 1}{n^{1/4}}.
	\]
\end{lemma}

\begin{proof}
	Write $\rho_2^3 := \E[|r_1(X) - r_0(X) - \tau|^3]$. 
	We prove this result by considering two cases.
	
	Case I: $\sigma_2^2 \le \sigma_\ipw^2 n^{-1/4}$. Then we have for all $\alpha \in (0, 1]$, by Chebyshev's inequality,
	\[
	\pr\left(\left|\bar\tau - \tau \right| \ge \frac{\sigma_\ipw}{\sqrt{n}} z_{\alpha / 2} \right) \le \frac{\sigma_2^2}{\sigma_\ipw^2 z_{0.5}} \le \frac{1}{ z_{0.5} n^{1/4}},
	\]
	where recall that $z_{0.5} = \Phi(0.75) \approx 0.674$ is a universal constant.
	
	Case II: $\sigma_2^2 > \sigma_\ipw^2 n^{-1/4}$. We have
	\begin{align*}
		\rho_2^3 \; & =\E[|r_1(X) - r_0(X) - \tau|^3 \kron(|r_1(X) - r_0(X) - \tau| < 1)]  \\
		&\quad + \E[|r_1(X) - r_0(X) - \tau|^3 \kron(|r_1(X) - r_0(X) - \tau| \ge 1)] \\
		& \le \sigma_2^2 + \sqrt{\E[|r_1(X) - r_0(X) - \tau|^6]} \sqrt{\pr(|r_1(X) - r_0(X) - \tau| \ge 1)},
	\end{align*}
	where for the last inequality we apply the Cauchy--Schwarz inequality. 
	
	By Jensen's inequality we have $\E[|r_t(X) - \E[r_t(X)]|^6] \le \E[|Y(t) - \E[Y(t)]|^6] \le c'$ for $t \in \{0, 1\}$, where for the last inequality we apply Condition~\ref{as:subgY}. In light of this and using Chebyshev's inequality to control $\pr(|r_1(X) - r_0(X) - \tau| \ge 1)$, we further have that for some constant $c > 0$, $
	\rho_2^3 \le \sigma_2^2 + c \sigma_2$.
	
	Armed with the above, as a direct consequence of the Berry--Esseen theorem~\citep{Esseen42}, we have that for some universal constant $c_\zeta$,
	\begin{equation}\label{eq:berryessen}
		\sup_{t \in \R} \left|
		\pr\left(\frac{\sqrt{n}(\bar\tau - \tau)}{\sigma_2} \le t\right) - \Phi(t) \right| \le c_\zeta \frac{ \rho_2^3}{\sqrt{n}\sigma_2^3} \le c_\zeta\frac{\sigma_2^2 + c \sigma_2}{\sqrt{n} \sigma_2^3} \le \frac{c_\zeta'}{n^{1/4}},
	\end{equation}
	where for the last inequality we apply that $\sigma_\ipw^2$ is bounded below by some constant since $\sigma_\ipw^2 \ge \var\left(\frac{T \varepsilon(1)}{\pi(X)} + \frac{(1 - T) \varepsilon(0)}{1 - \pi(X)}\right)$ and we are under Condition~\ref{as:subeps}.
	
	In light of~\eqref{eq:berryessen}, Lemma~\ref{lem:popsigma} and $\sigma_\ipw^2 \ge \sigma_2^2$, it follows from exactly the same analysis as the term ``$A$'' defined in the proof of Theorem~\ref{thm:ci} that in Case II, 
	\begin{align*}
		& \pr\left(\left|\bar\tau - \tau \right| > \frac{\hat{\sigma}_\ipw}{\sqrt{n}} z_{\alpha / 2}, \Omega(c_\gamma, c_{\tilde{\mu}}, c_{\hat{\pi}}) \right) = \pr\left(\left|\frac{\sqrt{n}(\bar\tau - \tau)}{\sigma_2} \right| > \frac{\hat{\sigma}_\ipw}{\sigma_2} z_{\alpha / 2}, \Omega(c_\gamma, c_{\tilde{\mu}}, c_{\hat{\pi}}) \right) \\
		& \le \pr\left(\left|\frac{\sqrt{n}(\bar\tau - \tau)}{\sigma_2} \right| > \left(1 - c\frac{\sqrt{b_n \log n}}{n^{1/4}}\right) z_{\alpha / 2}\right) \le \alpha / 2 + c_\zeta \left(\frac{\sqrt{b_n \log n}}{n^{1/4}} + \frac{1}{n^{1/4}}\right).
	\end{align*}
	\[
	\]
	
	From our analysis of both Case I and II, we prove the desired result.
\end{proof}

\begin{proof}[Proof of Theorem~\ref{thm:conf_tau}]
	Let $\Lambda_2$ be the $\mathcal{D}$-measurable event on which the result \eqref{eq:coverage} of Theorem~\ref{thm:ci} holds and recall that $\Lambda_2 \subseteq \Omega(c_{\gamma}, c_{\tilde{\mu}}, c_{\hat{\pi}})$ and $\pr(\Lambda_2) \geq \pr(\Omega(c_{\gamma}, c_{\tilde{\mu}}, c_{\hat{\pi}})) - c(n^{-m}+p^{-m})$ for a constant $c>0$. 
	Recalling the definition of $\check{C}_{\alpha}$ \eqref{eq:C_check_def}, by a union bound,
	\begin{align*}
		&\pr\left( |\hat{\tau}_{\dipw} - \tau| \leq \frac{(\hat{\sigma} + \hat{\sigma}_\ipw)z_{\alpha/2}}{\sqrt{n}} \right) \geq \pr\left( |\hat{\tau}_{\dipw} - \bar{\tau}| \leq \frac{\hat{\sigma}z_{\alpha/2}}{\sqrt{n}}, \, \, |\bar{\tau} - \tau| \leq  \frac{\hat{\sigma}_\ipw z_{\alpha/2}}{\sqrt{n}}, \,\,\Lambda_2 \right)\\
		&\qquad \geq \pr\left( |\hat{\tau}_{\dipw} - \bar{\tau}| \leq \frac{\hat{\sigma}z_{\alpha/2}}{\sqrt{n}}, \,\, \Lambda_2 \right) - \pr\left( |\bar{\tau} - \tau| > \frac{\hat{\sigma}_\ipw z_{\alpha/2}}{\sqrt{n}}, \,\, \Lambda_2 \right) \\
		&\qquad = \E \{\pr(\bar{\tau} \in \check{C}_{\alpha/2} \,|\, \mathcal{D}) \ind_{\Lambda_2}\} - \pr\left( |\bar{\tau} - \tau| > \frac{\hat{\sigma}_\ipw z_{\alpha/2}}{\sqrt{n}}, \, \, \Lambda_2 \right)\\
		&\qquad \geq \E \{\pr(\bar{\tau} \in \check{C}_{\alpha/2} \,|\, \mathcal{D}) \ind_{\Omega(c_{\gamma}, c_{\tilde{\mu}}, c_{\hat{\pi}}) }\} - \pr\left( |\bar{\tau} - \tau| > \frac{\hat{\sigma}_\ipw z_{\alpha/2}}{\sqrt{n}}, \, \, \Lambda_2 \right) \\
		&\qquad\qquad - c(n^{-m}+p^{-m}).
	\end{align*}
	Theorem~\ref{thm:ci} provides the following lower bound on the first term:
	\begin{align*}
		& 1 -\frac{\alpha}{2} - c_\zeta \bigg(\E\{\|\hat{\mbb\mu} - \mbb\mu_{\ora}\|_\infty\ind_{\Omega(c_{\gamma}, c_{\tilde{\mu}}, c_{\hat{\pi}}) }\}   \sqrt{\frac{\log n}{n}}  + \frac{\sqrt{ b_n \log p \log n}}{n^{1/4}} + b_n +  p^{-m}\bigg)\\
		&\quad - \pr(\Omega^c(c_\gamma, c_{\tilde{\mu}}, c_{\hat{\pi}})).
	\end{align*}
	For the second term, we apply Lemma~\ref{lem:ci_ate}. Putting things together, we obtain the desired result.
\end{proof}

\subsection{Proof of Theorem~\ref{thm:ci_new}}

We are now in a position to prove Theorem~\ref{thm:ci_new} giving guarantees for the confidence interval \eqref{eq:ci_new} centred on the cross-fit estimate $\hat{\tau}_{\ave}$ \eqref{eq:cross-fit}. Recall that in contrast to Theorem~\ref{thm:ci}, here we do not assume the existence of auxiliary data.
\begin{proof}[Proof of Theorem~\ref{thm:ci_new}]
	From Theorem~\ref{thm:ci} and a union bound, we have that there exist a constants $c, c_\zeta > 0$ such that on an event $\Lambda$ with probability at least
	\[
	1 - \sum_{j=1}^3 \pr(\Omega_j^c(c_\gamma,c_{\tilde{\mu}},c_{\hat{\pi}})) - c(n^{-m} + p^{-m}),
	\]
	we have
	\begin{equation*}
		\begin{split} 
			& \pr\Big( |\hat{\tau}_{\dipw, j} -  \bar{\tau}_j | \leq \frac{\hat{\sigma}_j}{\sqrt{n/3}} z_{\alpha/3} \,|\, \mb X \Big) \geq 1 -\alpha/3  \\
			& \qquad - \frac{c_\zeta}{3} \bigg(\E(\|\hat{\mbb\mu} - \mbb\mu_{\ora}\|_\infty \,|\, \mb X) \sqrt{\frac{\log n}{n}}  + \frac{\sqrt{ b_n \log p \log n}}{n^{1/4}} + b_n +  p^{-m}\bigg)
		\end{split}
	\end{equation*}
	for all $j=1,2,3$. Thus on $\Lambda$, by a union bound, we have
	\begin{equation*}
		\begin{split} 
			& \pr\Big( |\hat{\tau}_{\dipw, j} -  \bar{\tau}_j | \leq \frac{\hat{\sigma}_j}{\sqrt{n/3}} z_{\alpha / 3} \; \forall \; j \; \,|\, \mb X \Big) \geq 1 -\alpha  \\
			& \qquad - c_\zeta\bigg(\E(\|\hat{\mbb\mu} - \mbb\mu_{\ora}\|_\infty \,|\, \mb X) \sqrt{\frac{\log n}{n}}  + \frac{\sqrt{ b_n \log p \log n}}{n^{1/4}} + b_n +  p^{-m}\bigg).
		\end{split}
	\end{equation*}
	Averaging the inequalities
	\[
	-\frac{\hat{\sigma}_j}{\sqrt{n/3}} z_{\alpha / 3} \leq \hat{\tau}_{\dipw, j} -  \bar{\tau}_j \leq \frac{\hat{\sigma}_j}{\sqrt{n/3}} z_{\alpha / 3}
	\]
	for $j=1,2,3$, we obtain
	\[
	-\frac{\tilde{\sigma}}{\sqrt{n}} z_{\alpha / 3} \leq \hat{\tau}_{\ave} -  \bar{\tau} \leq \frac{\tilde{\sigma}}{\sqrt{n}} z_{\alpha / 3}.
	\]
	Hence on $\Lambda$, we have
	\begin{equation*}
		\begin{split} 
			& \pr\Big( |\hat{\tau}_{\ave} -  \bar{\tau} | \leq \frac{\tilde{\sigma}}{\sqrt{n}} z_{\alpha / 3} \,|\, \mb X \Big) \geq 1 -\alpha  \\
			& \qquad - c_\zeta\bigg(\E(\|\hat{\mbb\mu} - \mbb\mu_{\ora}\|_\infty \,|\, \mb X) \sqrt{\frac{\log n}{n}}  + \frac{\sqrt{ b_n \log p \log n}}{n^{1/4}} + b_n +  p^{-m}\bigg).
		\end{split}
	\end{equation*}
	as required.
\end{proof}

\section{Efficiency in estimating $\bar{\tau}$}\label{sec:efficiency_appendix}

\newrev{
	In Section~\ref{sec:efficiency}, we considered the efficiency of $\hat{\tau}_{\ave}$ for estimating $\tau$;  in this section, we show that a similar conclusion applies to estimation of $\bar{\tau}$ with $\hat{\tau}_{\ave}$.
	
	\begin{corollary}\label{cor:sampleeff}
		Consider the setup of Theorem~\ref{thm:efficiency}. We have the decomposition $\sqrt{n} (\hat{\tau}_\ave - \bar\tau) = \delta + \bar\sigma \zeta$, where $\delta$ is as in Theorem~\ref{thm:efficiency} (i), and 
		\[
		\sup_{t \in \R} |\pr(\zeta \leq t \mid \mb X) - \Phi(t)| \leq \frac{c_\zeta}{\sqrt{n}} \frac{\bar\rho^3}{\bar\sigma^3}.
		\]
	\end{corollary}
	Under the assumptions of Corollary~\ref{cor:efficiency}, we may also argue
	that provided there exists some $e_n \to \infty$ such that $\|\hat{r}_t(\mbb X) - r_t(\mbb X)\|_2 / \sqrt{n} = o_\pr(e_n^{-1})$ for $t = 0, 1$, then with probability converging to $1$, $\sqrt{n}(\hat{\tau}_\ave - \bar{\tau}) \mid {\mbb X} \approx \mathcal{N}(0, \bar{\sigma}^2)$.
	The quantity $\bar{\sigma}^2$ is the same as the conditional variance that the AIPW estimator would achieve, when supplied with ground truth nuisance functions.
}

\section{Proofs of Theorem~\ref{thm:efficiency} and Corollaries~\ref{cor:efficiency} and~\ref{cor:sampleeff}}

Theorem~\ref{thm:efficiency} and Corollary~\ref{cor:sampleeff} follow from the following variant of Theorem~\ref{thm:dipw} followed by applications of the Berry--Eseen theorem (as in the argument for  Theorem~\ref{thm:split}).

\begin{theorem}\label{thm:dipweff}
	\newrev{	Consider the setup of Theorem \ref{thm:dipw}. We have the following decompositions: 
		\begin{align*}
			\sqrt{n}(\hat{\tau}_\dipw - \tau) &= \delta + \frac{1}{\sqrt{n}} \sum_{i=1}^n \left(r_1(X_i) - r_0(X_i) - \tau + \frac{T_i \varepsilon_i(1)}{\pi_i} - \frac{(1 - T_i) \varepsilon_i(0)}{1 - \pi_i}\right),\\
			\sqrt{n}(\hat{\tau}_\dipw - \bar\tau) &= \delta + \frac{1}{\sqrt{n}} \sum_{i=1}^n \left(\frac{T_i \varepsilon_i(1)}{\pi_i} - \frac{(1 - T_i) \varepsilon_i(0)}{1 - \pi_i}\right).
		\end{align*}
		Here $\delta$ is such that given constants $c_\gamma, c_{\tilde{\mu}} > 0, c_{\hat{\pi}} \in (0, 1/2]$ and $m \in \mathbb{N}$, there exists a constant $c_\delta > 0$ such that given any sequence $(e_n)_{n = 1}^\infty$, with probability at least $1 - \pr(\Omega^c(c_\gamma, c_{\tilde{\mu}}, c_{\hat{\pi}})) - c(n^{-m} + p^{-m}) - 2 e^{-e_n^2}$, 
		\[
		|\delta| \le c_\delta(s + \sqrt{s \log n}) \frac{\log p}{\sqrt{n}} + c_\delta \frac{e_n}{\sqrt{n}} \|\tilde{\mu}({\mbb X}) - {\mbb \mu_\ora}\|_2.
		\]}
\end{theorem}

To prove this, first let $\hat{\mbb \mu}_\ora$ be an $n$-dimensional vector such that
\[
\hat{\mu}_{\ora, i} := \frac{\pi_i(1 - \hat{\pi}_i)}{\hat{\pi}_i} r_1(X_i) + \frac{(1 - \pi_i) \hat{\pi}_i}{1 - \hat{\pi}_i} r_0(X_i).
\]
\begin{lemma}\label{lem:muora}
	Consider the set up of Theorem~\ref{thm:dipweff}, we have that there exist constants $c_1, c_2 > 0$ such that on an event $\Lambda$ contained in $\Omega := \Omega(c_\gamma, c_{\tilde{\mu}}, c_{\hat{\pi}})$ with $\pr(\Lambda^c \cap \Omega) \le c_1n^{-m}$,
	\[
	\frac{1}{\sqrt{n}} \|{\mbb \mu}_\ora - \hat{\mbb \mu}_\ora\|_2 \le c_2 \sqrt{s \frac{\log p}{n}}.
	\]
\end{lemma}

\begin{proof}
	We have that
	\[
	\frac{1}{\sqrt{n}}\|\hat{\mbb \mu}_\ora - {\mbb \mu}_\ora\|_2 \le \sqrt{ \frac{1}{n} \sum_{i = 1}^n \left(\frac{\pi_i(1 - \hat{\pi}_i)}{\hat{\pi}_i} - (1 - \pi_i)\right)^2 r_{1, i}^2} +  \sqrt{\frac{1}{n} \sum_{i = 1}^n \left(\frac{(1 - \pi_i)\hat{\pi}_i}{1 - \hat{\pi}_i} - \pi_i\right)^2 r_{0, i}^2}.
	\]
	Using that we are on $\Omega$ and following~\eqref{eq:qb0}, we have that for some constant $c$,
	\[
	\frac{1}{\sqrt{n}}\|\hat{\mbb \mu}_\ora - {\mbb \mu}_\ora\|_2 \le c \sqrt{ \frac{1}{n} \sum_{i = 1}^n (X_i^\top(\hat{\gamma} - \gamma))^2 r_{1, i}^2} + c \sqrt{ \frac{1}{n} \sum_{i = 1}^n (X_i^\top(\hat{\gamma} - \gamma))^2 r_{0, i}^2}.
	\]
	Applying the Cauchy--Schwarz inequality, we have
	\begin{align*}
		& \frac{1}{\sqrt{n}}\|\hat{\mbb \mu}_\ora - {\mbb \mu}_\ora\|_2 \le c \left(\frac{1}{n} \sum_{i = 1}^n (X_i^\top(\hat{\gamma} - \gamma))^4\right)^{\frac{1}{4}} \left\{\left(\frac{1}{n} \sum_{i = 1}^n r_{1, i}^4\right)^{\frac{1}{4}} + \left(\frac{1}{n} \sum_{i = 1}^n r_{0, i}^4\right)^{\frac{1}{4}}\right\} \\
		& \le 2c \left(\frac{1}{n} \sum_{i = 1}^n (X_i^\top(\hat{\gamma} - \gamma))^4\right)^{\frac{1}{4}} \cdot \max_{t \in \{0, 1\}}\left(\frac{1}{n} \sum_{i = 1}^n (r_{t, i} - \E[Y(t)])^4 + \E[Y(t)]^4\right)^{\frac{1}{4}}.
	\end{align*}
	In the final line, we applied Jensen's inequality which gives that for any $m' \in \mathbb{N}$, $\E[|r_t(X) - \E[Y(t)]|^{m'}] \le \E[|Y(t) - \E[Y(t)]|^{m'}]$.
	By the argument of Lemma~\ref{lem:d},
	we have that there exist constants $c', c'' > 0$ such that with probability at least $1 - c''n^{-m}$,
	\[
	\max_{t \in \{0, 1\}} \frac{1}{n} \sum_{i = 1}^n (r_{t, i} - \E[Y(t)])^4 \le c'.
	\]
	Thus taking $\Lambda$ to be the intersection of the event that the above holds, $\mathcal{T}_2$ defined in Lemma~\ref{lem:t2} and $\Omega$, we obtain the desired result.
\end{proof}


\begin{proof}[Proof of Theorem~\ref{thm:dipweff}]
	\newrev{First note that without loss of generality, we may assume that $e_n \leq \sqrt{\log n}$ by absorbing the contribution of $2e^{-e_n^2}$ in the probability into the term $cn^{-m}$.}
	Recalling the decomposition in~\eqref{eq:splitdecomp}, it remains to prove that there exist constants $c, c' > 0$ such that on an event $\mathcal{A}_1$ with $\pr(\mathcal{A}_1^c \cap \Omega(c_\gamma, c_{\tilde{\mu}}, c_{\hat{\pi}})) \le c' (n^{-m} + p^{-m}) + 2 e^{-e_n^2}$, the term
	\[
	\delta_2 := \frac{1}{\sqrt{n}} \sum_{i=1}^n \bigg(\frac{r_{1,i}}{\pi_i} + \frac{r_{0,i}}{1 - \pi_i} - \frac{\hat{\mu}_i}{\pi_i (1 - \pi_i)}\bigg) (T_i - \pi_i) = \frac{1}{\sqrt{n}} \sum_{i=1}^n \bigg(\frac{\mu_{\ora, i} - \hat{\mu}_i}{\pi_i (1 - \pi_i)}\bigg) (T_i - \pi_i)
	\]
	satisfies
	\newrev{
		\begin{equation}\label{eq:biaseff}
			|\delta_2| \le \frac{c e_n}{\sqrt{n}} \left(\sqrt{s \log p } + \|\tilde{\mu}({\mbb X}) - {\mbb \mu}_\ora\|_2\right).
		\end{equation}
		(Recall that $e_n \leq \sqrt{\log n}$.)} 
	
	To achieve this goal, we first control $\|\hat{\mbb \mu} - \tilde{\mu}({\mbb X})\|_2$. To this end, let $\check{\mu}_{\ora, i} := \hat{\mu}_{\ora, i} \mathbbm{1}_{\{c_{\hat{\pi}} \le \hat{\pi}_i \le 1 - c_{\hat{\pi}}\}}$. We now argue similarly to the proof of Lemma~\ref{lem:feasibility} and furthermore use the definitions of $\tilde{\Omega}_1$ and $\tilde{Y}_{A,i}$ therein.
	
	Then conditional on $\mathcal{D}_B$ and working on $\tilde{\Omega}_1$, the
	\begin{equation} \label{eq:mu_check_optim}
		X_{A,ij}\{\check{Y}_{A, i} - \tilde{\mu}(X_{A, i})\} - X_{ij}\{\check{\mu}_{\ora, i} - \tilde{\mu}(X_i)\}
	\end{equation}
	are i.i.d. mean-zero random variables; indeed, note that for example
	\begin{align*}
		\E\left(\frac{T_iY_i(1-\hat{\pi}_i)}{\hat{\pi}_i} \mathbbm{1}_{\{c_{\hat{\pi}} \le \hat{\pi}_i \le 1 - c_{\hat{\pi}}\}} \, \Big| \, \mathcal{D}_B, X_i\right) = \frac{(1-\hat{\pi}_i) \mathbbm{1}_{\{c_{\hat{\pi}} \le \hat{\pi}_i \le 1 - c_{\hat{\pi}}\}}}{\hat{\pi}_i} \pi_i r_1(X_i).
	\end{align*}
	Furthermore, conditional $\mathcal{D}_B$ and working on $\tilde{\Omega}_1$, \eqref{eq:mu_check_optim} is a linear combination of products of sub-Gaussian random variables with finite conditional means as, for example,  $r_t(X_i)$ is conditionally (and unconditionally) sub-Gaussian (see e.g. proof of Lemma~\ref{lem:d})).
	Thus arguing similarly as in the proof of Lemma~\ref{lem:feasibility}, we have that by choosing the constant $c_\eta > 0$ sufficiently large, there exists a constant $c > 0$ such that on an event $\mathcal{A}_2$ satisfying  $\pr(\mathcal{A}_2^c \cap \Omega(c_\gamma, c_{\tilde{\mu}}, c_{\hat{\pi}})) \le c p^{-m}$, $\hat{\mbb \mu}_\ora$ is a feasible solution to~\eqref{eq:mu_hat}, so that
	\[
	\|\hat{\mbb \mu} - \tilde{\mu}({\mbb X})\|_2 \le \|\hat{\mbb \mu}_\ora - \tilde{\mu}({\mbb X})\|_2 \le 
	\|{\mbb \mu}_\ora - \tilde{\mu}({\mbb X})\|_2 + \|{\mbb \mu}_\ora - \hat{\mbb \mu}_\ora\|_2.
	\]
	Now
	\begin{align}
		\|\hat{\mbb \mu} - {\mbb \mu}_\ora\|_2 & \le \|\hat{\mbb \mu} - \tilde{\mu}({\mbb X})\|_2 +  
		\|{\mbb \mu}_\ora - \tilde{\mu}({\mbb X})\|_2 \label{eq:horamu}\\
		& \le \|{\mbb \mu}_\ora - \hat{\mbb \mu}_\ora\|_2 +  
		2 \|{\mbb \mu}_\ora - \tilde{\mu}({\mbb X})\|_2, \notag
	\end{align}
	using the above.
	
	\newrev{On $\mathcal{A}_2$ then, we have
		\[
		|\delta_2| \leq \frac{1}{\sqrt{n}} \sum_{i=1}^n a_i (T_i - \pi_i)
		\]
		for some $\mathcal{D}$-measurable random vector $\mbb a$ such that $c_\pi(1-c_{\pi}) \|\mbb a\|_2 \leq  \|{\mbb \mu}_\ora - \hat{\mbb \mu}_\ora\|_2 +  
		2 \|{\mbb \mu}_\ora - \tilde{\mu}({\mbb X})\|_2$. Thus by Hoeffding's inequality, for some constant $c>0$,
		\begin{align*}
			\pr\left(|\delta_2| \le \frac{c e_n}{\sqrt{n}} \left( \|{\mbb \mu}_\ora - \hat{\mbb \mu}_\ora\|_2 +  \|{\mbb \mu}_\ora - \tilde{\mu}({\mbb X})\|_2\right) \;\Big|\; \mathcal{D} \right) \leq 2 e^{-e_n^2}.
		\end{align*}
		Combining the above with Lemma~\ref{lem:muora}, we obtain~\eqref{eq:biaseff}, thereby proving the desired result.}
\end{proof}

\begin{proof}[Proof of Corollary~\ref{cor:efficiency}]
	First note that
	\[
	(s + \sqrt{s \log n}) \frac{\log p}{\sqrt{n}} \leq b_n + \sqrt{s}\frac{\sqrt{\log p}}{n^{1/4}} \frac{\sqrt{\log n \log p}}{n^{1/4}} \leq (1+\sqrt{c})\sqrt{b_n}.
	\]
	
	Using the fact that for $a,b, \hat{a}, \hat{b} \in \R$, $ab - \hat{a}\hat{b} = (a-\hat{a})b + \hat{a} (b-\hat{b})$, we have 
	\begin{align*}
		|\tilde{\mu}(X_i) - \mu_{\ora}(X_i)| \leq |\pi_i - \hat{\pi}_i| \big( |r_{1}(X_i)| + |r_{0}(X_i)|\big) + | \hat{r}_{1}(X_i) - r_{1}(X_i)| + | \hat{r}_{0}(X_i) - r_{0}(X_i)|.
	\end{align*}
	Now from Assumption~\ref{as:subgY} and Jensen's inequality, each $r_{t}(X_i) - \E r_t(X_i) $ is sub-Gaussian for each $t=0,1$. Thus by a similar argument to the proof of Lemma~\ref{lem:t4}, there exists a constant $c_1$ such that on an event with probability at least $2n^{-m}$, we have
	\begin{equation} \label{eq:r_bd}
		\frac{1}{n} \sum_{i=1}^n(r_{1}^2(X_i) + r_{0}^2(X_i)) \leq c_1.
	\end{equation}
	Now from \eqref{eq:d1}, we have that on an event with probability at least
	\[
	1 - \sum_{j=1}^3 \pr(\Omega_j^c(c_\gamma, c_{\tilde{\mu}}, c_{\hat{\pi}})) - c_1(n^{-m} + p^{-m}),
	\]
	for some constant $c_1$, both \eqref{eq:r_bd} holds and
	\[
	\max_{i=1,\ldots,n} |\pi_i - \hat{\pi}_i| \leq c_1\frac{\sqrt{b_n \log n}}{n^{1/4}}.
	\]
	Thus on this event, there exists a constant $c_2$ such that
	\begin{equation}\label{eq:pirbnd}
		\sqrt{\sum_{i = 1}^n (\hat{\pi}_i - \pi_i)^2 (|r_1(X_i)| + |r_0(X_i)|)^2} \le c_2  n^{1/4} \sqrt{b_n \log n}.
	\end{equation}
	Putting things together, we have the required bound on $|\delta|$.
\end{proof}

\section{A lower bound for treatment effect estimation}\label{sec:lower}

\newrev{
	In this section, we study the minimax optimatility of average treatment effect estimation when the outcome regression model is a generalised linear model with link function $\psi$ and the propensity model follows a generalised linear model with link function
	\[
	\phi(u) := \frac{1 + \exp(u)}{1 + 2 \exp(u)}.
	\]
	We consider a setting where $Y$ is binary with $Y(0)$ conditional on $(X, T)$ a Bernoulli random variable with parameter $0.5$; $Y(1)$ and $T$ follow
	\[
	\pr(Y(1) = 1 \mid X) = \psi(\beta^\top X) \quad\text{and}\quad \pr(T = 1 \mid X) = \phi(\gamma^\top X);
	\]
	and $X$ follows a continuous distribution with density $f(x) := C_\phi\frac{p(x; \Sigma)}{\phi(\gamma^\top x)}$, where $p(x; \Sigma)$ represents a density function of mean zero multivariate normal random vector with covariance $\Sigma \in \R^{p \times p}$ with $C_\phi$ is a normalising constant. Moreover, one can derive that $C_\phi = \frac{2}{3}$, since
	\[
	\int C_\phi\frac{p(x; \Sigma)}{\phi(\gamma^\top x)} dx = C_\phi \int p(x; \Sigma) \left(1 + \psi(\gamma^\top x)\right) dx \overset{(a)}{=} \frac{3}{2} C_\phi = 1,
	\] 
	where for $(a)$ we use that $\gamma^\top X$ is a mean zero Gaussian random variable, and so the density of $\psi(\gamma^{\top} X)$ can be shown to be symmetric around $1/2$. With this construction, it is straightforward to see that the distribution of $X$ follows $p(x; \Sigma)$ conditional on $T = 1$. Moreover
	\[
	\frac{C_\phi p(x; \Sigma) (1 - \phi(\gamma^\top x))}{(1 - C_\phi)\phi(\gamma^\top x)} = \frac{C_\phi}{1 - C_\phi} f(x) \psi(\gamma^\top x) = 2 f(x) \psi(\gamma^\top x)
	\]
	is the conditional distribution of $X$ given $T=0$.
	
	We see that then the joint distribution of $(X, T, Y)$ can be parameterised by $\theta := (\beta, \gamma, \Sigma) \in \R^p \times \R^p \times \R^{p \times p}$. Given a parameter space $\Theta$ for $\theta$, we define $\mathcal{I}_\alpha(\Theta)$ as the set of $(1 - \alpha)$-level confidence intervals for $\tau$ over $\Theta$:
	\[
	\mathcal{I}_\alpha(\Theta) := \{\ci_\alpha(\D): \forall \theta \in \Theta, \pr_\theta^{\otimes n}(\tau_\theta \in \ci_\alpha(\D)) \ge 1 - \alpha\},
	\]
	where $\pr_\theta^{\otimes n}$ means that the data in $\D := (X_i, Y_i, T_i)_{i = 1}^n$ is i.i.d. generated by the distribution parameterised by $\theta$, and $\tau_\theta := \E_\theta[Y(1) - Y(0)]$. Given $s > 0, M > 1$, let
	\[
	\Theta(s, M) := \{(\beta, \gamma, \Sigma): \|\beta\|_0, \|\gamma\|_0 \le s; \|\beta\|_2, \|\gamma\|_2 \le M; M^{-1} \le \lambda_{\min}(\Sigma) \le \lambda_{\max}(\Sigma) \le M\}.
	\]
	We have the following proposition.
	
	\begin{prop}\label{prop:lower}
		Let $M > 1$ be given. Suppose that $s \le \min\{p^c, \frac{n}{\log p}\}$ for some constant $0 \le c < 1 / 2$. Then for any $0 < \alpha < \frac{1}{2}$ there exists a constant $c_1 > 0$ depending on $M,c$ and $\alpha$ such that
		\[
		\inf_{\ci_\alpha(\D) \in \mathcal{I}_\alpha(\Theta(s, M))} \sup_{\theta \in \Theta(s, M)} \E_\theta^{\otimes n} L(\ci_\alpha(\D)) \ge c_1 \left(s \frac{\log p}{n} + \frac{1}{\sqrt{n}}\right),
		\]
		where $L(\cdot)$ is the length of the confidence interval, and $\E_\theta^{\otimes n}$ represents the expectation over the distribution $\pr_\theta^{\otimes n}$.
	\end{prop}
	
	Informally then, Proposition~\ref{prop:lower} claims that the $n \gtrsim s^2 / \log p$ is essentially a minimal sample complexity for an ATE estimator to be $\sqrt{n}$-consistent. It also shows that the upper bound presented in Theorem~\ref{thm:dipw} is minimax optimal even for $s \gg \sqrt{n} / \log p$. Proposition~\ref{prop:lower} is a direct consequence of the following two propositions.
	
	\begin{prop}\label{prop:lower1}
		Consider the set up of Proposition~\ref{prop:lower}, we have that for any $0 < \alpha < \frac{1}{2}$ there exists some constant $c_1 > 0$ depending on $M, c$ and $\alpha$ such that
		\[
		\inf_{\ci_\alpha(\D) \in \mathcal{I}_\alpha(\Theta(s, M))} \sup_{\theta \in \Theta(s, M)} \E_\theta^{\otimes n} L(\ci_\alpha(\D)) \ge c_1 s \frac{\log p}{n}.
		\]
	\end{prop}
	
	\begin{prop}\label{prop:lower2}
		Same as Proposition~\ref{prop:lower1}, but with $s \frac{\log p}{n}$ replaced by $\frac{1}{\sqrt{n}}$.
	\end{prop}
}
\newrev{
	
	\subsection{Proof of Proposition~\ref{prop:lower1}}
	
	\begin{proof}
		Let $\lambda^2 := \frac{1}{c_{\lambda_1} n} \log\left(1 + \frac{p - 1}{c_{\lambda_2} (s - 1)^2}\right)$ for some constants $c_{\lambda_1}, c_{\lambda_2} > 0$ to be chosen. Given any $\gamma \in \R^p$, define $\Omega_0, \Omega_{1,\gamma} \in \R^{p \times p}$ by
		\[
		\Omega_0 := \left(\begin{matrix}
			1 - (s - 1) \lambda^2 & 0\\
			0 & I
		\end{matrix}\right) \quad \text{ and }\quad \Omega_{1,\gamma} := \left(\begin{matrix}
			1 & \gamma_{-1}^\top\\
			\gamma_{-1} & I
		\end{matrix}\right).
		\]
		Let $[p] := \{1, \ldots, p\}$ and let $\mathcal{A}_{s - 1}$ be the set of $(s-1)$-subsets of $[p] \setminus \{1\}$.
		Let 
		\[
		\Gamma_{s - 1} := \left\{\gamma \in \R^p: \gamma = (s - 1)\lambda^2 e_1 + \sum_{j \in A_{s - 1}} \lambda e_{j},\,  A_{s - 1} \in \mathcal{A}_{s - 1}\right\}
		\]
		where $e_{j}, j = 1, \ldots, p$ are the standard basis vectors. We define two sets of hypotheses $\mathcal{H}_0 := \{H_0\}$, $\mathcal{H}_1 := \{H_{1, \gamma}: \gamma \in \Gamma_{s - 1}\}$ where $H_0$ and $H_{1, \gamma}$ are defined by:
		\begin{equation}\label{eq:hypothesis}
			\begin{split}
				H_{0}: &\; \theta = ((0, \ldots, 0)^\top, (0, \ldots, 0)^\top, \Omega_0^{-1}) \in \R^p \times \R^p \times\R^{p \times p} \\
				H_{1, \gamma}: &\; \theta = (\gamma, \gamma, \Omega_{1, \gamma}^{-1}) \in \R^p \times \R^p \times \R^{p \times p}.
			\end{split}
		\end{equation} 
		We shall take the constants $c_{\lambda_1}, c_{\lambda_2}$ are chosen large enough such that all above hypotheses are within $\Theta(s, M)$, and that  uniformly for all $\gamma, \gamma' \in \Gamma_{s - 1}$, $\|\gamma\|_2 \le 1 / 2$ and $\Omega_{1, \gamma} + \Omega_{1, \gamma'} - \Omega_0$ is positive definite. 
		To see that this is possible, note that $s\lambda^2$ can be chosen to be arbitrarily small for all $n, p$ and $s $ with $s \leq n / \log p$.
		Finally, for notational simplicity we also let $f_{1, \gamma}(x \mid T = 1)$ denote the density of $X$ conditional on $T = 1$.
		
		Then it is straightforward to show that the ATE under $H_0$ satisfies $\tau(H_0) = 0$, and all the $\gamma \in \Gamma_{s - 1}$ share the same value of $\tau(H_{1, \gamma})$, which satisfies that for some constant $c' > 0$ depending only on $M$,
		\begin{equation}
			\label{diff}
			\begin{split}
				\tau ({H}_{1, \gamma}) & = \frac{2}{3}  \int \left(\psi(x^\top \gamma) - \frac{1}{2}\right) \frac{1 + 2 \exp(x^\top \gamma)}{1 + \exp(x^\top \gamma)} f_{1,\gamma}(x \mid T = 1) d x \\
				& = \frac{2}{3}  \int \left(\psi(x^\top \gamma) - \frac{1}{2}\right)^2 f_{1,\gamma}(x \mid T = 1) d x \ge c' \gamma^\top \Omega_{1, \gamma} \gamma \ge c' M^{-1} (s - 1) \lambda^2,
			\end{split}
		\end{equation}
		where for the penultimate inequality we use Lemma~\ref{lem:psisq}. Let $\pr_{1, \gamma}$ and $\pr_0$ denote the distribution of $(X, Y, T)$ under $H_{1, \gamma}$ and $H_0$, respectively. Let $\pr_{1, \gamma}^{\otimes n}$ and $\pr_0^{\otimes n}$ be the distributions of the observed samples $\{(X_i, Y_i, T_i)\}_{i=1}^n$ where each $(X_i, Y_i, T_i)$ is an i.i.d. realization from $\pr_{1, \gamma}$ and $\pr_0$, respectively. Finally, let $\pr_1^{\otimes n} := \frac{1}{|\Gamma_{s - 1}|} \sum_{\gamma \in \Gamma_{s - 1}} \pr_{1, \gamma}^{\otimes n}$ denote the distribution of $\{(X_i, Y_i, T_i)\}_{i=1}^n$ taking a uniform prior on $\gamma$.
		
		If we can show that 
		\begin{equation}\label{eq:singlechisq}
			\int \frac{\diff \pr_{1, \gamma}\diff \pr_{1, \gamma'}}{\diff \pr_0} \le 1 + c'' \gamma_{-1}^\top \gamma_{-1}',
		\end{equation}
		then following exactly the same proof as~\citet[Supplementary Material: Section 1.4.1]{CGM23}, we have by choosing constant $c_{\lambda_1}$ sufficiently large, the $\chi^2$-divergence between $\pr_1^{\otimes n}$ and $\pr_0^{\otimes n}$ satisfies
		\[
		\chi^2(\pr_1^{\otimes n}, \pr_0^{\otimes n}) \le \left(1 + \frac{1}{c_{\lambda_2} (s - 1)}\right)^{s - 1} - 1.
		\]
		In light of this, we have that given any $\alpha \in (0, \frac{1}{2})$, by choosing the constant $c_{\lambda_2}$ large enough, we can always have that
		\[
		\mathrm{TV}(\pr_1^{\otimes n}, \pr_0^{\otimes n}) \le \sqrt{\chi^2(\pr_1^{\otimes n}, \pr_0^{\otimes n})} < 1 - 2 \alpha - c'
		\]
		for some constant $c' \in (0, 1 - 2\alpha)$. Armed with the above and~\eqref{diff}, the desired result follows from Lemma~\ref{lem:cg17}. Finally, to prove~\eqref{eq:singlechisq} we apply Lemma~\ref{lem:singlechisq} below.
	\end{proof}
}

\newrev{
	
	\begin{lemma}\label{lem:cg17}
		Consider two sets of hypotheses $\mathcal{H}_0, \mathcal{H}_1 \subseteq \Theta(s, M)$ where $\tau_{\theta_0} \equiv \tau_0$ for all $\theta_0 \in \mathcal{H}_0$ and $\tau_{\theta_1} \equiv \tau_1$ for all $\theta_1 \in \mathcal{H}_1$. Given any $\theta \in \mathcal{H}_0 \cup \mathcal{H}_1$, let $\pr_\theta^{\otimes n}$ be the distribution of $\{(X_i, Y_i, T_i)\}_{i = 1}^n$ with each $(X_i, Y_i, T_i)$ i.i.d. generated from the distribution parametrised by $\theta$, and let $\pr_1^{\otimes n} := \frac{1}{|\mathcal{H}_1|} \sum_{\theta_1 \in \mathcal{H}_1} \pr_{\theta_1}^{\otimes n}$, $\pr_0^{\otimes n} := \frac{1}{|\mathcal{H}_0|} \sum_{\theta_1 \in \mathcal{H}_0} \pr_{\theta_1}^{\otimes n}$. Then we have that for any confidence level $\alpha \in (0, \frac{1}{2})$, 
		\[
		\inf_{\ci_\alpha(\D) \in \mathcal{I}_\alpha(\Theta(s, M))} \sup_{\theta \in \Theta(s, M)} \E_\theta^{\otimes n} L(\ci_\alpha(\D)) \ge |\tau_1 - \tau_0| (1 - 2 \alpha - \mathrm{TV}(\pr_1^{\otimes n}, \pr_0^{\otimes n}))_+,
		\]
		where $\mathrm{TV}(\cdot, \cdot)$ represents the total variation distance, and for any $a \in \R$, $a_+ := a  \kron_{\{a \ge 0\}}$.
	\end{lemma}
	
	
	Lemma~\ref{lem:cg17} follows easily from \citet[Lemma 1]{CG17}; its proof is provided in~\citet[Step 3 in Section 1.4.1 of the Supplementary Material]{CGM23}. 
	
	
	\begin{lemma}\label{lem:singlechisq}
		Consider the setup of Proposition~\ref{prop:lower1} and let $\pr_{1, \gamma}, \pr_{1, \gamma'}$ and $\pr_0$ be as in the proof of Proposition~\ref{prop:lower1}. There exists a constant $c' > 0$ such that 
		\[
		\int \frac{\diff \pr_{1, \gamma}\diff \pr_{1, \gamma'}}{\diff \pr_0} \le 1 + c' \gamma_{-1}^\top \gamma_{-1}'.
		\]
	\end{lemma}
	
	\begin{proof}
		Let $f_{1, \gamma} (x\mid T = t)$ be the density of $X$ conditional on $T=t$ under $H_{1, \gamma}$, let $f_0 (x)$ be the density of $X$ under $H_0$. Then some simple calculation yields that
		\begin{align*}
			\int \frac{\diff \bbP_{1, \gamma} \diff \bbP_{1, \gamma'}}{\diff \bbP_{0}} 
			=\; & C_\phi \int \frac{2 f_{1,\gamma}(x \mid T = 1) f_{1,\gamma'}(x \mid T = 1)}{f_0(x)} \psi(x^\top \gamma) \psi(x^\top \gamma') \diff x \\
			& + C_\phi \int \frac{2 f_{1,\gamma}(x \mid T = 1) f_{1,\gamma'}(x \mid T = 1)}{f_0(x)} (1 - \psi(x^\top \gamma)) (1 - \psi(x^\top \gamma')) \diff x
			\\
			& + (1 - C_\phi)  \int \frac{f_{1, \gamma}(x \mid T = 0) f_{1, \gamma'}(x \mid T = 0)}{f_0(x)} d x.
		\end{align*}
		Recall that $p(x; \Sigma)$ is the density of a multivariate normal random variable with mean zero and covariance matrix $\Sigma$; using the fact that $f_{1,\gamma}(x \mid T = 1) \equiv p(x; \Omega_{1, \gamma}^{-1})$ and $f_{1,\gamma}(x \mid T = 0) \equiv 2 p(x; \Omega_{1, \gamma}^{-1}) \psi(x^\top \gamma)$, we have
		\begin{align*}
			\int \frac{\diff \bbP_{1, \gamma} \diff \bbP_{1, \gamma'}}{\diff \bbP_{0}} = & C_\phi \int \frac{2 p(x; \Omega_{1, \gamma}^{-1}) p(x; \Omega_{1, \gamma'}^{-1})}{p(x; \Omega_0^{-1})} \psi(x^\top \gamma) \psi(x^\top \gamma') \diff x \\
			& + C_\phi \int \frac{2 p(x; \Omega_{1, \gamma}^{-1}) p(x; \Omega_{1, \gamma'}^{-1})}{p(x; \Omega_0^{-1})} (1 - \psi(x^\top \gamma)) (1 - \psi(x^\top \gamma')) \diff x
			\\
			& + (1 - C_\phi)  \int \frac{4 p(x; \Omega_{1, \gamma}^{-1}) p(x; \Omega_{1, \gamma'}^{-1})}{p(x; \Omega_0^{-1})} \psi(x^\top \gamma) \psi(x^\top \gamma') d x \\
			= & \int \frac{4 p(x; \Omega_{1, \gamma}^{-1}) p(x; \Omega_{1, \gamma'}^{-1})}{p(x; \Omega_0^{-1})} \psi(x^\top \gamma) \psi(x^\top \gamma') d x,
		\end{align*}
		where in the last line we use the symmetry of the logistic function and the density of mulvariate normal random variable.
		
		Expanding $p(x; \Omega_0^{-1})$ etc., we have
		\begin{align*}
			& \int \frac{\diff \bbP_{1, \gamma} \diff \bbP_{1, \gamma'}}{\diff \bbP_{0}} \\
			& = \sqrt{\frac{\det(\Omega_{1, \gamma}) \det(\Omega_{1, \gamma'})}{\det(\Omega_0) (2\pi)^p}} 4 \int \exp\left\{-\frac{1}{2}x^\top (\Omega_{1, \gamma} + \Omega_{1, \gamma'} - \Omega_0)x\right\} \psi(x^\top \gamma) \psi(x^\top \gamma') d x \\
			& = \sqrt{\frac{\det(\Omega_{1, \gamma}) \det(\Omega_{1, \gamma'})}{\det(\Omega_0) \det(\Omega_{1, \gamma} + \Omega_{1, \gamma'} - \Omega_0)}} 4 \E[\psi(Z^\top \gamma) \psi(Z^\top \gamma')],
		\end{align*}
		where $Z \sim \mathcal{N}(0, (\Omega_{1, \gamma} + \Omega_{1, \gamma'} - \Omega_0)^{-1})$. Some simple calculation yields 
		\[
		\det(\Omega_{1, \gamma}) = \det(\Omega_{1, \gamma'}) = 1 - \|\gamma_{-1}\|_2^2
		\]
		and
		\[
		\det(\Omega_{1, \gamma} + \Omega_{1, \gamma'} - I) = 1 + (s - 1)\lambda^2 - \|\gamma_{-1} + \gamma_{-1}'\|_2^2 = 1 - \|\gamma_{-1}\|_2^2 - 2 \gamma_{-1}^\top \gamma_{-1}',
		\]
		which further gives us that
		\begin{equation}\label{eq:detdiff}
			\sqrt{\frac{\det(\Omega_{1,\gamma}) \det(\Omega_{1,\gamma'})}{\det(\Omega_0) \det(\Omega_{1,\gamma} + \Omega_{1,\gamma'} - \Omega_0)}} = \sqrt{\frac{(1 - \|\gamma_{-1}\|_2^2)^2}{(1 - \|\gamma_{-1}\|_2^2 - 2 \gamma_{-1}^\top \gamma_{-1}')(1 - \|\gamma_{-1}\|_2^2)}} \le 1 + 4 \gamma_{-1}^\top \gamma_{-1}',
		\end{equation}
		where for the last equality we use that $\|\gamma_{-1}\|_2 \le \frac{1}{2}$.
		
		We now focus on $4\E[\psi(Z^\top \gamma) \psi(Z^\top \gamma')]$. Using the blockwise inversion formula, we get
		\[
		(\Omega_{1, \gamma} + \Omega_{1, \gamma'} - \Omega_0)^{-1} = \left(
		\begin{matrix}
			\frac{1}{1 - \|\gamma_{-1}\|_2^2 - 2 \gamma_{-1}^\top \gamma_{-1}'} & - \frac{(\gamma_{-1} + \gamma_{-1}')^\top}{1 - \|\gamma_{-1}\|_2^2 - 2 \gamma_{-1}^\top \gamma_{-1}'} \\
			- \frac{\gamma_{-1} + \gamma_{-1}'}{1 - \|\gamma_{-1}\|_2^2 - 2 \gamma_{-1}^\top \gamma_{-1}'} & I + \frac{(\gamma_{-1} + \gamma_{-1}') (\gamma_{-1} + \gamma_{-1}')^\top}{1 - \|\gamma_{-1}\|_2^2 - 2 \gamma_{-1}^\top \gamma_{-1}'}
		\end{matrix}
		\right),
		\]
		which means that
		\begin{equation}\label{eq:samegamma}
			\begin{aligned}
				& \gamma^\top (\Omega_{1, \gamma} + \Omega_{1, \gamma'} - \Omega_0)^{-1} \gamma = \gamma'^\top (\Omega_{1, \gamma} + \Omega_{1, \gamma'} - \Omega_0)^{-1} \gamma' \\
				& = \|\gamma_{-1}\|_2^2 + \frac{((s - 1) \lambda^2 - (\|\gamma_{-1}\|_2^2 + \gamma_{-1}^\top \gamma_{-1}'))^2}{1 - \|\gamma_{-1}\|_2^2 - 2 \gamma_{-1}^\top \gamma_{-1}'} \\
				& = \|\gamma_{-1}\|_2^2 + \frac{(\gamma_{-1}^\top \gamma_{-1}')^2}{1 - \|\gamma_{-1}\|_2^2 - 2 \gamma_{-1}^\top \gamma_{-1}'} \le \|\gamma_{-1}\|_2^2 + 4 (\gamma_{-1}^\top \gamma_{-1}')^2 \le 1.
			\end{aligned}
		\end{equation}
		In light of the above and that $\psi(u) \le \Phi(u)$ for all $u \ge 0$ (which we prove via Lemma~\ref{lem:phi}), it follows from exactly the same proof as~\citet[Lemma~2]{CGM23} that for some universal constant $C > 0$,
		\[
		4 \E[\psi(Z^\top \gamma) \psi(Z^\top \gamma')] \le 1 + C \gamma^\top (\Omega_{1, \gamma} + \Omega_{1, \gamma'} - \Omega_0)^{-1} \gamma'.
		\]
		Following an analysis analogous to~\eqref{eq:samegamma} we further have
		\begin{equation}\label{eq:gamma}
			\gamma^\top (\Omega_{1, \gamma} + \Omega_{1, \gamma'} - \Omega_0)^{-1} \gamma' = \gamma_{-1}^\top \gamma'_{-1} + \frac{(\gamma_{-1}^\top \gamma'_{-1})^2}{1 - \|\gamma_{-1}\|_2^2 - 2 \gamma_{-1}^\top \gamma_{-1}'} \le \gamma_{-1}^\top \gamma'_{-1} + 4 (\gamma_{-1}^\top \gamma'_{-1})^2.
		\end{equation}
		In light of both~\eqref{eq:detdiff} and~\eqref{eq:gamma} and that $\gamma_{-1}^\top \gamma'_{-1} \le 1$, we prove the desired result.
	\end{proof}
	
	\begin{lemma}\label{lem:phi}
		For any $u \ge 0$, we have $\psi(u) \le \Phi(u)$.
	\end{lemma}
	
	\begin{proof}
		Since $\Phi(0) = \psi(0) = \frac{1}{2}$, $\lim_{u \to \infty} \Phi(u) = \lim_{u \to \infty} \psi(u) = 1$ and $\Phi'(0) = \frac{1}{\sqrt{2 \pi}} > \frac{1}{4} = \psi'(0)$, we only need to show that the equation 
		\begin{equation}\label{eq:phi}
			\psi'(u) = \Phi'(u)
		\end{equation}
		has at most one solution within the interval $[0, \infty)$.
		
		First, for any $u \in [0, 0.95)$, $\Phi'(u) \ge \Phi'(0.95) > \frac{1}{4} \ge \phi'(u)$. Therefore, we only need to prove that~\eqref{eq:phi} has at most one solution within $[0.95, \infty)$. Now we rewrite~\eqref{eq:phi} as
		\begin{equation}\label{eq:phi1}
			\sqrt{2 \pi} \exp\left(\frac{u^2}{2} + u\right) - (1 + \exp(u))^2 = 0
		\end{equation}
		and let $g(u)$ denote the left hand side of the above equation. We have for any $u \in [0.95, \infty)$,
		\begin{align*}
			g'(u) & = \sqrt{2 \pi} \exp\left(\frac{u^2}{2} + u\right) (u + 1) - 2 (1 + \exp(u)) \exp(u) \\
			& \ge \exp(u) \underset{ =: h(u)}{\underbrace{\left(\sqrt{2 \pi} \times 1.95 \times \exp\left(\frac{u^2}{2}\right) - 2 (1 + \exp(u))\right)}} > 0,
		\end{align*}
		where for the last inequality we use that $h(0.95) > 0$ and that
		\begin{align*}
			h'(u) & = \sqrt{2\pi} \times 1.95 \times \exp\left(\frac{u^2}{2}\right) u - 2 \exp(u) \\
			& \ge \sqrt{2\pi} \times 1.95 \times 0.95 \times \exp\left(\frac{u^2}{2}\right) - 2 \exp(u) > 0.
		\end{align*}
		Notice that the last inequality holds because uniformly for all $u \in [0.95, \infty)$,
		\begin{align*}
			& \frac{\sqrt{2\pi} \times 1.95 \times 0.95 \times \exp\left(\frac{u^2}{2}\right)}{2 \exp(u)} = \frac{\sqrt{2\pi} \times 1.95 \times 0.95}{2} \exp\left(\frac{u^2}{2} - u\right) \\
			& \ge \frac{\sqrt{2\pi} \times 1.95 \times 0.95}{2} \exp\left(- \frac{1}{2}\right) > 1.
		\end{align*}
		In light of that $g'(u) > 0$ for all $u \in [0.95, \infty)$ and $g(0.95) < 0$, we prove that~\eqref{eq:phi1}, or equivalently~\eqref{eq:phi} has at most one solution in $[0.95, \infty)$, and the desired result follows.
	\end{proof}
	
	\begin{lemma}\label{lem:psisq}
		Let $M > 0$ and $Z \sim \mathcal{N}(0, 1)$. Then there eixsts a constant $c >0$ depending only on $M$ such that for any $0 < \sigma_1 \le \sigma_2 \le M$,
		\[
		\E\left[\left(\psi(\sigma_1 Z) - \frac{1}{2}\right) \left(\psi(\sigma_2 Z) - \frac{1}{2}\right)\right] \ge c \sigma_1 \sigma_2.
		\]
	\end{lemma}
	
	\begin{proof}
		By the mean value theorem, using that $\psi'(u)$ is monotonically decreasing for $u \ge 0$, we have
		\begin{align*}
			& \E\left[\left(\psi(\sigma_1 Z) - \frac{1}{2}\right) \left(\psi(\sigma_2 Z) - \frac{1}{2}\right)\right] \ge \E\left[\left(\psi(\sigma_1 Z) - \frac{1}{2}\right) \left(\psi(\sigma_2 Z) - \frac{1}{2}\right) \kron_{\{|\sigma_2 Z| \le M\}}\right] \\
			&\qquad \ge \sigma_1 \sigma_2 \psi'(M)^2 \E[Z^2 \kron_{\{|\sigma_2 Z| \le M\}}]  \ge \sigma_1 \sigma_2 \psi'(M)^2 \E[Z^2 \kron_{\{|Z| \le 1}\}],
		\end{align*}
		as desired.
	\end{proof}
}

\newrev{
	\subsection{Proof of Proposition~\ref{prop:lower2}}
	
	\begin{proof}
		We instead consider $\mathcal{H}_0 := \{H_0\}, \mathcal{H}_1 := \{H_1\}$, where 
		\begin{equation*}
			\begin{split}
				H_{0}: &\; \theta = ((1, 0, \ldots, 0)^\top, (0, \ldots, 0)^\top, I) \in \R^p \times \R^p \times\R^{p \times p} \\
				H_1: &\; \theta = \left((1, 0, \ldots, 0)^\top, \left(\frac{1}{c_\lambda\sqrt{n}}, 0, \ldots, 0\right)^{\top}, I\right) \in \R^p \times \R^p \times \R^{p \times p}.
			\end{split}
		\end{equation*} 
		for some constant $c_\lambda \ge 1$ to be chosen. Note that apparently, both $H_0$ and $H_1$ are in $\Theta(s, M)$. Apparently $\tau(H_0) = 0$, and following analogous analysis as~\eqref{diff}, for some $c'$ depending only on $M$,
		\[
		\tau(H_1) \ge c' \beta^\top \gamma \ge \frac{c'}{c_\lambda \sqrt{n}}.
		\]
		In light of the above, letting $\pr_1, \pr_0$ denote the distributions of $(X, T, Y)$ under $H_0$, $H_1$, respectively, and letting $Z \sim \mathcal{N}(0, 1 / (c_\lambda^2 n))$, we can have from exactly the same analysis in Lemma~\ref{lem:singlechisq} that
		\[
		\int \frac{\diff \pr_1^2}{\diff \pr_0} = \frac{2}{3} + \frac{4}{3} \E[\psi(Z)^2] = 1 + \frac{4}{3} \cdot \E\left[\left(\psi(Z) - \frac{1}{2}\right)^2\right] \le 1 + \frac{1}{12} \var(Z) = 1 + \frac{1}{12} \frac{1}{c_\lambda^2 n},
		\]
		and moreover that
		\[
		\chi^2(\pr_1^{\otimes n}, \pr_0^{\otimes n}) = \left(\int \frac{\diff \pr_1^2}{\diff \pr_0}\right)^n - 1 \le \left(1 + \frac{1}{12} \frac{1}{c_\lambda^2 n}\right)^n - 1.
		\]
		This means that given any $\alpha \in (0, \frac{1}{2})$, by choosing the constant $c_\lambda$ large enough, 
		\[
		\mathrm{TV}(\pr_1^{\otimes n}, \pr_0^{\otimes n}) \le \sqrt{\chi^2 (\pr_1^{\otimes n}, \pr_0^{\otimes n})} \le 1 - 2 \alpha - c'
		\]
		for some constant $c' \in (0, 1 - 2 \alpha)$. Then the desired result follows from Lemma~\ref{lem:cg17}.
	\end{proof}
}

\section{ATE results with heteroscedastic noise} \label{sec:simate_hetero}

In Figures~\ref{fig:hlinear} and~\ref{fig:hnonlinear}, we present the results from repeating the experiments in Section~\ref{sec:simate} but replacing the homoscedastic noise $\varepsilon_i$ in \eqref{eq:sim_Y} with heteroscedastic errors
\[
\ind_{\{\pi(X_i) \geq 0.5\}} \varepsilon_i^{(+)} + \ind_{\{\pi(X_i) < 0.5\}} \varepsilon_i^{(-)},
\]
where $\epsilon_i^{(+)} \sim \mathcal{N}(0, 0.5)$ and $\epsilon_i^{(-)} \sim \mathcal{N}(0, 2)$.
The results are similar to those presented in Figures~\ref{fig:linear} and \ref{fig:nonlinear} with DIPW performing well across the scenarios considered.

\begin{figure}[t!]
	\subfigure[Toeplitz design, $s = 5$]{\includegraphics[width=0.32\textwidth]{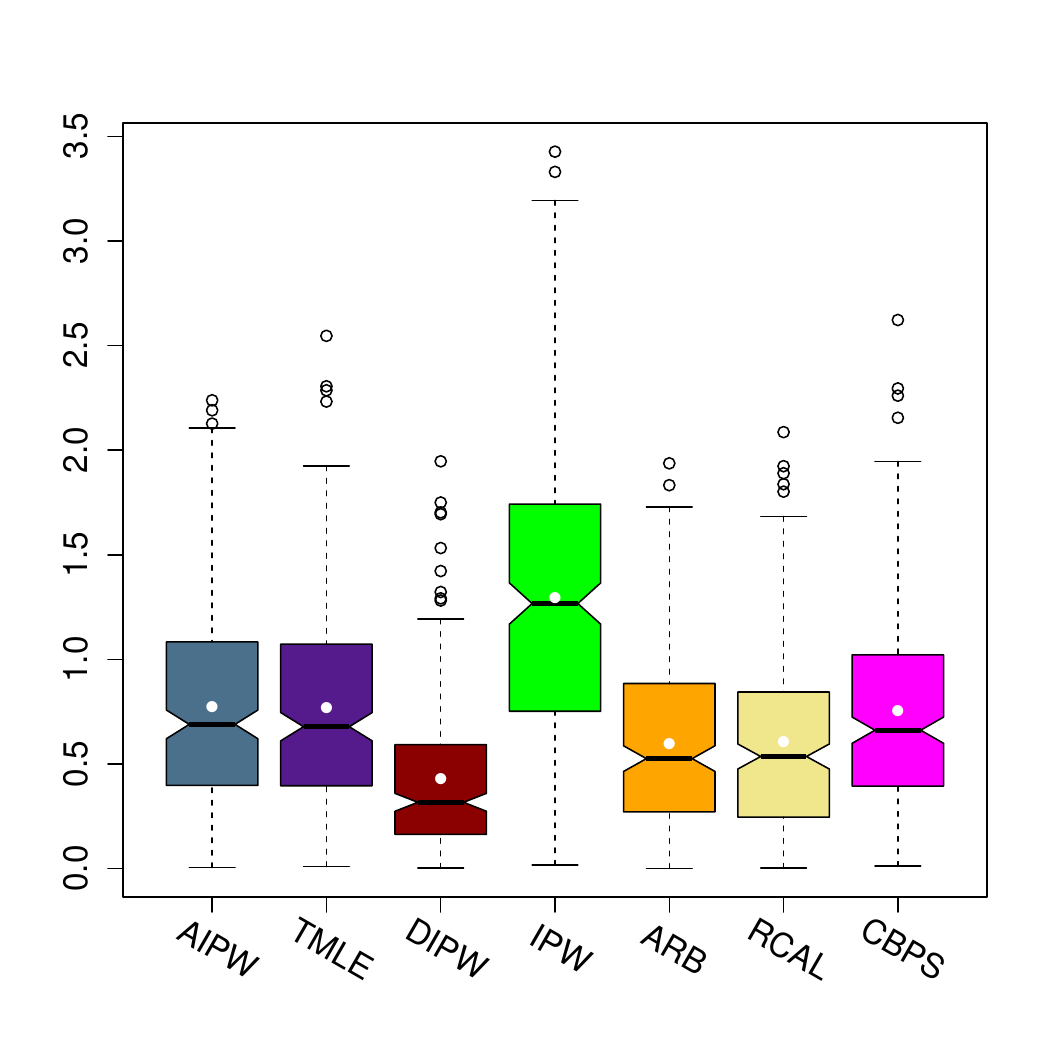}} \hfill
	\subfigure[Toeplitz design, $s = 20$]{\includegraphics[width=0.32\textwidth]{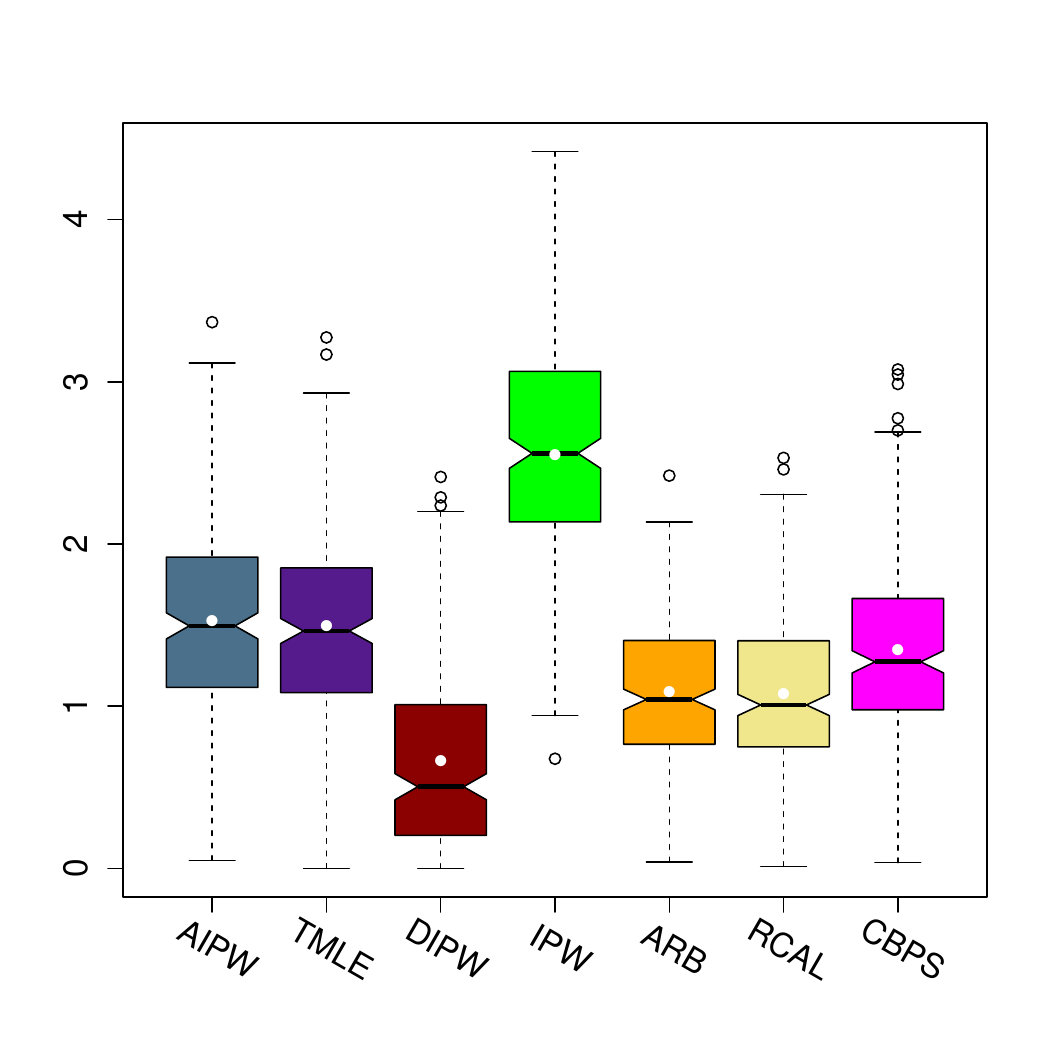}} \hfill
	\subfigure[Toeplitz design, $s = 50$]{\includegraphics[width=0.32\textwidth]{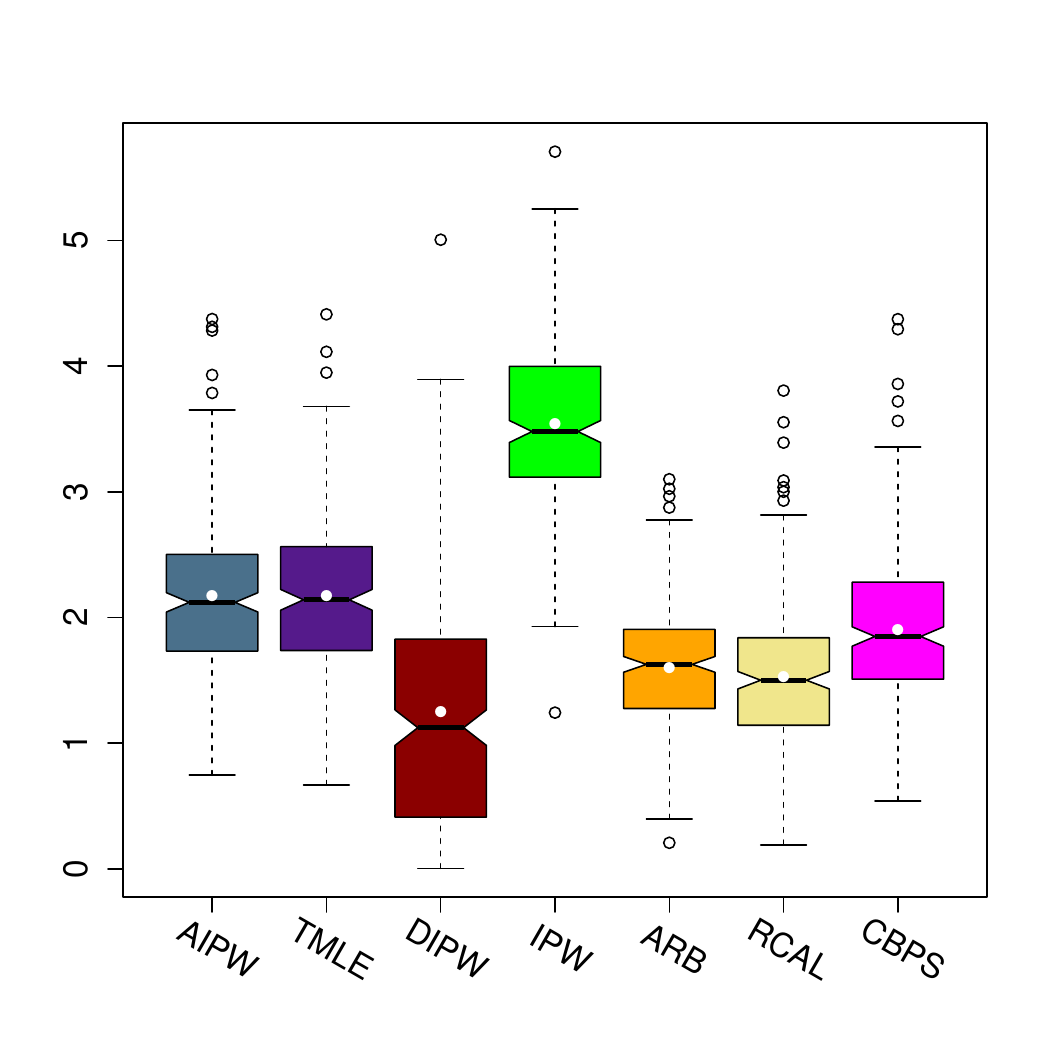}} \\
	\subfigure[Exponential design, $s = 5$]{\includegraphics[width=0.32\textwidth]{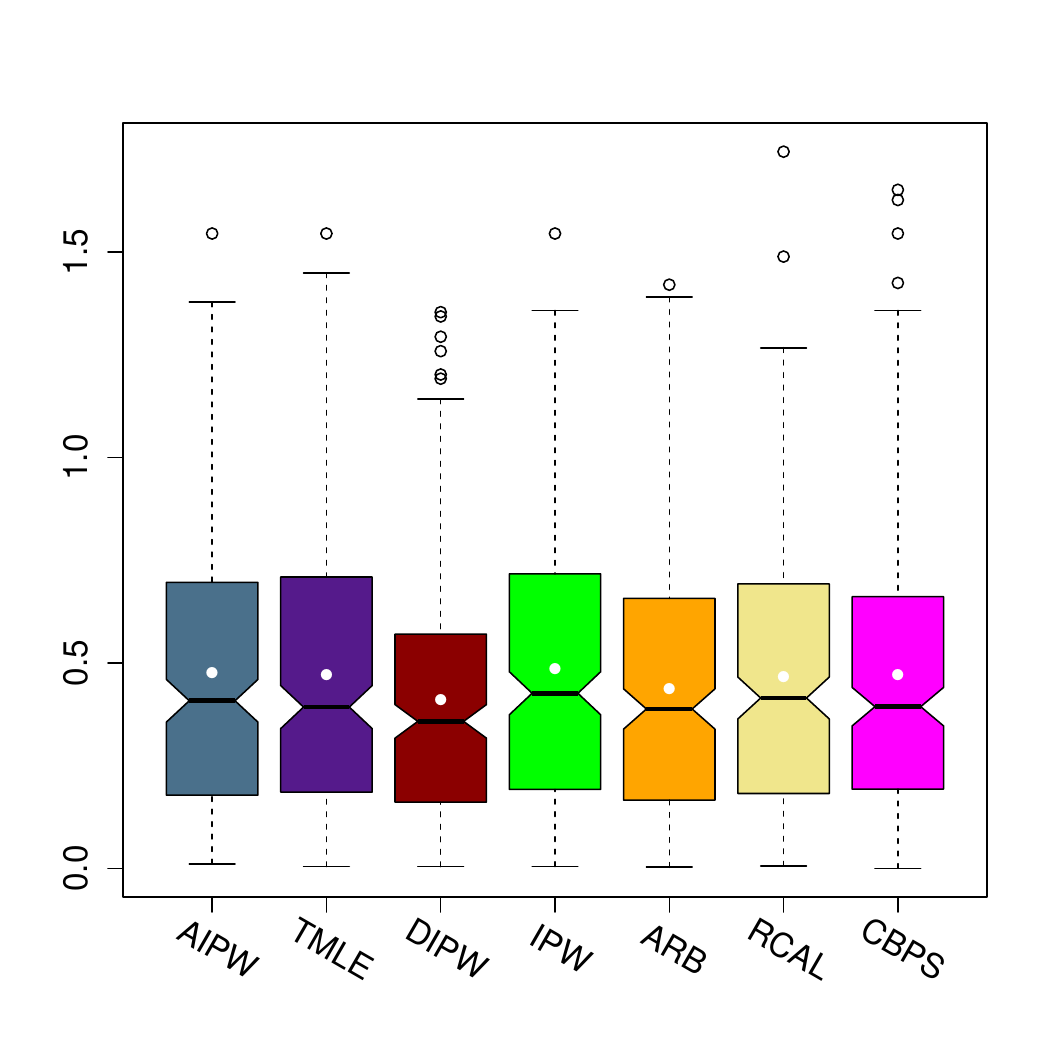}} \hfill
	\subfigure[Exponential design, $s = 20$]{\includegraphics[width=0.32\textwidth]{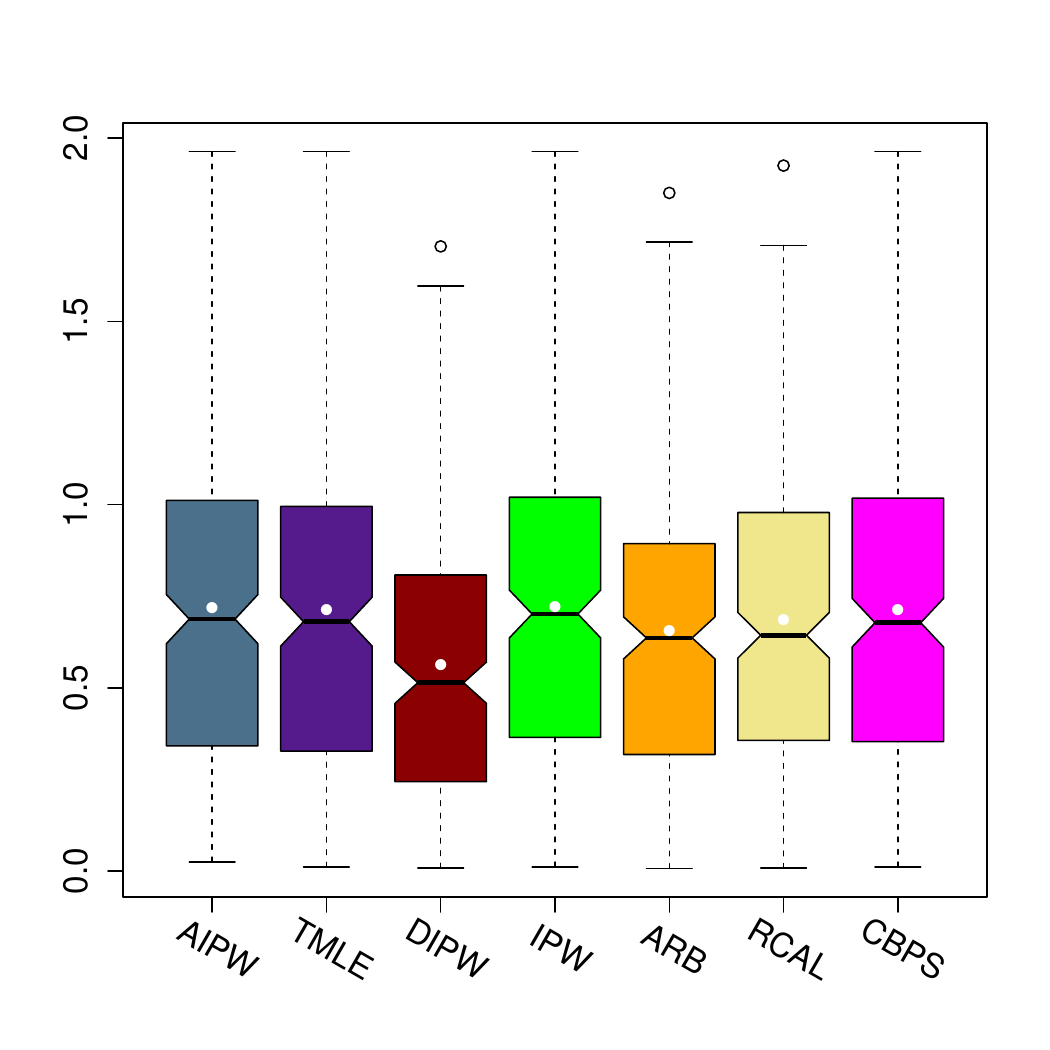}} \hfill
	\subfigure[Exponential design,  $s = 50$]{\includegraphics[width=0.32\textwidth]{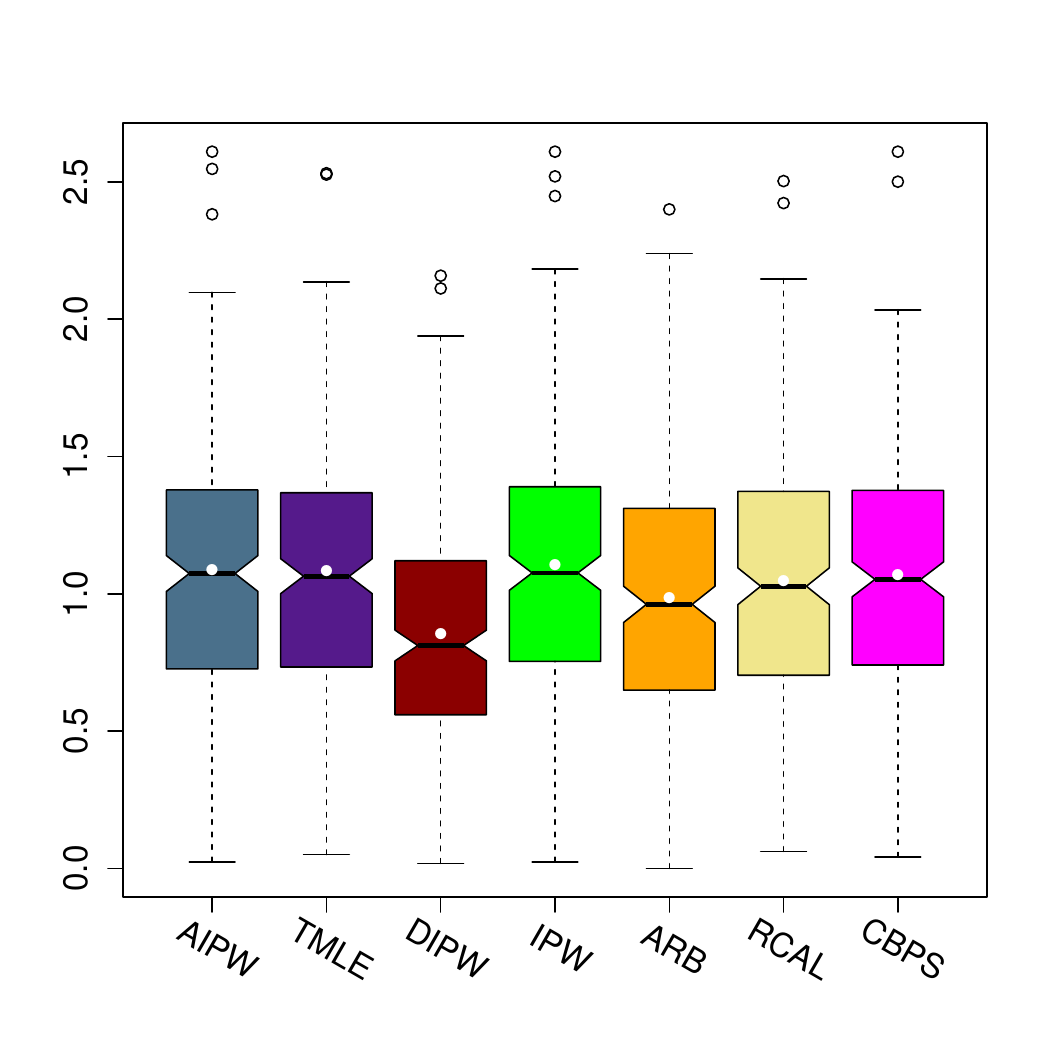}} \\
	\subfigure[Real data  design, $s = 5$]{\includegraphics[width=0.32\textwidth]{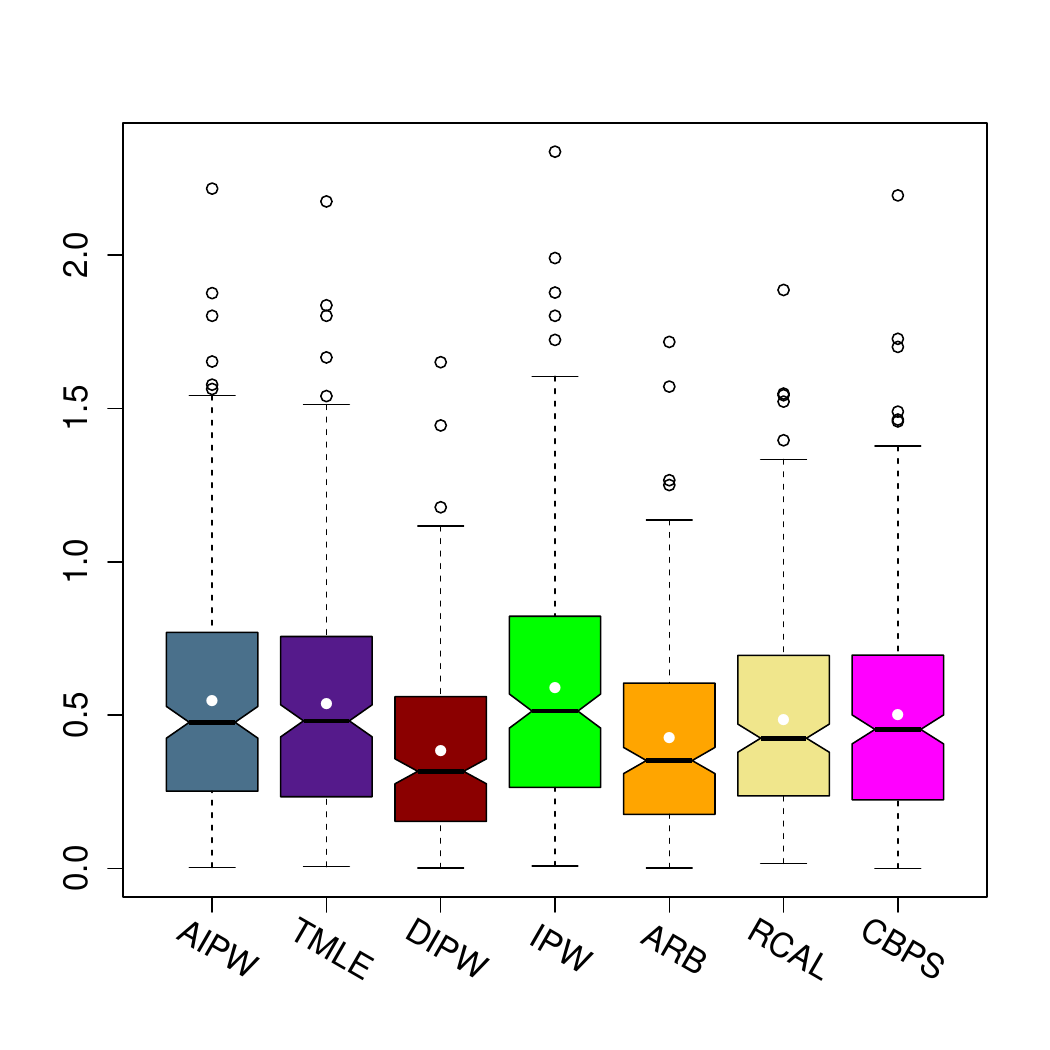}} \hfill
	\subfigure[Real data design, $s = 20$]{\includegraphics[width=0.32\textwidth]{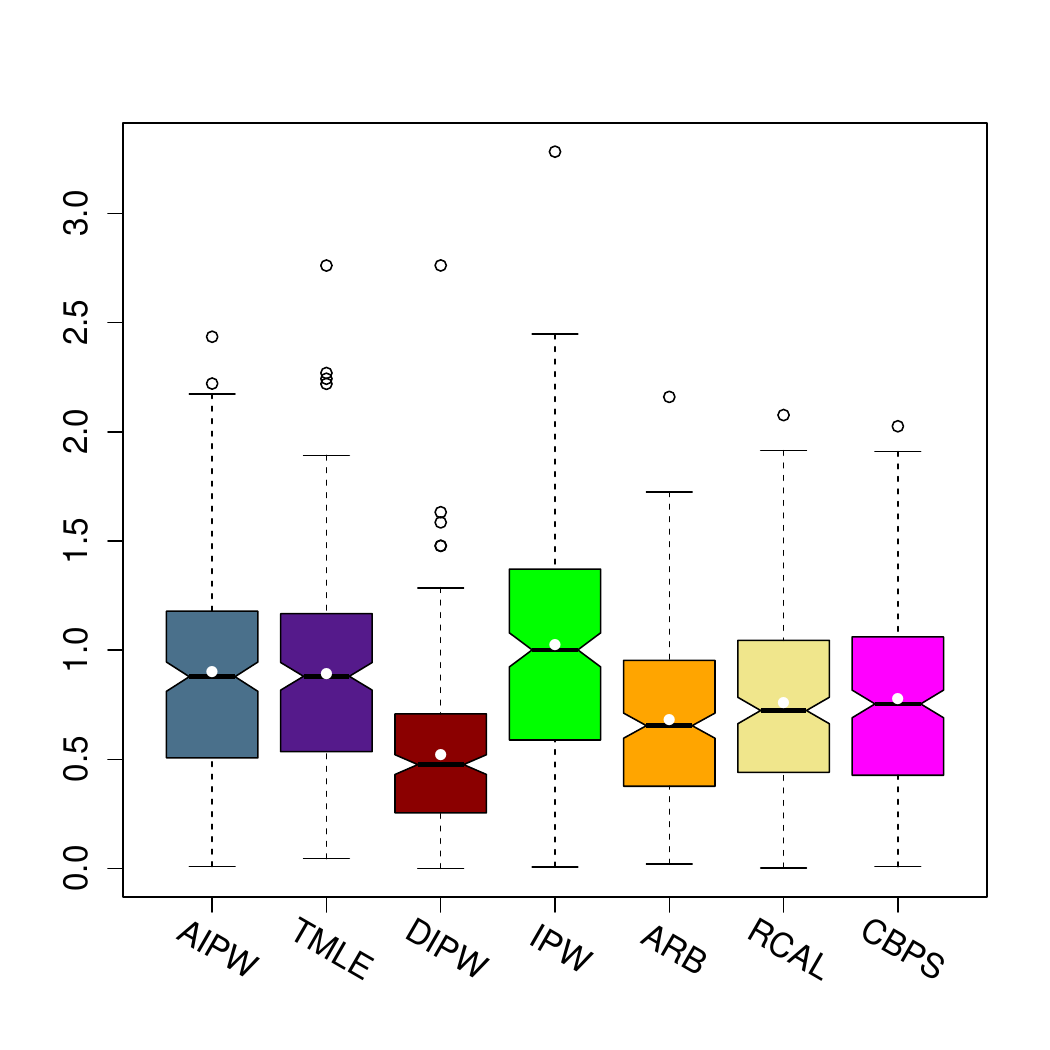}} \hfill
	\subfigure[Real data design,  $s = 50$]{\includegraphics[width=0.32\textwidth]{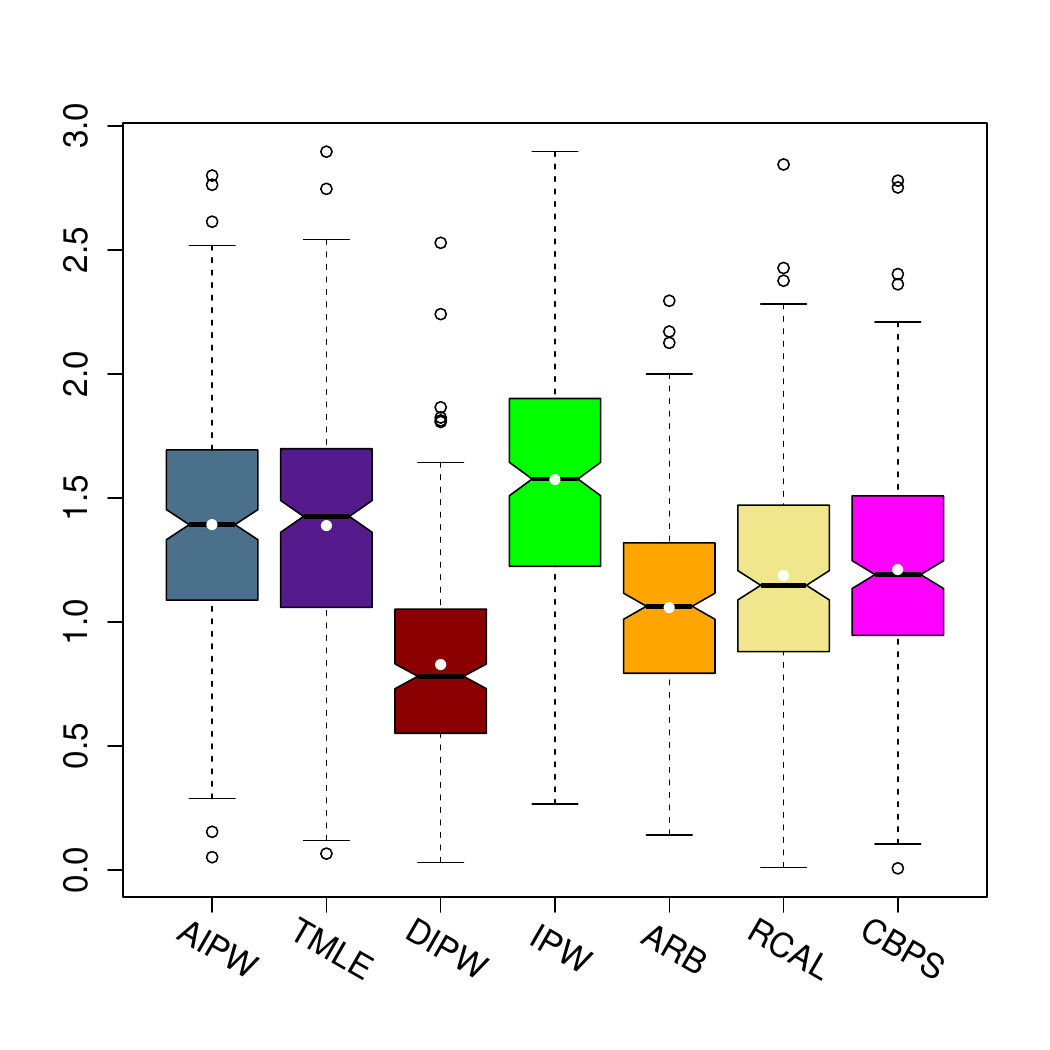}}
	\caption{Boxplot of estimation error with heteroscedastic noise and linear responses.}\label{fig:hlinear}
\end{figure}

\begin{figure}[t!]
	\subfigure[Toeplitz design, $s = 5$]{\includegraphics[width=0.32\textwidth]{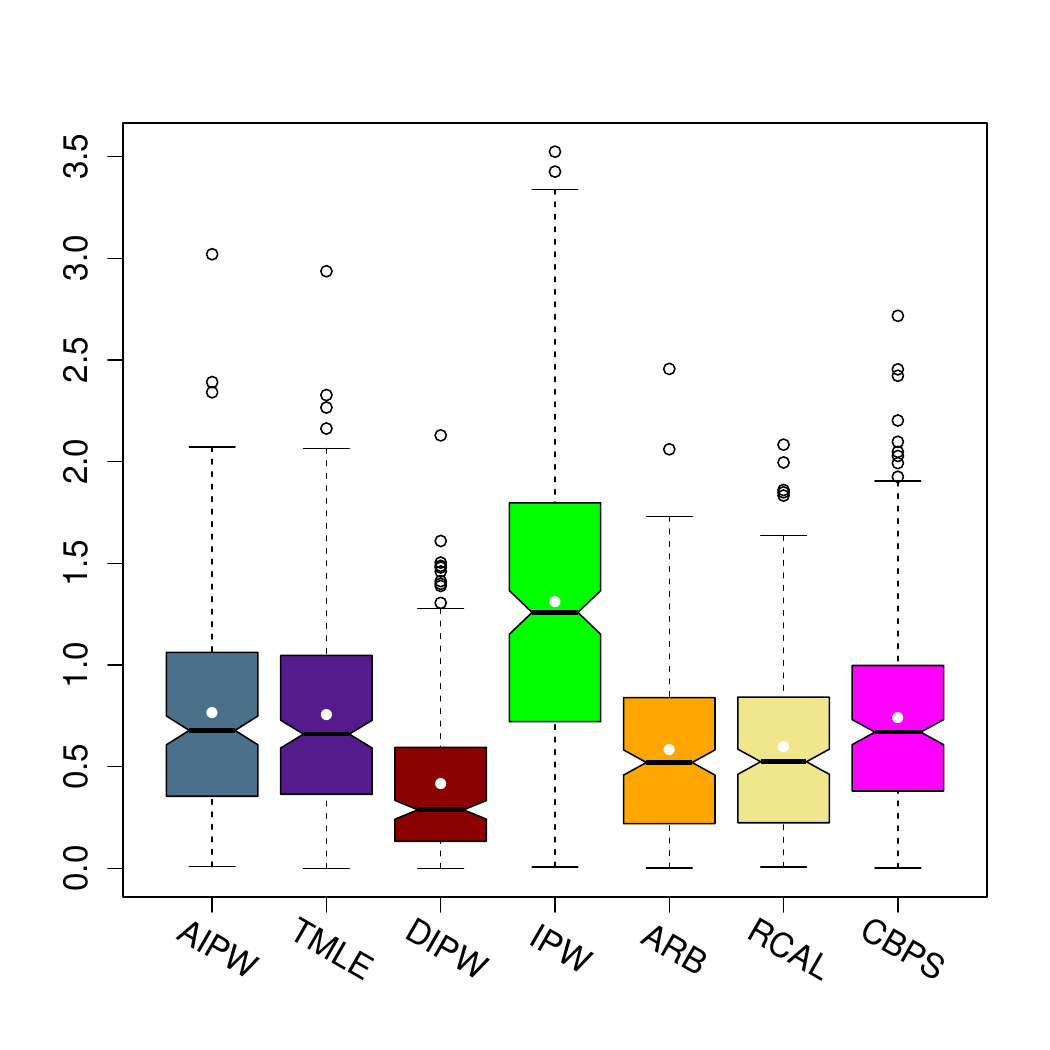}} \hfill
	\subfigure[Toeplitz design, $s = 20$]{\includegraphics[width=0.32\textwidth]{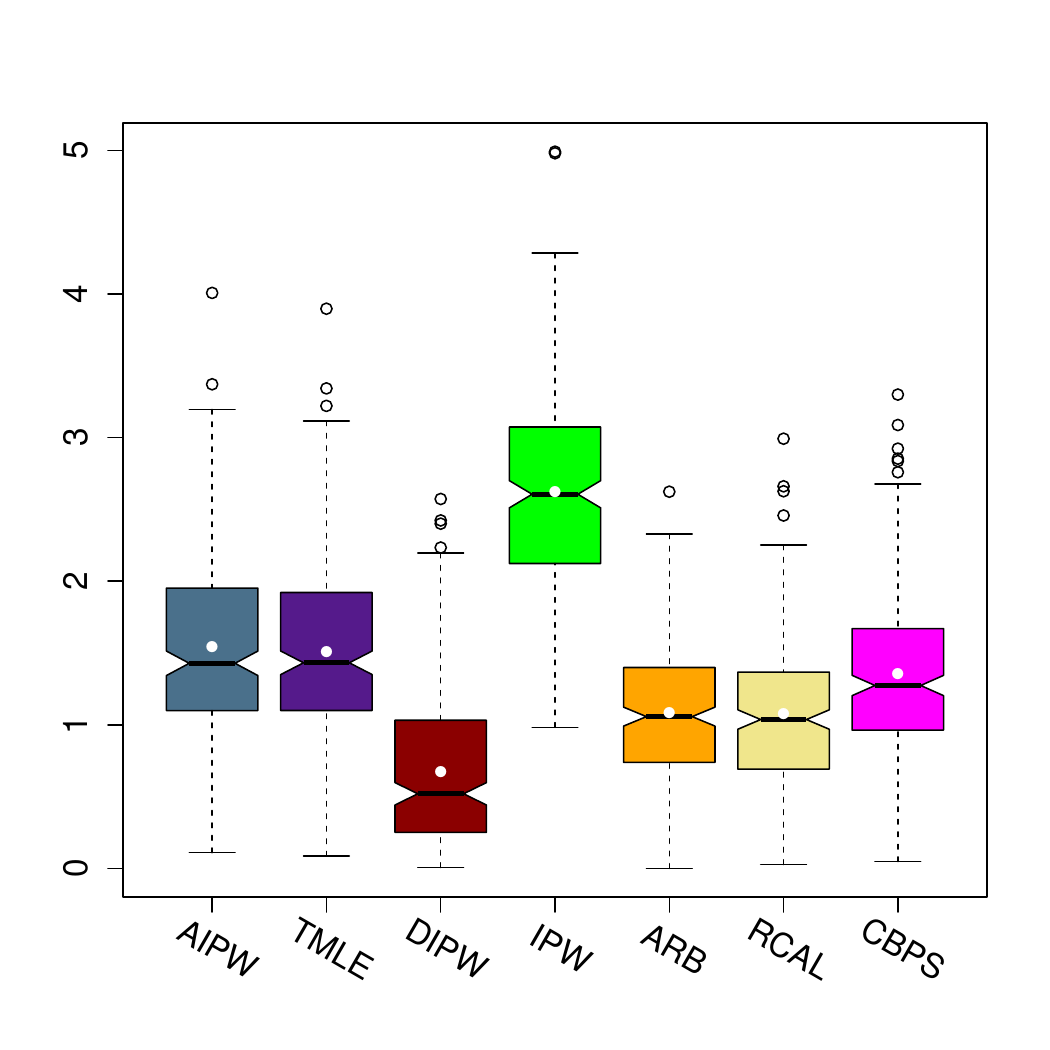}} \hfill
	\subfigure[Toeplitz design, $s = 50$]{\includegraphics[width=0.32\textwidth]{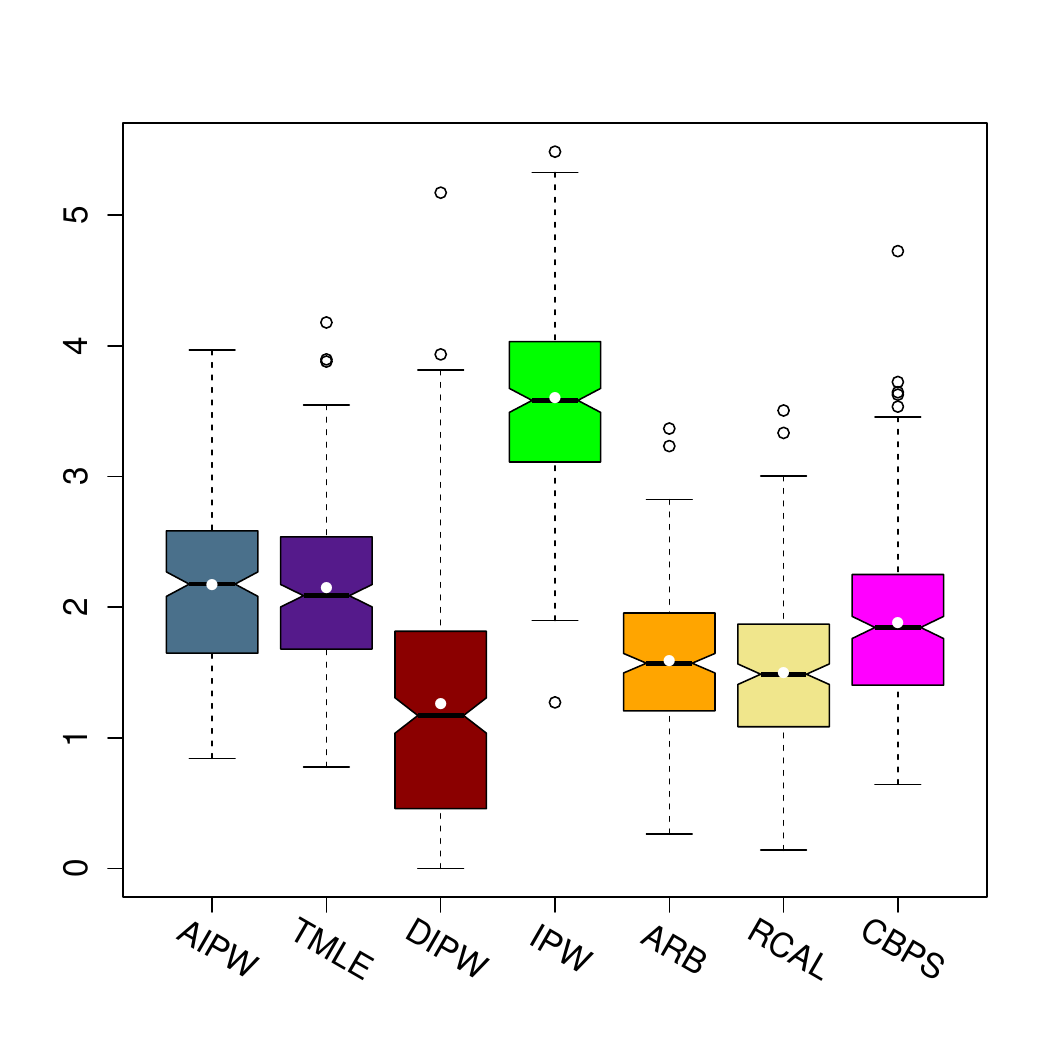}} \\
	\subfigure[Exponential design, $s = 5$]{\includegraphics[width=0.32\textwidth]{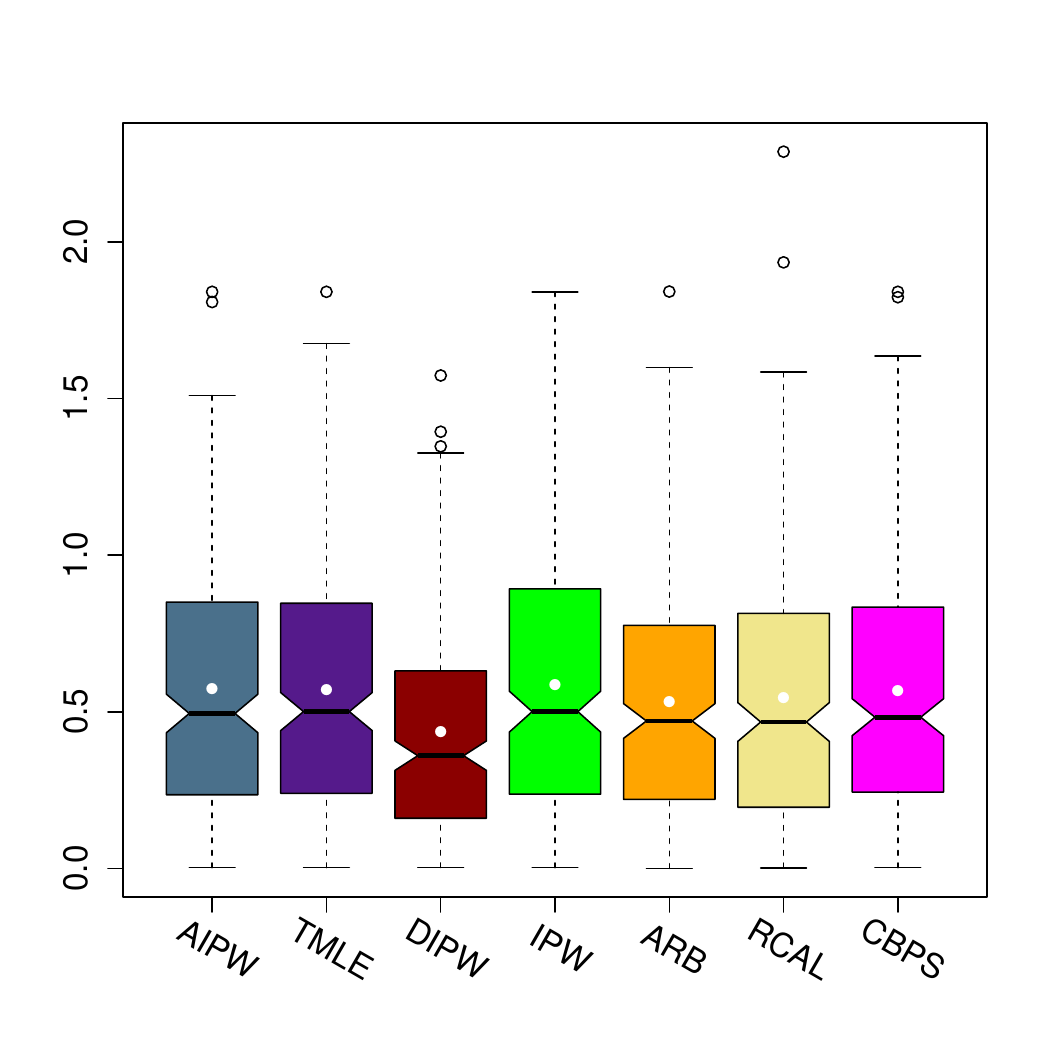}} \hfill
	\subfigure[Exponential design, $s = 20$]{\includegraphics[width=0.32\textwidth]{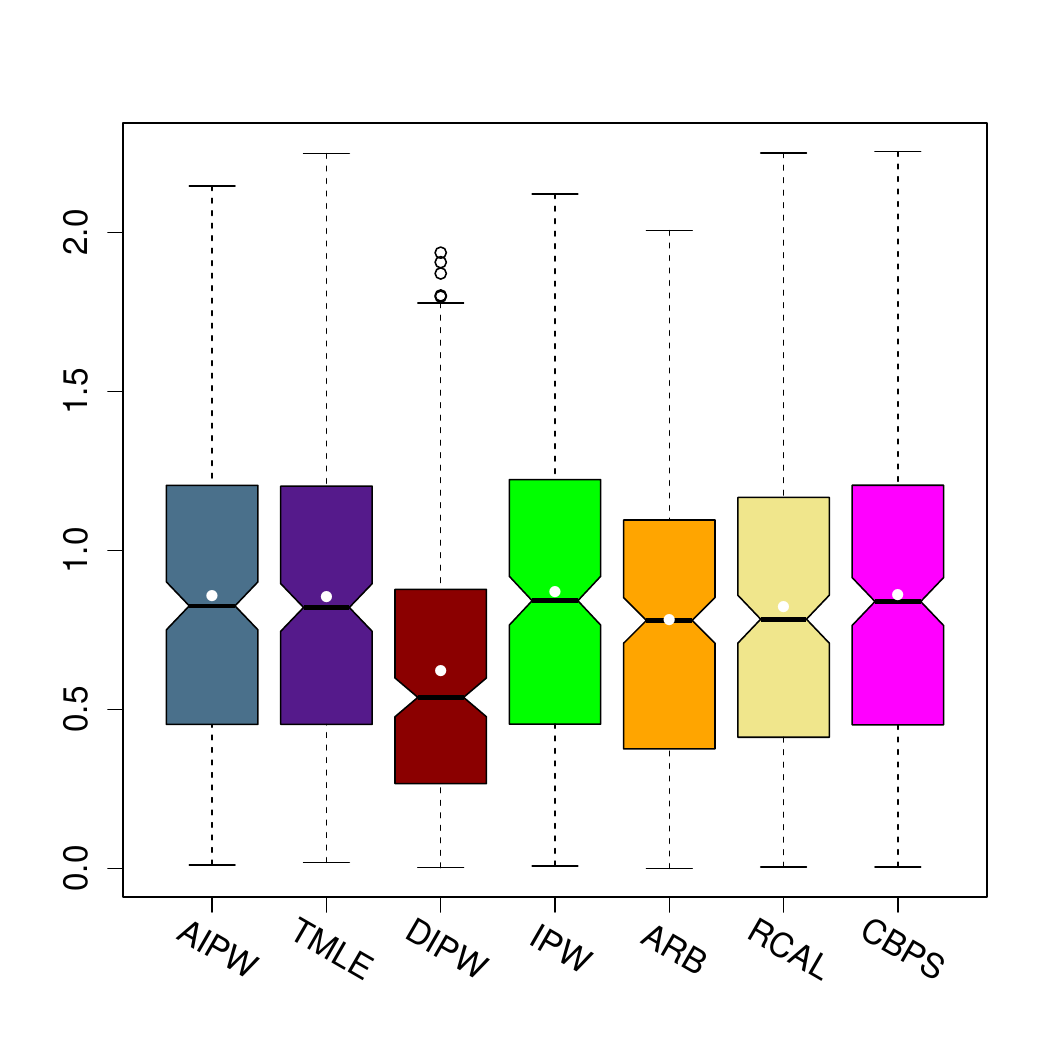}} \hfill
	\subfigure[Exponential design,  $s = 50$]{\includegraphics[width=0.32\textwidth]{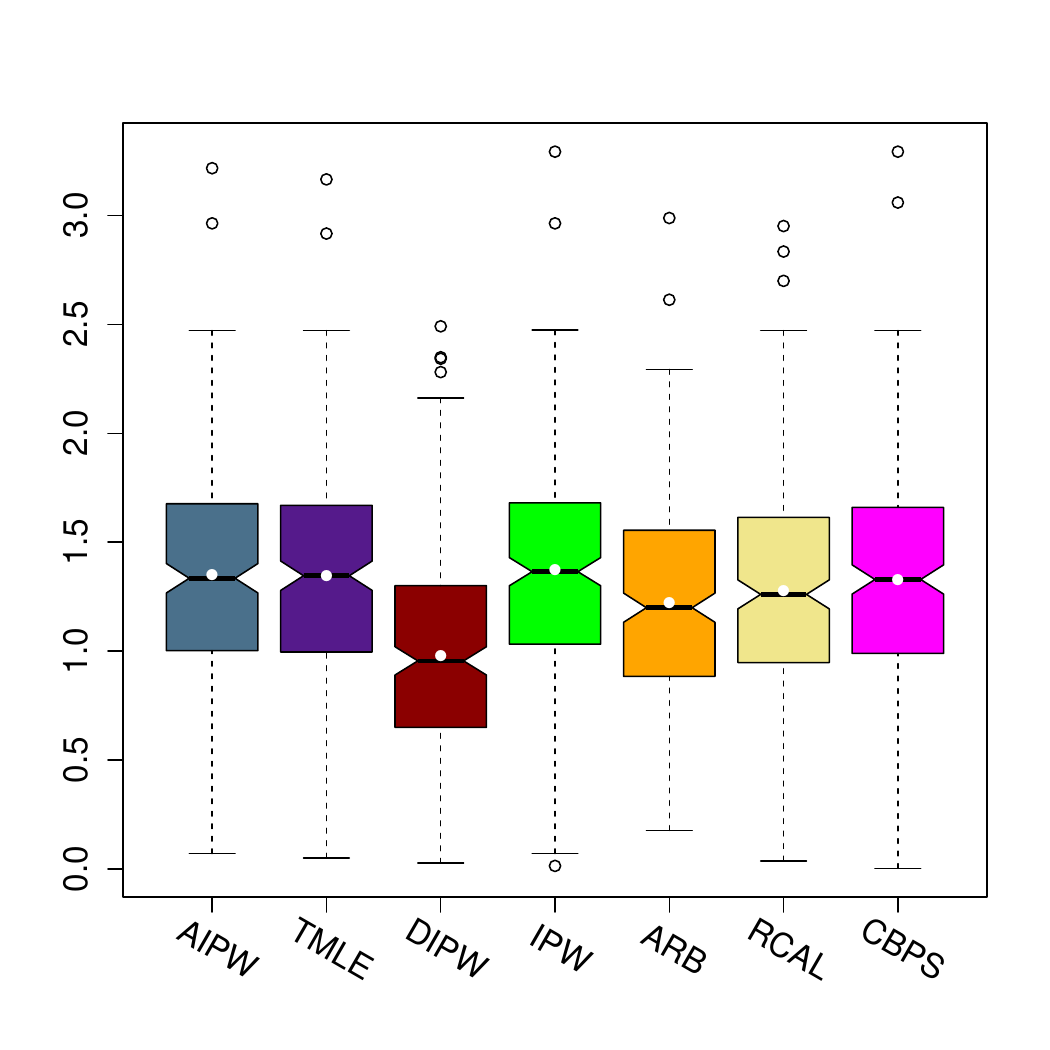}} \\
	\subfigure[Real data design, $s = 5$]{\includegraphics[width=0.32\textwidth]{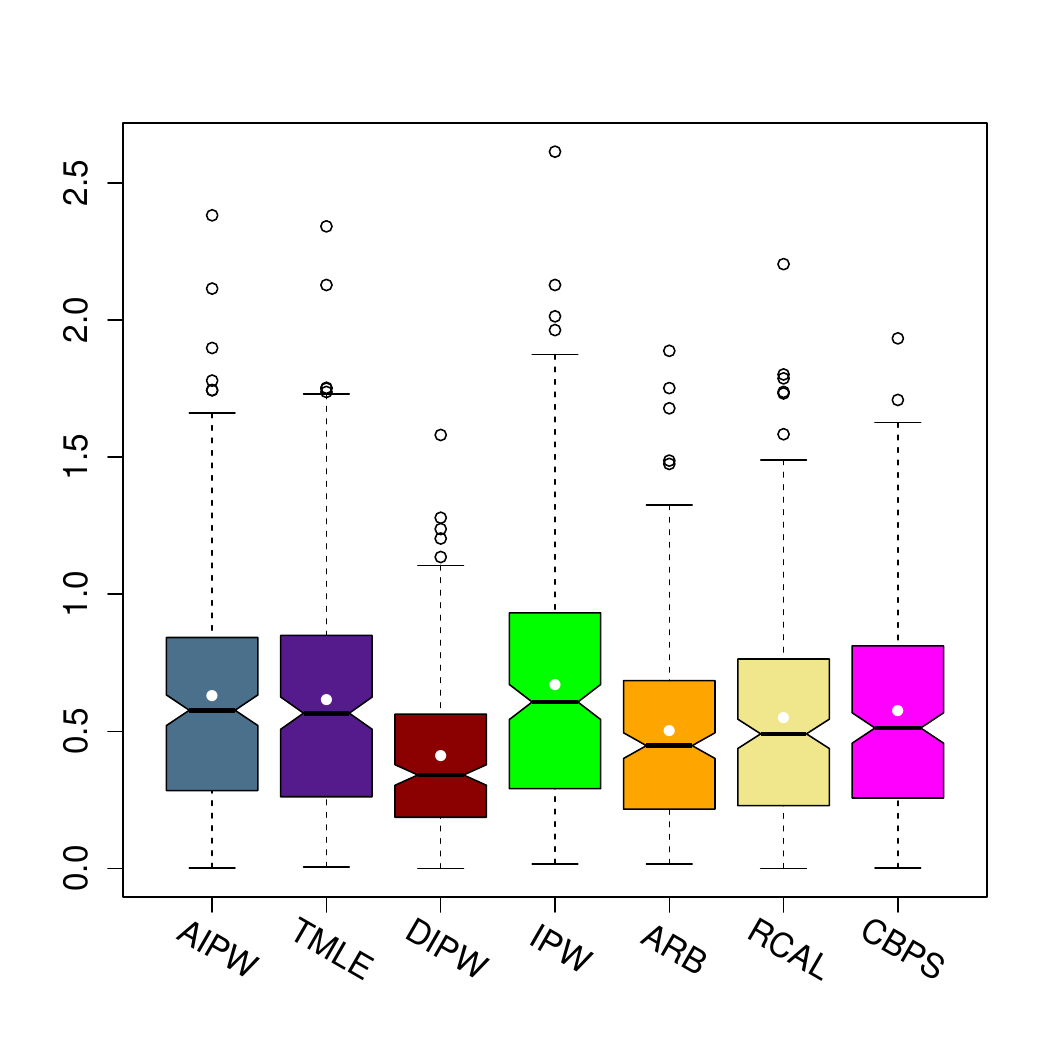}} \hfill
	\subfigure[Real data design, $s = 20$]{\includegraphics[width=0.32\textwidth]{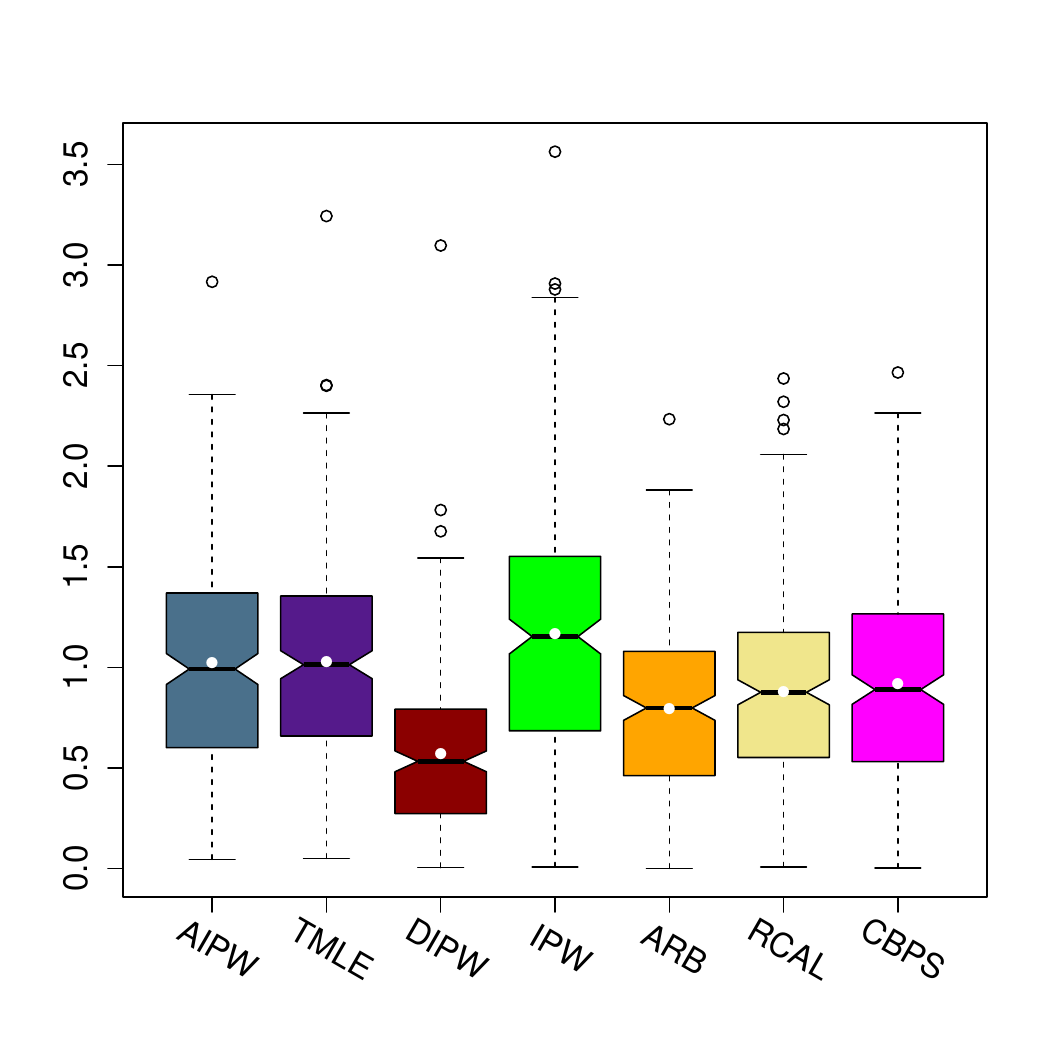}} \hfill
	\subfigure[Real data design,  $s = 50$]{\includegraphics[width=0.32\textwidth]{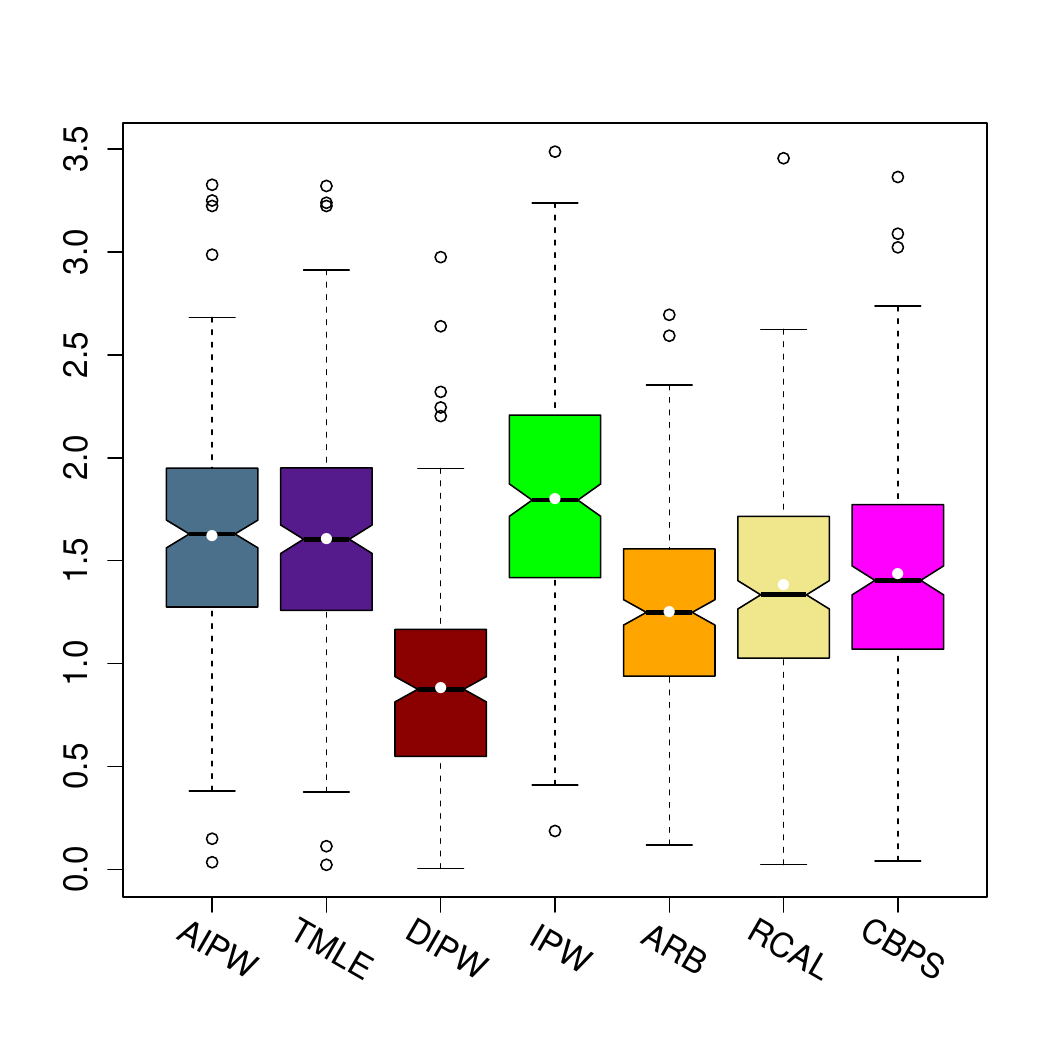}}
	\caption{Boxplot of estimation error with heteroscedastic noise and nonlinear responses.}\label{fig:hnonlinear}
\end{figure}


\section{Additional variance estimation results} \label{sec:simvar_addition}



\newrev{In Figure~\ref{fig:varrfall} we present the results of the experiments in Section~\ref{sec:simvar} but with $\|\gamma\|_2=1$, so there is a larger degree of overlap among the classes. We see that as before, the DIPW variants perform relatively well, though the difference is less pronounced overall as the esitmation problems are somewhat simpler.}

\begin{figure}[t!]
	\subfigure[Toeplitz design, $s = 5$]{\includegraphics[width=0.32\textwidth]{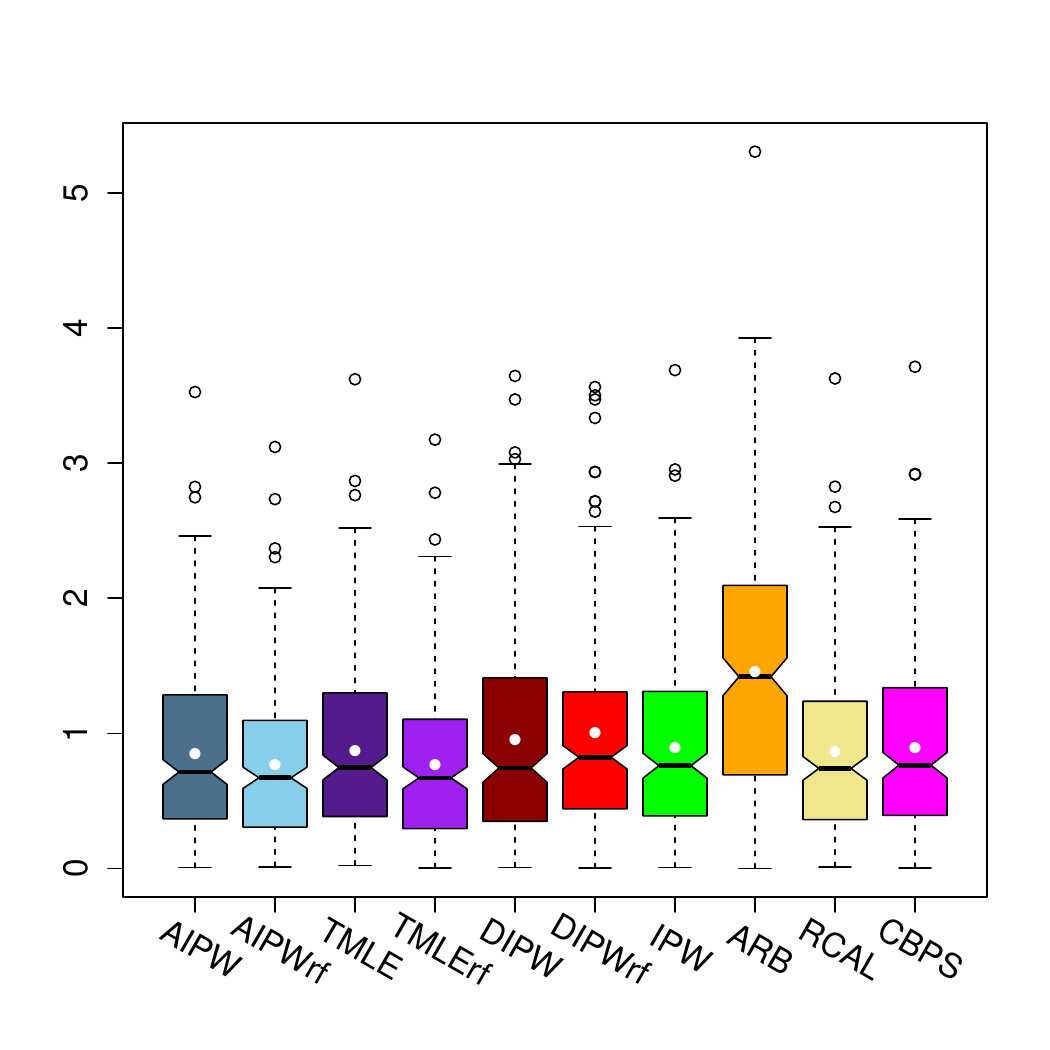}} \hfill
	\subfigure[Toeplitz design, $s = 20$]{\includegraphics[width=0.32\textwidth]{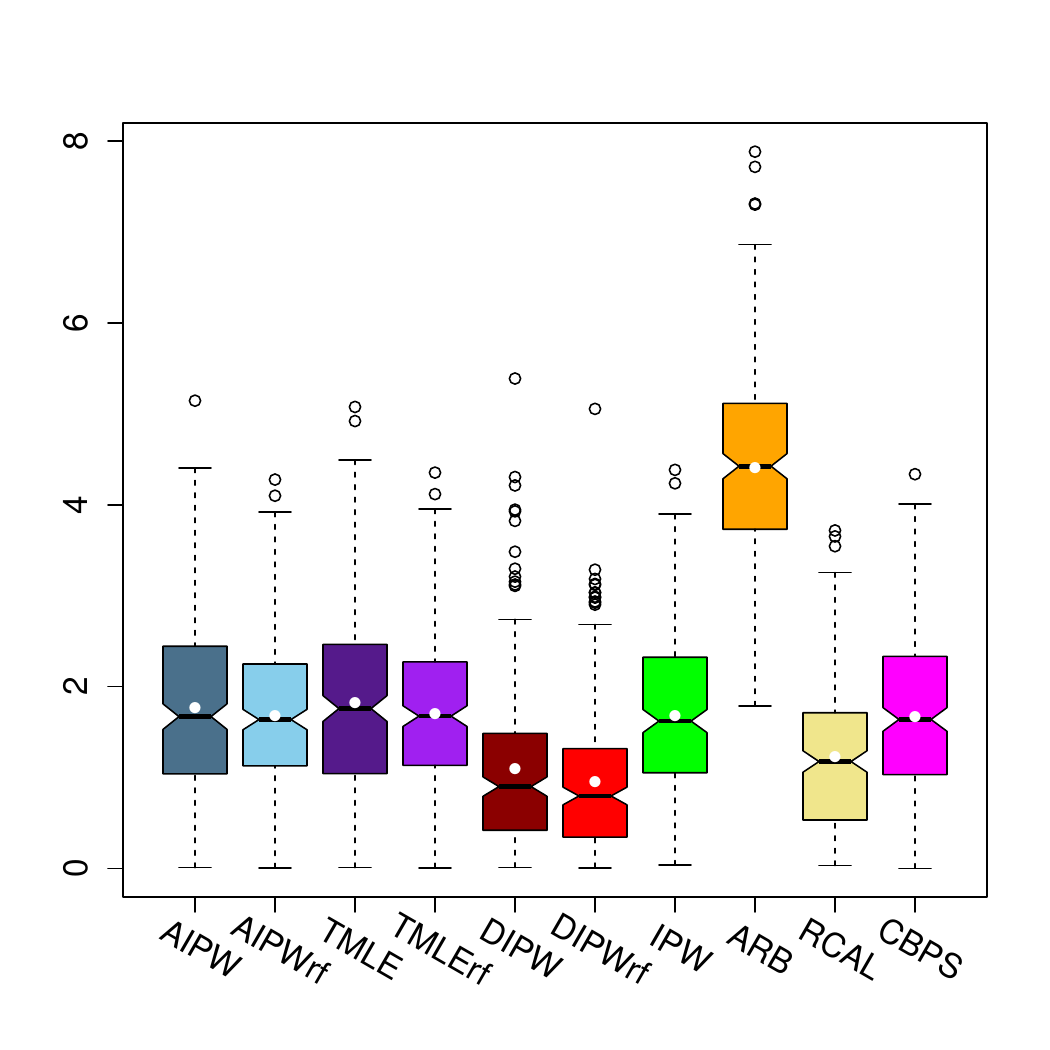}} \hfill
	\subfigure[Toeplitz design, $s = 50$]{\includegraphics[width=0.32\textwidth]{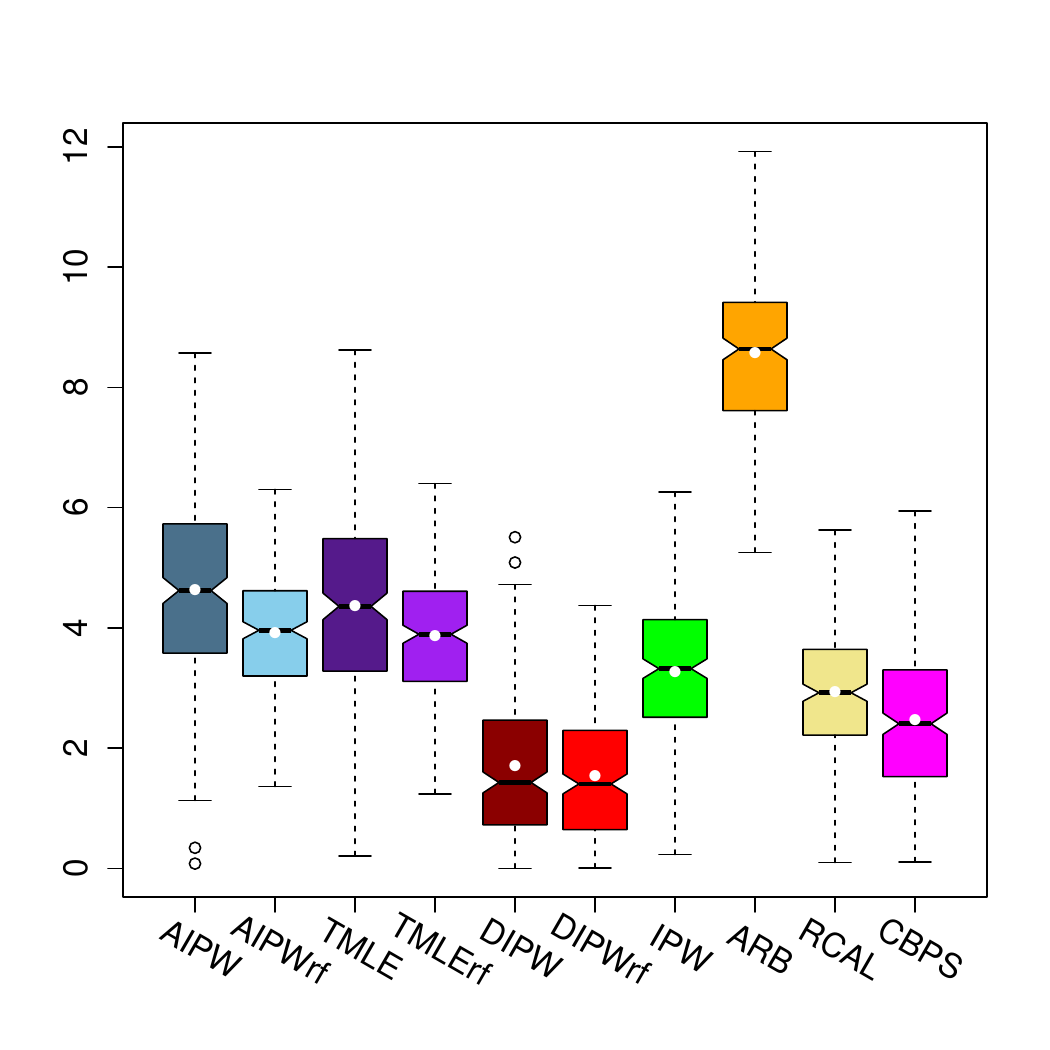}} \\
	\subfigure[Exponential design, $s = 5$]{\includegraphics[width=0.32\textwidth]{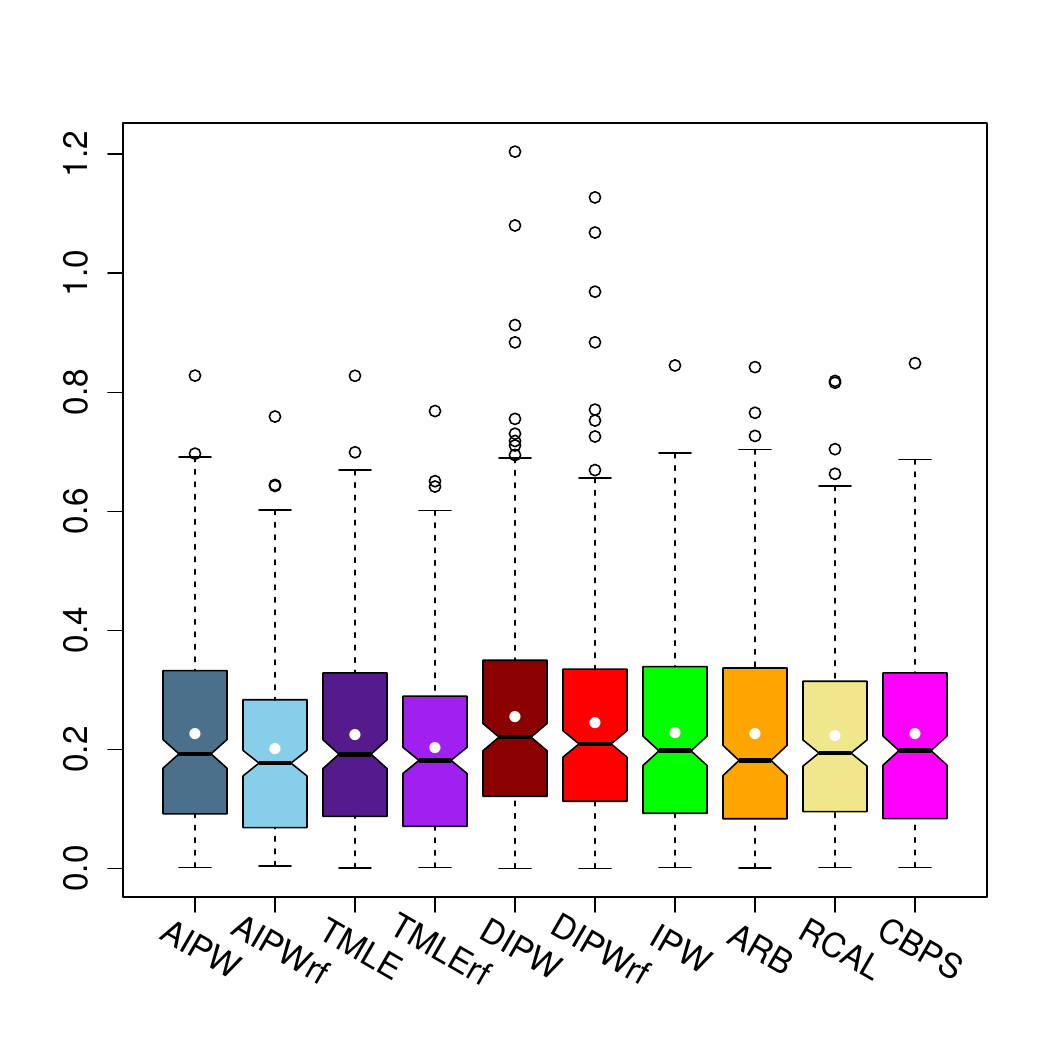}} \hfill
	\subfigure[Exponential design, $s = 20$]{\includegraphics[width=0.32\textwidth]{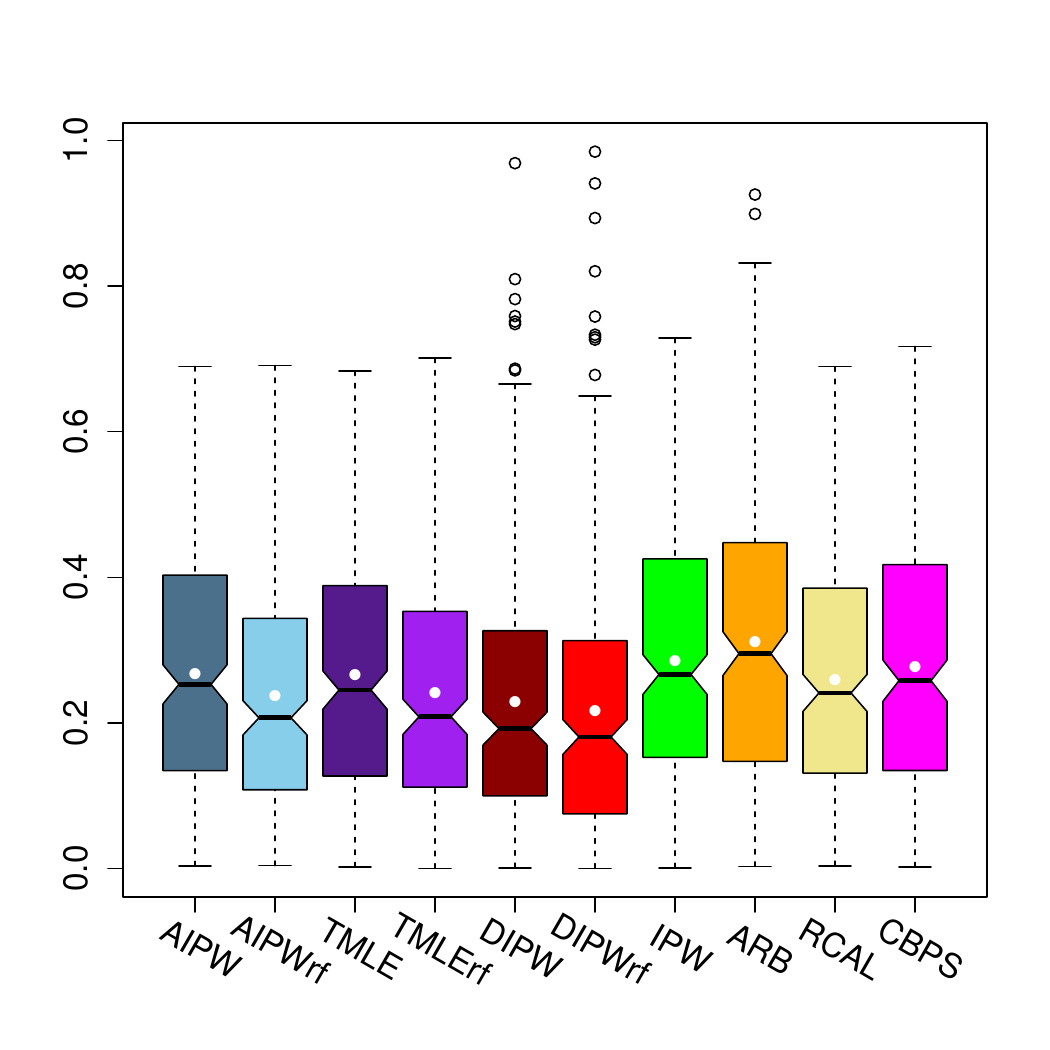}} \hfill
	\subfigure[Exponential design,  $s = 50$]{\includegraphics[width=0.32\textwidth]{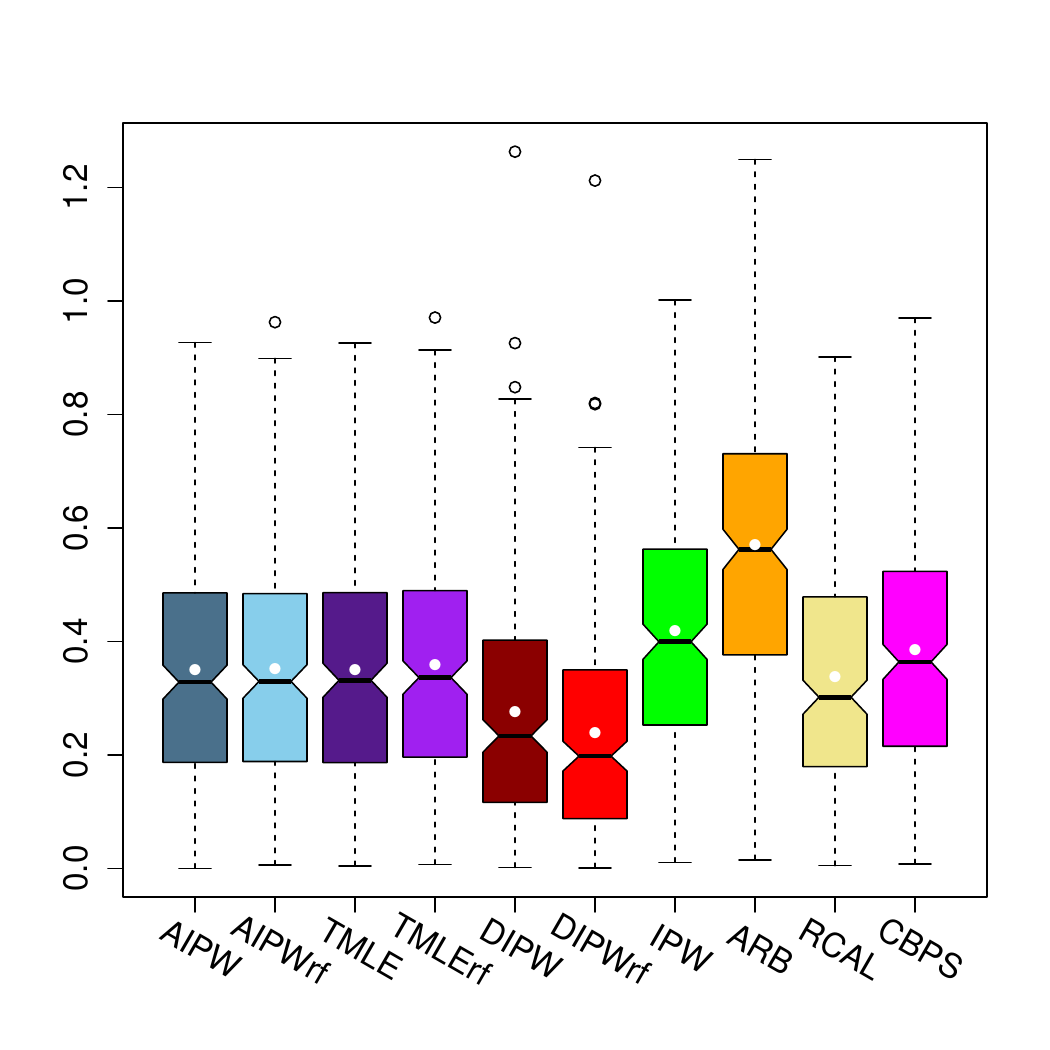}} \\
	\subfigure[Real data design, $s = 5$]{\includegraphics[width=0.32\textwidth]{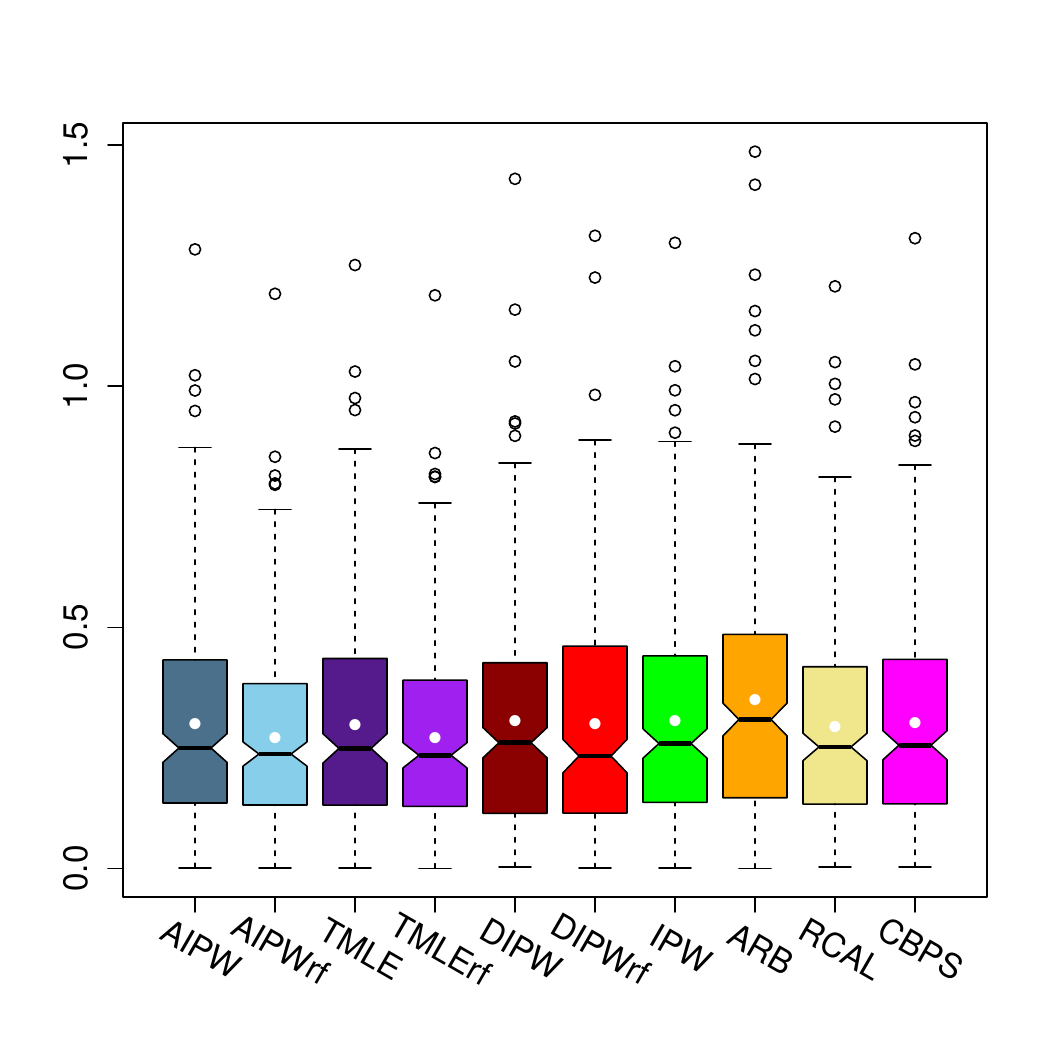}} \hfill
	\subfigure[Real data design, $s = 20$]{\includegraphics[width=0.32\textwidth]{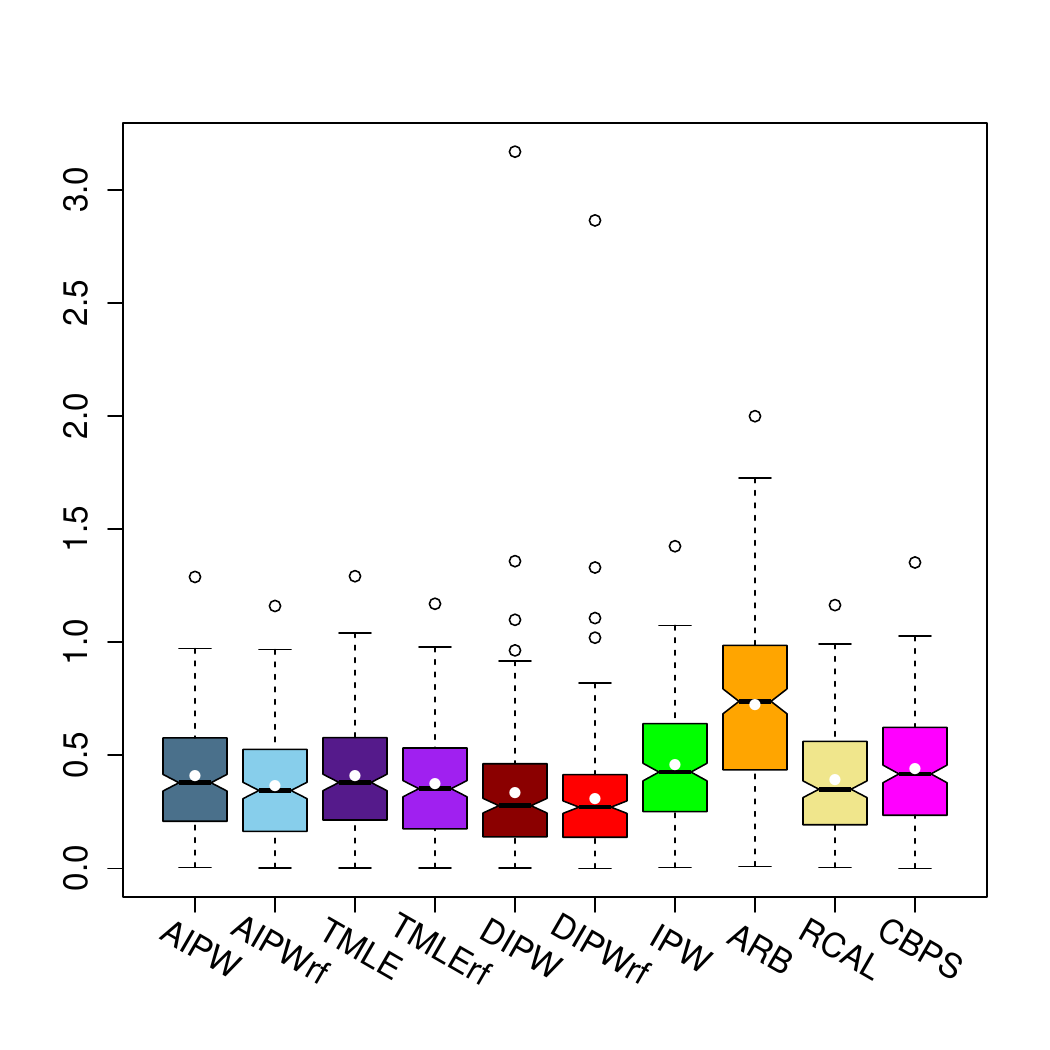}} \hfill
	\subfigure[Real data design,  $s = 50$]{\includegraphics[width=0.32\textwidth]{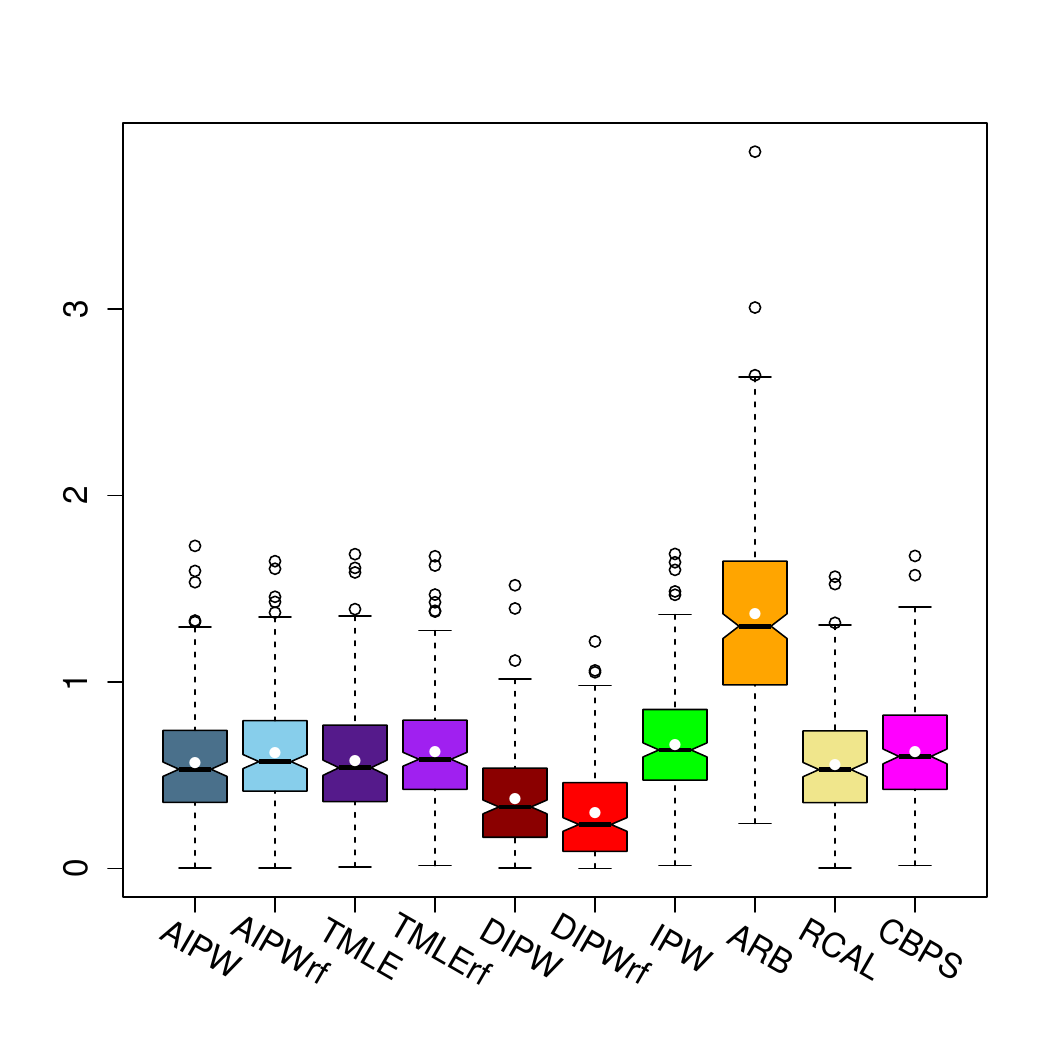}}
	\caption{Boxplots of errors as in Figure~\ref{fig:varrfall2}, but with $\|\gamma\|_2=1$ so there is greater overlap among the treatment and control groups.}\label{fig:varrfall}
\end{figure}


\end{document}